\definecolor{LightGray}{gray}{0.93}
\newfont{\msbm}{msbm10 at 11pt}
\newcommand {\Z} {\mbox{\msbm Z}}
\newcommand {\N} {\mbox{\msbm N}}
\newcommand {\1} {\mathds{1}}
\newfont{\msbmsm}{msbm10 at 8pt}
\newcommand {\Zsm} {\mbox{\msbmsm Z}}
\newtheorem{Theo}{Theorem}
\newtheorem{Lemma}[Theo]{Lemma}
\newtheorem{Cor}[Theo]{Corollary}
\newtheorem{Prop}[Theo]{Proposition}
\newtheorem{Dfn}[Theo]{Definition}
\newtheorem{Rmk}[Theo]{Remark}
\def\eps{\varepsilon}
\DeclareMathOperator{\sgn}{sgn}
\newcommand{\Ib}  {{\mathbb I}}
\newcommand{\Nb}  {{\mathbb N}}
\newcommand{\Zb}  {{\mathbb Z}}
\newcommand{\dZ} {{{Z}_{N}^{(\delta)}}}
\newcommand{\yk} {{y_{k,N}^{}}}
\newcommand{\ind}{\mathbbm{1}}{}
\def\dd{\textnormal{d}}
\newcommand{\dom}{\mathcal{D}}
\newcommand{\idom}{{\bf X}}
\newcommand{\domi}{\mathcal{D}}
\begin{document}

\title{Two waves of adaptation: speciation induced by dormancy in a model with changing environment}
\author{Fernando Cordero\thanks{Institute of Mathematics, BOKU University.  Email: fernando.cordero@boku.ac.at}, Adri\'an Gonz\'alez Casanova\thanks{School of Mathematical and Statistical Sciences, Arizona State University. Email:\\ adrian.gonzalez.casanova.soberon@asu.edu}, and Jason Schweinsberg\thanks{Department of Mathematics, University of California San Diego. Email: jschweinsberg@ucsd.edu}}
\maketitle

\begin{abstract}
We consider a population model in which the season alternates between winter and summer, and individuals can acquire mutations either that are advantageous in the summer and disadvantageous in the winter, or vice versa.  Also, we assume that individuals in the population can either be active or dormant, and that individuals can move between these two states.  Dormant individuals do not reproduce but do not experience selective pressures.  We show that, under certain conditions, over time we see two waves of adaptation.  Some individuals repeatedly acquire mutations that are beneficial in the summer, while others repeatedly acquire mutations that are beneficial in the winter.   Individuals can survive the season during which they are less fit by entering a dormant state.  This result demonstrates that, for populations in fluctuating environments, dormancy has the potential to induce speciation.
\end{abstract}

{\small MSC: Primary 60J99; secondary 60J80, 92D25, 92D15

Keywords: dormancy, seed bank, selection}

\section{Introduction}

Numerous factors influence the evolutionary trajectory of populations. In this paper, we focus on two pivotal elements and explore their interplay: the population's ability to sustain a dormant subpopulation and the cyclic environmental fluctuations induced by seasonal changes. While these phenomena have been individually investigated, our proposed model integrates them. This approach not only offers a novel perspective but also presents a mechanism that may drive divergent adaptation and, potentially, eventual speciation.

The skills required for survival during the winter vastly differ from those needed to thrive in the summer. Consequently, it is not surprising that in a periodic environment, distinct subpopulations can specialize in each niche, leading to intriguing ecological effects. Recent investigations, exemplified by studies such as those by Gulisija and colleagues \cite{gulisija2016phenotypic, gulisija2017phenotypic}, have shed light on this phenomenon.  We are particularly interested in cycles spanning multiple generations, affording populations the opportunity to adapt over time. Consider, for instance, the adaptation of bacteria in the wild to the contrasting temperatures in the winter and summer.  While these environmental states endure for months, bacterial generations unfold within minutes. 

When a new winter arrives, bacteria need to relearn survival strategies that they learned during the previous winter.  Seed banks can therefore play an important role because these genetic reservoirs contain bacteria in a latent state with adaptations that may have once been advantageous but have since disappeared from the active population.  We will demonstrate, through a simple mathematical model, that seed banks could serve as catalysts for adaptation, fostering the emergence of specialized subpopulations tailored for the summer and the winter.

Seed banks profoundly influence evolution in numerous ways. The capacity to sustain a seed bank has been observed to impact various microevolutionary traits, prompting interest in the macroevolutionary consequences of dormancy. This ongoing discussion (see \cite{Lennon2011, Lennon2021}) highlights notable examples such as the tomato \cite{Tellier2011}, where the presence of seed banks explains a high speciation rate among South American tomatoes. Recent mathematical insights in adaptive dynamics indicate that seed banks can enhance speciation rates \cite{Blath2022}.  This manuscript illustrates how the combination of seed banks and changing environments could lead to subpopulations adapting in divergent directions, increasing their genetic distance. 

In a stable environment, a population undergoing selection and mutation will adapt. This adaptation results from the accumulation of beneficial mutations that survive random genetic drift and alter the genetic composition of the population.  Mathematical modeling of adaptation in controlled environments can be achieved using a Wright-Fisher (or Moran) model that accounts for both selection and mutation. In these models, mutations are recurring events, and an individual's relative fitness is an increasing function of the number of mutations it carries.

Research conducted in \cite{df07} and \cite{Neher2013} established that, in populations for which there is an unlimited supply of beneficial mutations, average fitness increases linearly over time. Furthermore, the empirical distribution of the number of mutations per individual forms a shape centered around this increasing average, which closely approximates a Gaussian distribution.  See \cite{Schweinsberg2017} for a mathematical study of ``adaptive waves," which makes rigorous some of the observations in \cite{df07, Neher2013}.  Here, we aim to generalize this model in two ways:

\begin{enumerate}
\item \textbf{Changing Environment}: We expand the type space from $\mathbb{N}$ to $\mathbb{Z}$ and consider two types of mutations: positive mutations, which increase the type of the individual from type $k$ to $k+1$, and negative mutations, which decrease the type from $k$ to $k-1$. Furthermore, we introduce environmental fluctuations, where the environment can be in the states $\{+1,-1\}$.  Individuals with larger types have a selective advantage in positive environments, while the opposite holds true for negative environments.

\item \textbf{Dormancy}: Individuals can exist in one of two states: active or dormant. Dormant individuals are unable to reproduce but are exempt from selective pressures. Transitions between these states occur at specific rates, with active individuals becoming dormant and vice versa.
\end{enumerate}

Under certain conditions within this model, we can demonstrate the emergence of two waves of adaptation.  Some individuals repeatedly acquire positive mutations, giving them a fitness advantage during the summer.  Other individuals repeatedly acquire negative mutations, giving them a fitness advantage during the winter.  Individuals are able to survive the season in which they are less fit by entering a dormant state.  This result can be interpreted as the emergence of speciation from fundamental principles, as the interplay of selection, mutation, dormancy, and a fluctuating environment causes the genetic distance between the prevalent types in each environment to increase over time.  Without dormancy, this outcome would not be achievable.  

A seed bank is a classic bet-hedging device, and our results partially make the biological payoff explicit. In a two-environment setting, a population without dormancy is driven toward a single compromise strategy which performs tolerably across regimes but is suboptimal in each. By contrast, dormancy allows the simultaneous maintenance of contrasting specialists. The seed bank makes this possible because when the environment flips, the previously favored specialist is quickly restored from the seed bank, accelerating the return to high fitness. Thus, despite the short-term cost of delayed reproduction, seed banks increase the speed of adaptation to alternating conditions. These observations, consistent with recent theoretical and empirical work, may help to explain the repeated emergence of dormancy across diverse lineages.

Although we have emphasized that fluctuating environments can result from seasonal changes, there are also other scenarios in which fluctuating environments could arise.  For example, when a patient takes an antibiotic at regular intervals, the environment experienced by the bacteria will fluctuate in a periodic way.  It has been proposed, for example in \cite{wkk13}, that entering a dormant state may allow bacteria to persist in the presence of antibiotics.

\subsection{The model}

We consider a discrete-time population model with fixed population size in which there are $N$ active individuals and $K_N$ dormant individuals in each generation.  The set of dormant individuals is sometimes called the seed bank.  The environment alternates between two possible states, denoted by $1$ and $-1$, which we think of as representing the summer and winter seasons.  We denote the length of a year by $U_N$ and the length of the summer by $V_N$, which means the environment will be in the state $1$ for $V_N$ generations, then in the state $-1$ for $U_N - V_N$ generations, and so on.  More precisely, for nonnegative integers $m$, we denote the environment in generation $m$ by $R_N(m)$, and we define

\begin{displaymath}
R_N(m) = \left\{
\begin{array}{ll} 1 & \mbox{ if $jU_N \leq m < jU_N + V_N$ for some nonnegative integer $j$} \\
-1 & \mbox{ otherwise.} \end{array} \right.
\end{displaymath}

Each individual has a type in $\Z$.  If an individual in generation $m$ has type $k \in \Z$, then the fitness of this individual is given by $$(1 + s_N)^{k R_N(m)},$$ where $s_N$ is a positive real number.  Note that an individual whose type is a positive number is more fit during the summer than during the winter, while a negative type indicates that the individual is more fit during the winter than during the summer.  In generation zero, all individuals have type $0$.

In each generation, $c_N$ individuals that were dormant in the previous generation are randomly chosen to become active, and $c_N$ individuals that were active in the previous generation are randomly chosen to become dormant.  The remaining $N - c_N$ active individuals each independently choose their parent from the active population in the previous generation, with probability proportional to the individual's fitness, so that individuals with higher fitness are more likely to have offspring.  Each of these active individuals has the same type as the parent with probability $1 - 2 \mu_N$.  With probability $\mu_N$, the type is the parent's type plus one, indicating that the individual acquired a mutation that will increase its fitness in the summer.  With probability $\mu_N$, the type is the parent's type minus one, indicating that the individual acquired a mutation that will increase its fitness in the winter. 

Note that this model has a similar structure to the seed bank model proposed in \cite{bgkw16}, but we are adding mutation, selection, and environmental changes.  A model involving dormancy and fluctuating environments was previously studied in \cite{bhs21}.

\subsection{Behavior of the population}\label{popbeh}

Before we introduce the assumptions that we will require on the parameters and state our results precisely, in this subsection we illustrate, in rough terms, how we expect the population to evolve.  There are two key points to keep in mind:

\begin{enumerate}
\item \textbf{The effect of mutations}:  We will consider strong selection, so that when a beneficial mutation occurs (a positive mutation during the summer or a negative mutation during the winter), there is a probability bounded away from zero that this beneficial mutation will spread to nearly the entire population.  The parameters will be chosen so that the beneficial mutation spreads very quickly, on a time scale much faster than the length of a season.

\item \textbf{The beginning of a season}:  At the beginning of each summer (or winter), the type that was dominant in the population at the end of the previous summer (or winter) quickly emerges from the seed bank and becomes dominant again.
\end{enumerate}

\begin{figure}[h!]
\begin{center}
\includegraphics[width=1\linewidth]{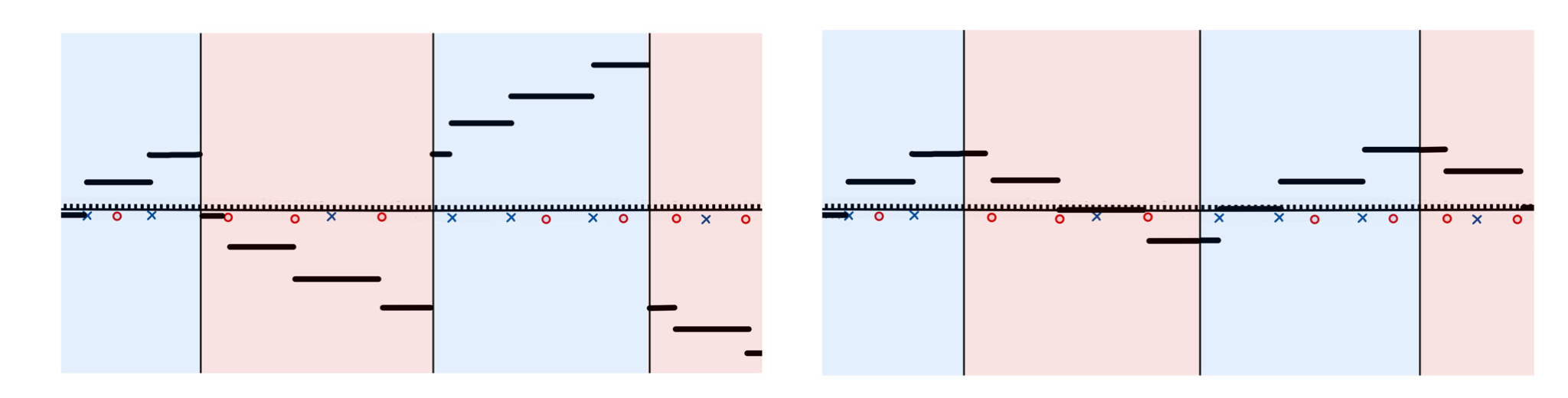}
\end{center}
\vspace{-.4in}
\caption{{\small The adaptation of populations in changing environments is depicted in this figure.  The crosses represent mutations that increase the type and the circles represent mutations that decrease the type. The black lines indicate the predominant type as a function of time.  The panel on the left is for a population with a seed bank, and the panel on the right is for a population without a seed bank.}}
\label{fig:fitness}
\end{figure}

The behavior of the population is illustrated in Figure \ref{fig:fitness}.  In this figure, crosses represent positive mutations, and circles represent negative mutations.  Blue periods indicate summers, and red periods indicate winters.  The left panel shows the evolution of the population in the presence of a seed bank.  During the first summer, there are two positive mutations, indicated by crosses, so the dominant type in the population increases from $0$ to $2$.  The red circle indicates a negative mutation, which is ignored because a negative mutation during the summer quickly dies out.  At the beginning of the following winter, type 0 emerges from the seed bank and becomes dominant.  Then three negative mutations bring the dominant type down to $-3$.  Then type~2 emerges from the seed bank at the beginning of the next summer, and three more positive mutations increase the dominant type to $5$.  Then individuals of type $-3$ return from the seed bank at the beginning of the following winter, and the process continues.

The right panel in Figure \ref{fig:fitness} illustrates how the population would evolve in the absence of a seed bank.  As before, the dominant type increases from $0$ to $2$ during the first summer.  However, because type $0$ is not able to emerge from a seed bank at the beginning of the next winter, the three beneficial mutations during the winter only reduce the dominant type from $2$ to $-1$.  Likewise, the three beneficial mutations during the following summer increase the dominant type from $-1$ back to $2$, and so on.  During each year, a Poisson distributed number of positive and negative mutations spread through the population.  Therefore, as long as the two seasons are of equal length, the dominant type at the end of each year will evolve much like a mean zero random walk.  If $O(1)$ mutations occur each year, the dominant type after $M$ years will be $O(\sqrt{M})$. On the other hand, when there is a seed bank, the dominant type during the $M$th summer will be approximately a constant multiple of $M$, while the dominant type during the $M$th winter will be a constant multiple of $-M$. Our work thus uncovers a novel advantage of maintaining a seed bank, namely an increase in the speed of adaptation in fluctuating environments. Additionally, our research elucidates how the presence of seed banks could facilitate speciation.

In our model, individuals enter and leave the dormant state at random, independently of environmental conditions.  In an alternative model in which individuals enter and exit the dormant state in response to environmental conditions, we would expect it to be even easier for the type that was dominant, say, during one summer to emerge from the seed bank during the following summer, and therefore even more likely that two waves of adaptation would emerge.

\subsection{Assumptions on the parameters}

The model contains six parameters: $K_N$, $c_N$, $U_N$, $V_N$, $s_N$, and $\mu_N$.  Because our main results are limit theorems that hold as the size $N$ of the active population tends to infinity, our assumptions will involve the values of these parameters as $N \rightarrow \infty$.  If $(a_N)_{N=1}^{\infty}$ and $(b_N)_{N=1}^{\infty}$ are sequences of positive numbers, we will sometimes write $a_N \ll b_N$ to mean that $\lim_{N \rightarrow \infty} a_N/b_N = 0$ and $a_N \gg b_N$ to mean $\lim_{N \rightarrow \infty} a_N/b_N = \infty$.  Also, we will sometimes write $a_N \lesssim b_N$ to mean that $\limsup_{N \rightarrow \infty} a_N/b_N < \infty$ and $a_N \sim b_N$ to mean $\lim_{N \rightarrow \infty} a_N/b_N = 1$.

The following five assumptions will be in effect throughout the paper:
\begin{itemize}
\item We have
\begin{equation}\label{slarge}
\lim_{N \rightarrow \infty} s_N = s \in (0, \infty].
\end{equation}

\item We have
\begin{equation}\label{betadef}
\lim_{N \rightarrow \infty} V_N/U_N = \beta \in (0, 1).
\end{equation}

\item We have
\begin{equation}\label{largeU}
U_N \gg \log N.
\end{equation}

\item We have
\begin{equation}\label{NcN}
c_N \ll N.
\end{equation}

\item There exists $b > 0$ such that
\begin{equation}\label{fastmutations}
\lim_{N \rightarrow \infty} N^b \mu_N = \infty.
\end{equation}
\end{itemize}
Because a beneficial mutation increases an individual's fitness by a factor of $1 + s_N$, the first assumption stipulates that we are working with strong selection.  The second assumption ensures that the lengths of the summer and winter seasons are comparable.  Because the time required for a new beneficial mutation to spread to most of the active population (an event known as a selective sweep) is comparable to $\log N$, the third assumption ensures that selective sweeps have time to occur before the environment changes.  The fourth assumption is needed to make sure that individuals do not switch between the active and dormant populations so fast that new beneficial mutations are unable to spread in the active population.  Finally, the fifth assumption is needed to make sure that, with high probability, the next mutation will occur before certain other rare events that would substantially alter the behavior of the population.

Let $\omega(u)$ denote the probability that a Galton-Watson process started with one individual whose offspring distribution is Poisson with mean $1 + u$ will survive forever, and let $\omega(\infty) = 1$.  We will see later that when a new beneficial mutation appears in the population, the probability that it will spread to a large fraction of the active population is approximately $\omega(s_N)$.  Note that there are $N$ individuals in the active population each acquiring beneficial mutations at rate $\mu_N$.  Therefore, we define
\begin{equation}\label{rhoNdef}
\rho_N = N \mu_N \omega(s_N),
\end{equation}
which approximates the probability that, in a given generation, a mutation occurs which eventually causes a selective sweep.  An immediate consequence of \eqref{slarge} is that
\begin{equation}\label{hlower}
\inf_N \omega(s_N) > 0.
\end{equation}

Three different time scales are relevant for our process.  One time scale is $U_N$, the length of a year.  Another is $\rho_N^{-1}$, which is approximately the average number of generations between selective sweeps.  Because a fraction $c_N/K_N$ of the seed bank is replaced in each generation, the number of generations that it takes for the dominant type in the active population to start to take over the seed bank is on the scale of $K_N/c_N$, which is a third important time scale.  The behavior of this population model depends on how these three time scales compare with one another.  We will consider two different parameter regimes.

\bigskip
\noindent {\bf Regime 1}:  In Regime 1, we assume that
\begin{equation}\label{Reg1Asm1}
\lim_{N \rightarrow \infty} \frac{c_N}{K_N \rho_N} = \alpha \in (0, \infty)
\end{equation}
and
\begin{equation}\label{Reg1Asm2}
\lim_{N \rightarrow \infty} U_N \rho_N = \theta \in (0, \infty).
\end{equation}
Under these assumptions, all three of the time scales mentioned above are comparable.

\bigskip
\noindent {\bf Regime 2}: In Regime 2, we assume instead that 
\begin{equation}\label{Reg2Asm}
U_N \ll \frac{K_N}{c_N} \ll \rho_N^{-1}
\end{equation}
and
\begin{equation}\label{KcU}
\lim_{N \rightarrow \infty} \frac{c_N^2 U_N^2}{K_N \log N} = \infty.
\end{equation}
Because $U_N \ll K_N/c_N$, it takes longer for the active population to populate the seed bank than it takes for the environment to change.  Because $K_N/c_N \ll \rho_N^{-1}$, 
the process by which the active population populates the seed bank will typically equilibrate before the next mutation.  The more technical condition (\ref{KcU}) is needed to ensure that individuals can rejoin the active population from the seed bank fast enough for our results to hold.

\bigskip
In Regime 1, we have $c_N U_N/K_N \rightarrow \alpha \theta$ as $N \rightarrow \infty$.  Therefore, in view of (\ref{largeU}) and the fact that $c_N \geq 1$, the result (\ref{KcU}) holds automatically and is not an additional assumption.  However, in Regime 2, the condition (\ref{KcU}) does not follow from the other assumptions.  

Conditions (\ref{Reg1Asm2}) and (\ref{Reg2Asm}) each imply that $U_N \rho_N \lesssim 1$, which by (\ref{rhoNdef}) and (\ref{hlower}) is equivalent to the condition that
\begin{equation}\label{UNmu}
U_N N \mu_N \lesssim 1.
\end{equation}
Therefore, using (\ref{largeU}), we get
\begin{equation}\label{rhoto0}
\rho_N \ll \frac{1}{\log N}
\end{equation}
and
\begin{equation}\label{Nmubound}
N \mu_N \ll \frac{1}{\log N}.
\end{equation}

The assumption \eqref{slarge} is, of course, a strong assumption, as it requires mutations to confer a substantial selective advantage.  We do believe that this assumption could be weakened.  However, if selection were weaker, then selective sweeps would take longer, and therefore seasons would have to be longer to ensure that selective sweeps would be completed within a season and would stay well separated in time.  Our primary aim here is to establish the principle that, under certain conditions, multiple waves of adaptation can be triggered by a combination of dormancy and fluctuation environmental conditions, as explained in subsection \ref{popbeh}.

\subsection{Main Results}

In this section, we will state our main results about the composition of the active and dormant populations in both of the parameter regimes that we are considering.  We will also give some insight into why we should expect these results to be true.  We first introduce some notation.  For all $k \in \Z$ and all nonnegative integers $m$, let $A_{k,N}(m)$ be the number of active individuals in generation $m$ having type $k$, and let $D_{k,N}(m)$ be the number of dormant individuals in generation $m$ having type $k$.
Let $X_{k,N}(m) = A_{k,N}(m)/N$ be the proportion of type $k$ individuals in the active population in generation $m$, and let $Y_{k,N}(m) = D_{k,N}(m)/N$ be the proportion of type $k$ individuals in the dormant population in generation $m$.

\subsubsection{The times when new mutations appear}

Fix $a \in (0, 1)$.  For all $k \in \Z$, define $$S_{k,N}(a) = \min\{m: X_{k,N}(m) > a\},$$ which is the first time that type $k$ individuals comprise a fraction at least $a$ of the population.  Note that $S_{0,N}(a) = 0$ because all individuals initially have type $0$.

The idea behind our first result is that, as noted in the discussion around (\ref{rhoNdef}), in each generation, with probability approximately $\rho_N$, a beneficial mutation will appear that will eventually spread rapidly to most of the active population.  Therefore, the times $S_{k,N}(a)$, after suitable rescaling, will converge to the event times of a Poisson process.  The rescaling of time is subtle, however, because the Poisson clock only runs when the environment has the correct sign.  Indeed, positive mutations only have a chance to spread in the population during the summer, and negative mutations can only spread in the population during the winter.

We will scale time by $\rho_N$, so that one unit of time corresponds to $1/\rho_N$ generations in the original model.  This means that on the new time scale, beneficial mutations will happen at approximately rate one.  We then define
$$I^+(t) = \lim_{N \rightarrow \infty} \rho_N \sum_{m=0}^{\lfloor \rho_N^{-1} t \rfloor} \1_{\{R_N(m) = 1\}}, \qquad I^-(t) = \lim_{N \rightarrow \infty} \rho_N \sum_{m=0}^{\lfloor \rho_N^{-1} t \rfloor} \1_{\{R_N(m) = -1\}}.$$
We will see in what follows that these limits are indeed well-defined. Note that $I^+(t)$ represents the amount of time, up to time $t$ after rescaling, that the process spends in the positive environment, while $I^-(t)$ represents the amount of time, up to time $t$ after rescaling, that the process spends in the negative environment.

In Regime 1, note that $U_N$ generations in the original model corresponds to time $U_N \rho_N$, which converges to $\theta$, while $V_N$ generations in the original model corresponds to time $V_N \rho_N$, which converges to $\beta \theta$.   Define the environment in the limit to be 
\begin{equation}\label{Rtdef}
R(t) = \left\{
\begin{array}{ll} 1 & \mbox{ if $j\theta \leq t < j \theta + \beta \theta$ for some nonnegative integer $j$} \\
-1 & \mbox{ otherwise.} \end{array} \right.
\end{equation}
Then for all $t > 0$, we have
$$I^+(t) = \int_0^t \1_{\{R(u) = 1\}} \: \dd u, \qquad I^-(t) = \int_0^t \1_{\{R(u) = -1\}} \: \dd u.$$
In Regime 2, because the environment changes on a faster time scale and $\beta$ is the long-run fraction of time that the process spends in the positive environment, we have instead $$I^+(t) = \beta t, \qquad I^-(t) = (1 - \beta) t.$$

Let $(\mathcal{N}^+(t), t \geq 0)$ and $(\mathcal{N}^-(t), t \geq 0)$ be independent inhomogeneous Poisson processes whose intensities in Regime 1 are given by $\1_{\{R(t) = 1\}}$ and $\1_{\{R(t) = -1\}}$ respectively, and whose intensities in Regime 2 are given by $\beta$ and $1 - \beta$ respectively.  That is, the distribution of $\mathcal{N}^+(t)$ is Poisson with mean $I^+(t)$, and the distribution of $\mathcal{N}^-(t)$ is Poisson with mean $I^-(t)$.  Let $T_0 = 0$, and for positive integers $k$, define $$T_k = \inf\{t: \mathcal{N}^+(t) = k\}, \qquad T_{-k} = \inf\{t: \mathcal{N}^-(t) = k\}.$$
Note that the sequences $(I^+(T_{k}))_{k=1}^{\infty}$ and $(I^-(T_{-k}))_{k=1}^{\infty}$ are the event times of independent rate one Poisson processes.  We can now state the following limit theorem for the times $S_{k,N}(a)$.

\begin{Theo}\label{Poissontheo}
In both regimes, for all $a \in (0, 1)$ and all positive integers $K$, as $N \rightarrow \infty$ we have the convergence in distribution
$$\big( \rho_N S_{k,N}(a) \big)_{k=-K}^K \Rightarrow (T_k)_{k=-K}^K.$$
\end{Theo}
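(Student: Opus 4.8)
\emph{Proof strategy.} The plan is to reduce the theorem to a statement about the arrival times of the \emph{successful} beneficial mutations---those that go on to sweep to a fraction $a$ of the active population---and to show that, after multiplying by $\rho_N$, these arrival times converge to the event times of $N^+$ and $N^-$. Consider positive $k$; the case of negative $k$ is symmetric, with the roles of summer and winter interchanged. During a summer the active population is dominated by the highest positive type produced so far, and a positive mutation arising on that background is advantageous and can sweep; during a winter positive mutations are deleterious and make no progress, while the current top positive type is preserved in the seed bank and re-emerges at the following summer. Thus the positive wave advances only during summers, and $S_{k,N}(a)$ is, up to a negligible error, the generation at which the $k$-th successful positive sweep reaches level $a$.

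\emph{Branching coupling and the per-generation rate.} First I would couple the early growth of each newly arising positive mutation during a summer with a Galton--Watson process whose offspring distribution is Poisson with mean $1+s_N$. While the mutant family is small relative to $N$, its size is well approximated by this branching process, because the mutant's fitness relative to the resident top type is $1+s_N$; hence the mutation survives genetic drift, and therefore sweeps, with probability close to $h(s_N)$. Since positive mutations are generated at total rate $N\mu_N$ per generation and are thinned by this survival probability, a successful positive sweep is initiated in each summer generation with probability $\approx \rho_N = N\mu_N h(s_N)$ and in each winter generation with probability $o(\rho_N)$. Conditioned on survival, the sweep reaches level $a$ within $O(\log N)$ generations with high probability; because $\rho_N \ll 1/\log N$ by (\ref{rhoto0}) and $U_N \gg \log N$ by (\ref{largeU}), each sweep both occupies a vanishing fraction of the rescaled time scale and, outside an event of probability $o(1)$, completes within the summer in which it began.

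\emph{Poisson convergence and independence.} Writing the number of successful positive sweeps initiated up to generation $\lfloor \rho_N^{-1} t\rfloor$ as a sum of nearly independent rare indicators, one per summer generation and each of probability $\approx \rho_N$, its mean is $\approx \rho_N \sum_{m=0}^{\lfloor \rho_N^{-1} t\rfloor} \1_{\{R_N(m)=1\}}$, which converges to $I^+(t)$ by the definition of $I^+$. A Poisson limit theorem for triangular arrays of rare events then gives convergence of this counting process to $N^+$, and the analogous argument over winter generations gives convergence to $N^-$. Because successful positive and negative sweeps are initiated in disjoint generations (summers versus winters), the two counting processes are asymptotically independent, yielding joint convergence to the independent pair $(N^+, N^-)$. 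This argument covers both regimes at once: the only regime-dependent input is the limiting shape of $I^\pm$, namely the integrals against $R$ in Regime 1 and $I^+(t)=\beta t$, $I^-(t)=(1-\beta)t$ in Regime 2. Reading off the $k$-th event time and recalling that sweep durations are negligible gives $\rho_N S_{k,N}(a) \Rightarrow T_k$ jointly over $-K \le k \le K$.

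\emph{Main obstacle.} The technical heart of the argument is making the branching coupling and the thinning rigorous while excluding the rare scenarios the heuristic ignores: two successful mutations arising within a single short window, a mutation arising on a non-dominant background yet sweeping, a sweep interrupted by a change of season, and the turnover of the seed bank during a sweep distorting the effective offspring law. Each of these must be shown to have probability $o(1)$ over the whole horizon $[0,\rho_N^{-1} t]$. This is where the remaining assumptions enter: (\ref{fastmutations}) keeps mutations frequent enough to locate the sweep times yet, together with (\ref{rhoto0}), rare enough per generation to be well separated; (\ref{NcN}) keeps dormancy slow enough not to disrupt a sweep; and (\ref{largeU}) keeps seasons long enough for sweeps to finish. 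Quantifying these exclusions, and in particular controlling the branching approximation throughout the delicate early phase of each sweep, is the main difficulty.
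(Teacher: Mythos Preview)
Your proposal is correct and follows essentially the same approach as the paper: branching-process approximation of a nascent mutant to pin down the survival probability $h(s_N)$, thinning to get a per-summer-generation success rate $\sim\rho_N$, Poisson convergence of the resulting counting process to $N^\pm$, and negligibility of the $O(\log N)$ sweep duration on the $\rho_N^{-1}$ scale. The paper implements this via an explicit coupling (constructing, for each $N$, times $T_{k,N}$ with the exact law of $T_k$ by matching Bernoulli success indicators to Poisson variables generation by generation) rather than invoking an abstract triangular-array Poisson limit, and it organizes the exclusion of your ``rare scenarios'' through a decomposition of generations into class~1 (post-mutation), class~2 (post-season-change), and class~3 (stable) intervals; but these are packaging differences, and your list of obstacles---overlapping sweeps, mutations on non-dominant backgrounds, sweeps interrupted by season change, and seed-bank turnover---matches the content of the paper's Lemmas and Corollaries that establish the high-probability event $G_N$ on which everything is clean.
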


Note the role that dormancy plays in this result.  Suppose that $k$ positive mutations have taken hold in the active population by the end of a certain summer.  Then, regardless of how many negative mutations arise during the following winter, individuals with $k$ positive mutations quickly emerge from the seed bank and take over the population at the beginning of the following summer.  Consequently, the first successful mutation that happens during the following summer will indeed enable type $k+1$ individuals to spread in the population.  The seed bank ensures that the $k$ previous positive mutations are not lost, and that positive mutations can continue to accumulate at the event times of these Poisson processes.

\subsubsection{The active population}

We next state a result about the composition of the active population, which holds in both regimes.  Except during brief transitional periods, the active population is dominated by one type, which will be nonnegative when the environment is $1$ and nonpositive when the environment is $-1$.  For example, during the summer, after time $S_{k,N}(a)$, type $k$ quickly becomes dominant in the population, and remains dominant until either type $k+1$ emerges, or until the season changes and the type that was dominant during the previous winter returns from the seed bank.  For $a \in (0, 1)$ and nonnegative integers $m$, we can therefore define the dominant type in generation $m$ by 
\begin{displaymath}
{\dom}_{N,a}(m) = \left\{
\begin{array}{ll}  \max\{k: S_{k,N}(a) \leq m\} & \mbox{ if }R_N(m) = 1  \\
\min\{k: S_{k,N}(a) \leq m\} & \mbox{ if }R_N(m) = -1.
\end{array} \right.
\end{displaymath}

\begin{Theo}\label{activethm}
Fix $t_0 > 0$ and $a \in (0, 1)$.  
In both regimes, for all $k \in \Z$, we have
$$\rho_N \sum_{m=0}^{\lfloor \rho_N^{-1} t_0 \rfloor} \big| X_{k,N}(m) - \1_{\{{\dom}_{N, a}(m) = k\}} \big| \rightarrow_p 0.$$
\end{Theo}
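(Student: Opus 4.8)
The plan is to decompose the index set $\mathcal{T}_N = \{0,1,\dots,\lfloor \rho_N^{-1} t_0\rfloor\}$ into \emph{stable} generations, during which a single type occupies nearly the whole active population, and \emph{transitional} generations, during which the composition is shifting because of a selective sweep or a season change. Since $\big| X_{k,N}(m) - \1_{\{\dom_{N,a}(m)=k\}}\big| \le 1$ always, it suffices to show that the $\rho_N$-weighted length of the transitional set tends to $0$, and that on the stable set the summand is uniformly small. The latter follows from a simple observation: if some type $j$ has $X_{j,N}(m) > 1-\eps$, then for every $k$ one has $|X_{k,N}(m) - \1_{\{j=k\}}| \le \eps$, so the whole problem reduces to showing that outside a short transitional set a single, correctly-identified type dominates.

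To make this precise I would fix an auxiliary sequence $r_N$ with $\log N \ll r_N \ll U_N$ (possible by \eqref{largeU}; note that $r_N \ll \rho_N^{-1}$ then follows, since $U_N\rho_N \lesssim 1$ by \eqref{Reg1Asm2} and \eqref{Reg2Asm}). Let $\mathcal{M}_N^+$ be the generations in $\mathcal{T}_N$ at which a mutation beneficial in the current season appears and goes on to sweep, and let $\mathcal{U}_N$ be the generations at which the season changes. Set $J_{1,N} = \bigcup_{m\in\mathcal{M}_N^+}\{m,\dots,m+r_N\}$, $J_{2,N} = \bigcup_{m\in\mathcal{U}_N}\{m,\dots,m+r_N\}$, and $J_{3,N}= \mathcal{T}_N\setminus(J_{1,N}\cup J_{2,N})$. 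By the Poisson structure behind Theorem \ref{Poissontheo}, the number of sweeping mutations in $\mathcal{T}_N$ is bounded in probability, so $\rho_N|J_{1,N}| = O_p(\rho_N r_N)\to 0$ (as $\rho_N r_N = (\rho_N U_N)(r_N/U_N)\to 0$); and since $|\mathcal{U}_N| \le \rho_N^{-1}t_0/U_N + 1$, we get $\rho_N|J_{2,N}| \le r_N t_0/U_N + \rho_N r_N \to 0$ because $r_N \ll U_N$. Hence the transitional generations contribute negligibly, whatever the value of the summand there.

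The substantive step is to show that, with probability tending to $1$, for every $m\in J_{3,N}$ a single type has frequency exceeding $1-\eps_N$ (with $\eps_N\to0$) and equals $\dom_{N,a}(m)$. This has two ingredients. First, each sweeping mutation must complete its sweep within its $r_N$-buffer: comparing the mutant subpopulation to a supercritical Galton--Watson process with mean offspring $1+s_N$ and survival probability $h(s_N)$, bounded below by \eqref{hlower}, the favored type climbs from frequency $\Theta(1/N)$ to near $1$ in $O(\log N)\ll r_N$ generations, while the Poisson spacing $\rho_N^{-1}\gg r_N$ keeps competing mutations away. Second, and this is where I expect the main difficulty, after each season change the newly-favored type must re-emerge from the seed bank and sweep to dominance within $r_N$ generations: one must show the dormant population retains an adequate frequency of the type that prevailed during the previous like-season, so that re-seeding at rate $\sim c_N$ per generation together with selective growth reattains dominance quickly. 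In Regime 1 this frequency is genuinely of order one on the scale $K_N/c_N \asymp \rho_N^{-1}$; in Regime 2 the seed bank equilibrates only over the slow scale $K_N/c_N \gg U_N$, and one must quantify how fast favored individuals rejoin the active population — precisely the role of the technical hypothesis \eqref{KcU}, and the point requiring a union bound over the $\to\infty$ many season changes. The identification of the dominant type with $\dom_{N,a}(m)$, namely the largest established positive type in a summer generation and the smallest established negative type in a winter generation, then follows from the definition of $S_{k,N}(a)$ together with Theorem \ref{Poissontheo}, which guarantees that the establishment times are correctly ordered and well separated.

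On the intersection of these high-probability good events the sum is at most $\rho_N|J_{3,N}|\,\eps_N + \rho_N(|J_{1,N}|+|J_{2,N}|) \le t_0\eps_N + o_p(1) \to_p 0$, which gives the claim. The hardest part is controlling re-establishment from the seed bank after a season change, especially in Regime 2 through \eqref{KcU}, together with ruling out the rare events — two nearly simultaneous mutations, a sweep that fails to complete, or a favored type dying out of the seed bank — that would destroy the single-dominant-type picture underlying the stable generations.
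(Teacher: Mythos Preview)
Your proposal is correct and follows essentially the same route as the paper: decompose $\mathcal{T}_N$ into the three classes $J_{1,N}, J_{2,N}, J_{3,N}$ with buffers of length $r_N$ (with $\log N \ll r_N \ll U_N$), show on a high-probability good event that a single type of frequency $\ge 1-\eps_N$ equal to $\dom_{N,a}(m)$ prevails throughout $J_{3,N}$, and bound $\rho_N(|J_{1,N}|+|J_{2,N}|)\to 0$ exactly as you do. The one minor bookkeeping difference is that the paper buffers around \emph{all} beneficial mutations rather than only the sweeping ones, which keeps $J_{1,N}$ measurable without looking ahead and makes the analysis of $J_{3,N}$ a bit cleaner (non-sweeping attempts are absorbed into $J_{1,N}$ rather than having to be controlled inside the stable set); either choice works, and you correctly flag seed-bank re-establishment after season changes---driven by \eqref{KcU} in Regime~2---as the substantive technical step.
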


Note that we can not state Theorem \ref{activethm} as a supremum over all generations $m$ because there will be brief periods, around the times $S_{k,N}(a)$ and around the times when the environment changes, where $X_{k,N}(m)$ will not be close to zero or one.  However, in Regime 1 it is possible to state a result in terms of convergence of stochastic processes, provided that we replace the usual Skorohod $J_1$ topology on the space of c\`adl\`ag functions with the weaker Meyer-Zheng topology.  The Meyer-Zheng topology is the topology associated with convergence in measure.  If $(E,d)$ is a metric space, then c\`adl\`ag functions $f_n: [0, \infty) \rightarrow E$ converge to $f$ in the Meyer-Zheng topology if for all $T > 0$ and $\eps > 0$, the Lebesgue measure of $\{x \in [0,T]: d(f_n(x) - f(x)) > \eps\}$ tends to zero as $n \rightarrow \infty$.  We refer the reader to \cite{mz84}, section 4 of \cite{k91}, and section 3 of \cite{gonzalez2022symmetric} for more details on this topology.

Let $\Delta$ be the space of all doubly infinite sequences ${\bf x} = (\dots, x_{-2}, x_{-1}, x_0, x_1, x_2, \dots)$ such that $x_k \geq 0$ for all $k \in \Z$ and $\sum_{k \in \Zsm} x_k \leq 1$.  We call $x_k$ the $k$th coordinate of ${\bf x}$.  We equip $\Delta$ with the $\ell_1$ norm.  Let ${\bf X}_N(m)$ be the $\Delta$-valued random variable whose $k$th coordinate is $X_{k,N}(m)$.

\begin{Theo}\label{mztheorem}
For $t \geq 0$, define
$$
\domi(t) =  \mathcal{N}^{+}(t) \1_{\{R(t) = 1\}} - \mathcal{N}^{-}(t)\1_{\{R(t) = -1\}},
$$ 
and let ${\idom}(t)$ be the $\Delta$-valued random variable whose $k$th coordinate is $\1_{\{\domi(t) = k\}}$.
In Regime 1, the processes $({\bf X}_N(\lfloor \rho_N^{-1}t \rfloor), t \geq 0)$ converge as $N \rightarrow \infty$ to $({\idom}(t), t \geq 0)$, in the sense of weak convergence of stochastic processes with respect to the Meyer-Zheng topology.
\end{Theo}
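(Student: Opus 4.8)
The plan is to deduce the result from Theorems \ref{Poissontheo} and \ref{activethm} through a ``converging together'' argument adapted to the Meyer--Zheng topology, which (as recalled above) is metrized by convergence in Lebesgue measure on compact time intervals. Write $\|\cdot\|_1$ for the $\ell_1$ norm on $\Delta$, fix $T>0$, and let $\mathbf{D}_N(t)$ be the $\Delta$-valued process whose $k$th coordinate is $\1_{\{\dom_{N,a}(\lfloor\rho_N^{-1}t\rfloor)=k\}}$; that is, $\mathbf{D}_N(t)$ is the point mass at the empirical dominant type in generation $\lfloor\rho_N^{-1}t\rfloor$. I would split the argument into: (A) showing that ${\bf X}_N(\lfloor\rho_N^{-1}\cdot\rfloor)$ and $\mathbf{D}_N$ are asymptotically indistinguishable in measure, i.e.\ $\int_0^T\|{\bf X}_N(\lfloor\rho_N^{-1}t\rfloor)-\mathbf{D}_N(t)\|_1\,\dd t\rightarrow_p 0$; and (B) showing $\mathbf{D}_N\Rightarrow{\idom}$ in the Meyer--Zheng topology. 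Since ${\bf X}_N(\lfloor\rho_N^{-1}\cdot\rfloor)$ and $\mathbf{D}_N$ are defined on the same probability space, (A) gives convergence to $0$ in probability of their Meyer--Zheng distance, and combining this with (B) by the standard converging-together lemma in a separable metric space yields ${\bf X}_N(\lfloor\rho_N^{-1}\cdot\rfloor)\Rightarrow{\idom}$, as desired.

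Step (A) is essentially a repackaging of Theorem \ref{activethm}. Because the active population consists of exactly $N$ individuals, $\sum_{k\in\Zsm}X_{k,N}(m)=1$, and hence the $\ell_1$ distance between ${\bf X}_N(m)$ and the point mass at $\dom_{N,a}(m)$ equals $2\big(1-X_{\dom_{N,a}(m),N}(m)\big)$. After the change of variables $m=\lfloor\rho_N^{-1}t\rfloor$ (each generation occupying a time interval of length $\rho_N$), it therefore suffices to show $\rho_N\sum_{m=0}^{\lfloor\rho_N^{-1}T\rfloor}\big(1-X_{\dom_{N,a}(m),N}(m)\big)\rightarrow_p 0$. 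Given $\delta>0$, Theorem \ref{Poissontheo} lets me choose $K$ with $\limsup_N\Pb\big(\rho_N S_{K,N}(a)\le T\text{ or }\rho_N S_{-K,N}(a)\le T\big)<\delta$, so that with probability at least $1-\delta$ we have $\dom_{N,a}(m)\in\{-K,\dots,K\}$ for all $m\le\rho_N^{-1}T$. On this event $1-X_{\dom_{N,a}(m),N}(m)\le\sum_{k=-K}^{K}\big|X_{k,N}(m)-\1_{\{\dom_{N,a}(m)=k\}}\big|$, and summing the $2K+1$ bounds supplied by Theorem \ref{activethm} gives the claim.

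For step (B) I would invoke the Skorohod representation theorem on the convergence $\big(\rho_N S_{k,N}(a)\big)_{k=-K}^{K}\Rightarrow(T_k)_{k=-K}^{K}$ from Theorem \ref{Poissontheo}, realizing these on a common space so that $\rho_N S_{k,N}(a)\to T_k$ almost surely for each $k$. In Regime 1 the season boundaries occur at generations $jU_N$ and $jU_N+V_N$, whose rescalings $jU_N\rho_N$ and $(jU_N+V_N)\rho_N$ converge to $j\theta$ and $j\theta+\beta\theta$ by \eqref{Reg1Asm2} and \eqref{betadef}; hence $R_N(\lfloor\rho_N^{-1}t\rfloor)\to R(t)$ for every $t$ outside the discrete set of limiting season-change times. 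Fixing such a $t$ that is moreover not equal to any $T_k$, and using $\rho_N\lfloor\rho_N^{-1}t\rfloor\to t$, the set $\{k:\rho_N S_{k,N}(a)\le\rho_N\lfloor\rho_N^{-1}t\rfloor\}$ eventually coincides with $\{k:T_k\le t\}$; when $R(t)=1$ its maximum equals $N^+(t)$ and when $R(t)=-1$ its minimum equals $-N^-(t)$, so $\dom_{N,a}(\lfloor\rho_N^{-1}t\rfloor)\to\domi(t)$. As these are integer valued, they are eventually equal, and bounded convergence gives $\int_0^T\|\mathbf{D}_N(t)-{\idom}(t)\|_1\,\dd t=2\int_0^T\1_{\{\dom_{N,a}(\lfloor\rho_N^{-1}t\rfloor)\neq\domi(t)\}}\,\dd t\to 0$ almost surely on the coupled space, which yields $\mathbf{D}_N\Rightarrow{\idom}$.

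The main obstacle is step (B): one must check that the purely combinatorial quantity $\dom_{N,a}$, defined as a maximum (resp.\ minimum) of established types over all of $\Z$, is correctly identified in the limit with the wave counts $N^{+}$ and $-N^{-}$, and that the two sources of error that prevent a $J_1$-statement---the $O(\log N)=o(\rho_N^{-1})$ transitional windows around sweeps and around season changes---contribute only a vanishing amount of Lebesgue measure, which is exactly what the Meyer--Zheng framework tolerates. Care is also needed to confirm that $\domi$ is a genuine c\`adl\`ag $\Delta$-valued process and that the Meyer--Zheng space is separable, so that Skorohod representation and the converging-together lemma apply; for these facts I would rely on \cite{mz84,k91,gonzalez2022symmetric}.
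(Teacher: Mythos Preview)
Your proposal is correct and follows essentially the same two-step ``converging together'' scheme as the paper: first show that ${\bf X}_N(\lfloor\rho_N^{-1}\cdot\rfloor)$ and the point-mass process $\mathbf{D}_N$ at $\dom_{N,a}$ are close in $L^1[0,T]$ via Theorem~\ref{activethm} and the truncation to $|k|\le K$ justified by Theorem~\ref{Poissontheo}, then show $\mathbf{D}_N\Rightarrow{\idom}$. The only difference is in the second step: the paper observes that, since the jump times of $\dom_{N,a}(\lfloor\rho_N^{-1}\cdot\rfloor)$ (namely the $\rho_N S_{k,N}(a)$ and the rescaled season boundaries) converge jointly to those of $\domi$, one actually gets the stronger $J_1$ convergence of $\mathbf{D}_N$ to ${\idom}$, which then implies Meyer--Zheng convergence; your route via Skorohod representation and pointwise a.e.\ convergence is equally valid and perhaps more self-contained for the Meyer--Zheng conclusion, though it forgoes the $J_1$ statement for $\mathbf{D}_N$.
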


A convergence result of this type is not possible in Regime 2 because the seasons change very fast on the time scale that we are considering, and therefore there is no well-defined limit process.

Theorems \ref{Poissontheo}, \ref{activethm}, and \ref{mztheorem} are proved in Section \ref{proofsactive}.

\subsubsection{The dormant population}

We next consider how the dormant population changes over time, working first in Regime 1.  To motivate our result, suppose in some generation, the fraction of type $k$ individuals in the active population is $x$, and the fraction of type $k$ individuals in the seed bank is $y$.  Then $c_N$ individuals from the active population will enter the seed bank in the next generation, while $c_N$ individuals in the seed bank will leave.  Therefore, in the next generation, the fraction of the seed bank consisting of type $k$ individuals changes from $y$ to approximately $$\frac{xc_N + y(K_N - c_N)}{K_N} = y + (x - y) \frac{c_N}{K_N}.$$  Thus, in Regime 1, after scaling time by $\rho_N$, we obtain the approximate differential equation
\begin{equation}\label{dormdiff}
\frac{\dd y}{\dd t} \approx (x-y) \frac{c_N}{K_N \rho_N} \approx \alpha (x - y).
\end{equation}

In Regime 1, this idea leads to the following convergence theorem.  Here, ${\bf Y}_N(m)$ denotes the $\Delta$-valued random variable whose $k$th coordinate is $Y_{k,N}(m)$.  Note that this theorem implies that we have coexistence of different types in the dormant population, in contrast to the active population which is usually dominated by one type.

\begin{Theo}\label{dormantthm1}
Define a $\Delta$-valued stochastic process $({\bf Y}(t), t \geq 0)$ such that ${\bf Y}(0)$ is the sequence whose $0$th term is $1$ and whose other terms are zero, and $$\frac{\dd}{\dd t} {\bf Y}(t) = \alpha ({\idom}(t) - {\bf Y}(t)).$$
In Regime 1, the processes $({\bf Y}_N(\lfloor \rho_N^{-1} t \rfloor), t \geq 0)$ converge as $N \rightarrow \infty$ to $({\bf Y}(t), t \geq 0)$, in the sense of weak convergence of stochastic processes with respect to Skorohod's $J_1$ topology.
\end{Theo}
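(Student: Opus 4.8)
The plan is to derive a one-step recursion for the dormant proportions, recognise its drift as the discretization of the claimed linear ODE, show that the accompanying martingale is negligible, and then pass to the limit by a continuous-mapping (Gronwall) argument in which time-integration upgrades the weak Meyer--Zheng convergence of the active population (Theorem \ref{mztheorem}) to uniform, hence $J_1$, convergence of the dormant population. Fix $k \in \Z$ and let $\mathcal{F}_{m,N}$ denote the natural filtration. Passing from generation $m$ to $m+1$, the dormant population consists of the $K_N - c_N$ retained dormant individuals together with the $c_N$ active individuals chosen to become dormant, each carrying its type from generation $m$; since these two groups are selected by independent uniform samples, a direct computation gives
\[
\Eb\big[Y_{k,N}(m+1)\mid \mathcal{F}_{m,N}\big] = Y_{k,N}(m) + \frac{c_N}{K_N}\big(X_{k,N}(m) - Y_{k,N}(m)\big).
\]
Writing the increment as this drift plus a martingale increment $\Delta M_{k,N}(m)$, summing, and recalling $Y_{k,N}(0) = \1_{\{k=0\}}$, I obtain a Doob decomposition for $\tilde Y_{k,N}(t) := Y_{k,N}(\lfloor \rho_N^{-1} t \rfloor)$. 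Setting $\alpha_N := c_N/(K_N \rho_N) \to \alpha$ by \eqref{Reg1Asm1}, so that $c_N/K_N = \rho_N \alpha_N$, and using that $\tilde Y_{k,N}$ and $\tilde X_{k,N}(u):=X_{k,N}(\lfloor \rho_N^{-1}u\rfloor)$ are piecewise-constant interpolations, the drift sum equals $\alpha_N\big(\Phi_{k,N}(t) - \int_0^t \tilde Y_{k,N}(u)\,\dd u\big) + O(\rho_N)$, where $\Phi_{k,N}(t) := \int_0^t \tilde X_{k,N}(u)\,\dd u$. Hence
\[
\tilde Y_{k,N}(t) = \1_{\{k=0\}} + \alpha_N \Phi_{k,N}(t) - \alpha_N \int_0^t \tilde Y_{k,N}(u)\,\dd u + \tilde M_{k,N}(t) + O(\rho_N).
\]

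Next I dispose of the martingale. Each switching count is hypergeometric with at most $c_N$ draws, so $\Var(\Delta Y_{k,N}(m) \mid \mathcal{F}_{m,N}) \le c_N/K_N^2$; summing over the $\lfloor \rho_N^{-1} t_0 \rfloor$ generations bounds the predictable quadratic variation of $\tilde M_{k,N}$ on $[0,t_0]$ by $\lesssim \rho_N^{-1} c_N/K_N^2 = \alpha_N/K_N$. Since \eqref{Reg1Asm1} and \eqref{rhoto0} force $K_N \sim c_N/(\alpha \rho_N) \to \infty$, this tends to $0$, and Doob's maximal inequality gives $\sup_{t \le t_0} |\tilde M_{k,N}(t)| \to_p 0$.

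To pass to the limit, note that the map sending $\mathbf{x}$ to the path $t \mapsto \int_0^t x_k(u)\,\dd u$ is continuous from the Meyer--Zheng topology into $(C[0,t_0], \|\cdot\|_\infty)$: convergence in measure of uniformly bounded integrands yields convergence of the integrals at each $t$, and the limit is $1$-Lipschitz in $t$, so the convergence is uniform. Applying the continuous mapping theorem to Theorem \ref{mztheorem} (whose $k$th limiting coordinate is $\1_{\{\domi(u)=k\}}$) gives, jointly over $|k| \le K$, the weak convergence $\Phi_{k,N} \Rightarrow \Phi_k$ in $C[0,t_0]$, where $\Phi_k(t) = \int_0^t \1_{\{\domi(u)=k\}}\,\dd u$. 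Together with the previous step and $\alpha_N \to \alpha$, the data of the integral equation converge jointly and weakly to $(\Phi_k, 0, \alpha)$. For fixed $a$, $\phi \in C[0,t_0]$ and $\eta$, the linear Volterra equation $y = y_0 + a\phi - a\int_0^{\cdot} y + \eta$ has a unique solution depending continuously (in $\|\cdot\|_\infty$) on $(\phi, \eta, a)$ by Gronwall's inequality, so a second application of the continuous mapping theorem yields $\tilde Y_{k,N} \Rightarrow Y_k$, jointly over $|k| \le K$, where $Y_k(0) = \1_{\{k=0\}}$ and $Y_k'(t) = \alpha(\1_{\{\domi(t)=k\}} - Y_k(t))$; this is exactly the $k$th coordinate of $\frac{\dd}{\dd t}{\bf Y}(t) = \alpha({\idom}(t) - {\bf Y}(t))$. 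As each $Y_k$ is continuous, uniform convergence gives $J_1$ convergence of the finite-dimensional marginals.

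It remains to upgrade this coordinatewise statement to convergence of the full $\Delta$-valued processes in the $\ell_1$ norm, which requires a tail estimate: a type $k$ can enter the seed bank only after it has appeared in the active population, and by Theorem \ref{Poissontheo} the number of types that have swept by rescaled time $t_0$ is tight, so one obtains a bound on $\sum_{|k|>K} Y_{k,N}(m)$, uniform in $m \le \rho_N^{-1} t_0$, that is small with high probability for large $K$. This yields tightness of $({\bf Y}_N)$ in $D([0,t_0], \Delta)$ and lets me promote the coordinatewise convergence to convergence in $\Delta$ under $\ell_1$, in $J_1$. I expect the principal obstacle to be the mismatch of topologies: the forcing $X_{k,N}$ converges only in the weak Meyer--Zheng sense (there is no $J_1$ limit, owing to the transient intervals near each $S_{k,N}(a)$ and near the season changes), whereas the conclusion demands $J_1$ convergence of ${\bf Y}_N$. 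The resolution, and the technical heart of the argument, is that $Y$ depends on $X$ only through the time-integral $\Phi_{k,N}$, and integration turns convergence in measure into uniform convergence; verifying continuity of the integration functional and of the Volterra solution map, and assembling the joint weak convergence feeding the two continuous-mapping steps, is where the care lies, with the infinite-dimensional $\ell_1$ state space handled by the tail estimate above.
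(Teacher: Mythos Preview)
Your argument is correct and takes a genuinely different route from the paper's. The paper builds an explicit coupling on the same probability space: for each $k$ it defines a process $y_{k,N}^*$ (solving the same ODE but driven by $d_{k,N}^*$, a version of the indicator process determined by the sweep times $T_{k,N}^*$), and then shows $\sup_t |Y_{k,N}(\lfloor\rho_N^{-1}t\rfloor)-y_{k,N}^*(\lfloor\rho_N^{-1}t\rfloor)|\to_p 0$ by decomposing the dormant population according to the last time each individual entered the seed bank. This splits into three pieces: a two-type auxiliary model controlling the small fraction of non-dominant individuals that slip in or out (Lemma~\ref{dormantred}), a bound showing that the contribution of the short transition windows around sweeps and season changes vanishes (Lemma~\ref{lemtrans}), and a direct computation of the conditional mean of the main piece together with concentration via Lemma~\ref{geomlem} (Lemmas~\ref{expectode}--\ref{dorstarreg1}). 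A final lemma (Lemma~\ref{ww-star}) matches $y_{k,N}^*$ with the coupled limit $y_{k,N}$.

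Your approach bypasses this decomposition entirely. You write the exact one-step recursion, isolate the martingale, kill it with a hypergeometric variance bound and Doob's inequality, and then feed the Meyer--Zheng convergence of Theorem~\ref{mztheorem} through the integration map and the Volterra solution map. The key insight you identify---that integration converts convergence in measure of the bounded forcing $X_{k,N}$ into uniform convergence of $\Phi_{k,N}$, which is exactly the regularity needed for the linear ODE---is precisely what makes the argument work, and it is a clean way to bridge the topology gap. What your route buys is brevity and modularity: you use Theorem~\ref{mztheorem} as a black box rather than reopening the analysis of the active population via Proposition~\ref{activeProp}. What the paper's route buys is a stronger conclusion, namely uniform convergence in probability to an explicit copy of the limit on the same space (not just weak convergence), which in principle yields joint statements about the active and dormant populations without further argument. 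Both approaches handle the $\ell_1$ tail the same way, via tightness of the number of sweep times from Theorem~\ref{Poissontheo}.
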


The assumptions of Regime 2 imply that the seasons change rapidly, so the active population switches quickly back and forth between being dominated by individuals of positive and negative type.  The assumptions also imply that $c_N/(K_N \rho_N) \rightarrow \infty$, which means that the differential equation \eqref{dormdiff} moves fast on the time scale of interest, and therefore the composition of the dormant population should rapidly converge to an equilibrium.  Because the seed bank is being populated with individuals of the positive type a fraction $\beta$ of the time, in equilibrium a fraction $\beta$ of the individuals in the seed bank will have a positive type, while a fraction $1 - \beta$ will have a negative type.  This leads to the following limit theorem in Regime 2.  

\begin{Theo}\label{dormantthm2}
For all $k \in \Z$, define
\begin{displaymath}
Y_k(t) = \left\{
\begin{array}{ll} \beta \1_{\{T_k \leq t < T_{k+1}\}} & \mbox{ if }k \geq 1  \\
(1 - \beta) \1_{\{T_{k} \leq t < T_{k-1}\}} & \mbox{ if }k \leq -1 \\
\beta \1_{\{t < T_{1}\}} + (1 - \beta) \1_{\{t <T_{-1}\}} & \mbox{ if }k = 0.
\end{array} \right.
\end{displaymath}
Let ${\bf Y}(t)$ be the $\Delta$-valued random variable whose $k$th coordinate is $Y_k(t)$.
In Regime 2, the processes $({\bf Y}_N(\lfloor \rho_N^{-1} t \rfloor), t \geq 0)$ converge as $N \rightarrow \infty$ to $({\bf Y}(t), t \geq 0)$, in the sense of weak convergence of stochastic processes with respect to the Meyer-Zheng topology.
\end{Theo}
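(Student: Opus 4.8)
The plan is to work in the Meyer--Zheng topology through its description as convergence in Lebesgue measure, so that, after passing to a probability space on which $\rho_N S_{k,N}(a)\to T_k$ almost surely for each $k$ (a Skorokhod coupling of the joint law, using Theorem \ref{Poissontheo}), it suffices to show that for every $T>0$ the sample paths satisfy $\int_0^T \|\mathbf{Y}_N(\lfloor\rho_N^{-1}t\rfloor)-\mathbf{Y}(t)\|_1\,\dd t \to_p 0$. Fixing $K$ and treating the coordinates $|k|\le K$, the tail is negligible because $\sum_k Y_{k,N}\le 1$ and $T_{\pm K}>T$ with probability close to one for $K$ large. The dormant population changes only through switching, and the numbers of type-$k$ individuals among the $K_N-c_N$ retained dormant and the $c_N$ active-to-dormant switchers are conditionally hypergeometric; writing $\lambda_N=c_N/K_N$, this yields the linear recursion $\Eb[Y_{k,N}(m+1)\mid\Fs_{m,N}] = (1-\lambda_N)Y_{k,N}(m)+\lambda_N X_{k,N}(m)$, whose solution is
$$Y_{k,N}(m) = (1-\lambda_N)^m Y_{k,N}(0) + \lambda_N\sum_{j=0}^{m-1}(1-\lambda_N)^{m-1-j}X_{k,N}(j) + M_{k,N}(m),$$
with $M_{k,N}$ a martingale assembled from the centered hypergeometric increments.

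Two of the three terms are disposed of quickly. Each increment of $M_{k,N}$ has conditional variance of order $c_N/K_N^2$, and the squared geometric weights $(1-\lambda_N)^{2(m-1-j)}$ sum to order $\lambda_N^{-1}=K_N/c_N$, so $\Eb[M_{k,N}(m)^2]\lesssim 1/K_N$ uniformly in $m$; by Fubini, $\Eb\int_0^T M_{k,N}(\lfloor\rho_N^{-1}t\rfloor)^2\,\dd t\lesssim T/K_N\to 0$, and the martingale term vanishes in $L^2([0,T])$ hence in measure. The initial term is $(1-\lambda_N)^m Y_{k,N}(0)$ with $Y_{k,N}(0)=\1_{\{k=0\}}$; since $(1-\lambda_N)^{\lfloor\rho_N^{-1}t\rfloor}\approx \exp(-t\,c_N/(K_N\rho_N))$ and $c_N/(K_N\rho_N)\to\infty$ in Regime 2 by \eqref{Reg2Asm}, this term tends to zero for every $t>0$ and contributes nothing to the time-integral.

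It remains to analyze the convolution term. Let $\chi_{k,N}(j)=\1_{\{\dom_{N,a}(j)=k\}}$ and let $\hat Y_{k,N}(m)=\lambda_N\sum_{j<m}(1-\lambda_N)^{m-1-j}\chi_{k,N}(j)$ be the same convolution driven by the dominant-type indicator. The clean point is that, because the geometric kernel has total mass one when summed over $m$, swapping the order of summation gives
$$\rho_N\sum_{m=0}^{\lfloor\rho_N^{-1}T\rfloor}\Big|\lambda_N\sum_{j<m}(1-\lambda_N)^{m-1-j}\big(X_{k,N}(j)-\chi_{k,N}(j)\big)\Big| \le \rho_N\sum_{j=0}^{\lfloor\rho_N^{-1}T\rfloor}\big|X_{k,N}(j)-\chi_{k,N}(j)\big|,$$
and the right-hand side tends to zero in probability by Theorem \ref{activethm}. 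Thus $\int_0^T|Y_{k,N}(\lfloor\rho_N^{-1}t\rfloor)-\hat Y_{k,N}(\lfloor\rho_N^{-1}t\rfloor)|\,\dd t\to_p 0$: although the kernel deposits weight at the fast density $\lambda_N\gg\rho_N$, integrating it against the slow time-average of Theorem \ref{activethm} recovers exactly the density $\rho_N$, so the black-box active-population estimate is already strong enough, with no pointwise localization of the error required.

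Finally I identify the limit of $\hat Y_{k,N}$ by averaging. On the coupling where the sweep times converge, for $k\ge 1$ one has $\chi_{k,N}(j)=\1_{\{R_N(j)=1\}}\1_{\{S_{k,N}(a)\le j<S_{k+1,N}(a)\}}$. For $t$ bounded away from the finitely many transition times $T_\ell$, the kernel attached to $m=\lfloor\rho_N^{-1}t\rfloor$ is concentrated on generations within $O(\lambda_N^{-1}\log)=O((K_N/c_N)\log)$ of $m$, which on the rescaled clock is an $o(1)$ window by \eqref{Reg2Asm}; this window lies inside $(T_k,T_{k+1})$, so the sweep indicator is one throughout and $\hat Y_{k,N}(m)$ reduces to the geometric-kernel average of the periodic indicator $\1_{\{R_N(j)=1\}}$. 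Since the kernel width $K_N/c_N$ greatly exceeds the period $U_N$ by \eqref{Reg2Asm}, this average converges to the duty cycle $V_N/U_N\to\beta$, giving $\hat Y_{k,N}(\lfloor\rho_N^{-1}t\rfloor)\to\beta$ on $(T_k,T_{k+1})$ and $\to 0$ elsewhere; the cases $k\le -1$ and $k=0$ follow identically with $1-\beta$ and with the two seasonal contributions added, matching the definition of $Y_k$. I expect the main obstacle to be exactly this averaging step: one must show that the geometric average of a periodic indicator converges to its mean, and must control the kernel near the random transition times $T_\ell$, where it straddles two distinct dominant types, so that the resulting errors, though not pointwise small, are small in the time-integral and hence invisible in the Meyer--Zheng topology. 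The role of \eqref{KcU}, ensuring that dormant individuals rejoin the active population fast enough, is absorbed into Theorem \ref{activethm}, on which the reduction above rests.
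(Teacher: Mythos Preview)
Your approach is correct and takes a genuinely different route from the paper's. The paper does not use the linear recursion and Duhamel decomposition; instead it tracks, for each dormant individual, the generation at which it last switched from active to dormant, partitions the generations into ``good'' sets $\Gamma_{k,N}$ (well-established seasons where type $k$ dominates the active population) and ``transition'' sets $\partial\Gamma_N$ near season changes and sweep times, and then proves concentration of the counting variable $Y_{\Gamma_{k,N}}(n)$ around its conditional mean via the negatively-correlated-Bernoulli inequality of Lemma~\ref{geomlem}. The discrepancy $|Y_{k,N}-Y_{\Gamma_{k,N}}/K_N|$ is controlled separately: individuals that switched during transition generations are handled by Lemma~\ref{lemtrans}, and non-dominant types entering the seed bank during good generations are handled by coupling to an auxiliary two-type model (Lemma~\ref{dormantred}). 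This combinatorial organization yields the slightly stronger conclusion of uniform convergence on a set $I_{k,N}(\delta,t_0)$ whose complement has Lebesgue measure at most $4\delta$.

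Your route is more economical for the Meyer--Zheng statement as written: the hypergeometric variance bound replaces the concentration of Lemma~\ref{geomlem}; the swap-of-sums trick, exploiting that the geometric kernel has unit mass, replaces both the two-type model and the transition-generations analysis; and Theorem~\ref{activethm} is invoked purely as an $L^1$-in-time black box rather than through the pointwise Proposition~\ref{activeProp}. The cost is that you obtain only $L^1([0,T])$ convergence rather than uniform convergence on a large set, but that is precisely what Meyer--Zheng requires. One small correction: your $M_{k,N}(m)=\sum_{j<m}(1-\lambda_N)^{m-1-j}\eta_j$ is not a martingale in $m$, since the weights depend on $m$; it is a weighted sum of martingale differences $\eta_j$, and what your variance computation actually uses is their orthogonality, which is enough. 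The averaging step---showing that the geometric kernel of width $K_N/c_N\gg U_N$ applied to the $U_N$-periodic indicator $\1_{\{R_N(j)=1\}}$ converges to $V_N/U_N\to\beta$---is correct as sketched and is indeed the only place where the separation $U_N\ll K_N/c_N$ of \eqref{Reg2Asm} is used directly.
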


Note that in Regime 2, because the limit process does not have continuous paths, we can not expect the convergence to hold in Skorohod's $J_1$ topology, and we must use the weaker Meyer-Zheng topology.

Theorems \ref{dormantthm1} and \ref{dormantthm2} are proved in Section \ref{proofsdormant}.

\subsubsection{Speciation and the genealogy of the population}

There is no consensus within the biological community on a precise definition of bacterial speciation, and addressing this foundational issue is beyond the scope of the present paper. Instead, we introduce two natural notions of distance between individuals.  We then show that, according to both notions, individuals specialized for summer and winter conditions progressively diverge, ultimately becoming distinct species in this sense.

Given an individual $x_1$ in generation $m_1$ and an individual $x_2$ in generation $m_2$, we can define a distance between them in two ways.  The genetic distance, which we will denote by $d(x_1,x_2)$, is the absolute value of the difference between the types of the individuals.  We can also define a genealogical distance $g(x_1, x_2)$.  If $\ell$ is the generation in which the most recent common ancestor between $x_1$ and $x_2$ lived, then we define $g(x_1, x_2) = \rho_N( \frac{1}{2}(m_1 + m_2) - \ell)$.  If $m_1 = m_2$, then $g(x_1,x_2)$ is the time back to the most recent common ancestor of the two individuals, after the time scaling by $\rho_N$.

For $t > 0$, let 
\begin{align*}
\chi_{N,t}^+ &= \min\{jU_N + V_N - 1: j \in \Z \mbox{ and } j U_N + V_N - 1 \geq \rho_N^{-1} t\} \\
\chi_{N,t}^- &= \min\{jU_N - 1: j \in \Z \mbox{ and } jU_N - 1 \geq \rho_N^{-1} t\}.
\end{align*}
Note that $\chi_{N,t}^+$ is the first generation after $\rho_N^{-1} t$ that is at the end of a summer, and $\chi_{N,t}^-$ is the first generation after $\rho_N^{-1} t$ that is at the end of a winter.  For $1 \leq i \leq N$ and nonnegative integers $m$, let $x_i(m)$ denote the $i$th individual in generation $m$ after a random reordering of the individuals, where the ordering is done independently in each generation.  The following result shows that the genetic distance between two individuals sampled at the end of a given season is typically zero, but that the genetic distance between two individuals sampled at the end of opposite seasons gets arbitrarily large over time.

\begin{Prop}\label{mutdistance}
If $t > 0$, then in both regimes,
\begin{equation}\label{mutsame}
\lim_{N \rightarrow \infty} P\big(d(x_1(\chi_{N,t}^+), x_2(\chi_{N,t}^+)) = 0\big) = \lim_{N \rightarrow \infty} P\big(d(x_1(\chi_{N,t}^-), x_2(\chi_{N,t}^-)) = 0\big) = 1.
\end{equation}
Also, for all $\eps > 0$ and all positive integers $n$, there exists a $t^* > 0$ such that if $t > t^*$, then
\begin{equation}\label{mutopp}
\liminf_{N \rightarrow \infty} P\big(d(x_1(\chi_{N,t}^+), x_1(\chi_{N,t}^-)) > n\big) > 1 - \eps.
\end{equation}
\end{Prop}

The next result is a similar result for the genealogical distance.  It shows that with high probability, two individuals sampled from the end of a season have a common ancestor in the recent past, while two individuals sampled at the end of opposite seasons have their most recent common ancestor near time zero.

\begin{Prop}\label{gendistance}
Let $\eps > 0$.  Then in both regimes, there exists a $t^* > 0$ such that for all $t > t^*$, we have
\begin{equation}\label{gensummer}
\liminf_{N \rightarrow \infty} P\big(g(x_1(\chi_{N,t}^+), x_2(\chi_{N,t}^+)) < t^*\big) > 1 - \eps,
\end{equation}
\begin{equation}\label{genwinter}
\liminf_{N \rightarrow \infty} P\big(g(x_1(\chi_{N,t}^-), x_2(\chi_{N,t}^-)) < t^*\big) > 1 - \eps,
\end{equation}
and
\begin{equation}\label{genopp}
\liminf_{N \rightarrow \infty} P\big(g(x_1(\chi_{N,t}^+), x_1(\chi_{N,t}^-)) > t - t^* \big) > 1 - \eps.
\end{equation}
\end{Prop}

Propositions \ref{mutdistance} and \ref{gendistance} both follow easily from ideas in the proof of Theorem \ref{activethm} and are proved in subsection \ref{proofsdistance}.

\subsubsection{A coupling}\label{earlycouple}

To prove these results, we will construct, for each $N$, random times $T_{k,N}$ such that the sequence $(T_{k,N})_{k=-\infty}^{\infty}$ has exactly the same distribution as $(T_k)_{k=-\infty}^{\infty}$.  In particular, the sequences $(I^+(T_{k,N}))_{k=1}^{\infty}$ and $(I^-(T_{k,N}))_{k=1}^{\infty}$ will be the event times of independent rate one Poisson processes.  We will show that for all $a \in (0,1)$ and all $k \in \Z$, as $N \rightarrow \infty$ we have
$$\big| \rho_N S_{k,N}(a) - T_{k,N} \big| \rightarrow_p 0,$$
which immediately implies Theorem \ref{Poissontheo}.

We can also define $n_N^+(t) = \sup\{k \geq 0: T_{k,N} \leq t\}$ and $n_N^-(t) = \sup\{k \geq 0: T_{-k,N} \leq t\}$, so that the processes $(n_N^+(t), t \geq 0)$ and $(n_N^-(t), t \geq 0)$ have the same laws as $(\mathcal{N}^+(t), t \geq 0)$ and $(\mathcal{N}^-(t), t \geq 0)$ respectively.  Then for each $k \in \Z$, we can define $d_{k,N}(t) = \1_{\{n_N^+(t) = k\}}$ when $R(t) = 1$ and $d_{k,N}(t) = \1_{\{n_N^-(t) = -k\}}$ when $R(t) = -1$.

In Regime 1, for $k \in \Z$, we define the function $y_{k,N}$ as the solution to the differential equation
$$\frac{\dd }{\dd t} y_{k,N}(t) = \alpha(d_{k,N}(t) - y_{k,N}(t)), \qquad y_{k,N}(0) = \1_{\{k = 0\}}.$$
Then the process $(y_{k,N}(t), t \geq 0)$ has the same law as the $k$th coordinate of $({\bf Y}(t), t \geq 0)$.
We will show that for each fixed $t_0 > 0$, as $N \rightarrow \infty$ we have
$$\sup_{t \in [0, t_0]} \big| Y_{k,N}(\lfloor \rho_N^{-1} t \rfloor) - y_{k,N}(t) \big| \rightarrow_p 0,$$
which implies Theorem \ref{dormantthm1}.

In Regime 2, we define $y_{k,N}(t)$ in the same way that $Y_k(t)$ is defined in Theorem \ref{dormantthm2}, except with $T_{k,N}$ in place of $T_k$.  We will show that for all $\delta > 0$ there is a set $I_{k,N}(\delta, t_0) \subseteq [0, t_0]$ such that the Lebesgue measure of $[0, t_0] \cap I_{k,N}(\delta, t_0)^c$ is bounded above by $4 \delta$, and
$$\sup_{t \in I_{k,N}(\delta, t_0)} \big| Y_{k,N}(\lfloor \rho_N^{-1} t \rfloor) - y_{k,N}(t) \big| \rightarrow_p 0.$$
This will imply Theorem \ref{dormantthm2}.

\subsection{Simulation results}

We ran some simulations of our process using MATLAB.  For these simulations, we used the parameters $N = 10,000$, $K_N = 10,000$, $c_N = 10$, $s_N = 0.1$, and $\mu_N = 0.00002$.  Figure \ref{fitness} shows the evolution of the mean fitness of the active population over time.  For this simulation, the summers and winters were both 500 generations long, and we plotted the mean type of the $10,000$ active individuals in each generation, for the first five years.

\begin{figure}[h!]
\begin{center}
\includegraphics[scale=0.5, trim={1.8cm 8.1cm 1.9cm 8.2cm}, clip]{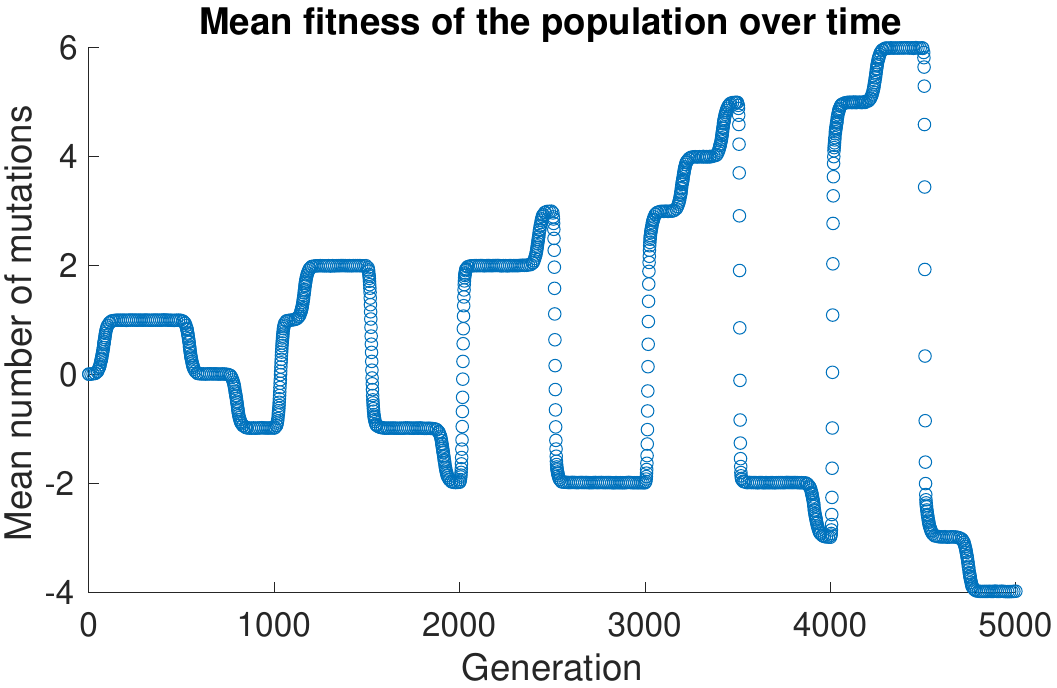}
\end{center}
\caption{{\small The mean type of the active individuals is plotted as a function of the generation number.  Positive mutations became established in the population early in each of the first two summers and late in the third summer.  Two more positive mutations became established during the fourth summer, and a sixth positive mutation appeared during the fifth summer.  Negative mutations became established in the population during the first, second, fourth, and fifth winters.}}
\label{fitness}
\end{figure}

The simulations shown in Figure \ref{actdorm} used the same parameters, except that each season lasted for 250 generations instead of 500.  We have plotted histograms of the type distributions in the active and dormant populations at different times.

\begin{figure}[h!]
\begin{center}
\includegraphics[scale=0.212, trim={1cm 6.6cm 2cm 6.8cm}, clip]{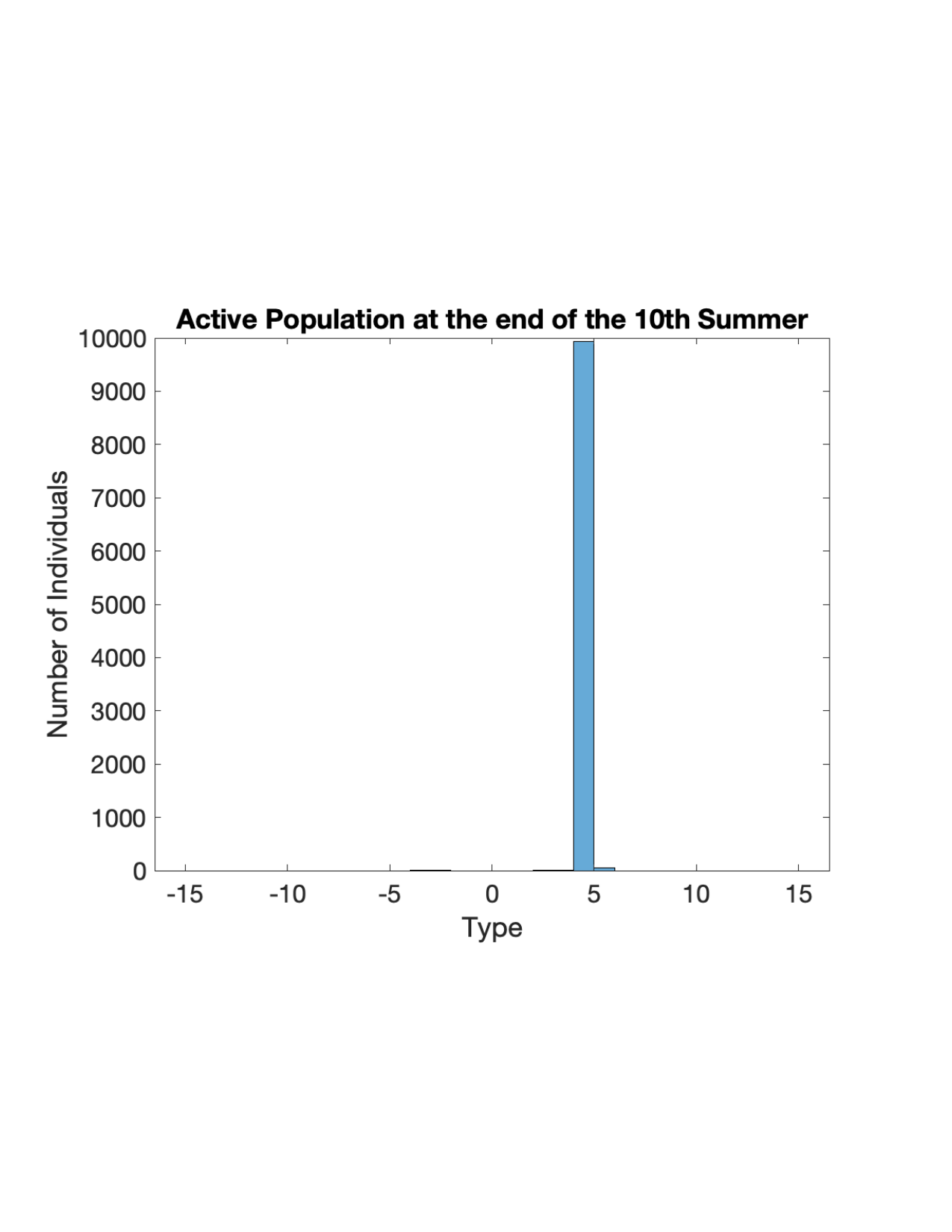}
\includegraphics[scale=0.212, trim={1cm 6.6cm 2cm 6.8cm}, clip]{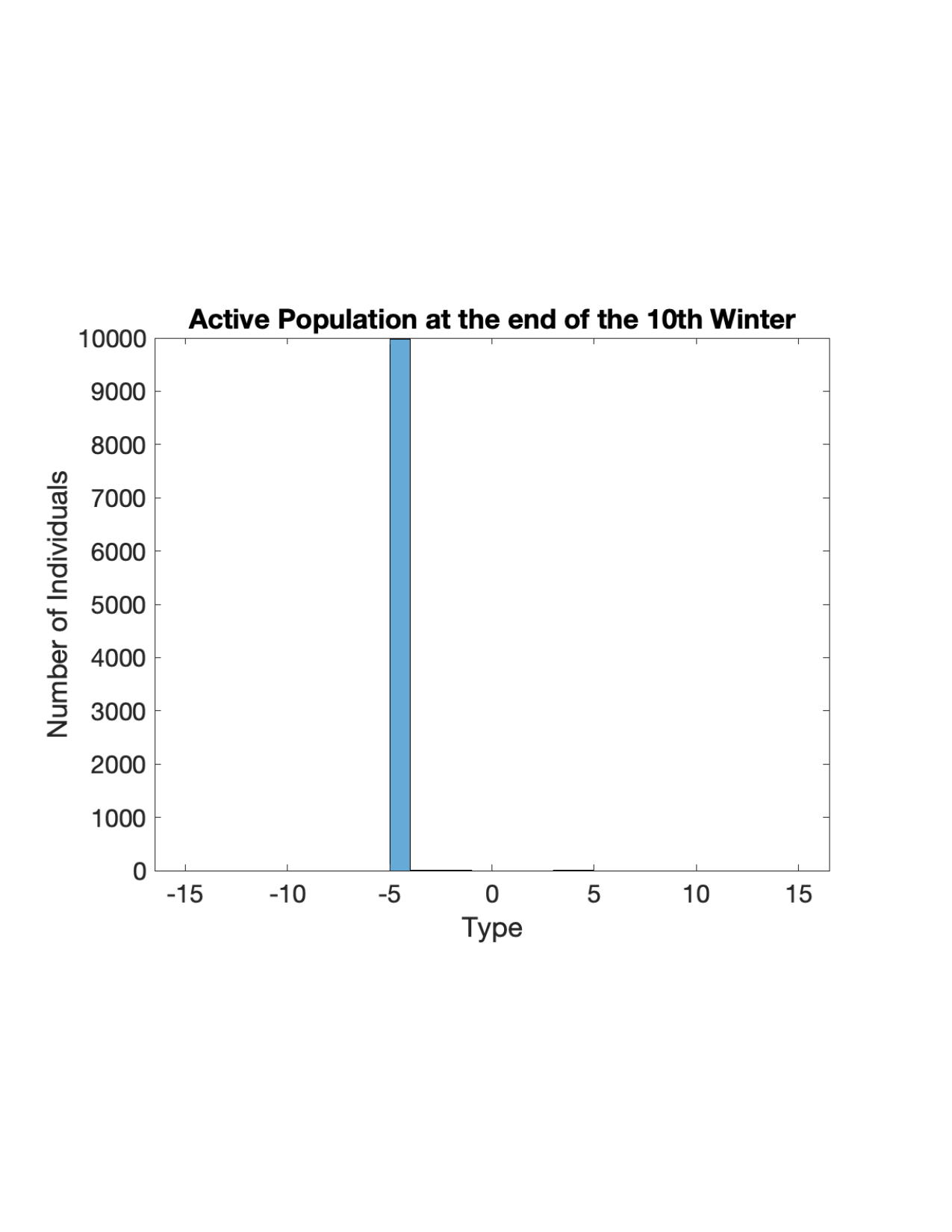} 
\includegraphics[scale=0.212, trim={1cm 6.6cm 2cm 6.8cm}, clip]{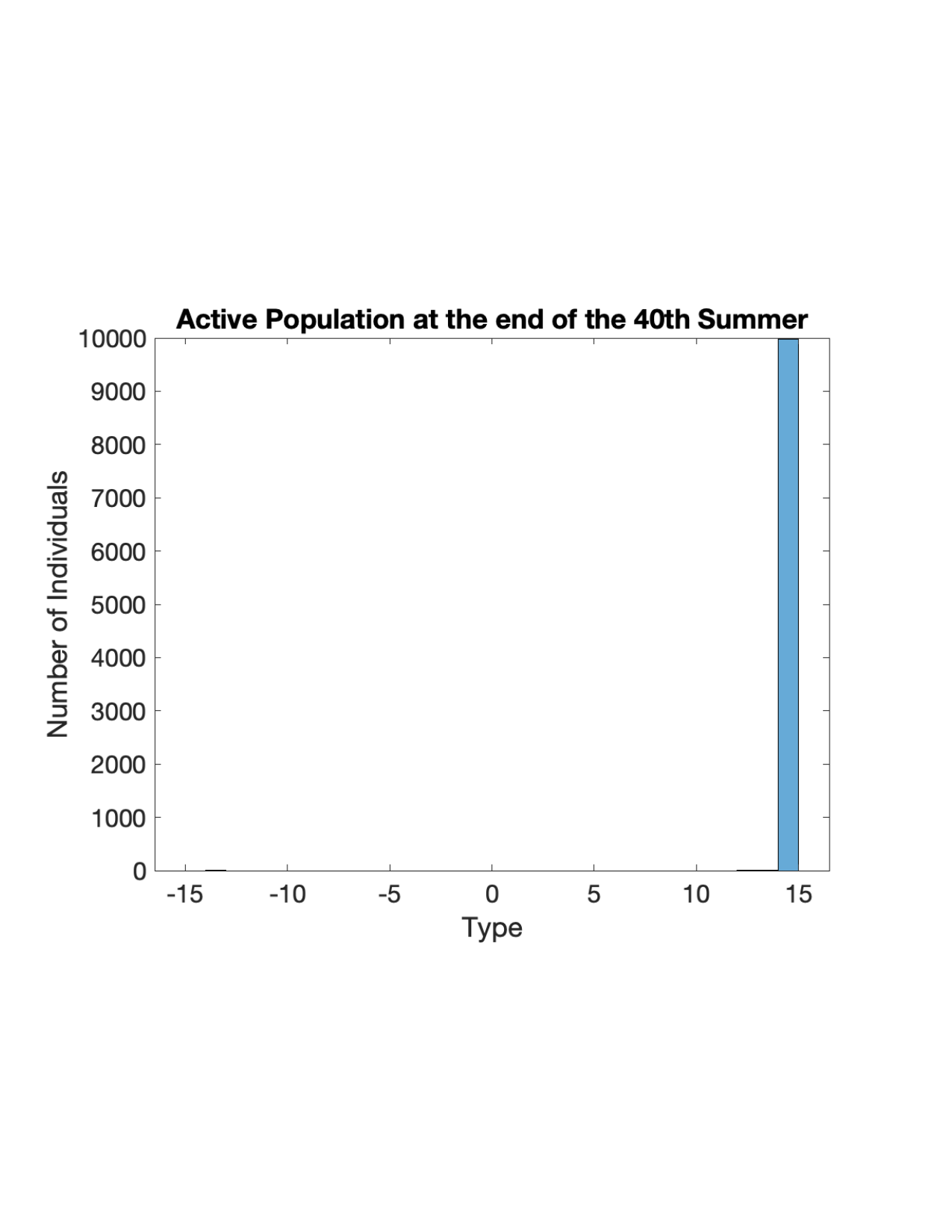}
\includegraphics[scale=0.212, trim={1cm 6.6cm 2cm 6.8cm}, clip]{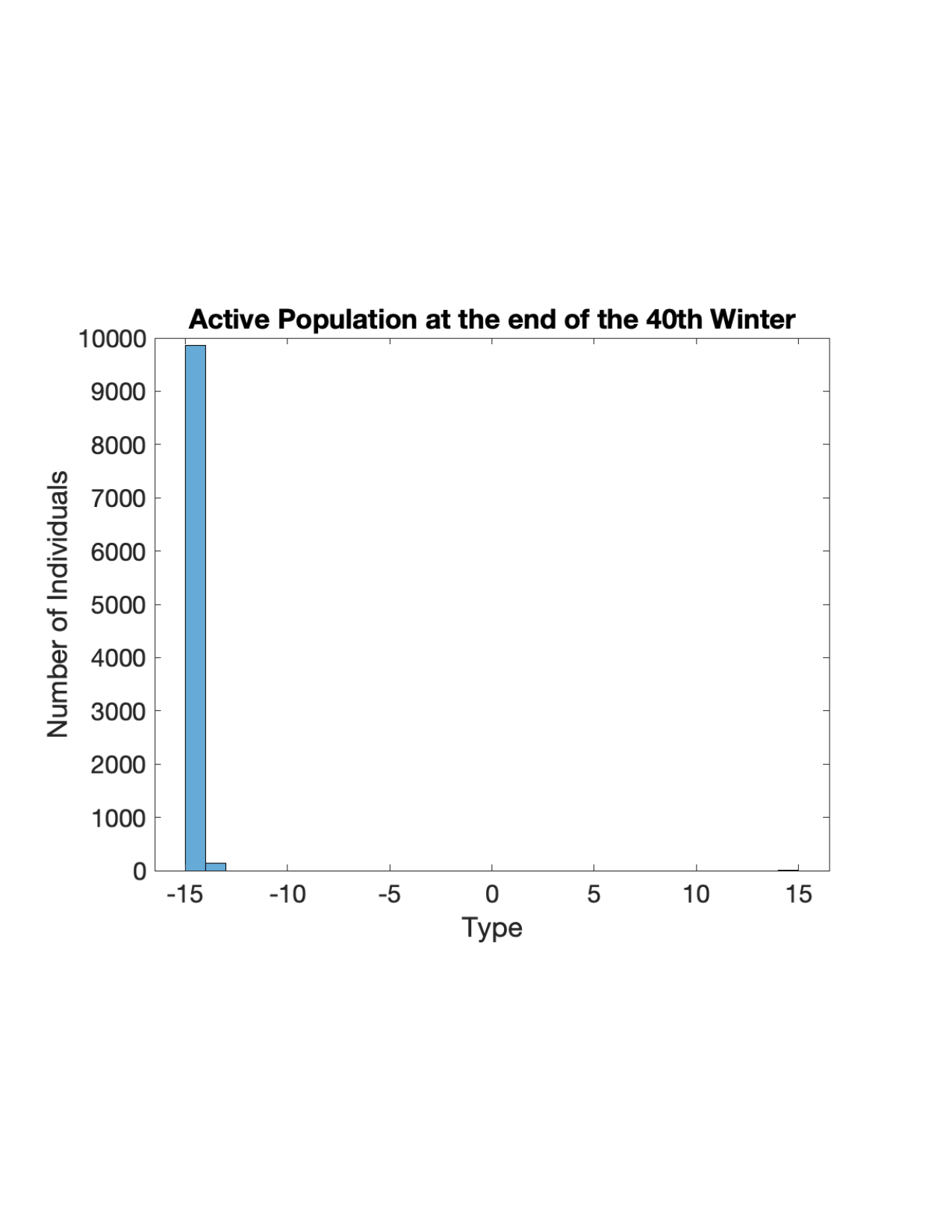}
\includegraphics[scale=0.212, trim={1cm 6.6cm 2cm 6.8cm}, clip]{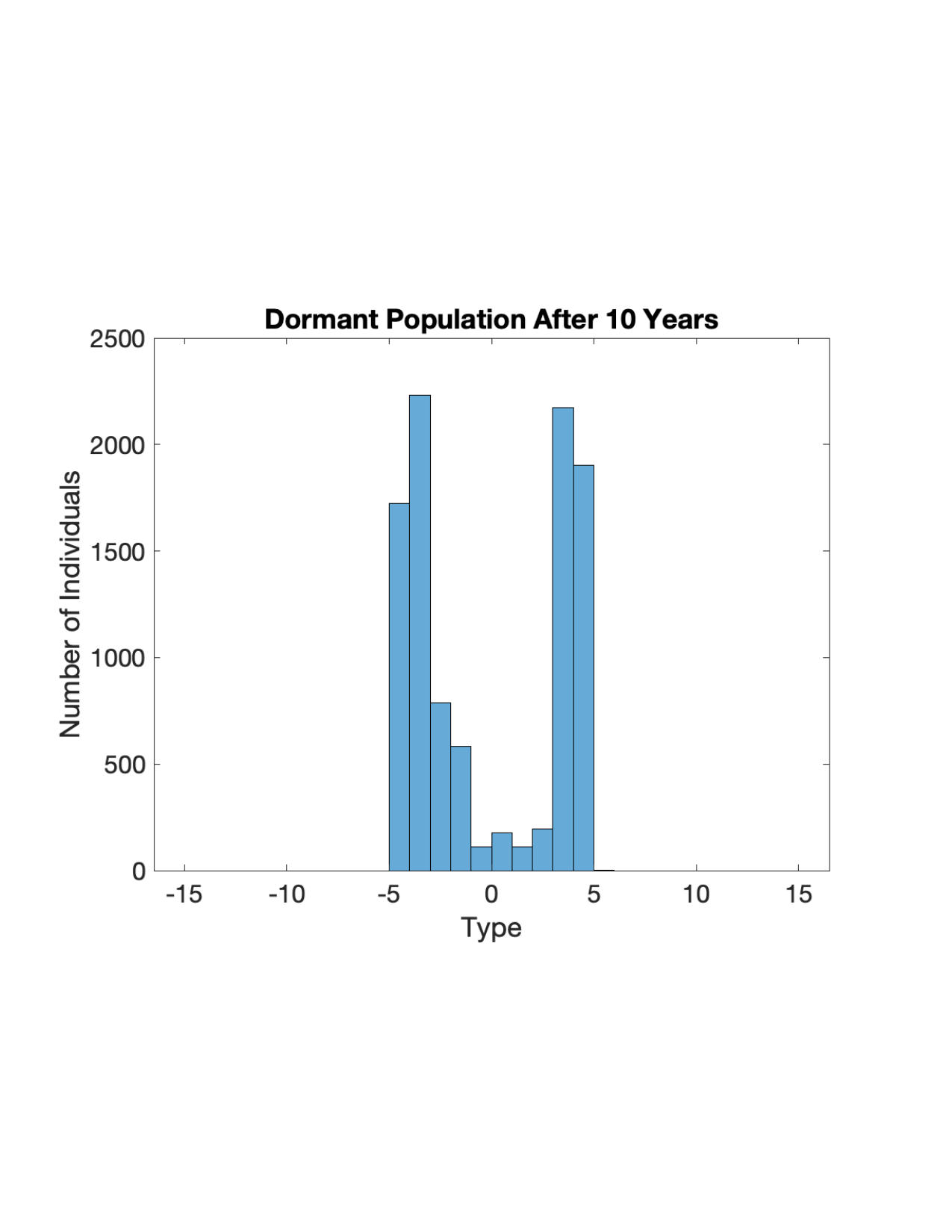}
\includegraphics[scale=0.212, trim={1cm 6.6cm 2cm 6.8cm}, clip]{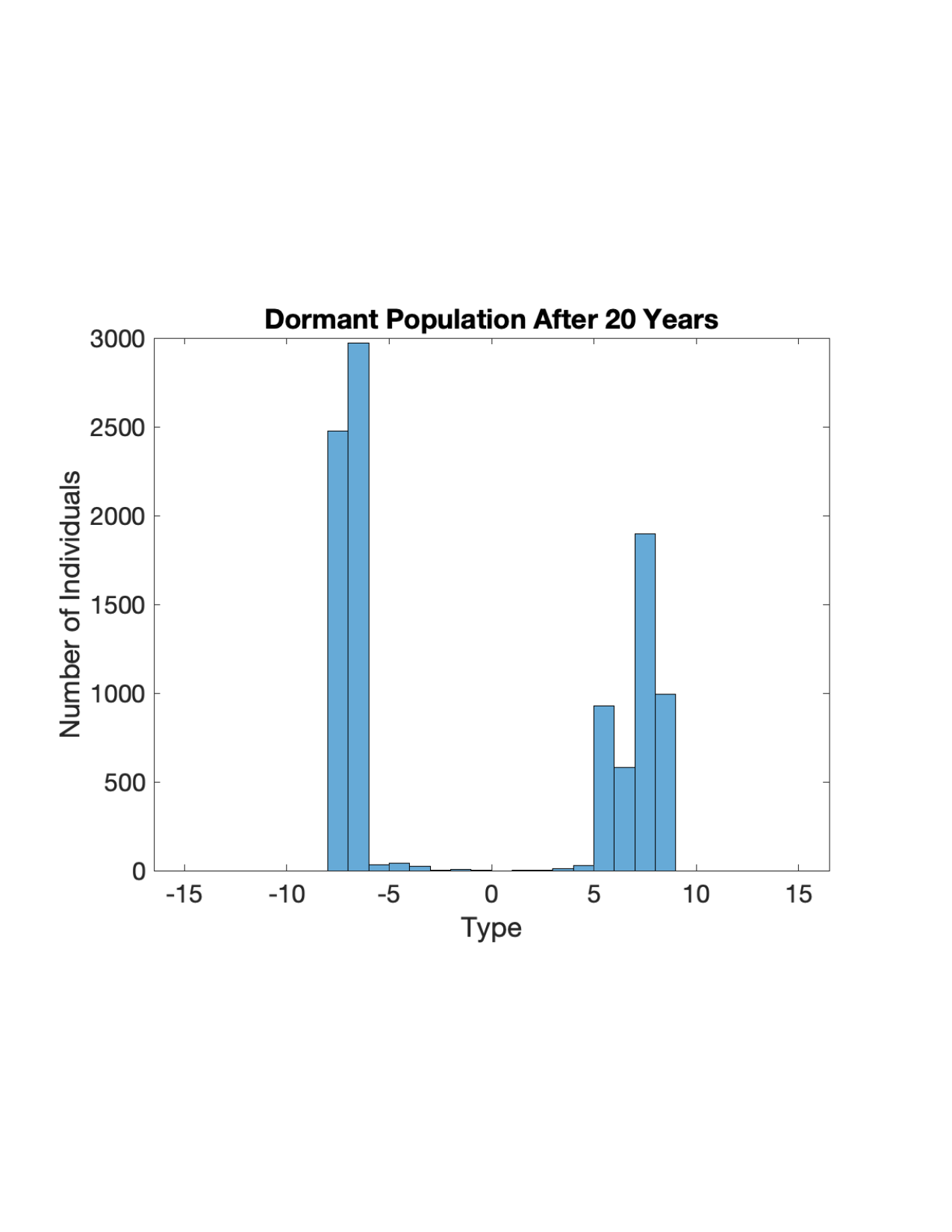} 
\includegraphics[scale=0.212, trim={1cm 6.6cm 2cm 6.8cm}, clip]{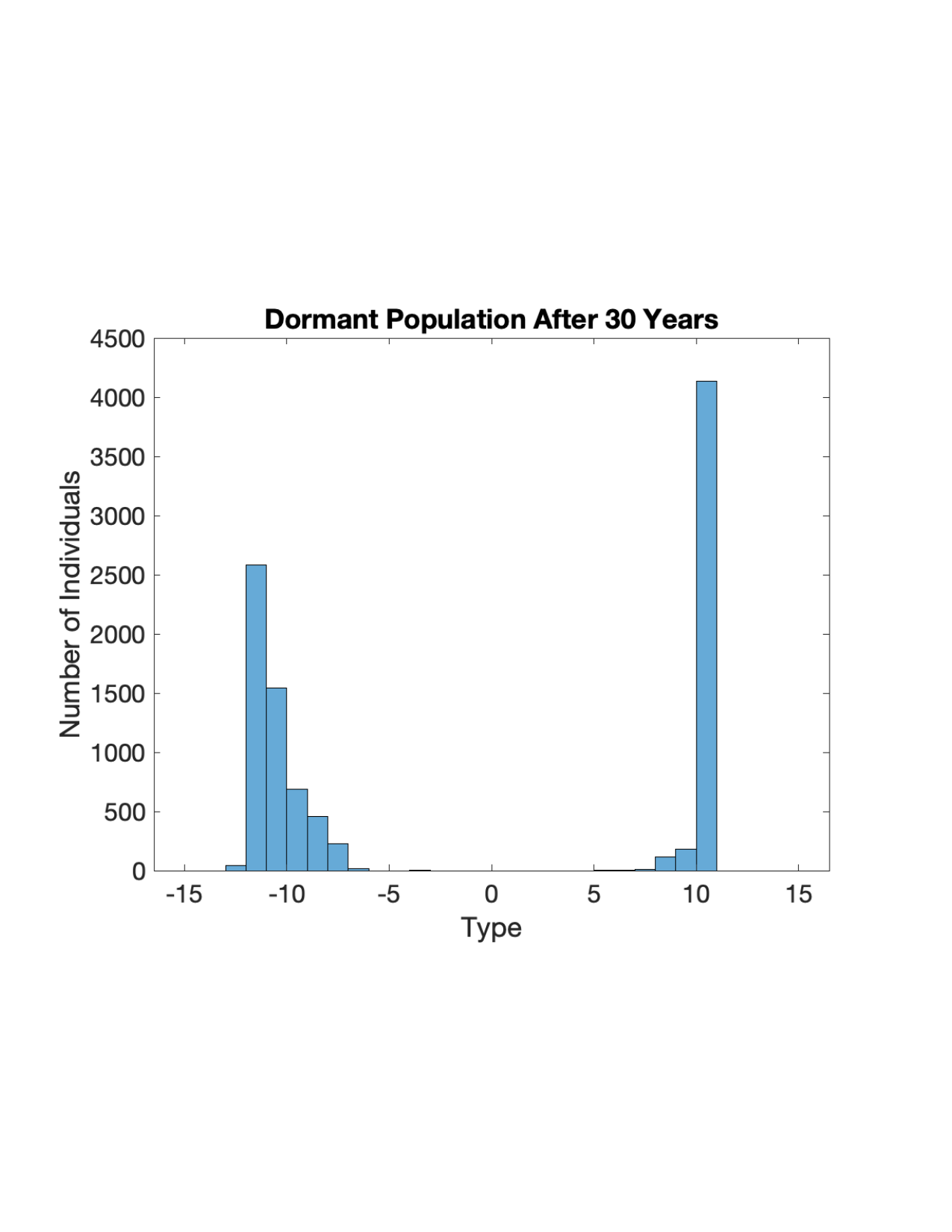}
\includegraphics[scale=0.212, trim={1cm 6.6cm 2cm 6.8cm}, clip]{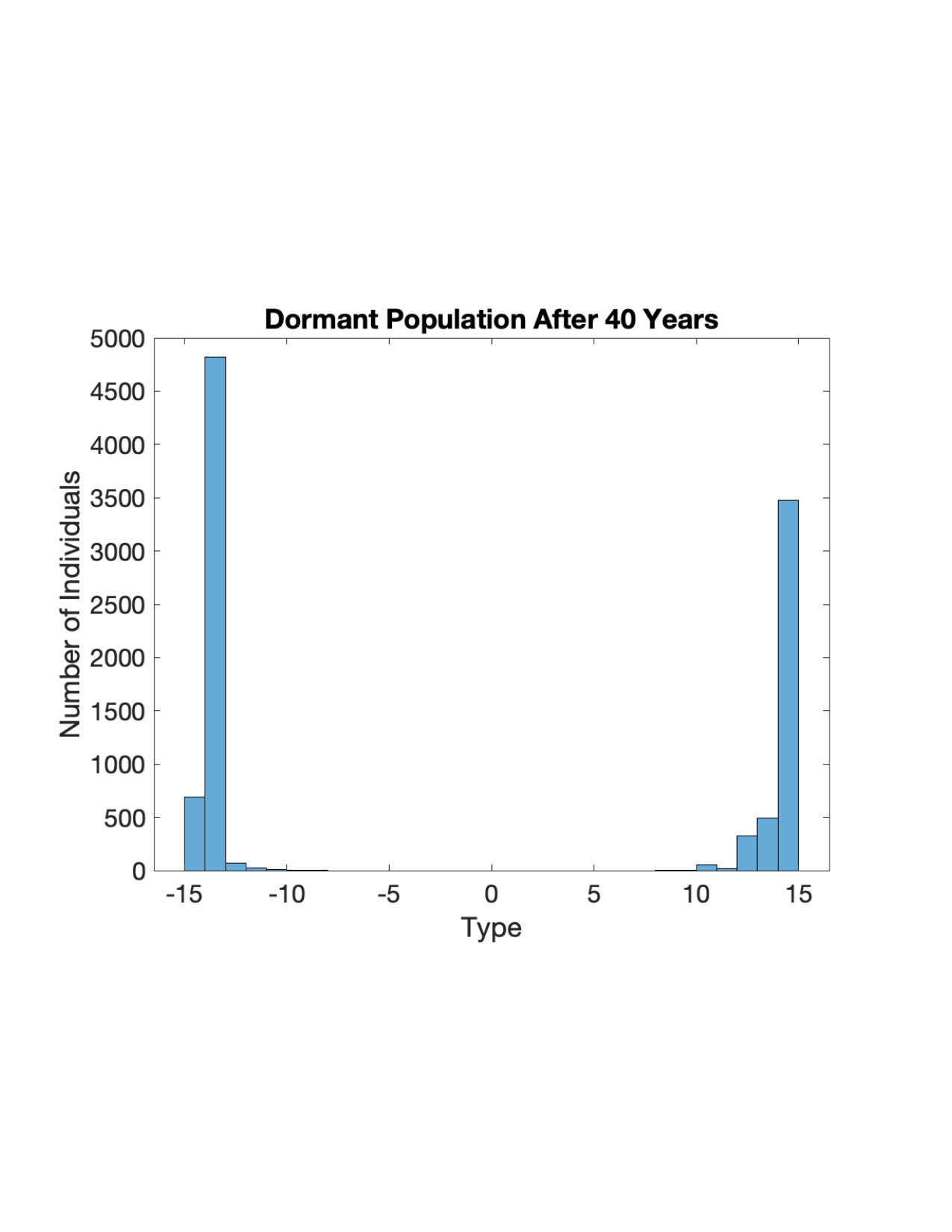}
\end{center}
\caption{{\small The top row shows the type distribution in the active population at the end of the 10th summer, the 10th winter, the 40th summer, and the 40th winter.  The bottom row shows the dormant population after 10 years, 20 years, 30 years, and 40 years.  We see the emergence of two subpopulations, whose genetic distance gets further apart over time.}}
\label{actdorm}
\end{figure}

\subsection{Organization of the paper}

The rest of this paper is organized as follows.  In Section \ref{const}, we give a construction of the population process which will be used throughout the proofs.  In Section \ref{proofsactive}, we prove Theorems \ref{Poissontheo}, \ref{activethm}, and \ref{mztheorem}, all of which pertain to the active population.  In Section \ref{proofsdormant}, we prove 
Theorems \ref{dormantthm1} and \ref{dormantthm2}, which describe to the composition of the dormant population.  A table of notation appears at the end of the paper.

\section{Construction of the process}\label{const}

In this section, we give a careful construction of the population process.  The construction that we give is not the simplest possible but instead is designed to facilitate a coupling with a branching process.  In every generation, the individuals in the active population will be labeled by the integers $1, \dots, N$, and the individuals in the dormant population will be labeled by the integers $1, \dots, K_N$.  The following random variables will be needed for the construction:
\begin{enumerate}
\item For every $h \in \{1, \dots, N\}$, every $i \in \{1, \dots, N - c_N\}$, and every $m \in \N_0$ we have a random variable $U_{h,i,m,N}$.  The random variables $U_{h,i,m,N}$ are independent and have a uniform distribution on $(0,1)$.  These random variables will be used to identify the parents of the active individuals in generation $m+1$.

\item For every $i \in \{1, \dots, N - c_N\}$ and every $m \in \N_0$, we have a random variable $\xi_{i,m,N}$ such that $P(\xi_{i,m,N} = 1) = P(\xi_{i,m,N} = -1) = \mu_N$, and $P(\xi_{i,m,N} = 0) = 1 - 2 \mu_N$.  The random variables $\xi_{i,m,N}$ are independent of one another, and independent of the random variables $U_{h,i,m,N}$.  The random variables $\xi_{i,m,N}$ will be used to determine whether the active individual with label $i$ in generation $m+1$ acquires a mutation.

\item For every $m \in \N_0$, we have a random ordered $c_N$-tuple $(\mathcal{S}^a_{1,m,N}, \dots, \mathcal{S}^a_{c_N, m,N})$ consisting of $c_N$ distinct integers chosen uniformly at random from the set $\{1, \dots, N\}$, and placed in random order.  We also have another random ordered $c_N$-tuple $(\mathcal{S}^d_{1,m,N}, \dots \mathcal{S}^d_{c_N,m,N})$ consisting of $c_N$ distinct integers chosen uniformly at random from the set $\{1, \dots, K_N\}$, and placed in random order.  These random $c_N$-tuples are independent of each other, and independent of the random variables $U_{h,i,m,N}$ and $\xi_{i,m,N}$.  They will be used to identify which individuals move between the active and dormant populations.
\end{enumerate}

We denote by $Z^a_{i,m,N}$ the type of the individual labelled $i$ in the active population in generation $m$, and we denote by $Z^d_{i,m,N}$ the type of the individual labelled $i$ in the dormant population in generation $m$.  Therefore,
$$A_{k,N}(m) = \sum_{i=1}^N \1_{\{Z^a_{i,m,N} = k\}}, \qquad D_{k,N}(m) = \sum_{i=1}^{K_N} \1_{\{Z^d_{i,m,N} = k\}}.$$
In generation zero, all individuals have type zero.  That is, we have $Z^a_{i,0,N} = 0$ for all $i \in \{1, \dots, N\}$ and $Z^d_{i,0,N} = 0$ for all $i \in \{1, \dots, K_N\}$.  It remains to describe how, for each nonnegative integer $m$, we obtain generation $m+1$ from generation $m$.

To construct the dormant population in generation $m+1$, for $i \in \{1, \dots, K_N\}$, we let
\begin{displaymath}
Z^d_{i,m+1,N} = \left\{
\begin{array}{ll} Z^d_{i,m,N} & \mbox{ if }i \notin \{\mathcal{S}^d_{1,m,N}, \dots, \mathcal{S}^d_{c_N,m,N}\}  \\
Z^a_{\mathcal{S}^a_{j,m,N}, m,N} & \mbox{ if }i = \mathcal{S}^d_{j,m,N}.
\end{array} \right.
\end{displaymath}
That is, for $i \notin \{\mathcal{S}^d_{1,m,N}, \dots, \mathcal{S}^d_{c_N,m,N}\}$, the individual labeled $i$ remains dormant.  The dormant individuals labeled $\mathcal{S}^d_{1,m,N}, \dots, \mathcal{S}^d_{c_N,m,N}$ move to the active population, while the active individuals labeled $\mathcal{S}^a_{1,m,N}, \dots, \mathcal{S}^a_{c_N,m,N}$ move to the dormant population.

Constructing the active population is more involved and proceeds in four steps:

\begin{enumerate}
\item If $i = N - c_N + j$, where $j \in \{1, \dots, c_N\}$, then let
$$Z^a_{i, m+1,N} = Z^d_{\mathcal{S}^d_{j,m,N},m,N}.$$
That is, the active individuals with the last $c_N$ labels in generation $m+1$ will have been dormant in generation $m$.

\item We assign marks to the $N$ active individuals in generation $m$.  These marks will become important when we couple the population process with a branching process.  The marks are a permutation of the labels which orders the individuals by fitness, so that the individuals with the highest fitness come first.  More precisely, the individual with label $i$ in generation $m$ has mark $\sigma_{m,N}(i)$, where $\sigma_{m,N}$ is the unique permutation of $\{1, \dots, N\}$ such that $\sigma_{m,N}(i) < \sigma_{m,N}(j)$ if and only if either $R_N(m) Z^a_{i,m,N} > R_N(m) Z^a_{j,m,N}$ or $Z^a_{i,m,N} = Z^a_{j,m,N}$ and $i < j$.  In particular, note that if $j$ individuals in generation $m$ all belong to the fittest type in the population, then their marks will be $1, \dots, j$.  We denote by $\pi_{m,N}$ the inverse of $\sigma_{m,N}$, so that $\pi_{m,N}(h)$ is the label of the individual with mark $h$.

\item We next identify the parents of the individuals labeled $1, \dots, N - c_N$ in generation $m+1$.  For $h \in \{1, \dots, N\}$, let $F_{h,N}(m)$ be the fitness of the individual with the mark $h$ in generation $m$, so $$F_{h,N}(m) = (1 + s_N)^{R_N(m) Z^a_{\pi_{m,N}(h),m,N}}.$$  Then let
\begin{equation}\label{Thetadef}
\Theta_{h,N}(m) = \frac{F_{h,N}(m)}{\sum_{h'=h}^N F_{h',N}(m)}.
\end{equation}
For $i \in \{1, \dots, N - c_N\}$, let $$Q_{i,m,N} = \min\{h: U_{h,i,m,N} \leq \Theta_{h,N}(m)\},$$
which will be the mark of the individual in generation $m$ that is the parent of the individual labeled $i$ in generation $m+1$.
To understand why this gives a valid construction, note that we are examining the individuals in generation $m$ one at a time, in the order given by their marks, to decide whether or not that individual should be the parent of the individual labeled $i$ in generation $m+1$.  Because the parent of each individual is chosen with probability proportional to fitness, the probability that the individual with mark $h$ is the parent, conditional on the event that the previous $h-1$ individuals were not chosen, should be the fitness of the individual with mark $h$ divided by the sum of the fitnesses of the individuals with marks $h$ and higher, which is exactly $\Theta_{h,N}(m)$.  We therefore go through the individuals one at a time until we find a value of $h$ for which $U_{h,i,m,N} \leq \Theta_{h,N}(m)$.  Because $\Theta_{N,N}(m) = 1$, this process must terminate.

\item Finally, we assign the types of the individuals labeled $1, \dots, N - c_N$ in generation $m+1$.  This requires taking into account the mutations, which are determined by the random variables $\xi_{i,m,N}$.  The type of the individual labeled $i$ in generation $m+1$ will be the type of the parent plus $\xi_{i,m,N}$.  More precisely, for $i \in \{1, \dots, N - c_N\}$, we define
\begin{equation}\label{Zadef}
Z^a_{i,m+1,N} = Z^a_{\pi_{m,N}(Q_{i,m,N}), m,N} + \xi_{i,m,N}.
\end{equation}
\end{enumerate}

Finally, for $m \in \N_0$, let ${\cal G}_{m,N}$ denote the $\sigma$-field generated by the random variables $U_{h,i,m,N}$, $\xi_{i,m,N}$, $\mathcal{S}^a_{i,m,N}$, and $\mathcal{S}^d_{i,m,N}$, and let ${\cal F}_{m,N}$ be the $\sigma$-field generated by ${\cal G}_{0,N}, {\cal G}_{1,N}, \dots, {\cal G}_{m-1,N}$.
Note that the random variables $Z^a_{i,m,N}$ and $Z^d_{i,m,N}$ are all ${\cal F}_{m,N}$-measurable.  

\section{Proofs of results for the active population}\label{proofsactive}

In this section, we prove Theorems \ref{Poissontheo} and \ref{activethm}, both of which pertain to the composition of the active population.  We will fix $t_0 > 0$ and focus on understanding the evolution of the active population from generation 0 until generation $\lfloor \rho_N^{-1} t_0 \rfloor$.  Let $${\cal T}_N = \{0, 1, 2, \dots, \lfloor \rho_N^{-1} t_0 \rfloor\}$$ be the set of generations being considered.  A consequence of (\ref{fastmutations}) and (\ref{hlower}) is that
\begin{equation}\label{totalgen}
\lfloor \rho_N^{-1} t_0 \rfloor \ll N^{b-1},
\end{equation}
which means the number of generations we are considering is bounded above by a power of $N$.  To distinguish between the summer and winter seasons, we will also define
$${\cal T}_N^+ = \{m \in {\cal T}_N: R_N(m) = 1\}, \qquad \quad {\cal T}_N^- = \{m \in {\cal T}_N: R_N(m) = -1\}.$$

To understand how the active population evolves over time, it will be helpful to divide the generations into intervals of three different classes.  Let $(r_N)_{N=1}^{\infty}$ be a sequence such that
\begin{equation}\label{rNdef}
\log N \ll r_N \ll U_N.
\end{equation}
and
\begin{equation}\label{Kcr}
\lim_{N \rightarrow \infty} \frac{c_N^2 r_N^2}{K_N \log N} = \infty.
\end{equation}
which is possible because \eqref{largeU} and \eqref{KcU} hold.  Let
\begin{align*}
{\cal M}_N^+ &= \{m: \mbox{for some $i$, either $R_N(m) = \xi_{i,m,N} = 1$, or $R_N(m) = \xi_{i,m,N} = -1$}\}, \\
{\cal U}_N &= \{m: \mbox{for some nonnegative integer $j$, either $m = jU_N$ or $m = jU_N + V_N$}\}.
\end{align*}
Note that ${\cal M}_N^+$ consists of all generations in which some active individual acquires a beneficial mutation, and ${\cal U}_N$ consists of all generations when the season changes.  We next define the three classes of intervals.
\begin{enumerate}
\item Let $J_{1,N}$ consist of intervals lasting for $r_N$ generations after each mutation that increases fitness.  That is, we have $n \in J_{1,N}$ if $n = m + \ell$ for some $m \in {\cal M}_N^+$ and $\ell \in \{1, \dots, r_N\}$.

\item Let $J_{2,N}$ consist of the intervals lasting for $r_N$ generations at the beginning of every season.  That is, we have $n \in J_1$ if $n = m + \ell$ for some $m \in {\cal U}_N$ and $\ell \in \{0, \dots, r_N - 1\}$.

\item Let $J_{3,N} = \{0, 1, 2, \dots\} \setminus (J_{1,N} \cup J_{2,N})$ consist of all generations not in $J_{1,N}$ or $J_{2,N}$.  These generations form intervals that end with a mutation or a seasonal change.
\end{enumerate}
For $i \in \{1, 2, 3\}$, we call an interval of consecutive positive integers in $J_{i,N}$ a class $i$ interval.  Note that when $m \in \mathcal{M}_N^+$, the class 1 interval begins in generation $m+1$ rather than generation $m$ because, from \eqref{Zadef}, when $\xi_{i,m,N} \neq 0$, the mutation affects the type of the $i$th individual in generation $m+1$.

Define a sequence $(L_N)_{N=1}^{\infty}$ of positive integers such that
\begin{equation}\label{LNdef}
\lim_{N \rightarrow \infty} L_N = \infty, \qquad L_N! \ll \sqrt{N/c_N}, \qquad L_N! \ll \log N,
\end{equation}
which is possible by (\ref{NcN}).  Then define a sequence $(\eps_N)_{N=1}^{\infty}$ such that
\begin{equation}\label{epsNdef}
\lim_{N \rightarrow \infty} \eps_N = 0, \quad \lim_{N \rightarrow \infty} \eps_N^3 L_N = \infty, \quad \eps_N \gg \frac{c_N}{N}, \quad \eps_N \gg \frac{\log N}{r_N},
\end{equation}
which is possible by (\ref{NcN}) and (\ref{rNdef}).  Note that $L_N$ tends to infinity very slowly and $\eps_N$ tends to zero very slowly as $N \rightarrow \infty$.

\begin{Dfn}
For $k \in \Z$ and $m \in \N_0$, we say that type $k$ {\em dominates} the population in generation $m$ if the following hold:
\begin{enumerate}
\item We have $A_{k,N}(m) \geq (1 - \eps_N)N$.

\item Either $R_N(m) = 1$ and $A_{\ell,N}(m) = D_{\ell,N}(m) = 0$ for all $\ell > k$, or $R_N(m) = -1$ and $A_{\ell,N}(m) = D_{\ell,N}(m) = 0$ for all $\ell < k$.  That is, no type with higher fitness than type $k$ is present in the active or dormant population.
\end{enumerate}
We denote by $\Lambda_{k,m,N}$ the event that type $k$ dominates the population in generation $m$.
\end{Dfn}

Very roughly, the key to understanding the evolution of the active population will be to establish that in most generations, some type dominates the population.  In generation zero, the population is dominated by type $0$.  The dominant type in the population changes rapidly following a mutation or seasonal change.  More specifically:
\begin{itemize}
\item During class 1 intervals, a new beneficial mutation appears.  This beneficial mutation may die out, or it may become dominant in the population within $r_N$ generations.

\item During class 2 intervals, the season changes.  Within the first $r_N$ generations after the beginning of summer, the type that dominated the population at the end of the previous summer emerges from the seed bank and once again becomes dominant.  An analogous result holds during the winter.

\item During class 3 intervals, whichever type dominated the population at the beginning of the interval continues to dominate until the next beneficial mutation or seasonal change.
\end{itemize}

In subsection \ref{LDsub}, we collect some large deviations bounds that will be useful later.  We establish some results about the times of mutations in subsection \ref{mutsub}.  In subsections~\ref{3sub}, \ref{1sub}, and \ref{2sub}, we study the behavior of the population during class 3, class 1, and class 2 intervals respectively.  Finally, in subsection \ref{thmsec}, we prove Theorems \ref{Poissontheo}, \ref{activethm}, and \ref{mztheorem}.

\subsection{Some large deviations bounds}\label{LDsub}

We first record the following standard binomial tail bound.  The result (\ref{chernoff1}) was proved in \cite{ok58} and follows directly from ideas in \cite{ch52}.  See also Remark 2.5 of \cite{jlr00}.  The result (\ref{chernoff2}) appears, for example, as Corollary 2.3 in \cite{jlr00}.

\begin{Lemma}
Let $Z$ have a binomial distribution with parameters $n$ and $p$.  If $c > 0$, then
\begin{equation}\label{chernoff1}
P(|Z - np| \geq cn) \leq 2e^{-2nc^2}.
\end{equation}
If $0 < c < 3/2$, then
\begin{equation}\label{chernoff2}
P(|Z - np| \geq c np) \leq 2e^{-c^2 np/3}.
\end{equation}
\end{Lemma}

Next, we record the following result about negatively correlated Bernoulli random variables.  Recall that events $A_1, \dots, A_m$ are said to be negatively correlated if for all $I \subseteq \{1, \dots, m\}$, we have
\begin{equation}\label{negcordef}
P \bigg( \bigcap_{i \in I} A_i \bigg) \leq \prod_{i \in I} P(A_i).
\end{equation}

\begin{Lemma}\label{negcorlem}
Suppose $X_1, \dots, X_n$ are $\{0,1\}$-valued random variables, and let $X = X_1 + \dots + X_n$.  Then the following hold:
\begin{enumerate}
\item If the events $\{X_1 = 1\}, \dots \{X_n = 1\}$ are negatively correlated and $0 < \eps \leq 1$, then $$P(X > (1 + \eps)E[X]) \leq e^{-\eps^2 E[X]/3}.$$

\item If the events $\{X_1 = 0\}, \dots, \{X_n = 0\}$ are negatively correlated and $0 < \eps \leq 1$, then $$P(X < (1 - \eps)E[X]) \leq e^{-\eps^2 E[X]/2}.$$

\item If the events $\{X_1 = 1\}, \dots \{X_n = 1\}$ are negatively correlated and $\eps > 1$, then $$P(X > (1 + \eps)E[X]) \leq e^{-\eps E[X]/3}.$$
\end{enumerate}
\end{Lemma}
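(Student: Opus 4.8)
The plan is to reduce all three statements to the classical Chernoff bounds by first showing that negatively correlated indicators satisfy the same exponential-moment factorization as independent ones, in the appropriate direction for each tail. Write $\mu = E[X] = \sum_{i=1}^n p_i$ where $p_i = P(X_i = 1)$. The only place independence enters the usual derivation of the Chernoff bound is in factoring $E[e^{tX}] = \prod_i E[e^{t X_i}]$; I will recover the one-sided inequality that is actually needed from the negative correlation hypothesis, after which the optimization over $t$ is identical to the independent case.

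For the upper-tail bounds (parts 1 and 3), fix $t > 0$ and use that $X_i \in \{0,1\}$ to write $e^{t X_i} = 1 + (e^t - 1)X_i$, so that
$$e^{tX} = \prod_{i=1}^n \big(1 + (e^t-1)X_i\big) = \sum_{I \subseteq \{1,\dots,n\}} (e^t - 1)^{|I|} \prod_{i \in I} X_i.$$
Since $t > 0$ every coefficient $(e^t-1)^{|I|}$ is nonnegative, and $\prod_{i \in I} X_i = \1_{\cap_{i \in I}\{X_i = 1\}}$, so taking expectations and applying the negative correlation of the events $\{X_i = 1\}$ termwise gives $E[e^{tX}] \leq \prod_{i=1}^n \big(1 + (e^t-1)p_i\big) \leq e^{\mu(e^t - 1)}$, using $1 + x \leq e^x$. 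Markov's inequality then yields $P(X > (1+\eps)\mu) \leq e^{-t(1+\eps)\mu} e^{\mu(e^t-1)}$, and choosing $e^t = 1 + \eps$ gives the standard bound $(e^\eps (1+\eps)^{-(1+\eps)})^\mu$. The elementary inequalities $\eps - (1+\eps)\log(1+\eps) \leq -\eps^2/3$ for $0 < \eps \leq 1$ and $\eps - (1+\eps)\log(1+\eps) \leq -\eps/3$ for $\eps > 1$ then produce parts 1 and 3 respectively.

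For the lower-tail bound (part 2), I work with $s = e^{-t} \in (0,1)$ and the indicators $W_i = \1_{\{X_i = 0\}}$, writing $s^{X_i} = s + (1-s)W_i$. Factoring out $s^n$ and expanding as above,
$$s^X = s^n \prod_{i=1}^n \Big(1 + \tfrac{1-s}{s} W_i\Big) = s^n \sum_{I} \Big(\tfrac{1-s}{s}\Big)^{|I|} \prod_{i \in I} W_i,$$
where now $(1-s)/s > 0$, so the negative correlation of the events $\{X_i = 0\}$ applies term by term and gives $E[s^X] \leq \prod_i E[s^{X_i}] \leq e^{\mu(s-1)}$. This is precisely why the hypothesis of this part is stated in terms of $\{X_i = 0\}$ rather than $\{X_i = 1\}$: with $0 < s < 1$ the relevant coefficients are nonnegative only for the complementary events. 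Applying Markov's inequality to $s^X$ (equivalently $e^{-tX}$), together with the optimization $s = 1 - \eps$ and the inequality $-\eps - (1-\eps)\log(1-\eps) \leq -\eps^2/2$ for $0 < \eps \leq 1$, yields the claimed $e^{-\eps^2 \mu/2}$.

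The argument involves no serious obstacle; the only point requiring care is the bookkeeping in the product expansions, namely verifying that the coefficients $(e^t-1)^{|I|}$ and $((1-s)/s)^{|I|}$ are nonnegative so that negative correlation can be invoked in the correct direction. Once the one-sided exponential-moment inequalities $E[e^{tX}] \leq e^{\mu(e^t-1)}$ and $E[e^{-tX}] \leq e^{\mu(e^{-t}-1)}$ are established, the three bounds follow from the standard optimization and the routine scalar inequalities quoted above.
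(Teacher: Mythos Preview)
Your proof is correct and, at its core, follows the same Chernoff-method approach that underlies the references the paper cites: bound the exponential moment $E[e^{tX}]$ (or $E[e^{-tX}]$) by the product $\prod_i E[e^{tX_i}]$ using the subset expansion together with negative correlation, then optimize over $t$ and invoke the standard scalar inequalities. The paper's proof differs only in presentation: it defers parts~1 and~2 entirely to Lemma~3 of \cite{molloy} and obtains the intermediate bound $(e^{\eps}/(1+\eps)^{1+\eps})^{E[X]}$ for part~3 from \cite{ps97}, then verifies only the scalar inequality $e^{4\eps/3} \leq (1+\eps)^{1+\eps}$ for $\eps \geq 1$. Your argument is self-contained and makes explicit why part~2 requires the hypothesis on $\{X_i = 0\}$ rather than $\{X_i = 1\}$ (the sign of the coefficients in the expansion of $s^X$), which the paper leaves implicit in the citation. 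Either approach is fine; yours has the advantage of not relying on external sources for the key step.
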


\begin{proof}
Parts 1 and 2 of the lemma are equivalent to Lemma 3 in \cite{molloy}.  Part 3 of the lemma follows directly from Theorem 3.4 and equation (1) of \cite{ps97}.  Indeed, when $\eps > 1$, from Theorem 3.4 and equation (1) of \cite{ps97}, we get $$P(X > (1 + \eps)E[X]) \leq \bigg( \frac{e^{\eps}}{(1 + \eps)^{1 + \eps}} \bigg)^{E[X]}.$$
To establish part 3 of the lemma, it therefore suffices to show that $e^{\eps}/(1 + \eps)^{1+\eps} \leq e^{-\eps/3}$, or equivalently that $e^{4 \eps/3} \leq (1 + \eps)^{1 + \eps}$.  Taking logarithms of both sides, we see it suffices to show that $4/3 \leq (1 + 1/\eps) \log(1 + \eps)$ for $\eps \geq 1$.  This result holds numerically when $\eps = 1$ and therefore holds for all $\eps \geq 1$ because the right-hand side is increasing over $\eps \in [1, \infty)$.
\end{proof}

The following lemma will be useful for studying the times when individuals switch between the active and dormant populations.

\begin{Lemma}\label{geomlem}
Let $c$, $J$, and $K$ be positive integers such that $c \leq K$ and $J \leq K$.  Let $S_1, S_2, \dots$ be independent $c$-element subsets of $\{1, \dots, K\}$, chosen uniformly at random.  For $1 \leq k \leq K$, let $M_k = \min\{j: k \in S_j\}.$  Let $T \subseteq \N$, and let $Y_T = \sum_{k=1}^J \1_{\{M_k \in T\}}$.  Then $$P(|Y_T - E[Y_T]| > \eps E[Y_T]) \leq 2e^{-(\eps^2 \wedge \eps)E[Y_T]/3}.$$
\end{Lemma}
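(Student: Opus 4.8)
The plan is to reduce everything to the notion of negative correlation from \eqref{negcordef} and then invoke Lemma~\ref{negcorlem}. First I would pin down the marginal law of each $M_k$. Since the subsets $S_1, S_2, \dots$ are independent and $P(k \in S_j) = c/K$ for every $k$ and $j$ (each element lies in a uniformly chosen $c$-subset with probability $c/K$), the first-appearance time $M_k$ is geometric: $P(M_k > j) = (1 - c/K)^j$. Hence each indicator $\1_{\{M_k \in T\}}$ is Bernoulli with the same parameter $q := \sum_{j \in T} (1 - c/K)^{j-1}(c/K) = P(M_k \in T)$, so that $E[Y_T] = Jq$. It then suffices to show that the events $\{M_k \in T\}$, $1 \le k \le J$, are negatively correlated, and likewise that the complementary events $\{M_k \notin T\}$ are negatively correlated; feeding these into parts 1 and 3 of Lemma~\ref{negcorlem} (upper tail) and part 2 (lower tail) and adding the two bounds yields the stated inequality, since for $\eps > 1$ the lower tail vanishes and $e^{-\eps^2 E[Y_T]/2} \le e^{-\eps^2 E[Y_T]/3}$.

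The heart of the matter is the single estimate
\[
P\Big( \bigcap_{k \in I} \{M_k \in T\} \Big) \le q^{|I|} \qquad \text{for every } I \subseteq \{1, \dots, J\},
\]
because applying it to $T$ gives negative correlation of $\{M_k \in T\}$ and applying it to $T^c$ (noting $P(M_k \notin T) = 1 - q$) gives negative correlation of $\{M_k \notin T\}$. To prove it I would reveal the subsets round by round and track only the number $\ell$ of elements of $I$ that have not yet appeared, which by exchangeability of the labels is all that matters. Writing $g_j(\ell)$ for the probability that $\ell$ elements alive at the start of round $j$ all first appear within $T \cap \{j, j+1, \dots\}$, and $h_j$ for the one-element analogue, the claim is $g_j(\ell) \le h_j^{\ell}$, which for $j = 1$ and $\ell = |I|$ is exactly the desired bound (with $q = h_1$). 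I would prove $g_j(\ell) \le h_j^\ell$ by downward induction on $j$, first for finite $T$ (so that $g_j(\ell) = 0 = h_j^\ell$ past $\max T$ supplies the base case) and then for general $T$ by monotone approximation through $T \cap \{1, \dots, n\}$.

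For the inductive step there are two cases. If $j \notin T$, then no alive element may appear in round $j$, so $g_j(\ell) = P_0(\ell)\, g_{j+1}(\ell)$, where $P_0(\ell) = \binom{K - \ell}{c}/\binom{K}{c} = \prod_{t=0}^{\ell - 1} \big(1 - c/(K-t)\big) \le (1 - c/K)^\ell$; combined with $h_j = (1 - c/K) h_{j+1}$ and the inductive hypothesis this gives $g_j(\ell) \le h_j^\ell$. If $j \in T$, then any number $r$ of alive elements may first appear now, so $g_j(\ell) = \sum_r P(R = r)\, g_{j+1}(\ell - r)$ with $R$ the hypergeometric count of alive elements landing in $S_j$; by the inductive hypothesis this is at most $E[\, h_{j+1}^{\,\ell - R}\,] = E\big[\prod_{i=1}^\ell h_{j+1}^{\,1 - Y_i}\big]$, where $Y_i = \1_{\{i \in S_j\}}$. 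Here the key input is that the indicators $(Y_i)_{i=1}^{\ell}$ of membership in a uniform $c$-subset are negatively associated, so that the expectation of the product of the nondecreasing functions $y \mapsto h_{j+1}^{1 - y}$ is bounded by the product of the expectations, each equal to $(c/K) + (1 - c/K) h_{j+1} = h_j$; this yields $g_j(\ell) \le h_j^\ell$. I expect this step, establishing the per-round negative-association inequality and checking that it points in the correct direction, to be the main obstacle, since it is exactly the place where the within-round dependence of the sampling is controlled; everything else is bookkeeping and the final assembly via Lemma~\ref{negcorlem}.
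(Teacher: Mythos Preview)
Your proposal is correct. Both your argument and the paper's reduce the lemma to Lemma~\ref{negcorlem} after establishing that the events $\{M_k \in T\}$ and their complements are negatively correlated, and both handle infinite $T$ by approximation through $T \cap \{1,\dots,n\}$. The difference lies in how negative correlation is established. The paper does not compute joint probabilities directly; instead it constructs the random subsets $S_j$ from i.i.d.\ uniform variables $U_{i,j}$ together with a carefully chosen, $T$-dependent ordering of $\{1,\dots,K\}$, arranged so that the indicators of $A_1 \cap \dots \cap A_{k-1}$ and of $A_k^c$ are monotone functions of the $U_{i,j}$ in the same direction; Harris' inequality then gives the needed negative covariance, and an induction on $k$ yields \eqref{negcordef}. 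Your route is more computational: you peel off rounds one at a time via the recursion for $g_j(\ell)$, and at the $j \in T$ step you invoke the classical negative association of the membership indicators in a uniform random $c$-subset to push the expectation through the product. Your argument is self-contained modulo that NA fact and avoids the somewhat delicate ordering construction; the paper's argument is shorter once the coupling is set up and uses only Harris' inequality for independent coordinates, not the stronger NA property. Either is a legitimate proof.
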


\begin{proof}
For $k \in \{1, \dots, J\}$, let $A_k = \{M_k \in T\}$.  We will show that the events $A_1, \dots, A_J$ and the events $A_1^c, \dots A_J^c$ are negatively correlated.  The result then follows from Lemma \ref{negcorlem}.

Because the labeling of the integers is arbitrary, it suffices to show that for $1 \leq k \leq J$, we have $P(A_1 \cap \dots \cap A_k) \leq P(A_1) \dots P(A_k)$ and $P(A_1^c \cap \dots \cap A_k^c) \leq P(A_1^c) \dots P(A_k^c)$.  By induction, it then suffices to show that the events $A_1 \cap \dots \cap A_{k-1}$ and $A_k$ are negatively correlated, and that the events $A_1^c \cap \dots \cap A_{k-1}^c$ and $A_k^c$ are negatively correlated.

To prove this result, we fix $k \in \{2, \dots, J\}$.  We construct the sets $S_j$ using i.i.d. random variables $U_{i,j}$ having the uniform distribution on $[0,1)$, where $1 \leq i \leq c$ and $j \geq 1$.  The random variable $U_{i,j}$ will tell us how to pick the $i$th element of the set $S_j$.    There will be $K - i + 1$ elements from which to choose, so we will put these elements in a specific order and choose the $\ell$th element if $(\ell - 1)/(K-i+1) \leq U_{i,j} < \ell/(K-i+1)$.  Note that regardless of how these elements are ordered, this scheme ensures that each one is chosen with probability $1/(K - i + 1)$, which makes the sets $S_j$ uniform $c$-element subsets of $\{1, \dots, K\}$.

To order the elements, first suppose $j \in T$.  We begin by listing in increasing order any elements of $\{1, \dots, k-1\}$ that were not in any of the sets $S_1, \dots, S_{j-1}$ and were not among the first $i-1$ elements of $S_j$.  We will list $k$ last if $k$ was not in one of the sets $S_1, \dots, S_{j-1}$ and was not among the first $i-1$ elements of $S_j$.  In between, we list all other integers in increasing order.  When $j \notin T$, we use exactly the reverse ordering, so that $k$ will come first if $k$ was not in one of the sets $S_1, \dots, S_{j-1}$, and the elements of $\{1, \dots, k-1\}$ that were not in any of the sets $S_1, \dots, S_{j-1}$ and were not among the first $i-1$ elements of $S_j$ will come last.

The events $A_1, \dots, A_k$ are now deterministic functions of the random variables $U_{i,j}$.  Furthermore, the ordering has been constructed so that the indicators of the events $A_1 \cap \dots \cap A_{k-1}$ and $A_k^c$ are decreasing functions of the $U_{i,j}$, while the indicators of the events $A_1^c \cap \dots \cap A_{k-1}^c$ and $A_k$ are increasing functions of the $U_{i,j}$.  The result when $T$ is finite then follows from the fact that increasing (or decreasing) functions of independent random variables are positively correlated.  This is a version of Harris' Inequality and appears, for example, as Theorem 2.1 in \cite{epw67}.

The result when $T$ is infinite follows easily by taking limits because if we define the event $A_{k,n} = \{M_k \in T \cap \{1, \dots, n\}\}$, then $\lim_{n \rightarrow \infty} P(A_k \cap A_{k,n}^c) = 0$.  Consequently, because because $A_{1,n} \cap \dots \cap A_{k-1,n}$ and $A_{k,n}$ are negatively correlated for every $n$, the events $A_1, \dots, A_{k-1}$ and $A_k$ are negatively correlated.  The same reasoning implies that $A_1^c \cap \dots \cap A_{k-1}^c$ and $A_k^c$ are negatively correlated.
\end{proof}

\begin{Rmk}\label{geomrem}
Note that in Lemma \ref{geomlem}, the distribution of $M_k$ is geometric with parameter $c/K$, which means $$E[Y_T] = J P(M_1 \in T) = J \sum_{n \in T} \frac{c}{K} \bigg(1 - \frac{c}{K} \bigg)^{n-1} = \frac{cJ}{K} \sum_{n \in T} \bigg(1 - \frac{c}{K} \bigg)^{n-1}.$$
\end{Rmk}

\subsection{The times of mutations}\label{mutsub}

In this section, we establish some results about the times of mutations.  We will let 
$${\cal M}_N = \{m: \xi_{i,m,N} \neq 0 \mbox{ for some }i\}$$ be the set of all generations in which a mutation appears.

Lemma \ref{mutlem1} shows that with high probability, there will be no more than one mutation in any generation, and Lemma \ref{mutlem2} shows that mutations will be separated in time by more than $2r_N$ generations.  Lemma \ref{mutlem3} shows that with high probability, no mutation will occur within $2r_N$ generations of a seasonal change.  Finally, Lemma \ref{mutlem4} shows that if one type dominates the population, then it is unlikely that any other type will have an offspring that acquires a mutation.

\begin{Lemma}\label{mutlem1}
Let ${\cal M}_N^* = \{m: \mbox{ there exist $i \neq j$ such that $\xi_{i,m,N} \neq 0$ and $\xi_{j,m,N} \neq 0$}\}$
be the set of all generations in which two or more individuals acquire a mutation.  Then
$$\lim_{N \rightarrow \infty} P({\cal M}^*_N \cap {\cal T}_N = \emptyset) = 1.$$
\end{Lemma}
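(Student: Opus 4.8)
The plan is a direct first-moment (union bound) argument. Since the random variables $\xi_{i,m,N}$ are independent across both $i$ and $m$, and each satisfies $P(\xi_{i,m,N} \neq 0) = 2\mu_N$, the probability of the event $\{m \in \mathcal{M}_N^*\}$ that at least two of the $N - c_N$ relevant individuals acquire a mutation in generation $m$ does not depend on $m$. First I would bound this probability by summing over unordered pairs of individuals: if at least two individuals mutate, then in particular some specific pair both mutate, so for any fixed $m$,
$$P(m \in \mathcal{M}_N^*) \leq \binom{N - c_N}{2}(2\mu_N)^2 \leq 2 N^2 \mu_N^2.$$

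Next I would union bound over the generations in $\mathcal{T}_N$. Since $|\mathcal{T}_N| = \lfloor \rho_N^{-1} t_0 \rfloor + 1$, this yields
$$P(\mathcal{M}_N^* \cap \mathcal{T}_N \neq \emptyset) \leq \big( \lfloor \rho_N^{-1} t_0 \rfloor + 1 \big) \cdot 2 N^2 \mu_N^2 \lesssim \rho_N^{-1} N^2 \mu_N^2.$$

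Finally I would translate this into the parameters we control. By \eqref{rhoNdef} and \eqref{hlower}, $\rho_N^{-1} = (N \mu_N h(s_N))^{-1} \lesssim (N\mu_N)^{-1}$, whence $\rho_N^{-1} N^2 \mu_N^2 \lesssim N \mu_N$. By \eqref{Nmubound} we have $N \mu_N \ll 1/\log N \to 0$, so $P(\mathcal{M}_N^* \cap \mathcal{T}_N \neq \emptyset) \to 0$, which is equivalent to the stated claim.

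I do not anticipate a substantive obstacle here: the argument is a one-line union bound at the level of pairs, then a union bound over the (polynomially many, by \eqref{totalgen}) generations under consideration. The only point requiring care is keeping the parameter asymptotics straight, namely using \eqref{hlower} to absorb the factor $h(s_N)$ into a constant and invoking \eqref{Nmubound} — itself a consequence of the Regime assumptions through \eqref{UNmu} — to obtain the decay to zero.
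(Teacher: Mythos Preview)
Your proposal is correct and is essentially identical to the paper's own proof: both bound $P(m\in\mathcal{M}_N^*)$ by $\binom{N-c_N}{2}(2\mu_N)^2\leq 2N^2\mu_N^2$, apply a union bound over the $\lfloor\rho_N^{-1}t_0\rfloor+1$ generations in $\mathcal{T}_N$, and conclude via \eqref{rhoNdef}, \eqref{hlower}, and \eqref{Nmubound}.
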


\begin{proof}
In each generation, there are $N - c_N$ individuals in the active population that each acquire a mutation with probability $2 \mu_N$.  The probability that two or more of these individuals acquire a mutation in generation $m$ is bounded above by $\binom{N - c_N}{2} (2 \mu_N)^2$, which is bounded above by $2 N^2 \mu_N^2$.  Therefore, using Boole's Inequality, the probability that there are two or more mutations in some generation is bounded above by $(1 + \rho_N^{-1} t_0) \cdot 2 N^2 \mu_N^2$.  This expression tends to zero as $N \rightarrow \infty$ by (\ref{rhoNdef}), (\ref{hlower}), and (\ref{Nmubound}).
\end{proof}

\begin{Lemma}\label{mutlem2}
We have $$\lim_{N \rightarrow \infty} P(\mbox{for all }m_1, m_2 \in {\cal M}_N \cap {\cal T}_N \mbox{ such that }m_1 \neq m_2, \mbox{ we have }|m_1 - m_2| > 2r_N) = 1.$$
\end{Lemma}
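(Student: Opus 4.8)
The plan is to prove the statement with a first-moment union bound, exploiting the fact that the mutation indicators are independent across generations and that the expected number of mutations per generation is tiny.

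First I would record the per-generation estimate. For fixed $m$, the event $\{m \in \mathcal{M}_N\}$ depends only on the variables $\xi_{i,m,N}$ for $i \in \{1, \dots, N - c_N\}$, each of which is nonzero with probability $2\mu_N$, so Boole's inequality gives $P(m \in \mathcal{M}_N) \leq (N - c_N)\cdot 2\mu_N \leq 2N\mu_N$. Since the $\xi_{i,m,N}$ are independent across generations, for any $m_1 \neq m_2$ the events $\{m_1 \in \mathcal{M}_N\}$ and $\{m_2 \in \mathcal{M}_N\}$ are independent, whence $P(m_1, m_2 \in \mathcal{M}_N) \leq 4N^2\mu_N^2$.

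Next I would take a union bound over nearby pairs. The complement of the event in the lemma is contained in the union, over all ordered pairs $(m_1, m_2)$ with $m_1 \in \mathcal{T}_N$ and $0 < m_2 - m_1 \leq 2r_N$, of the event $\{m_1, m_2 \in \mathcal{M}_N\}$. Since there are at most $(\lfloor \rho_N^{-1} t_0 \rfloor + 1)\cdot 2r_N$ such pairs, Boole's inequality yields
$$
P(\text{some such } m_1, m_2 \text{ exist}) \leq (\lfloor \rho_N^{-1} t_0 \rfloor + 1)\cdot 2r_N \cdot 4N^2\mu_N^2 \lesssim r_N N^2 \mu_N^2 \rho_N^{-1}.
$$
Finally I would show this bound vanishes. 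By \eqref{rhoNdef} and \eqref{hlower} we have $\rho_N^{-1} N^2\mu_N^2 = N\mu_N / h(s_N) \lesssim N\mu_N$, so the bound is $\lesssim r_N N\mu_N$. Writing $r_N N\mu_N = (r_N/U_N)(U_N N\mu_N)$ and using $r_N \ll U_N$ from \eqref{rNdef} together with $U_N N\mu_N \lesssim 1$ from \eqref{UNmu}, we get $r_N N\mu_N \to 0$; the contribution of the additive constant term is dominated by $r_N N^2\mu_N^2 = (r_N N\mu_N)(N\mu_N)$, which also vanishes since $N\mu_N \to 0$ by \eqref{Nmubound}.

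I do not expect a genuine obstacle: the argument is a routine union bound, and the only point requiring care is the final asymptotic estimate. The essential ingredient there is that the separation scale $r_N$ is chosen smaller than the year length $U_N$ (the same feature that keeps the three interval classes well separated), which is exactly what forces $r_N N\mu_N \to 0$.
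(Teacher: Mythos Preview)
Your proof is correct and essentially identical to the paper's: both use the per-generation bound $P(m\in\mathcal{M}_N)\leq 2N\mu_N$, independence across generations, a union bound over the $\lesssim \rho_N^{-1}t_0\cdot 2r_N$ close pairs, and the asymptotic $r_N N\mu_N \ll U_N N\mu_N \lesssim 1$ to conclude. The only difference is that you spell out the independence and the treatment of the additive constant a bit more explicitly.
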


\begin{proof}
The number of pairs of integers $(m_1, m_2)$ such that $m_1 \in {\cal T}_N$ and $m_1 < m_2 \leq m_1 + 2r_N$ is bounded above by $(1 + \rho_N^{-1} t_0)(2 r_N)$.  The probability that there are mutations in both generations $m_1$ and $m_2$ is bounded above by $(2N \mu_N)^2$, so the probability that this happens for at least one of the pairs $(m_1, m_2)$ is at most $(1 + \rho_N^{-1} t_0) (2 r_N) (2N \mu_N)^2$.  This expression tends to zero as $N \rightarrow \infty$ because $r_N N \mu_N \ll U_N N \mu_N \lesssim 1$ by (\ref{rNdef}) and (\ref{UNmu}), and $\omega(s_N)$ is bounded away from zero by (\ref{hlower}).
\end{proof}

\begin{Lemma}\label{mutlem3}
We have
$$\lim_{N \rightarrow \infty} P(\mbox{for all }m_1 \in {\cal M}_N \cap {\cal T}_N \mbox{ and }m_2 \in {\cal U}_N \cap {\cal T}_N, \mbox{ we have }|m_1 - m_2| > 2r_N) = 1.$$
\end{Lemma}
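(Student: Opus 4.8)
The plan is to mimic the union-bound argument of Lemma \ref{mutlem2}, taking advantage of the fact that the seasonal changes occur at \emph{deterministic} generations. Since the season changes precisely in the generations $jU_N$ and $jU_N + V_N$ for nonnegative integers $j$, there is nothing random about the set $\mathcal{U}_N$, and so---unlike in Lemma \ref{mutlem2}, where both indices are random mutation times---here we only need to union-bound over the mutation events.

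First I would bound the size of $\mathcal{U}_N \cap \mathcal{T}_N$. The number of integers $j$ with $jU_N \leq \rho_N^{-1} t_0$ is at most $1 + \rho_N^{-1} t_0 / U_N$, and each such $j$ contributes at most two seasonal changes, so $|\mathcal{U}_N \cap \mathcal{T}_N| \leq 2(1 + \rho_N^{-1} t_0 / U_N)$. For each $m_2 \in \mathcal{U}_N \cap \mathcal{T}_N$ there are at most $4r_N + 1$ generations $m_1$ with $|m_1 - m_2| \leq 2r_N$; calling a generation \emph{bad} if it lies within $2r_N$ of some seasonal change, the number of bad generations is thus at most $2(1 + \rho_N^{-1} t_0 / U_N)(4r_N + 1)$ (any overcounting from seasons that are close only strengthens the bound). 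Because each of the $N - c_N$ active individuals acquires a mutation in a given generation with probability $2\mu_N$, a union bound gives $P(m_1 \in \mathcal{M}_N) \leq 2N\mu_N$ for every fixed $m_1$. Applying Boole's inequality over the bad generations then yields
$$P\big(\exists\, m_1 \in \mathcal{M}_N \cap \mathcal{T}_N,\ m_2 \in \mathcal{U}_N \cap \mathcal{T}_N : |m_1 - m_2| \leq 2r_N \big) \leq 2\Big(1 + \tfrac{\rho_N^{-1} t_0}{U_N}\Big)(4r_N + 1)\, 2N\mu_N.$$

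The last step is to verify that this bound vanishes as $N \to \infty$. Its dominant term is of order $r_N \rho_N^{-1} N\mu_N / U_N$, and substituting $N\mu_N = \rho_N / h(s_N)$ from \eqref{rhoNdef} collapses this to a constant multiple of $r_N / (U_N h(s_N))$. Since $r_N \ll U_N$ by \eqref{rNdef} and $h(s_N)$ is bounded away from zero by \eqref{hlower}, the bound tends to zero. I expect no genuine obstacle here; the only point requiring care is the same cancellation that drives Lemma \ref{mutlem2}, namely that the factor $\rho_N^{-1}$ arising from the number of seasons is exactly absorbed by the $N\mu_N$ in the per-generation mutation probability, leaving behind the ratio $r_N/U_N \to 0$.
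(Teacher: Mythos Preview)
Your proof is correct and follows essentially the same approach as the paper: both arguments count the number of ``bad'' generations within $2r_N$ of a seasonal change as $O\big((4r_N+1)(1+\rho_N^{-1}t_0/U_N)\big)$, multiply by the per-generation mutation probability $2N\mu_N$, and conclude by observing that the resulting bound is of order $r_N N\mu_N/(U_N\rho_N)$, which tends to zero since $r_N\ll U_N$ and $N\mu_N/\rho_N = 1/h(s_N)$ is bounded.
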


\begin{proof}
The probability of a mutation in a given generation is bounded above by $2 N \mu_N$.  Because there are two season changes each year, the number of season changes before generation $\lfloor \rho_N^{-1} t_0 \rfloor$ is at most $1 + 2 U_N^{-1} \rho_N^{-1} t_0$, so the number of generations $m$ that are within $2r_N$ generations of a season change is at most $(4 r_N + 1) (1 + 2U_N^{-1}\rho_N^{-1} t_0)$.  Therefore, the probability of a mutation in one of these generations is at most $$(2 N \mu_N) (4r_N + 1) \bigg(1 + \frac{2t_0}{U_N \rho_N} \bigg).$$
Because (\ref{Reg1Asm2}) and (\ref{Reg2Asm}) imply $U_N \rho_N \lesssim 1$, this expression is bounded above by a constant multiple of $N \mu_N r_N/(U_N \rho_N)$,
which tends to zero as $N \rightarrow \infty$ because $r_N \ll U_N$ by (\ref{rNdef}), and $N \mu_N/\rho_N$ is bounded above by (\ref{rhoNdef}) and (\ref{hlower}).
\end{proof}

\begin{Lemma}\label{mutlem4}
Let $G_{0,N}^*$ be the event that for all $m \in {\cal T}_N$ and all $k \in \Z$ such that $\Lambda_{k,m,N}$ occurs, we have $\xi_{i,m,N} = 0$ for all $i \in \{1, \dots, N - c_N\}$ for which $Z^a_{\pi_{m,N}(Q_{i,m,N}), m,N} \neq k$.  Then $$\lim_{N \rightarrow \infty} P(G_{0,N}^*) = 1.$$
\end{Lemma}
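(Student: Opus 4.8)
The plan is to fix a single generation $m$ and work conditionally on $\mathcal{F}_{m,N}$, since the event $\Lambda_{k,m,N}$ is $\mathcal{F}_{m,N}$-measurable. First I would observe that for each $m$ there is at most one value of $k$ for which $\Lambda_{k,m,N}$ can occur: condition~1 in the definition of domination forces $A_{k,N}(m) \geq (1-\eps_N)N > N/2$ for large $N$, and two distinct types cannot each occupy more than half of the active population. Hence the union over $k$ in the definition of $G_{0,N}^*$ contributes at most one term per generation, and it suffices to control, for each $m$, the probability that the unique dominating type (if one exists) is the parent of some mutating individual.

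Next I would exploit the two domination conditions together. On $\Lambda_{k,m,N}$, condition~2 guarantees that type $k$ is the fittest type present (in a positive environment $k$ is the largest type present, in a negative environment the smallest, and in both cases its fitness $(1+s_N)^{kR_N(m)}$ dominates that of every other present type), while condition~1 guarantees $A_{k,N}(m) \geq (1-\eps_N)N$. Because each of the individuals labeled $1, \dots, N - c_N$ in generation $m+1$ chooses its parent with probability proportional to fitness, the conditional probability that the parent of a fixed $i$ has type different from $k$ is
$$\frac{\sum_{\ell \neq k} A_{\ell,N}(m)(1+s_N)^{\ell R_N(m)}}{\sum_{\ell} A_{\ell,N}(m)(1+s_N)^{\ell R_N(m)}} \leq \frac{\eps_N N}{(1-\eps_N)N} = \frac{\eps_N}{1 - \eps_N},$$
where the inequality bounds every fitness appearing in the numerator by $(1+s_N)^{kR_N(m)}$ and retains only the type-$k$ contribution in the denominator.

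Since the mutation indicator $\xi_{i,m,N}$ is independent of the parent-selection variables $U_{h,i,m,N}$, the conditional probability that individual $i$ simultaneously has a non-type-$k$ parent and satisfies $\xi_{i,m,N} \neq 0$ is at most $2\mu_N \cdot \eps_N/(1-\eps_N)$. A union bound over $i \in \{1, \dots, N-c_N\}$ then shows that, on $\Lambda_{k,m,N}$, the conditional probability of the bad event for generation $m$ is at most $2 N \mu_N \cdot \eps_N/(1-\eps_N)$. Taking expectations to remove the conditioning and summing over the at most $1 + \rho_N^{-1} t_0$ generations of $\mathcal{T}_N$ yields
$$P\big((G_{0,N}^*)^c\big) \leq (1 + \rho_N^{-1} t_0) \cdot \frac{2 N \mu_N \eps_N}{1 - \eps_N}.$$

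Finally I would use $N \mu_N = \rho_N/h(s_N)$ from \eqref{rhoNdef} together with the lower bound \eqref{hlower} to replace $N\mu_N$ by a constant multiple of $\rho_N$; the factor $\rho_N^{-1}$ then cancels, leaving a bound of order $\eps_N$, which tends to $0$ by \eqref{epsNdef}. The only delicate point is the fitness estimate of the second paragraph: it is essential that the dominating type be the fittest present (condition~2), for otherwise a rare but much fitter minority type could be selected as a parent with non-negligible probability and the bound $\eps_N/(1-\eps_N)$ would fail. Everything else reduces to independence of the mutation and parent-selection variables followed by two routine union bounds.
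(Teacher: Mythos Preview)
Your proof is correct and follows essentially the same route as the paper's: condition on $\mathcal{F}_{m,N}$, use that on $\Lambda_{k,m,N}$ type $k$ is the fittest type present so each parent is non-type-$k$ with probability at most $\eps_N$ (you get the slightly looser but equally sufficient $\eps_N/(1-\eps_N)$), multiply by the mutation probability $2\mu_N$ using independence, and union-bound over the $N-c_N$ individuals and over the $O(\rho_N^{-1})$ generations. Your explicit remark that at most one $k$ can dominate a given generation is a nice clarification the paper leaves implicit, but otherwise the arguments coincide.
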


\begin{proof}
Suppose $\Lambda_{k,m,N}$ occurs.  Then $X_{k,N}(m) \geq 1 - \eps_N$, and no individual in generation $m-1$ has higher fitness than the individuals of type $k$.  Therefore, the $N - c_N$ individuals in the active population in generation $m+1$ whose parent comes from the active population in generation $m$ each choose a parent of type $k$ with probability at least $1 - \eps_N$, and therefore choose a parent of a different type with probability at most $\eps_N$.  Each of these individuals independently acquires a mutation with probability $2 \mu_N$.  Therefore, the probability that an individual whose parent does not have type $k$ acquires a mutation is bounded above by $2 \eps_N N \mu_N$.  Using Boole's Inequality, we can sum this probability over all generations to get
$$P((G_{0,N}^*)^c) \leq (1 + \rho_N^{-1} t_0) \cdot 2 \eps_N N \mu_N.$$
The result now follows from (\ref{hlower}), (\ref{Nmubound}), and (\ref{epsNdef}).
\end{proof}

We now combine these results in the following corollary.  Note that on the event $G_{0,N}$ defined below, the class 1 and class 2 intervals are well separated in time.  For $N$ large enough that $V_N > 2 r_N$, the generations in ${\cal T}_N$ can be divided into class 1 and class 2 intervals having length exactly $r_N$, separated by class 3 intervals of length at least $r_N$.

\begin{Cor}\label{mutcor}
Let $G_{0,N}$ be the event that $G_{0,N}^*$ occurs, and the events described in the statements of Lemmas \ref{mutlem1}, \ref{mutlem2}, and \ref{mutlem3} all occur.  Then $$\lim_{N \rightarrow \infty} P(G_{0,N}) = 1.$$
\end{Cor}

\subsection{Class 3 intervals: Persistence of a dominant type}\label{3sub}

In this subsection, we show that once a type becomes dominant in the population, with high probability it stays dominant until the next beneficial mutation or seasonal change.  This will allow us to establish that, during class 3 intervals, whichever type dominates the population at the beginning of the interval will continue to dominate until the end of the interval.

The next lemma establishes that if at least a fraction $1 - \eps_N$ of the active population during the summer is comprised of individuals of type $k$ or higher, then with high probability this will continue to be the case in the next generation.  Of course, an analogous result holds for the winter.  

\begin{Lemma}\label{kstayhigh}
There exists a constant $\gamma > 0$ such that for all positive integers $k$ and all positive integers $m$ for which $R_N(m) = 1$, we have
$$P\bigg(\sum_{\ell = k}^{\infty} A_{\ell,N}(m+1) < (1 - \eps_N)N \Big|{\cal F}_{m,N} \bigg) \leq 2e^{-\gamma N \eps_N^2} \quad\mbox{on }\bigg\{\sum_{\ell = k}^{\infty} A_{\ell,N}(m) \geq (1 - \eps_N)N \bigg\}$$ for sufficiently large $N$.
\end{Lemma}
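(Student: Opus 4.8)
The plan is to condition on $\mathcal{F}_{m,N}$ and exploit the fact that, in the construction of generation $m+1$, the types of the $N-c_N$ individuals whose parents are drawn from the active population are conditionally independent. Writing $B=\sum_{\ell=k}^\infty A_{\ell,N}(m)$ and working on the $\mathcal{F}_{m,N}$-measurable event $\{B\geq(1-\eps_N)N\}$, i.e. $p:=B/N\geq 1-\eps_N$, I would bound the conditional probability that any one such individual ends up with type below $k$, and then apply a concentration inequality to the total number of such ``defectors''.

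First I would estimate this per-individual probability. Since $R_N(m)=1$, fitness is increasing in type, so a type-$\geq k$ individual in generation $m$ has fitness at least $(1+s_N)^k$ and a type-$<k$ individual at most $(1+s_N)^{k-1}$. Consequently the probability that the parent (chosen proportionally to fitness via $Q_{i,m,N}$, see \eqref{Thetadef}) has type $<k$ is at most
$$\frac{(N-B)(1+s_N)^{k-1}}{(N-B)(1+s_N)^{k-1}+B(1+s_N)^{k}}=\frac{1-p}{1+ps_N}\leq\frac{\eps_N}{1+(1-\eps_N)s_N}.$$
Adding the probability $\mu_N$ of a downward mutation from a type-$k$ parent (recall \eqref{Zadef}), the conditional probability $p_i$ that individual $i\leq N-c_N$ has type $<k$ in generation $m+1$ satisfies $p_i\leq \eps_N/(1+(1-\eps_N)s_N)+\mu_N$.

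The crucial step is to turn strong selection into a constant multiplicative gap. By \eqref{slarge} there exist $s_0>0$ and $N_0$ with $s_N\geq s_0$ for $N\geq N_0$; taking $N$ large enough that $\eps_N\leq 1/2$ gives $p_i\leq (1-2\delta)\eps_N+\mu_N$ with $\delta:=\tfrac12\,(s_0/2)/(1+s_0/2)>0$. Since $N\mu_N\to0$ by \eqref{Nmubound} while $N\eps_N\to\infty$ by \eqref{epsNdef} (as $\eps_N\gg c_N/N\geq 1/N$), we have $\mu_N\ll\eps_N$, so $p_i\leq(1-\delta)\eps_N$ for all large $N$, uniformly in $k$ and $m$. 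This is where the hypothesis of strong selection is indispensable and is the main obstacle: without $\liminf_N s_N>0$ the expected number of defectors would be of order $\eps_N N$ with no margin below the budget $\eps_N N$, and no concentration bound could close the gap.

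Finally, conditionally on $\mathcal{F}_{m,N}$ the marks $Q_{i,m,N}$ are determined by the disjoint independent blocks $\{U_{h,i,m,N}\}_h$ and the mutation variables $\xi_{i,m,N}$ are independent, so the indicators $X_i:=\1_{\{Z^a_{i,m+1,N}<k\}}$, $1\leq i\leq N-c_N$, are conditionally independent Bernoulli with parameters $p_i\leq(1-\delta)\eps_N$; in particular the events $\{X_i=1\}$ (and their complements) are negatively correlated in the sense of \eqref{negcordef}. Bounding the $c_N$ individuals that re-enter from the seed bank trivially, we get $\sum_{\ell=k}^\infty A_{\ell,N}(m+1)\geq N-Y-c_N$ with $Y:=\sum_{i=1}^{N-c_N}X_i$ and $E[Y\mid\mathcal{F}_{m,N}]\leq(1-\delta)\eps_N N$. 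Because $c_N\ll\eps_N N$ by \eqref{epsNdef}, the threshold $\eps_N N-c_N$ exceeds the conditional mean of $Y$ by a fixed multiplicative factor $1+\eta_0$ for large $N$, so Lemma \ref{negcorlem} (using part 1 when this factor is at most $2$, so that the mean is itself of order $\eps_N N$, and part 3 otherwise, where $\eta\,E[Y]\geq (\eps_N N-c_N)/2$) yields $P(Y>\eps_N N-c_N\mid\mathcal{F}_{m,N})\leq e^{-c\eps_N N}$ for some constant $c>0$ depending only on $\delta$. Since $\eps_N\leq1$, this is at most $2e^{-\gamma N\eps_N^2}$ with $\gamma=c$, which is the claim; the conditional-independence structure is immediate from the construction and the final concentration estimate is routine.
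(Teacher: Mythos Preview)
Your proof is correct and follows essentially the same strategy as the paper: condition on $\mathcal{F}_{m,N}$, exploit conditional independence of the $N-c_N$ newly sampled individuals, use strong selection to produce a constant-factor gap between the budget $\eps_N N$ and the expected number of ``bad'' individuals, and finish with a Chernoff-type concentration bound. The only cosmetic differences are that the paper counts successes (individuals of type $\geq k$) and applies the additive Hoeffding bound \eqref{chernoff1} to a dominating binomial, yielding $2e^{-2\kappa^2 N\eps_N^2}$ directly, whereas you count defectors and apply Lemma~\ref{negcorlem} to obtain the stronger bound $e^{-cN\eps_N}$ before weakening it; both routes rest on the same key observation that $s_N$ bounded away from zero turns the parent-selection probability ratio into a fixed multiplicative margin.
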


\begin{proof}
According to the model, $N - c_N$ individuals in the active population in generation $m+1$ will choose their parent at random from the active population in generation $m$, with probability proportional to fitness, and inherit the same type as their parent with probability $1 - 2 \mu_N$.  In generation $m$, on the event that $\sum_{\ell = k}^{\infty} A_{\ell, N}(m) \geq (1 - \eps_N)N$, for some $j \geq (1 - \eps_N)N$ there are $j$ individuals with fitness at least $(1 + s_N)^k$ and $N - j$ individuals with fitness at most $(1 + s_N)^{k-1}$.  Therefore, an individual in generation $m+1$ whose parent was active in the previous generation will be type $k$ or higher with probability at least
$$\frac{j(1 + s_N)^k}{j(1 + s_N)^k + (N-j)(1 + s_N)^{k-1}} \cdot (1 - 2 \mu_N) = \frac{j(1 + s_N)(1 - 2\mu_N)}{N + js_N}.$$
Because the expression on the right-hand side is an increasing function of $j$, this probability is greater than or equal to
\begin{equation}\label{fitprob}
\frac{N(1 - \eps_N)(1 + s_N)(1 - 2\mu_N)}{N + N (1 - \eps_N)s_N} = \frac{(1 - \eps_N)(1 + s_N)(1 - 2\mu_N)}{1 + (1 - \eps_N)s_N}.
\end{equation}
Let $$s_N^* = \frac{s_N}{1 + s_N}.$$
Because
\begin{equation}\label{sstareq}
\frac{1 + s_N}{1 + (1 - \eps_N) s_N} = 1 + \frac{\eps_N s_N}{1 + (1 - \eps_N)s_N} \geq 1 + \frac{\eps_N s_N}{1 + s_N} = 1 + \eps_N s_N^*,
\end{equation}
the probability in (\ref{fitprob}) is greater than or equal to
$$q_N := (1 - \eps_N)(1 + \eps_N s_N^*)(1 - 2\mu_N).$$
Thus, on the event that $\sum_{\ell = k}^{\infty} A_{\ell, N}(m) \geq (1 - \eps_N)N$, the random variable $\sum_{\ell = k}^{\infty} A_{\ell,N}(m+1)$ stochastically dominates a random variable $Z_N$ having a binomial$(N - c_N, q_N)$ distribution.
It follows that on the event $\sum_{\ell = k}^{\infty} A_{\ell, N}(m) \geq (1 - \eps_N)N$, we have
$$P\bigg(\sum_{\ell = k}^{\infty} A_{\ell,N}(m+1) < (1 - \eps_N)N \Big|{\cal F}_{m,N} \bigg) \leq P(Z_N \leq (1 - \eps_N)N).$$
Note that $$E[Z_N] - (1 - \eps_N)N = N(1 - \eps_N) \Big( \Big(1 - \frac{c_N}{N} \Big)(1 + \eps_N s_N^*)(1 - 2 \mu_N) - 1 \Big).$$
Because $\eps_N s_N \gg \mu_N$ and $\eps_N s_N \gg c_N/N$ by \eqref{slarge}, \eqref{Nmubound}, and \eqref{epsNdef}, and $(s_N^*)_{N=1}^{\infty}$ is bounded away from zero, there is a constant $\kappa > 0$ such that for sufficiently large $N$, we have $$(1 - \eps_N) \Big( \Big(1 - \frac{c_N}{N} \Big)(1 + \eps_N s_N^*)(1 - 2 \mu_N) - 1 \Big) \geq \kappa \eps_N.$$
In particular, we have $E[Z_N] - (1 - \eps_N) \geq N \kappa \eps_N$.  
Therefore, it now follows from (\ref{chernoff1}) that
for sufficiently large $N$, on the event $\sum_{\ell = k}^{\infty} A_{\ell, N}(m) \geq (1 - \eps_N)N$, we have
\begin{align*}
P\bigg(\sum_{\ell = k}^{\infty} A_{\ell,N}(m+1) < (1 - \eps)N \Big|{\cal F}_{m,N} \bigg) &\leq P \big( |Z_N - E[Z_N]| \geq E[Z_N] - (1 - \eps_N)N \big) \\
&\leq P \big( |Z_N - E[Z_N]| \geq N \kappa \eps_N \big) \\
&\leq 2e^{-2N \kappa^2 \eps_N^2}.
\end{align*}
The result follows if we choose $\gamma = 2 \kappa^2$.
\end{proof}

We will need a more precise result in one special case.  The following lemma shows that if the population consists entirely of type 0 individuals and then a deleterious mutation occurs, with high probability the population will once again have all type 0 individuals after $r_N$ generations.

\begin{Lemma}\label{delmut}
Let $m$ be a positive integer such that $R_N(\ell) = 1$ for all $\ell \in \{m, m+1, \dots, m+r_N\}$.   Define the event
$$\Phi_{m,N} = \{A_{0,N}(m) = N - 1\} \cap \{A_{-1,N}(m) = 1\} \cap \{D_{0,N}(m) = K_N\}.$$
Let $\delta > 0$.  Then for sufficiently large $N$, on the event $\Phi_{m,N}$, we have
$$P\big(\{A_{0,N}(m + r_N) = N\} \cap \{D_{0,N}(m + r_N) = K_N\} \big| {\cal F}_{m,N} \big) > 1 - \delta.$$
\end{Lemma}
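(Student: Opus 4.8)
The plan is to reduce everything to a strongly subcritical competition between types $0$ and $-1$ inside the active population, and to show that the type $-1$ lineage dies out there before generation $m+r_N$ without ever reaching the seed bank. First I would dispose of new mutations: letting $B_N$ be the event that $\xi_{i,\ell,N}\neq 0$ for some $i$ and some $\ell\in\{m,\dots,m+r_N-1\}$, a union bound over the $N-c_N$ individuals in each of $r_N$ generations gives $P(B_N\mid\mathcal F_{m,N})\le 2r_N N\mu_N$, and this tends to $0$ because $r_N\ll U_N$ by \eqref{rNdef} while $U_N N\mu_N\lesssim 1$ by \eqref{UNmu}. On $B_N^c$ no individual ever acquires a mutation during these generations, so the only possible non-type-$0$ individuals are type $-1$ and are all descendants of the single type $-1$ mutant present at generation $m$.

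Let $a(m')$ be the number of active type $-1$ individuals in generation $m'$, with $a(m)=1$, and let $\tau$ be the first generation at which either $a(m')\ge\eps_N N$ or some type $-1$ individual sits in the dormant population. Before $\tau$ the dormant population is entirely type $0$, so on $B_N^c$ the $c_N$ individuals switching in from the seed bank are type $0$, and every active type $-1$ individual in generation $m'+1$ descends from an active type $-1$ parent in generation $m'$. Since during summer a type $-1$ parent has relative fitness $(1+s_N)^{-1}$, each of the $N-c_N$ offspring chooses a type $-1$ parent with probability at most $a(m')(1+s_N)^{-1}/(N-a(m'))$, so that on $\{m'<\tau\}$,
$$E[a(m'+1)\mid\mathcal F_{m',N}]\le a(m')\cdot\frac{1-c_N/N}{(1+s_N)(1-\eps_N)}\le q\,a(m')$$
for a constant $q<1$ and all large $N$, because $s_N$ is bounded away from $0$ by \eqref{slarge}, $c_N/N\to0$ by \eqref{NcN}, and $\eps_N\to0$ by \eqref{epsNdef}. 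Thus $(a(m'\wedge\tau))_{m'\ge m}$ obeys a one-step contraction, which gives $E[a(m'\wedge\tau)]\le q^{m'-m}$ and, summing, expected total progeny at most $(1-q)^{-1}=O(1)$.

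The main obstacle is the seed bank, which can trap a deleterious type for far longer than $r_N$ generations: in both regimes $r_N\ll K_N/c_N$, so a single type $-1$ individual that ever switched into the dormant population would almost surely still be there at generation $m+r_N$, and the conclusion $D_{0,N}(m+r_N)=K_N$ would fail. The resolution is to show that with high probability no type $-1$ individual ever enters the seed bank at all. A given active type $-1$ individual is chosen to become dormant with probability $c_N/N$ in each generation, so the expected number of such switches before $\tau$ is at most $(c_N/N)\sum_{m'}E[a(m'\wedge\tau)]\le (c_N/N)(1-q)^{-1}$, which tends to $0$ because $c_N/N\to0$ by \eqref{NcN} while the total progeny is $O(1)$; a Markov bound then rules out any such switch. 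The same contraction gives $P(a(m')\ge\eps_N N\text{ for some }m'\le m+r_N)\le q^{r_N}/(\eps_N N)\to0$, so $\tau>m+r_N$ with high probability, while $E[a(m+r_N)]\le q^{r_N}\to0$ forces $a(m+r_N)=0$ with high probability. Intersecting $B_N^c$ with these events, each of conditional probability tending to $1$ on $\Phi_{m,N}$, leaves the active and dormant populations entirely of type $0$, yielding $A_{0,N}(m+r_N)=N$ and $D_{0,N}(m+r_N)=K_N$ with conditional probability exceeding $1-\delta$ for all sufficiently large $N$.
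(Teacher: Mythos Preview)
Your argument is correct and takes a genuinely different route from the paper's. The paper fixes a constant threshold $L$ with $1/L<\delta/4$, uses Doob's maximal inequality for the supermartingale $A_{-1,N}(\cdot)$ to bound the probability of reaching $L$, and then runs a geometric-trials argument: below $L$ there is probability at least roughly $e^{-(L-1)}$ of absorption at $0$ in one step, so absorption occurs within a bounded number $B_L$ of generations, and the seed bank is protected by the crude bound $B_L(L-1)c_N/N$. You instead exploit the strict one-step contraction $E[a(m'+1)\mid\mathcal F_{m',N}]\le q\,a(m')$ with a fixed $q<1$, sum to get total expected progeny $(1-q)^{-1}=O(1)$, and deduce both the seed bank bound $(c_N/N)(1-q)^{-1}\to0$ and extinction from $E[a(m+r_N)]\le q^{r_N}\to0$. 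Your approach is more direct and avoids the auxiliary constants $L,B_L$; the paper's version is slightly more robust in that it only uses the supermartingale property rather than the strict contraction, but that extra generality is not needed here.

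One small correction: the displayed bound $P(a(m')\ge\eps_N N\text{ for some }m')\le q^{r_N}/(\eps_N N)$ is not what the contraction gives. Since $(a(m'\wedge\tau))$ is a nonnegative supermartingale started at $1$, Doob's maximal inequality yields $P(\sup_{m'}a(m'\wedge\tau)\ge\eps_N N)\le 1/(\eps_N N)$, and this tends to $0$ because $\eps_N N\to\infty$ (for instance from $\eps_N^3 L_N\to\infty$ in \eqref{epsNdef}, so $\eps_N$ decays slower than any power of $\log\log N$). The conclusion is unaffected.
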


\begin{proof}
First note that
\begin{equation}\label{del1}
P \big( {\cal M}_N \cap \{m, m+1, \dots, m+r_N-1\} = \emptyset \big| {\cal F}_{m,N} \big) \geq 1 - 2 N \mu_N r_N,
\end{equation}
which tends to one as $N \rightarrow \infty$ by (\ref{rNdef}) and (\ref{UNmu}).  Therefore, on the event $\Phi_{m,N}$, with high probability type $-1$ remains the least fit type in the population until time $m + r_N$.

Let $$\zeta_{L,m,N} = \min\{n \geq m: A_{-1,N}(n) = 0 \mbox{ or }A_{-1,N}(n) \geq L\}$$ be the first time that the number of individuals of type $-1$ either drops to zero or exceeds $L$.  Let $\kappa_{m,N} = \min\{n \geq m: n \in {\cal M}_N\}$ be the first time after time $m$ that a mutation occurs.  Let $M_N(n) = A_{-1,N}(n)$ on $\Phi_{m,N}$, and let $M_N(n) = 0$ on $\Phi_{m,N}^c$.  Because individuals of type $-1$ have lower fitness than other individuals in the population, at least until time $\kappa_{m,N}$, the process $(M_N(n))_{n=m}^{(m + r_N) \wedge \kappa_{m,N}}$
is a supermartingale.  Therefore, by Doob's Maximal Inequality,
$$P \big( M_N(n) \geq L \mbox{ for some }n \in \{m, m+1, \dots, (m + r_N) \wedge \kappa_{m,N} \} \big| {\cal F}_{m,N} \big) \leq \frac{1}{L}.$$
Choose a positive integer $L$ large enough that $1/L < \delta/4$.  Then on $\Phi_{m,N}$ we have
\begin{equation}\label{del2}
P \big( \{A_{-1,N}(\zeta_{L,m,n}) \geq L\} \cap \{\zeta_{L,m,n} \leq (m + r_N) \wedge \kappa_{m,N} \} \big| {\cal F}_{m,N} \big) \leq \frac{\delta}{4}.
\end{equation}

In each generation, $N - c_N$ individuals in the active population choose their parent at random from the active population in the previous generation with probability proportional to fitness.  As long as there are fewer than $L$ individuals of type $-1$ in the previous generation and no individuals of lower types, the probability that a given individual chooses a type $-1$ individual as its parent is at most $(L-1)/N$.  Therefore, the probability that there are no type $-1$ individuals in the next generation is at least $(1 - (L-1)/N)^{N - c_N}$, which converges to $e^{-(L-1)}$ as $N \rightarrow \infty$ by (\ref{NcN}).  Therefore, if we choose $q$ such that $0 < q < e^{-(L-1)}$ and then choose a positive integer $B_L$ such that $(1 - q)^{B_L} < \delta/4$, on $\Phi_{m,N}$ we have for sufficiently large $N$,
\begin{equation}\label{del3}
P \big( \zeta_{L,m,N} > m + B_L \big| {\cal F}_{m,N} \big) < \frac{\delta}{4}.
\end{equation}

Also, the probability that any particular individual joins the seed bank in the next generation is $c_N/N$.  Therefore, before time $\zeta_{L,m,N}$, the probability that a type $-1$ individual joins the seed bank in any given generation is at most $(L-1)c_N/N$.  It follows that
\begin{equation}\label{del4}
P\big(D_{-1,N}(n) > 0 \mbox{ for all }n \in \{m, m+1, \dots, \zeta_{L,m,N} \wedge (m + B_L)\}\big|{\cal F}_{m,N}\big) \leq \frac{B_L (L-1) c_N}{N}.
\end{equation}

As long as $N$ is large enough that $r_N > B_L$, $2 N \mu_N r_N < \delta/4$, and $B_L(L-1) c_N/N < \delta/4$, it follows from (\ref{del1}), (\ref{del2}), (\ref{del3}), and (\ref{del4}) that with probability at least $1 - \delta$, the type $-1$ individuals all disappear from the population before there are any further mutations, and before any type $-1$ individual migrates to the dormant population.  The result of the lemma follows.
\end{proof}

The following corollary describes how the active population evolves during class 3 intervals.  In particular, it shows that whichever type dominates the population at the beginning of a class~3 interval will continue to dominate the population until the end of the class 3 interval.

\begin{Cor}\label{class3cor}
Let $G_{3,N}$ be the event that whenever $\{m, m+1, \dots, m^*\}$ is a class 3 interval (meaning that $m \in {\cal T}_N$ and $\{m, m+1, \dots, m^*\} \subseteq J_{3,N}$ with $m-1 \notin J_{3,N}$ and $m^* + 1 \notin J_{3,N}$), the following hold:
\begin{enumerate}
\item If $\Lambda_{k,m,N}$ occurs for some $k$, then $\Lambda_{k,n,N}$ occurs for all $n$ such that $m \leq n \leq m^*$.

\item If $A_{0,N}(m) = N$ and $D_{0,N}(m) = K_N$, then $A_{0,N}(m^*) = N$ and $D_{0,N}(m^*) = K_N$.
If, in addition, we have $m^* + 1 \in \mathcal{U}_N$, then $A_{0,N}(m^* + 1) = N$ and $D_{0,N}(m^*+1) = K_N$.
\end{enumerate}
Then $$\lim_{N \rightarrow \infty} P(G_{3,N}) = 1.$$
\end{Cor}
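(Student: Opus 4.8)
The plan is to work throughout on the event $G_{0,N}$ from Corollary \ref{mutcor}, which holds with probability tending to $1$ and supplies the clean interval structure: at most one mutation per generation, consecutive mutations separated by more than $2r_N$ generations (Lemma \ref{mutlem2}), no mutation within $2r_N$ of a seasonal change (Lemma \ref{mutlem3}), and, via $G_{0,N}^*$, that while a type dominates, only offspring of that dominant type can mutate. Since a class 3 interval contains no seasonal change (a change would place its generation in a class 2 interval), the environment is constant on $\{m,\dots,m^*\}$, so by symmetry I may treat the summer case $R_N(\cdot)=1$ and handle parts 1 and 2 separately.

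For part 1, suppose $\Lambda_{k,m,N}$ holds. The key structural observation is that on $G_{0,N}$ no beneficial mutation occurs before generation $m^*$ (such a mutation would open a class 1 interval inside $\{m,\dots,m^*\}$), so on $G_{0,N}^*$ the only mutations are deleterious offspring of type $k$, creating types $\leq k$ only. Hence no type of higher fitness than $k$ is ever produced in the active population; since the dormant population receives types only from the active population and, by condition 2 of $\Lambda_{k,m,N}$, carries none above $k$ at time $m$, condition 2 of domination is preserved automatically throughout the interval. Because no higher type is present, $\sum_{\ell\geq k}A_{\ell,N}(n)=A_{k,N}(n)$, so condition 1 is exactly the event controlled by Lemma \ref{kstayhigh}. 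I would chain that lemma generation by generation: letting $\tau$ be the first generation in the interval at which $A_{k,N}$ drops below $(1-\eps_N)N$, it gives $P(\tau=n+1\mid\mathcal{F}_{n,N})\leq 2e^{-\gamma N\eps_N^2}$ on $\{\tau>n\}$, and a union bound over the at most $\lfloor\rho_N^{-1}t_0\rfloor+1$ generations yields a total failure probability at most a constant times $\rho_N^{-1}e^{-\gamma N\eps_N^2}$. This vanishes because \eqref{totalgen} bounds the number of generations by $N^{b-1}$ while \eqref{epsNdef} forces $N\eps_N^2\gg\log N$, so the exponential beats the polynomial factor. The winter case, and the degenerate case $k=0$, are analogous, using the winter analogue of Lemma \ref{kstayhigh}.

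For part 2, suppose instead $A_{0,N}(m)=N$ and $D_{0,N}(m)=K_N$. By the same structural reasoning no beneficial mutation occurs before $m^*$, so the only events inside the interval are deleterious mutations; on $G_{0,N}$ each one produces exactly one individual of type $-1$ in an otherwise all-type-$0$ population (Lemma \ref{mutlem1}), i.e. precisely the configuration $\Phi_{n+1,N}$, whence Lemma \ref{delmut} returns the population to the all-type-$0$ state within $r_N$ generations. Lemmas \ref{mutlem2} and \ref{mutlem3}, together with the fact that the interval ends either at a beneficial mutation ($m^*\in\mathcal{M}_N^+$) or just before a seasonal change ($m^*+1\in\mathcal{U}_N$), guarantee that the last deleterious mutation occurs more than $2r_N$ generations before $m^*$, so recovery is complete by $m^*$; and when $m^*+1\in\mathcal{U}_N$, Lemma \ref{mutlem3} forbids a mutation at $m^*$ itself, so the clean state persists to $m^*+1$.

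The main obstacle is controlling part 2 uniformly, since a priori the number of deleterious mutations to which Lemma \ref{delmut} must be applied is random and unbounded, whereas that lemma provides only a fixed error $\delta$. I would resolve this by first noting that the expected number of mutations in $\mathcal{T}_N$ is $O(1)$: it is at most a constant times $\rho_N^{-1}N\mu_N=1/h(s_N)$, which is bounded by \eqref{rhoNdef} and \eqref{hlower}. Hence, given $\eta>0$, the number of mutations exceeds some fixed $K_0$ with probability below $\eta/3$; applying Lemma \ref{delmut} with $\delta=\eta/(3K_0)$ and union-bounding the recovery failures over the at most $K_0$ deleterious mutations, chaining the conditional bounds across the well-separated mutations via the Markov structure of the construction (the population being all-type-$0$ between successive recoveries), keeps the total error below $2\eta/3$; adding $P(G_{0,N}^c)<\eta/3$ gives $P(G_{3,N}^c)<\eta$ for large $N$. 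Letting $N\to\infty$ and then $\eta\to0$ yields $P(G_{3,N})\to1$.
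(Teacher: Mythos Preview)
Your proposal is correct and follows essentially the same approach as the paper: for part 1 you chain Lemma \ref{kstayhigh} across generations with a union bound (the paper bounds per season, you bound over all of $\mathcal{T}_N$; both work since $\rho_N^{-1}e^{-\gamma N\eps_N^2}\to 0$), and for part 2 you use the $O(1)$ expected number of mutations together with Lemma \ref{delmut} and the separation Lemmas \ref{mutlem2}--\ref{mutlem3}, exactly as the paper does. Your explicit handling of the chaining (fixing $K_0$, then choosing $\delta=\eta/(3K_0)$) makes rigorous a step the paper leaves implicit, and your invocation of $G_{0,N}^*$ in part 1 is harmless but unnecessary, since the absence of any beneficial mutation during a class 3 interval already prevents types above $k$ from appearing.
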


\begin{proof}
Suppose $R_N(m) = 1$.  On the event $\Lambda_{k,m,N}$, we have $A_{k,N}(m) \geq (1 - \eps_N)N$.  Therefore, by Lemma \ref{kstayhigh}, on the event $\Lambda_{k,m,N}$, the probability conditional on ${\cal F}_{m,N}$ that the number of individuals of type $k$ or higher remains at least $(1 - \eps_N)N$ until the first generation of the next winter is at least $1 - 2 V_N e^{-\gamma N \eps_N^2}$.  This probability tends to one as $N \rightarrow \infty$ because $N \eps_N^2 \gg \log N$ by \eqref{LNdef} and \eqref{epsNdef}
and because $V_N \leq U_N \lesssim \rho_N^{-1}$ by (\ref{Reg1Asm2}) and (\ref{Reg2Asm}) and therefore $V_N \ll N^{b-1}$ by (\ref{totalgen}).  Furthermore, because a seasonal change or a mutation that creates a type $k+1$ individual would bring the class 3 interval to an end, the probability that type $k$ dominates the population until the end of the class 3 interval tends to one as $N \rightarrow \infty$.  A similar argument holds when $R_N(m) = -1$.  This proves the first part of the corollary.

To prove the second part of the corollary, suppose $A_{0,N}(m) = N$ and $D_{0,N}(m) = K_N$.  A beneficial mutation would bring the class~3 interval to an end.  The expected number of deleterious mutations in a given season is bounded above by $2 N \mu_N U_N$, which by (\ref{UNmu}) is bounded above by a positive constant.  Furthermore, by Lemma \ref{delmut}, each deleterious mutation disappears from the population within $r_N$ generations with probability at least $1 - \delta$, where $\delta$ is arbitrary, and by Lemmas \ref{mutlem1}, \ref{mutlem2}, and \ref{mutlem3}, with probability tending to one, no two mutations will occur within $r_N$ generations of one another or within $r_N$ generations of the end of the class 3 interval.  It follows that with probability tending to one, all individuals will be type $0$ at the end of the class 3 interval, and in the following generation if $m^*+1 \in \mathcal{U}_N$.  The result follows.
\end{proof}

\subsection{Class 1 intervals: the population following a beneficial mutation}\label{1sub}

In this subsection, we describe what happens after the fittest type in the population acquires a mutation.  We describe what happens during the summer, with the understanding that analogous results hold for the winter season.  

\subsubsection{The initial phase}

Suppose that during the summer, type $k-1$ dominates the population, and then a beneficial mutation produces a type $k$ individual in generation $m$.  
More formally, we will work on the event
\begin{align}\label{PhikmNdef}
\Phi_{k,m,N} &= \{A_{k,N}(m) = 1\} \cap \{A_{\ell, N}(m) = 0 \mbox{ for all }\ell > k\} \nonumber \\
&\qquad \qquad \cap\{A_{k-1,N}(m) \geq (1 - \eps_N)N - 1\} \cap \{D_{\ell,N}(m) = 0 \mbox{ for all }\ell \geq k\}.
\end{align}
Recall the definition of $L_N$ from (\ref{LNdef}).  For positive integers $m$ and $k$, define
\begin{equation}\label{taudef}
\tau_{k,m,N} = \min\{n \geq m: A_{k,N}(n) = 0 \mbox{ or }A_{k,N}(n) \geq L_N\}.
\end{equation}
The following lemma shows that, with high probability, the number of type $k$ individuals either dies out or exceeds $L_N$ after a short time.  Also, during this short time period, no type $k$ individuals enter the seed bank.

\begin{Lemma}\label{1toL}
Let $m$ be a positive integer such that $R_N(n) = 1$ for $n \in \{m, m+1, \dots, m+\lfloor r_N/2 \rfloor\}$.  Let $k$ be a positive integer.  Let $\delta > 0$.  For sufficiently large $N$, we have
\begin{equation}\label{taukL}
P\big(\tau_{k,m,N} \leq m + r_N/2 \big|{\cal F}_{m,N} \big) > 1 - \delta \quad\mbox{on }\Phi_{k,m,N}
\end{equation}
and
\begin{equation}\label{noseed}
P\big(D_{k,N}(n) = 0 \mbox{ for all }n \in \{m, m+1, \dots, \tau_{k,m,N}\}\big|{\cal F}_{m,N}\big) > 1 - \delta \quad\mbox{on }\Phi_{k,m,N}.
\end{equation}
\end{Lemma}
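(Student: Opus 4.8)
The plan is to couple the number $A_{k,N}(n)$ of type-$k$ individuals, as long as it stays in $\{1,\dots,L_N-1\}$, with a supercritical Galton--Watson process, and then to read off both conclusions from the fact that such a process leaves this window after $O(\log L_N)$ generations. All statements are conditional on ${\cal F}_{m,N}$ and hold on the ${\cal F}_{m,N}$-measurable event $\Phi_{k,m,N}$. First I would restrict to the conditionally high-probability event on which (i) no mutation occurs in generations $\{m,\dots,m+r_N\}$ (conditional probability at least $1-2N\mu_N r_N\to1$ by \eqref{rNdef} and \eqref{UNmu}, exactly as in the bound \eqref{del1} of Lemma \ref{delmut}), and (ii) the fraction of individuals of type at least $k-1$ stays above $1-\eps_N$ throughout the window (by Lemma \ref{kstayhigh} applied with $k-1$ in place of $k$ and summed over the at most $r_N/2$ generations, the failure probability being at most $r_N e^{-\gamma N\eps_N^2}\to0$ since $N\eps_N^2\gg\log N$ by \eqref{LNdef} and \eqref{epsNdef}). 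On this event no type exceeds $k$, type $k$ is strictly the fittest, and for $n<\tau_{k,m,N}$ the type-$k$ count of the next generation stochastically dominates a $\mathrm{Binomial}(N-c_N,P_j)$ variable with $j=A_{k,N}(n)$ and $P_j\ge j(1+s_N)/(N+js_N)$. Using $c_N\ll N$ \eqref{NcN}, $j<L_N\ll N$ and $s_N\to s>0$ \eqref{slarge}, one checks that the per-capita mean $(N-c_N)P_j/j$ exceeds $1+\eta$ for some fixed $\eta>0$ and all large $N$, uniformly over $1\le j\le L_N-1$; on the dominance event (ii) it is also bounded above.

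The technical core is to turn this one-step comparison into an honest coupling $W'(n)\le A_{k,N}(m+n)$, valid until $A_{k,N}$ leaves the window, with $W'$ a genuine Galton--Watson process of offspring mean at least $1+\eta$. Here one must reconcile the fixed offspring total $N-c_N$ and the mild dependence of $P_j$ on $j$ with the independent-offspring structure of a branching process; the standard route compares $\mathrm{Binomial}(N-c_N,P_j)$ with a branching step using that $P_j$ is almost linear in $j$ for $j\ll N$ and that the associated Binomial--Poisson discrepancies are $O(j^2/N)$ per generation. I expect this bookkeeping, rather than any conceptual point, to be the main obstacle, and it is where $L_N\ll\log N$ and $L_N!\ll\sqrt{N/c_N}$ from \eqref{LNdef} keep the accumulated coupling error $o(1)$.

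Granting the coupling, \eqref{taukL} follows from a blocking argument. A supercritical Galton--Watson process of offspring mean at least $1+\eta$, started from one individual, reaches level $L_N$ within $C\log L_N$ generations with probability at least some $\rho_1>0$ uniform in $N$: the survival probability is bounded below, and on survival the Kesten--Stigum martingale forces growth at rate $(1+\eta)^n$. I would partition $\{m,\dots,m+\lfloor r_N/2\rfloor\}$ into $\ell\asymp (r_N/2)/\log L_N$ disjoint blocks of length $\lceil C\log L_N\rceil$; this is possible with $\ell\to\infty$ because $\log N\ll r_N$ \eqref{rNdef} while $\log L_N\lesssim\log\log N$. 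Applying the Markov property block by block, the conditional probability that $A_{k,N}$ is still in the window at a block's end, given that it was in the window at the block's start, is at most $1-\rho_1$ (since on that block $W'$, hence $A_{k,N}$, reaches $L_N$ with probability at least $\rho_1$). Hence $P(\tau_{k,m,N}>m+r_N/2\mid{\cal F}_{m,N})\le(1-\rho_1)^{\ell}+o(1)\to0$ on $\Phi_{k,m,N}$, which is \eqref{taukL}.

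Finally, for \eqref{noseed} I would estimate the number of type-$k$ individuals migrating to the seed bank before $\tau_{k,m,N}$. In each generation the $c_N$ departing active individuals form a uniform subset, so the expected number of type-$k$ among them is $c_N A_{k,N}(n)/N$, and the expected number of type-$k$ migrations up to $\tau_{k,m,N}$ is at most $\tfrac{c_N}{N}\,E\big[\sum_{m\le n<\tau_{k,m,N}}A_{k,N}(n)\,\big|\,{\cal F}_{m,N}\big]$. By the same branching comparison, the type-$k$ count either dies out quickly or grows geometrically to $L_N$, so its total number of individual-generations before $\tau_{k,m,N}$ has conditional expectation $O(L_N)$ (the sum is dominated by its last $O(1)$ terms, the seed-bank feedback being negligible as it is itself bounded by the quantity being estimated). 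Thus the expected number of type-$k$ migrations is $O(c_N L_N/N)$, which tends to $0$ since $L_N!\ll\sqrt{N/c_N}$ forces $c_N L_N^2/N\to0$ by \eqref{LNdef}. As $D_{k,N}(m)=0$ on $\Phi_{k,m,N}$, Markov's inequality then yields \eqref{noseed}.
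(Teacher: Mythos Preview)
Your approach is valid in outline but takes a substantially more elaborate route than the paper, and the argument for \eqref{noseed} has a gap as written.

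For \eqref{taukL}, the paper avoids the branching-process coupling and blocking altogether. It observes that as long as $A_{k,N}(n)\ge 1$, the next-generation count stochastically dominates a $\mathrm{Binomial}(N-c_N,1/N)$ variable, and hence
\[
P\bigl(A_{k,N}(n+1)\ge L_N \,\big|\, A_{k,N}(n)\ge 1\bigr) \;\ge\; \binom{N-c_N}{L_N}\Bigl(\tfrac{1}{N}\Bigr)^{L_N}\Bigl(1-\tfrac{1}{N}\Bigr)^{N-c_N-L_N}\;\ge\;\frac{1}{2L_N!}
\]
for large $N$. Setting $d_N=\lfloor\min(\sqrt{N/c_N},\log N)\rfloor$, so that $d_N/L_N!\to\infty$ by \eqref{LNdef} while $d_N\ll r_N$ by \eqref{rNdef}, one gets $P(\tau_{k,m,N}>m+d_N)\le(1-1/(2L_N!))^{d_N}\to 0$, which is both sharper and more elementary than your blocking estimate.

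For \eqref{noseed}, your claim that $E\bigl[\sum_{m\le n<\tau_{k,m,N}}A_{k,N}(n)\bigr]=O(L_N)$ is not justified by ``the same branching comparison'': the coupling $W'\le A_{k,N}$ you set up bounds $A_{k,N}$ from \emph{below}, so it controls neither the sum $\sum A_{k,N}(n)$ from above nor the gap $\tau_{k,m,N}-m$ on the event that $W'$ dies out while $A_{k,N}$ persists. The heuristic ``the sum is dominated by its last $O(1)$ terms'' presumes geometric growth of $A_{k,N}$ itself, which you have not established. The fix is easy: use the trivial bound $A_{k,N}(n)<L_N$ for $n<\tau_{k,m,N}$ together with your blocking bound $P(\tau_{k,m,N}-m>\ell\cdot C\log L_N)\le(1-\rho_1)^\ell$, which yields $E[\tau_{k,m,N}-m]=O(\log L_N)$ and hence a bound $O(L_N\log L_N)$ on the sum; then $c_N L_N\log L_N/N\to 0$ follows from \eqref{LNdef}. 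The paper instead uses its sharper time bound $d_N$ directly: before time $\tau_{k,m,N}\wedge(m+d_N)$ the per-generation migration probability is at most $(L_N-1)c_N/N$, so the total is at most $d_N(L_N-1)c_N/N\le L_N\sqrt{c_N/N}\to 0$, again by \eqref{LNdef}.

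In short, both routes work, but the paper's one-step-jump argument is considerably simpler and yields the tighter time scale $d_N$ that makes \eqref{noseed} immediate; your branching-process route requires the extra care above in \eqref{noseed} and some attention to uniformity in $N$ of the constants $\rho_1$ and $C$ in the Kesten--Stigum step.
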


\begin{proof}
The probability that some individual acquires a mutation in one of the generations $m+1, \dots, m + \lfloor r_N/2 \rfloor$ is bounded above by $N \mu_N r_N$, which tends to zero as $N \rightarrow \infty$ by (\ref{rNdef}) and (\ref{UNmu}).  Therefore, we may make our calculations for a modified process in which there are no mutations, and assume that type $k$ remains the fittest type in the population during this time period.

In each generation, $N - c_N$ individuals in the active population choose their parent at random from the active population in the previous generation with probability proportional to fitness.  As long as there is at least one type $k$ individual in the previous generation and no individuals of higher types, any type $k$ individual will be chosen as the parent with probability at least $1/N$.  Therefore, disregarding mutations as noted above, the distribution of the number of type $k$ individuals in the next generation stochastically dominates a binomial($N - c_N, 1/N$) distribution, so the probability that there are $L_N$ or more type $k$ individuals in the next generation is at least
\begin{equation}\label{binLN}
\binom{N - c_N}{L_N} \bigg( \frac{1}{N} \bigg)^{L_N} \bigg(1 - \frac{1}{N} \bigg)^{N - c_N - L_N}.
\end{equation}
Because $c_N \ll N$ by (\ref{NcN}) and $L_N(L_N + c_N)/N \rightarrow 0$ by (\ref{LNdef}), we have $(1 - 1/N)^{N - c_N - L_N} \rightarrow e^{-1}$ and $(N - c_N)(N - c_N - 1) \dots (N - c_N - L_N + 1)/N^{L_N} \rightarrow 1$ as $N \rightarrow \infty$.
Therefore, using that $e^{-1} > 1/2$, the probability in (\ref{binLN}) is bounded below by $1/(2L_N!)$ for sufficiently large $N$.  Let $$d_N = \bigg\lfloor \min \bigg\{ \sqrt{\frac{N}{c_N}}, \log N \bigg\} \bigg\rfloor.$$
Then on the event $\Phi_{k,m,N}$,
\begin{equation}\label{BL}
P\big(\tau_{k,m,N} \leq m + d_N \big|{\cal F}_{m,N} \big) > 1 - \bigg(1 - \frac{1}{2 L_N!} \bigg)^{d_N}.
\end{equation}
Because $d_N/L_N! \rightarrow \infty$ by (\ref{LNdef}) and $d_N \ll r_N$ by (\ref{rNdef}), it follows that (\ref{taukL}) holds.

Furthermore, the probability that any particular individual joins the seed bank in the next generation is $c_N/N$.  Therefore, before time $\tau_{k,m,N}$, the probability that a type $k$ individual joins the seed bank in any given generation is at most $(L_N-1)c_N/N$.  Thus,
\begin{align*}
&P(D_{k,N}(n) \neq 0 \mbox{ for some }n \in \{m, m+1, \dots, \tau_{k,m,N}\}|{\cal F}_{m,N}) \\
&\qquad \qquad \qquad \qquad \leq P(\tau_{k,m,N} > m + d_N|{\cal F}_{m,N}) + \frac{d_N (L_N-1) c_N}{N},
\end{align*}
which tends to zero on $\Phi_{k,m,N}$ by (\ref{BL}) and (\ref{LNdef}).  Equation (\ref{noseed}) follows.
\end{proof}

\subsubsection{A branching process approximation}\label{bpsec}

We still need to estimate the probability that the number of type $k$ individuals in the active population reaches $L_N$ before $0$, when a type $k$ individual appears in generation $m$ as a result of a beneficial mutation.  To do this, we will construct two branching processes which will be coupled with the population process.  One branching process $(W_{m,N}^*(n))_{n = m}^{\infty}$ will bound the number of type $k$ individuals from above with high probability, while the other branching process $(W'_{m,N}(n))_{n = m}^{\infty}$ will bound the number of type $k$ individuals from below with high probability.  Both branching processes will be stopped if the number of individuals reaches $L_N$ or higher.  Recall that throughout this section, we are assuming that the mutation to type $k$ occurs during the summer.

To motivate this construction, we first obtain upper and lower bounds for the probabilities $\Theta_{h,N}(n)$ defined in (\ref{Thetadef}).  We will work on the event
\begin{align*}
B_{k,n,N} &= \{1 \leq A_{k,N}(n) \leq L_N - 1\} \cap \{A_{\ell, N}(n) = 0 \mbox{ for all }\ell > k\} \\
&\qquad \qquad \cap \{A_{k-1,N}(n) + A_{k,N}(n) \geq (1 - \eps_N)N\} \cap \{D_{\ell,N}(m) = 0 \mbox{ for all }\ell \geq k\}.
\end{align*}  
Note that this is the same as $\Phi_{k,n,N}$, except that we allow the number of type $k$ individuals to be anything between $1$ and $L_N-1$.
Suppose the individual with mark $h$ in generation $n$ has type $k$. 
Then on $B_{k,n,N}$, we must have $h \leq L_N - 1$, so there must be at least $(1 - \eps_N)N - (L_N - 2)$ individuals in generation $n$ whose mark is greater than or equal to $h$ and who have type $k-1$ or larger.  It follows that
\begin{equation}\label{Thetaupper}
\Theta_{h,N}(n) \leq \frac{(1 + s_N)^k}{((1 - \eps_N)N - (L_N - 2))(1 + s_N)^{k-1}} \leq \frac{1 + s_N}{N(1 - \eps_N) - L_N}.
\end{equation}
Also, we can upper bound the total fitness of the population by considering the case in which $L_N-1$ individuals have type $k$ and the rest have type $k-1$.  This leads to the inequality
\begin{align}\label{Thetalower}
\Theta_{h,N}(n) &\geq \frac{(1 + s_N)^k}{(L_N - 1) (1 + s_N)^k + (N - L_N + 1) (1 + s_N)^{k-1}} \nonumber \\
&\geq \frac{1 + s_N}{L_N(1 + s_N) + (N-L_N)} \nonumber \\
&= \frac{1 + s_N}{N + L_N s_N}.
\end{align}
We therefore define
\begin{equation}\label{qdef}
q_{N}^* = \frac{1 + s_N}{N(1 - \eps_N) - L_N}, \qquad q_{N}' = \frac{1 + s_N}{N + L_N s_N}.
\end{equation}
Then on the event $B_{k,n,N}$, we have $$q_N' \leq \Theta_{h,N}(n) \leq q_N^*.$$

We now construct the two branching processes.  First, let $W_{m,N}^*(m) = 1$.  For $n \geq m$, if $W_{m,N}^*(n) = 0$ or if $W_{m,N}^*(n) \geq L_N$, then let $W_{m,N}^*(n + 1) = W_{m,N}^*(n)$.  Suppose instead $W_{m,N}^*(n) = j$, where $j \in \{1, \dots, L_N-1\}$.  Then let
\begin{equation}\label{Wkstardef}
W_{m,N}^*(n + 1) = \sum_{h=1}^j \sum_{i =1}^{N - c_N} \1_{\{U_{h,i, n,N} \leq q_{N}^*\}}.
\end{equation}
We interpret the sum $\sum_{i = 1}^{N - c_N} \1_{\{U_{h,i, n,N} \leq q_{N}^*\}}$ as the number of offspring in generation $n + 1$ of the $h$th individual in generation $n$.  Because the random variables $U_{h,i,n,N}$ are independent, $(W_{m,N}^*(n))_{n = m}^{\infty}$ is a branching process whose offspring distribution is binomial$(N - c_N, q_{N}^*)$, stopped when it reaches or exceeds $L_N$.  

The construction of $(W_{m,N}'(n))_{n = m}^{\infty}$ is more involved and requires external randomness.
We define additional independent uniformly distributed random variables $U_{h,i,n,N}$ for $h \in \{1, \dots, N\}$, $i \in \{N - c_N + 1, N - c_N + 2, \dots\}$, and positive integers $n$.
Let $W_{m,N}'(m) = 1$.  For $n \geq m$, if $W_{m,N}'(n) = 0$ or if $W_{m,N}'(n) \geq L_N$, then let $W_{m,N}'(n + 1) = W_{m,N}'(n)$.  Suppose instead that $W_{m,N}'(n) = j$, where $j \in \{1, \dots, L_N-1\}$. 
For $h \in \{1, \dots, j\}$, let
\begin{equation}\label{Sprimedef}
{\cal S}_{h, n,N}' = \{i \in \{1, \dots, N - c_N\}: Q_{i,n,N} \geq h\}
\end{equation}
be the set of individuals in generation $n+1$ who did not choose an individual marked $1, \dots, h-1$ as their parent.  Let ${\cal S}_{h, n,N}$ be the first $N - c_N - L_N$ elements of ${\cal S}_{h, n,N}' \cup \{N - c_N + 1, N - c_N + 2, \dots\}$.  Now let
\begin{equation}\label{Wkprimedef}
W_{m,N}'(n + 1) = \sum_{h=1}^j \sum_{i \in {\cal S}_{h, n,N}} \1_{\{U_{h,i, n,N} \leq q_{N}'\}}.
\end{equation}
Because the set ${\cal S}_{h,n,N}$ always has cardinality $N - c_N - L_N$ and the random variables $U_{h,i,n,N}$ are independent, the process $(W_{m,N}'(n))_{n = m}^{\infty}$ is a branching process whose offspring distribution is binomial$(N - c_N - L_N, q_{N}'$).
  
Define
\begin{align*}
\Psi^*_{m,N} &= \big\{W_{m,N}^*(n) \geq L_N \mbox{ for some }n \in \{m, m+1, \dots, m+\lfloor r_N/2 \rfloor\} \big\} \\
\Psi'_{m,N} &= \big\{W_{m,N}'(n) \geq L_N \mbox{ for some }n \in \{m, m+1, \dots, m+ \lfloor r_N/2 \rfloor\} \big\},
\end{align*}
which are the events that the population sizes in the branching processes reach $L_N$ within $r_N/2$ generations.  We require the branching processes to reach $L_N$ within $r_N/2$ generations so that $\Psi^*_{m,N}$ and $\Psi^*_{n,N}$ will be independent if $|m - n| > r_N$.

We will need the following elementary lemma about Galton-Watson processes.

\begin{Lemma}\label{GWlemma}
Let $(Z_n)_{n=0}^{\infty}$ be a Galton-Watson process with offspring distribution $(p_k)_{k=0}^{\infty}$ such that $Z_0 = 1$.  Suppose $p_1 \neq 1$.  Let $L$ be a positive integer.  Then
\begin{equation}\label{bpeq}
0 \leq P(Z_n \geq L \textup{ for some }n) - P(Z_n > 0 \textup{ for all }n) \leq 1/L.
\end{equation}
\end{Lemma}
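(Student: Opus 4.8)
The plan is to compare the two events $A = \{Z_n \geq L \text{ for some } n\}$ and $B = \{Z_n > 0 \text{ for all } n\}$, writing $q = P(B^c)$ for the extinction probability and using the decomposition $P(A) - P(B) = P(A \cap B^c) - P(A^c \cap B)$. The lower bound in \eqref{bpeq} amounts to showing $P(A^c \cap B) = 0$, while the upper bound amounts to controlling $P(A \cap B^c)$, the probability of reaching level $L$ and then dying out. I would prove these two halves separately, and the hypothesis $p_1 \neq 1$ will enter exactly at the lower bound.

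For the lower bound, observe that $A^c \cap B = \{1 \leq Z_n \leq L-1 \text{ for all } n\}$, so it suffices to show that when $p_1 \neq 1$ the process almost surely cannot remain in $\{1, \dots, L-1\}$ forever. I would split into two cases. If $p_0 > 0$, then from any state $j \in \{1, \dots, L-1\}$ the chain jumps to $0$ in one step with probability $p_0^j \geq p_0^{L-1} =: \eta > 0$, so by the Markov property the probability of staying in $\{1, \dots, L-1\}$ for $t$ consecutive generations is at most $(1 - \eta)^t \to 0$. If $p_0 = 0$, then the process is non-decreasing, and from any level $j \geq 1$ the probability that it never increases again is $\lim_{t \to \infty} p_1^{jt} = 0$ because $p_1 < 1$; hence it a.s. leaves $\{1, \dots, L-1\}$ upward and reaches $L$. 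Either way $P(A^c \cap B) = 0$, so $P(A) - P(B) = P(A \cap B^c) \geq 0$.

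For the upper bound I would first treat the case $q < 1$. Let $\tau = \inf\{n : Z_n \geq L\}$, so $A = \{\tau < \infty\}$. By the branching property, conditionally on $\mathcal{F}_\tau$ the process after time $\tau$ is a superposition of $Z_\tau$ independent copies of the process started from a single individual, so eventual extinction has conditional probability $q^{Z_\tau}$; since $Z_\tau \geq L$ on $A$ this gives
\[
P(A \cap B^c) = E\big[ q^{Z_\tau} \1_{A} \big] \leq q^L\, P(A).
\]
Because $P(A) - P(B) \leq P(A \cap B^c)$ and $P(B) = 1 - q$, it follows that $P(A)(1 - q^L) \leq 1 - q$, i.e.\ $P(A) \leq (1-q)/(1-q^L)$. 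Feeding this back in yields
\[
P(A) - P(B) \leq q^L P(A) \leq \frac{q^L(1-q)}{1 - q^L} = \frac{q^L}{1 + q + \dots + q^{L-1}} \leq \frac{q^L}{L q^{L-1}} = \frac{q}{L} \leq \frac{1}{L},
\]
where the penultimate step uses $q^i \geq q^{L-1}$ for $0 \leq i \leq L-1$. When $q = 1$ this algebra degenerates, but then $P(B) = 0$ and the offspring mean is at most $1$, so $(Z_n)$ is a nonnegative supermartingale with $Z_0 = 1$, and Doob's maximal inequality gives $P(A) = P(\sup_n Z_n \geq L) \leq 1/L$, which is again the desired bound.

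I expect the upper bound to be the main obstacle. The naive estimate only gives $P(A \cap B^c) \leq q^L$, which is useless near criticality; the key idea is to combine it with the identity $P(A) - P(B) = P(A \cap B^c)$ to solve for $P(A)$ and then exploit $1 + q + \dots + q^{L-1} \geq L q^{L-1}$ to extract the clean factor $1/L$. The two technical points to watch are handling the boundary case $q = 1$ separately (since the argument divides by $1 - q^L$), and invoking $p_1 \neq 1$ precisely where the extinction-or-explosion dichotomy underpins the lower bound.
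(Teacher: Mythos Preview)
Your proof is correct, but your route for the upper bound differs from the paper's. The paper conditions on extinction: it invokes the classical fact (cited from Athreya--Ney) that the process conditioned on eventual extinction is itself a subcritical Galton--Watson process, hence a nonnegative supermartingale started at $1$, and then Doob's maximal inequality immediately yields $P(A \mid B^c) \leq 1/L$, so $P(A \cap B^c) \leq q/L \leq 1/L$. You instead stop at the hitting time of $L$, use the branching property to get $P(A \cap B^c) \leq q^L P(A)$, and then bootstrap via $P(A)(1 - q^L) \leq 1 - q$ together with the elementary inequality $1 + q + \dots + q^{L-1} \geq Lq^{L-1}$ to extract the same bound $q/L$. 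The paper's argument is shorter and more conceptual but leans on a nontrivial cited result; yours is more hands-on and self-contained, and handles the $q = 1$ case separately exactly as the paper does (via Doob applied to $(Z_n)$ itself). For the lower bound, the paper simply quotes the extinction-or-explosion dichotomy, whereas you reprove it by a direct case split on $p_0$; again your version is more elementary, the paper's more concise.
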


\begin{proof}
Because $p_1 \neq 1$, we have $\lim_{n \rightarrow \infty} Z_n = \infty$ almost surely on the event that the process survives forever, which gives the first inequality in (\ref{bpeq}).  To prove the second inequality, it suffices to prove that $P(\{Z_n \geq L \mbox{ for some }n \} \cap \{Z_n = 0 \mbox{ for some }n\}) \leq 1/L$.  If the extinction probability is zero, then this bound is trivial.  Otherwise, let $(Z_n^*)_{n=0}^{\infty}$ have the same law as $(Z_n)_{n=0}^{\infty}$ conditioned on extinction.  It suffices to show that
\begin{equation}\label{Zstar}
P(Z_n^* \geq L \mbox{ for some }n) \leq 1/L.
\end{equation}
It is well-known (see, for example, section I.12 of \cite{athney}) that $(Z_n^*)_{n=0}^{\infty}$ is a subcritical branching process.  Therefore, $(Z_n^*)_{n=0}^{\infty}$ is a nonnegative supermartingale, so the result (\ref{Zstar}) follows from Doob's Maximal Inequality.
\end{proof}

\begin{Lemma}\label{compareGW}
For all nonnegative integers $m$, we have
$$\lim_{N \rightarrow \infty} P(\Psi^*_{m,N}) =\lim_{N \rightarrow \infty} P(\Psi'_{m,N}) = \omega(s).$$
\end{Lemma}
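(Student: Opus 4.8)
The plan is to regard $(W^*_{m,N}(n))_{n\ge m}$ and $(W'_{m,N}(n))_{n\ge m}$ as Galton--Watson processes stopped once they reach $L_N$, and to show that for each of them the probability of reaching $L_N$ within $\lfloor r_N/2\rfloor$ generations converges to the survival probability of the corresponding \emph{unstopped} process, which in turn converges to $h(s)$. Since the offspring law in each generation does not depend on $m$, by translation invariance I may take $m=0$; write $\hat W^*$ and $\hat W'$ for the unstopped versions, so that $\hat W^*_0 = \hat W'_0 = 1$, and set $T = \lfloor r_N/2\rfloor$. First I would record the offspring means: $\hat W^*$ has binomial$(N-c_N, q^*_N)$ offspring of mean $\mu^*_N = (N-c_N)q^*_N$, and $\hat W'$ has binomial$(N-c_N-L_N, q'_N)$ offspring of mean $\mu'_N = (N-c_N-L_N)q'_N$. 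Using the formulas \eqref{qdef} together with $c_N\ll N$, $\eps_N\to 0$, and $L_N\ll N$ from \eqref{NcN}, \eqref{epsNdef}, and \eqref{LNdef}, a direct computation gives $\mu^*_N\to 1+s$ and $\mu'_N\to 1+s$ (with $1+\infty=\infty$ when $s=\infty$).

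Assume first that $s<\infty$. Then the success probabilities $q^*_N,q'_N$ tend to $0$ while the numbers of trials tend to infinity, so by the Poisson limit theorem both offspring laws converge weakly to the Poisson$(1+s)$ law; equivalently their generating functions converge, uniformly on $[0,1]$, to $\varphi(z)=e^{(1+s)(z-1)}$. Because $1+s>1$ the limit is supercritical, its extinction probability $\zeta$ is the unique root of $\varphi(z)=z$ in $[0,1)$, and $\varphi'(\zeta)<1$; by uniform convergence of the generating functions the smallest roots $\zeta^*_N,\zeta'_N$ of the offspring generating functions converge to $\zeta$, so the survival probabilities converge to $1-\zeta = h(s)$. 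For the upper bound, $\Psi^*_{0,N}$ is contained in the event that $\hat W^*$ ever reaches $L_N$, whose probability is, by Lemma \ref{GWlemma}, at most $P(\hat W^*\text{ survives})+1/L_N$; since $L_N\to\infty$ this tends to $h(s)$, so $\limsup_N P(\Psi^*_{0,N})\le h(s)$, and likewise for $\hat W'$.

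For the matching lower bound I use $\{\hat W^*_T\ge L_N\}\subseteq \Psi^*_{0,N}$, so it suffices to prove $P(\hat W^*_T\ge L_N)\to h(s)$. Writing $\varphi^*_N$ for the offspring generating function, the extinction-by-time-$T$ probability is the $T$-fold iterate $(\varphi^*_N)^{\circ T}(0)$, and the mean-value estimate $\zeta^*_N-(\varphi^*_N)^{\circ T}(0)\le \rho_0^{\,T}$ --- valid for large $N$ with a uniform $\rho_0<1$, since the conditioned subcritical slope $(\varphi^*_N)'(\zeta^*_N)$ converges to $\varphi'(\zeta)<1$ --- shows that $P(\hat W^*_T\ge 1)\le P(\hat W^*\text{ survives})+\rho_0^{\,T}$, with $\rho_0^{\,T}\to 0$ because $T\to\infty$. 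Consequently $P(1\le \hat W^*_T<L_N)\le \rho_0^{\,T}+P(\hat W^*\text{ survives},\,\hat W^*_T<L_N)$, and $P(\hat W^*_T\ge L_N)\to h(s)$ will follow once I show that on the survival event the process reaches $L_N$ by generation $T$ with high probability.

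This last growth estimate is the main obstacle, because the offspring law varies with $N$ while both the target level $L_N$ and the horizon $T$ grow, so one cannot simply quote a fixed-process almost-sure growth statement. I would handle it in two phases. In the first phase, for a fixed large threshold $\ell_0$ and a fixed number $t_0$ of generations, convergence to the Poisson$(1+s)$ process gives that $\hat W^*$ reaches size $\ell_0$ within $t_0$ generations with probability at least $h(s)-\eta$, where $\eta\to 0$ as $\ell_0,t_0\to\infty$. In the second phase, once the size exceeds $\ell_0$, a Chernoff bound using $\mu^*_N\ge \mu_0>1$ shows that the size multiplies by at least a fixed factor $1+\delta'$ in each subsequent generation, with a per-step failure probability that is geometrically small in the current size and hence summably small uniformly in the number of steps; thus the size reaches $L_N$ within $O(\log L_N)$ further generations with high probability. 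Because $\log L_N \ll r_N$ --- indeed $L_N\le \log N$ by \eqref{LNdef} while $r_N\gg\log N$ by \eqref{rNdef}, so $\log L_N\le \log\log N\ll r_N$ --- the two phases together use fewer than $T$ generations, giving $\liminf_N P(\hat W^*_T\ge L_N)\ge h(s)$ and hence $\lim_N P(\Psi^*_{0,N})=h(s)$. The case $s=\infty$ is easier: the means $\mu^*_N,\mu'_N$ diverge, the extinction probabilities tend to $0$, and the growth phase is only faster, whence $P(\Psi^*_{0,N}),P(\Psi'_{0,N})\to 1=h(\infty)$. The argument for $\hat W'$ is identical because its offspring mean and law share the same limits, which yields $\lim_N P(\Psi^*_{0,N})=\lim_N P(\Psi'_{0,N})=h(s)$.
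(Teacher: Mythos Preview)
Your argument is correct and reaches the same conclusion as the paper, but the route you take for the growth estimate---showing the process reaches $L_N$ within $\lfloor r_N/2\rfloor$ generations---is genuinely different and more elaborate than the paper's. Both proofs begin identically: the binomial offspring laws converge to Poisson$(1+s)$, so the survival probabilities converge to $h(s)$, and Lemma~\ref{GWlemma} disposes of the ``reach $L_N$ but die'' discrepancy via the $1/L_N$ bound. For the remaining discrepancy (reaching $L_N$ only after $r_N/2$ generations), you run a two-phase argument: weak convergence gets you to a fixed level $\ell_0$ in a bounded number of steps, and then a Chernoff-type geometric growth estimate carries you from $\ell_0$ to $L_N$ in $O(\log L_N)\ll r_N$ further steps; along the way you also invoke the subcritical slope $\varphi'(\zeta)<1$ to control the probability of lingering at positive values conditionally on eventual extinction. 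The paper instead bypasses all of this with a single combinatorial observation: from any state in $\{1,\ldots,L_N-1\}$, the one-step probability of jumping directly to $\ge L_N$ is at least $\binom{N-c_N-L_N}{L_N}(1/N)^{L_N}(1-1/N)^{N-c_N-2L_N}\ge 1/(2L_N!)$ for large $N$, and since $L_N!\ll r_N$ by \eqref{rNdef} and \eqref{LNdef}, the process almost certainly makes this jump within $r_N/2$ tries. Your approach is the standard supercritical-growth toolkit and would transfer to settings where $L_N$ is larger; the paper's trick is shorter and exploits directly the very slow growth of $L_N$ encoded in the assumption $L_N!\ll\log N$.
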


\begin{proof}
 Let $p_N^*$ be the probability that a branching process started with one individual whose offspring distribution is binomial($N - c_N, q_{N}^*)$ will survive forever.  Let $p_N'$ be the survival probability if the offspring distribution is binomial$(N - c_N - L_N, q_{N}')$.   Note that $$\lim_{N \rightarrow \infty} (N - c_N) q_{N}^* = \lim_{N \rightarrow \infty} (N - c_N - L_N)q_{N}' = 1 + s,$$ so as $N \rightarrow \infty$, the binomial$(N - c_N , q_{N}^*)$ distribution and the binomial$(N - c_N - L_N, q_{N}')$ distribution both converge to the Poisson$(1 + s)$ distribution.  It follows that
\begin{equation}\label{PA1}
\lim_{N \rightarrow \infty} p_{N}^* = \lim_{N \rightarrow \infty} p_{N}' = \omega(s) .
\end{equation}

Note that $p_N^*$ and $p_N'$ differ from $P(\Psi^*_{m,N})$ and $P(\Psi_{m,N}')$ respectively for two reasons.  First, it is possible for the branching process to reach $L_N$ or higher but still eventually go extinct.  By Lemma \ref{GWlemma}, the probability that this occurs is at most $1/L_N$, which tends to zero.  Second, it is possible for the branching process to reach $L_N$ or higher for the first time after more than $r_N/2$ generations.  However, using that $q_N' \geq 1/N$ for sufficiently large $N$, for all $n \geq m$ we have
$$P\big( W_{m,N}'(n+1) \geq L_N \big| 1 \leq W_{m,N}'(n) \leq L_N - 1 \big) \geq \binom{N - c_N - L_N}{L_N} \bigg( \frac{1}{N} \bigg)^{L_N} \bigg(1 - \frac{1}{N} \bigg)^{N - c_N - 2L_N}.$$  The reasoning following (\ref{binLN}) implies this probability is at least $1/(2L_N!)$ for sufficiently large $N$, and the same bound holds if $W_{m,N}'$ is replaced by $W_{m,N}^*$.  Because $L_N! \ll r_N$ by (\ref{rNdef}) and (\ref{LNdef}), the probability that the branching process reaches $L_N$ or higher for the first time after $r_N/2$ generations tends to zero as $N \rightarrow \infty$.  The result follows.
\end{proof}

\begin{Lemma}\label{bplemma}
Let $m$ be a positive integer such that $R_N(n) = 1$ for all $n \in \{m, m+1, \dots, m+r_N\}$.
Let $k$ be a positive integer, and recall the definitions of $\Phi_{k,m,N}$ and $\tau_{k,m,N}$ from \eqref{PhikmNdef} and \eqref{taudef}.  As $N \rightarrow \infty$,
\begin{equation}\label{bptriangle}
P\big(\{A_{k,N}(\tau_{k,m,N}) \geq L_N\} \triangle \Psi^*_{m,N} \big| {\cal F}_{m,N} \big) \1_{\Phi_{k,m,N}} \rightarrow 0,
\end{equation}
where we use $\triangle$ to denote the symmetric difference between two events.
\end{Lemma}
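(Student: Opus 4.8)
The plan is to couple, on the common uniform variables $U_{h,i,n,N}$, the true count $A_{k,N}$ with the two branching processes so that $W'_{m,N}(n) \le A_{k,N}(n) \le W^*_{m,N}(n)$ until the first time $\sigma$ at which one of the three quantities reaches $L_N$ (or $A_{k,N}$ hits $0$), and then to read off the statement from Lemmas \ref{compareGW} and \ref{1toL}. It is convenient to condition first on the high-probability event (probability at least $1 - N\mu_N r_N$ by \eqref{rNdef} and \eqref{UNmu}) that no mutation occurs in generations $m+1, \dots, m+\lfloor r_N/2 \rfloor$, and on the event \eqref{noseed} that no type $k$ individual enters the seed bank before $\tau_{k,m,N}$. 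On this event the type $k$ subpopulation evolves purely by selection, so for $n < \sigma$ we have exactly $A_{k,N}(n+1) = \#\{i \in \{1,\dots,N-c_N\} : Q_{i,n,N} \le A_{k,N}(n)\}$, the type $k$ individuals occupying the marks $1, \dots, A_{k,N}(n)$. Granting the sandwich, the upper inequality gives $\{A_{k,N}(\tau_{k,m,N}) \ge L_N\} \subseteq \Psi^*_{m,N}$ (using $\tau_{k,m,N} \le m+r_N/2$ with high probability by \eqref{taukL}), the lower inequality gives $\Psi'_{m,N} \subseteq \{A_{k,N}(\tau_{k,m,N}) \ge L_N\}$, and hence the symmetric difference is contained in $\Psi^*_{m,N} \setminus \Psi'_{m,N}$. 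Since $W'_{m,N} \le W^*_{m,N}$ on the coupling event, $P(\Psi^*_{m,N} \setminus \Psi'_{m,N}) = P(\Psi^*_{m,N}) - P(\Psi'_{m,N}) \to h(s) - h(s) = 0$ by Lemma \ref{compareGW}, and adding the probabilities of the excluded bad events gives \eqref{bptriangle}.

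For the upper coupling I would prove $A_{k,N}(n) \le W^*_{m,N}(n)$ by induction on $n$. On $B_{k,n,N}$ one has $\Theta_{h,N}(n) \le q_N^*$ for each type $k$ mark $h \le A_{k,N}(n)$, so if $Q_{i,n,N} \le A_{k,N}(n)$ there is an $h \le A_{k,N}(n)$ with $U_{h,i,n,N} \le \Theta_{h,N}(n) \le q_N^*$; hence $\1_{\{Q_{i,n,N} \le A_{k,N}(n)\}} \le \sum_{h=1}^{A_{k,N}(n)} \1_{\{U_{h,i,n,N} \le q_N^*\}}$. Summing over $i$ and using $A_{k,N}(n) \le W^*_{m,N}(n)$ together with the definition \eqref{Wkstardef} of $W^*_{m,N}(n+1)$ yields $A_{k,N}(n+1) \le W^*_{m,N}(n+1)$; here $W^*_{m,N}$ is allowed to overcount freely, which only helps the upper bound.

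The lower coupling is the delicate step, and I expect it to be the main obstacle. Using $\Theta_{h,N}(n) \ge q_N'$, any index $i \in \mathcal{S}'_{h,n,N}$ with $U_{h,i,n,N} \le q_N'$ satisfies $Q_{i,n,N} = h$, and since $h \le W'_{m,N}(n) \le A_{k,N}(n)$ this $i$ is a genuine type $k$ offspring; moreover each real $i$ is counted at most once across $h = 1, \dots, W'_{m,N}(n)$, because $U_{h,i,n,N} > \Theta_{h,N}(n) \ge q_N'$ for $h < Q_{i,n,N}$, while $i \notin \mathcal{S}'_{h,n,N}$ for $h > Q_{i,n,N}$. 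The real part of $W'_{m,N}(n+1)$ defined in \eqref{Wkprimedef} is thus bounded above by $A_{k,N}(n+1)$. The genuine subtlety is the role of the external indices $\{N-c_N+1, \dots, N-c_N+L_N\}$: these enter $\mathcal{S}_{h,n,N}$ only when $|\mathcal{S}'_{h,n,N}| < N - c_N - L_N$, i.e. when more than $L_N$ of the $N - c_N$ offspring descend from the top $h-1 < L_N$ (necessarily type $k$) marks. But in precisely that event those $>L_N$ offspring are all type $k$, so $A_{k,N}(n+1) \ge L_N$ and $A_{k,N}$ has itself reached $L_N$. Consequently, although the external indices may inflate $W'_{m,N}$, they can do so only at a step at which $A_{k,N}$ also reaches $L_N$; tracing the first time $W'_{m,N}$ attains $L_N$ shows that $A_{k,N}$ attains it no later, which is exactly the implication $\Psi'_{m,N} \subseteq \{A_{k,N}(\tau_{k,m,N}) \ge L_N\}$ that the lower bound is needed for.

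It remains to show that $B_{k,n,N}$ persists for $n \in \{m, \dots, \sigma\}$ with conditional probability tending to $1$ on $\Phi_{k,m,N}$. The bound $A_{k-1,N}(n) + A_{k,N}(n) \ge (1-\eps_N)N$ is propagated generation by generation by Lemma \ref{kstayhigh} applied at level $k-1$ (on $B_{k,n,N}$ there are no types above $k$, so the relevant tail sum equals $A_{k-1,N}(n)+A_{k,N}(n)$), the absence of types above $k$ and of type $k$ in the seed bank comes from the no-mutation conditioning and from \eqref{noseed}, and Lemmas \ref{mutlem1}--\ref{mutlem3} keep mutations and season changes away from the window of length $r_N$; a binomial tail bound such as \eqref{chernoff2} controls the residual fluctuations. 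Intersecting all of these events with the coupling event, the total conditional probability of failure tends to zero uniformly on $\Phi_{k,m,N}$, which completes the argument.
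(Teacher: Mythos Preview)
Your proposal is correct and follows essentially the same approach as the paper: the same sandwich coupling $W'_{m,N} \le A_{k,N} \le W^*_{m,N}$ via the bounds $q_N' \le \Theta_{h,N}(n) \le q_N^*$, the same handling of the external indices in $\mathcal{S}_{h,n,N}$ (the paper phrases this as $W'_{m,N}(n) \wedge L_N \le A_{k,N}(n)$), the same use of Lemma~\ref{kstayhigh} and \eqref{noseed} to propagate $B_{k,n,N}$, and the same conclusion via Lemma~\ref{compareGW}. One small imprecision: the equality $P(\Psi^*_{m,N} \setminus \Psi'_{m,N}) = P(\Psi^*_{m,N}) - P(\Psi'_{m,N})$ holds only modulo the bad-event probability (since $\Psi'_{m,N} \subseteq \Psi^*_{m,N}$ only on $\bar B_{k,m,N}$), but since you absorb those probabilities at the end and both terms tend to $h(s)$ by Lemma~\ref{compareGW}, the argument goes through.
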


\begin{proof}
We need to explain how the two branching processes constructed above are coupled with the population process.  Let
$${\bar B}_{k,m,N} = \bigcap_{n=m}^{\tau_{k,m,N} - 1} \Big(B_{k,n,N} \cap \{\xi_{i,n,N} = 0 \mbox{ for all }i\} \Big).$$
Let $\gamma_{k,m,N} = \min\{n \geq m: W_{m,N}^*(n) \geq L_N\}$. 
We claim that if the event ${\bar B}_{k,m,N} \cap \Phi_{k,m,N}$ occurs, then the following hold:
\begin{enumerate}
\item We have $A_{k,N}(n) \leq W_{m,N}^*(n)$ for all $n \in \{m, m+1, \dots, \gamma_{k,m,N} \wedge (m + \lfloor r_N/2 \rfloor)\}$.

\item We have $W_{m,N}'(n) \wedge L_N \leq A_{k,N}(n)$ for all $n \in \{m, m+1, \dots, \tau_{k,m,N} \wedge (m + \lfloor r_N/2 \rfloor) \}$.
\end{enumerate}

We proceed by induction.  On $\Phi_{k,m,N}$, we have $W_{m,N}'(m) = A_{k,N}(m) = W_{m,N}^*(m) = 1$, so the two claims hold when $n = m$.  Suppose that for some $n \geq m$, we have $A_{k,N}(n) = j$ and $W_{m,N}^*(n) = j^*$, where $j \leq j^* \leq L_N - 1$.  For $i \in \{1, \dots, N - c_N\}$, we have $Q_{i,n,N} \leq j$ if and only if $U_{h,i,n,N} \leq \Theta_{h,N}(n)$ for some $h \in \{1, \dots, j\}$.  If $1 \leq h \leq j$ and $U_{h,i,n,N} \leq \Theta_{h,N}(n)$, then $U_{h,i,n,N} \leq q_N^*$ on $B_{k,n,N}$ by (\ref{Thetaupper}).  Therefore, corresponding to every individual in generation $n+1$ whose parent has a mark in $\{1, \dots, j\}$, there is an indicator random variable on the right-hand side of (\ref{Wkstardef}) that equals $1$.  It follows that as long as ${\bar B}_{k,m,N}$ occurs, we have $W_{m,N}^*(n+1) \geq A_{k,N}(n+1)$.  Note that when ${\bar B}_{k,m,N}$ occurs, there are no dormant type $k$ individuals, so all type $k$ individuals in generation $n+1$ have type $k$ parents in generation $n$.

Now suppose for some $n \geq m$, we have $W_{m,N}'(n) = j'$ and $A_{k,N}(n) = j$, where $j' \leq j \leq L_N - 1$. 
From \eqref{Wkprimedef}, we have $$W_{m,N}'(n+1) = \sum_{h=1}^{j'} \sum_{i \in \mathcal{S}_{h,n,N}} \1_{\{U_{h,i,n,N} \leq q_N'\}}.$$
Consider an indicator random variable $\1_{\{U_{h,i,n,N} \leq q_{N}'\}}$ on the right-hand side that equals $1$.  By (\ref{Thetalower}), if $B_{k,n,N}$ occurs and $i \leq N - c_N$, then $U_{h,i,n,N} \leq \Theta_{h,N}(n)$ and $Q_{i,n,N} = h$.  That is, the individual labelled $i$ in generation $n+1$ has a parent whose mark is $h \in \{1, \dots, j'\}$.  Note that if there exists $i \in {\cal S}_{h,n,N}$ such that $i > N - c_N$, then there at least $L_N$ individuals in generation $n+1$ who have parents with a mark in $\{1, \dots, j'\}$, and we must have $A_{k,N}(n+1) \geq L_N$.  It follows that as long as ${\bar B}_{k,m,N}$ occurs, we must have $W_{m,N}'(n+1) \wedge L_N \leq A_{k,N}(n+1)$.  The two claims now follow by induction.

The two claims imply that on $\Phi_{k,m,N}$, we have
\begin{align}\label{Atriangle}
\{A_{k,N}(\tau_{k,m,N}) \geq L_N\} \triangle \Psi^*_{m,N} \subseteq \big( \Psi'_{m,N} \triangle \Psi^*_{m,N} \big) \cup {\bar B}_{k,m,N}^c \cup \{\tau_{k,m,N} > m + r_N/2\}.
\end{align}
The probability of the third event on the right-hand side of (\ref{Atriangle}) tends to zero by (\ref{taukL}).
The event ${\bar B}_{k,n,N}$ can only fail to occur if, before time $\tau_{k,m,N}$, either some individual acquires a mutation, some type $k$ individual moves to the dormant population, or the number of individuals of type $k-1$ or higher drops below $(1 - \eps_N)N$.  The probability that some individual acquires a mutation before time $m + r_N/2$ is bounded above by $N \mu_N r_N$.  The probability that a type $k$ individual moves to the dormant population before time $\tau_{k,m,N}$ is bounded by (\ref{noseed}).  Lemma \ref{kstayhigh} shows that the probability that the number of individuals of type $k-1$ or higher drops below $(1 - \eps_N)N$ before time $m + r_N/2$ is bounded above by $r_N e^{\gamma N \eps_N^2}$.  Note that
\begin{equation}\label{rNgamzero}
\lim_{N \rightarrow \infty} r_N e^{\gamma N \eps_N^2} = 0
\end{equation}
because $r_N \ll U_N \lesssim \rho_N^{-1} \ll N^{b-1}$ by \eqref{Reg1Asm2}, \eqref{Reg2Asm}, \eqref{totalgen}, and \eqref{rNdef}, and $N \eps_N^2 \gg \log N$ by \eqref{LNdef} and \eqref{epsNdef}.
Combining these observations, we get that as $N \rightarrow \infty$,
\begin{equation}\label{At1}
P \big( {\bar B}_{k,m,N}^c \cup \{\tau_{k,m,N} > m + r_N/2\} \big| {\cal F}_{m,N} \big) \1_{\Phi_{k,m,N}} \rightarrow 0.
\end{equation}
If the event ${\bar B}_{k,m,N}$ occurs, then the two claims introduced above imply that $\Psi^*_{m,N}$ occurs whenever $\Psi'_{m,N}$ occurs.
Also, because the branching processes are independent of ${\cal F}_{m,N}$, it follows from Lemma \ref{compareGW} that $|P(\Psi^*_{m,N}|{\cal F}_{m,N}) - P(\Psi'_{m,N}|{\cal F}_{m,N})| \rightarrow 0$.  Therefore, as $N \rightarrow \infty$,
\begin{equation}\label{At2}
P \big( \Psi'_{m,N} \triangle \Psi^*_{m,N} \big| {\cal F}_{m,N} \big) \1_{\Phi_{k,m,N}} \rightarrow 0.
\end{equation}
The result follows from (\ref{Atriangle}), (\ref{At1}), and (\ref{At2}).
\end{proof}

\subsubsection{Establishment of the new type}

We have seen that when a new beneficial mutation appears, the number of individuals with the new type quickly either drops to zero or reaches $L_N$, and that the probability that the number reaches $L_N$ is approximately $\omega(s_N)$.  It remains to show that once there are $L_N$ individuals of the new type, the new type quickly becomes dominant.  Again, recall that throughout this section, we are assuming the beneficial mutation occurs during the summer.

\begin{Lemma}\label{Aincrease}
There exist constants $\kappa > 0$ and $\beta > 0$ such that for all positive integers $k$, all positive integers $m$ for which $R_N(m) = 1$, and all positive integers $j$ for which $j \leq (1 - \eps_N)N$, we have
$$P\big(A_{k,N}(m+1) \leq (1 + \kappa \eps_N) j \,\big|{\cal F}_{m,N} \big) \leq 2e^{-\beta \eps_N^2 j}$$
for sufficiently large $N$ on the event
\begin{equation}\label{Aincevent}
\{A_{k,N}(m) = j\} \cap \{A_{\ell,N}(m) = 0 \mbox{ for all }\ell > k\}.
\end{equation}
\end{Lemma}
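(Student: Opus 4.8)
The plan is to show that when type $k$ has $j$ individuals (with $j \leq (1 - \eps_N)N$) during the summer and no fitter type is present, the number of type $k$ individuals grows by at least a factor of $1 + \kappa \eps_N$ in one generation with high probability. The mechanism is that type $k$ individuals have fitness $(1+s_N)^k$, which is at least a factor $1+s_N$ larger than any competing type, so each of the $N - c_N$ individuals in generation $m+1$ whose parent comes from the active population chooses a type $k$ parent with probability bounded below by an explicit quantity exceeding $j/N$ by a multiplicative factor involving $\eps_N s_N^*$. I would then apply a Chernoff-type lower-tail bound to the resulting (stochastically dominated) binomial count.

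First I would compute the per-individual probability of choosing a type $k$ parent. On the event \eqref{Aincevent}, there are $j$ individuals of type $k$ (fitness $(1+s_N)^k$) and $N - j$ individuals of type strictly less than $k$ (fitness at most $(1+s_N)^{k-1}$), since no type exceeds $k$. Reproducing the computation from Lemma \ref{kstayhigh}, the probability that a given active-parented individual is type $k$ is at least
$$\frac{j(1+s_N)^k}{j(1+s_N)^k + (N-j)(1+s_N)^{k-1}} \cdot (1 - 2\mu_N) = \frac{j(1+s_N)(1-2\mu_N)}{N + j s_N},$$
where I drop the mutation factor $1 - 2\mu_N$ rather than the cleaner $1-2\mu_N$ bound in order to track growth. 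Using $s_N^* = s_N/(1+s_N)$ and the estimate \eqref{sstareq}, and the bound $j \leq (1-\eps_N)N$, I would show this probability exceeds $(1 + \kappa' \eps_N)(j/N)$ for a suitable $\kappa' > 0$ and large $N$, absorbing the $\mu_N$ and $c_N/N$ corrections exactly as in Lemma \ref{kstayhigh} (these are $\ll \eps_N s_N$ by \eqref{slarge}, \eqref{Nmubound}, and \eqref{epsNdef}).

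Next, conditionally on ${\cal F}_{m,N}$, the count $A_{k,N}(m+1)$ stochastically dominates a binomial$(N - c_N, p_N)$ random variable $Z_N$ with $p_N \geq (1+\kappa'\eps_N)(j/N)$, so that $E[Z_N] \geq (1 - c_N/N)(1+\kappa'\eps_N) j \geq (1 + \kappa \eps_N) j \cdot (1 + \kappa \eps_N)$ for a slightly smaller $\kappa$, using $c_N/N \ll \eps_N$. I would then write
$$P\big(A_{k,N}(m+1) \leq (1+\kappa\eps_N) j \,\big|\, {\cal F}_{m,N}\big) \leq P\big(Z_N \leq (1+\kappa\eps_N) j\big) \leq P\big(Z_N \leq (1 - c\eps_N) E[Z_N]\big)$$
for an appropriate $c \in (0,1)$, and apply the multiplicative lower-tail bound \eqref{chernoff2} (or part 2 of Lemma \ref{negcorlem}), yielding a bound of the form $2 e^{-c'' \eps_N^2 E[Z_N]} \leq 2 e^{-\beta \eps_N^2 j}$ after choosing $\beta$ and using $E[Z_N] \gtrsim j$.

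The main obstacle, such as it is, will be bookkeeping the constants so that the single factor $(1 + \kappa \eps_N)$ in the statement survives after each source of loss: the mutation loss $2\mu_N$, the dormancy loss $c_N/N$, and the slack needed to convert the multiplicative Chernoff deviation $(1 - c\eps_N)E[Z_N]$ back into the target threshold $(1+\kappa\eps_N)j$. Because the genuine growth signal is of order $\eps_N s_N^*$ while all the corrections are $o(\eps_N)$, there is room to choose $\kappa$ small enough that everything closes; the only subtlety is that the exponent $\eps_N^2 j$ is governed by $E[Z_N] \asymp j$ rather than $N$, so unlike Lemma \ref{kstayhigh} the bound degrades when $j$ is small — but that is exactly the regime handled separately by the branching-process comparison of Lemma \ref{bplemma}, so the two results dovetail. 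I do not expect any genuinely hard step here; this is a one-generation drift estimate.
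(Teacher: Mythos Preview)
Your proposal is correct and follows essentially the same argument as the paper: bound the per-individual probability of choosing a type $k$ parent from below using \eqref{sstareq} and $j \leq (1-\eps_N)N$, stochastically dominate $A_{k,N}(m+1)$ by a binomial, show $E[Z_N] \geq (1+2\kappa\eps_N)j$ after absorbing the $\mu_N$ and $c_N/N$ losses, and apply the multiplicative Chernoff bound \eqref{chernoff2}. The paper's bookkeeping is the same as yours, just written out with the explicit ratio $\frac{1+\kappa\eps_N}{1+2\kappa\eps_N}$ in the Chernoff step and the final choice $0 < \beta < \kappa^2/3$.
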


\begin{proof}
The proof is similar to the proof of Lemma \ref{kstayhigh}.  On the event in \eqref{Aincevent}, the active population in generation $m$ consists of $j$ individuals with fitness $(1 + s_N)^k$ and $N - j$ individuals with fitness at most $(1 + s_N)^{k-1}$.  Therefore, these $N - c_N$ individuals in generation $m+1$ will each be type $j$ with probability at least
$$\frac{j(1 + s_N)^k}{j(1 + s_N)^k + (N-j)(1 + s_N)^{k-1}} \cdot (1 - 2 \mu_N) = \frac{j(1+ s_N)(1 - 2 \mu_N)}{N + js_N}.$$
Therefore, on the event in \eqref{Aincevent}, $A_{k,N}(m+1)$ stochastically dominates a random variable $Z_N$ having a binomial$(N - c_N, q_N)$ distribution, where $q_N = j(1 + s_N)(1 - 2 \mu_N)/(N + js_N)$.
As long as $j \leq (1 - \eps_N)N$, we have
$$E[Z_N] = \frac{(N - c_N) j (1 + s_N)(1 - 2 \mu_N)}{N + j s_N} \geq \frac{(N - c_N) j (1 + s_N)(1 - 2 \mu_N)}{N + (1 - \eps_N) N s_N}.$$
Let $s_N^* = s_N/(1 + s_N)$.  It then follows from (\ref{sstareq}) that
$$E[Z_N] \geq \Big(1 - \frac{c_N}{N} \Big) (1 - 2 \mu_N) (1 + \eps_N s_N^*) j.$$
As noted in the proof of Lemma \ref{kstayhigh}, we have $\eps_N s_N \gg \mu_N$ and $\eps_N s_N \gg c_N/N$.
Therefore, recalling (\ref{slarge}), we can choose $\kappa > 0$ such that
$E[Z_N] \geq (1 + 2 \kappa \eps_N) j$ for sufficiently large $N$.
Using (\ref{chernoff2}), on the event in \eqref{Aincevent}, we have for sufficiently large $N$,
\begin{align*}
P\big(A_{k,N}(m+1) \leq (1 + \kappa \eps_N) j \,\big|{\cal F}_{m,N} \big) &\leq P(Z_N \leq (1 + \kappa \eps_N) j) \\
&\leq P \bigg(Z_N \leq E[Z_N] \cdot \frac{1 + \kappa \eps_N}{1 + 2 \kappa \eps_N} \bigg) \\
&\leq P \bigg(\big|Z_N - E[Z_N]\big| \geq \Big(1 - \frac{1 + \kappa \eps_N}{1 + 2 \kappa \eps_N} \Big) E[Z_N] \bigg) \\
&= P \bigg(\big|Z_N - E[Z_N]\big| \geq \Big(\frac{\kappa \eps_N}{1 + 2 \kappa \eps_N} \Big) E[Z_N] \bigg) \\
&\leq 2 \exp \bigg( - \frac{1}{3} \Big(\frac{\kappa \eps_N}{1 + 2 \kappa \eps_N} \Big)^2 E[Z_N] \bigg) \\
&\leq 2 \exp \bigg( - \frac{\kappa^2 \eps_N^2 j}{3(1 +2 \kappa \eps_N)} \bigg).
\end{align*}
Because $1 + 2 \kappa \eps_N \rightarrow 1$ as $N \rightarrow \infty$, the result follows if we choose $0 < \beta < \kappa^2/3$.
\end{proof}

\begin{Lemma}\label{finishsweep}
Let $m$ be a positive integer such that $R_N(\ell) = 1$ for $\ell \in \{m, m+1, \dots, m+ \lfloor r_N/2 \rfloor\}$.   Let $$\tau^*_{k, m, N} = \min\{n \geq m: A_{k,N}(n) = 0 \mbox{ or }A_{k,N}(n) \geq (1 - \eps_N) N\}.$$  Let $\delta > 0$.  Then for sufficiently large $N$, on the event
\begin{equation}\label{2event}
\{A_{k,N}(m) \geq L_N\} \cap \{A_{\ell,N}(m) + D_{\ell,N}(m) = 0 \mbox{ for all }\ell > k\},
\end{equation}
we have $$P\big(\{A_{k,N}(\tau^*_{k, m, N}) \geq (1 - \eps_N)N \} \cap \{ \tau^*_{k, m, N} \leq m + r_N/2 \} \big|{\cal F}_{m,N} \big) > 1 - \delta.$$
\end{Lemma}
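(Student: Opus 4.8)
The plan is to show that, on the event \eqref{2event}, the number of type-$k$ individuals grows geometrically at rate $1 + \kappa \eps_N$ until it reaches $(1 - \eps_N)N$, and that this takes fewer than $r_N/2$ generations. The engine is Lemma \ref{Aincrease}: as long as $A_{k,N}(n) = j$ with $j \leq (1 - \eps_N)N$ and no active type exceeds $k$, we have $A_{k,N}(n+1) \geq (1 + \kappa \eps_N)j$ except on an event of conditional probability at most $2 e^{-\beta \eps_N^2 j}$. Two things must be controlled: that no type larger than $k$ appears during the relevant window, so that Lemma \ref{Aincrease} remains applicable at every step; and that the accumulated failure probabilities across the many generations still sum to something small.

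For the first point, I would introduce the event $F$ that no mutation producing a type exceeding $k$ occurs in generations $m+1, \dots, m + \lfloor r_N/2 \rfloor$. Since at most $N$ individuals each acquire a beneficial mutation with probability $\mu_N$ per generation, $P(F^c \mid {\cal F}_{m,N}) \leq N \mu_N r_N$, which tends to zero because $N\mu_N r_N = (U_N N \mu_N)(r_N/U_N) \to 0$ by \eqref{rNdef} and \eqref{UNmu}. On \eqref{2event} there are initially no type-$(>k)$ individuals in either population, and a dormant individual of type $>k$ can only arise from an active one; hence on $F$ there are no type-$(>k)$ individuals, active or dormant, throughout the window, so the hypothesis $A_{\ell,N}(\cdot) = 0$ for $\ell > k$ of Lemma \ref{Aincrease} holds at every step.

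Next I would set up the geometric comparison. For $t \geq 0$, let $\mathcal{B}_t$ be the event that $A_{k,N}(m+t) \geq L_N (1 + \kappa \eps_N)^t$, that $A_{k,N}(m+t) < (1 - \eps_N)N$, that $A_{\ell,N}(m+t) = 0$ for all $\ell > k$, but that $A_{k,N}(m+t+1) < (1 + \kappa \eps_N) A_{k,N}(m+t)$. Applying Lemma \ref{Aincrease} conditionally on ${\cal F}_{m+t,N}$ and using the tower property together with $A_{k,N}(m+t) \geq L_N(1+\kappa\eps_N)^t$ on $\mathcal{B}_t$ gives $P(\mathcal{B}_t \mid {\cal F}_{m,N}) \leq 2 e^{-\beta \eps_N^2 L_N (1 + \kappa \eps_N)^t}$. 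A straightforward induction shows that on $F \cap \bigcap_{s < t} \mathcal{B}_s^c$, either $A_{k,N}(m+t) \geq (1 - \eps_N)N$ already or $A_{k,N}(m+t) \geq L_N (1 + \kappa \eps_N)^t$; in particular the count never reaches $0$ before reaching $(1-\eps_N)N$. Taking $t^* = \lceil \log((1-\eps_N)N/L_N) / \log(1 + \kappa \eps_N) \rceil$, one has $t^* \leq 2 \log N / (\kappa \eps_N)$ for large $N$, and since $\eps_N \gg \log N / r_N$ by \eqref{epsNdef} this is at most $r_N/2$. Hence on $F \cap \bigcap_{t=0}^{t^*-1} \mathcal{B}_t^c$ we reach $A_{k,N} \geq (1-\eps_N)N$ by generation $m + t^* \leq m + r_N/2$, so $\tau^*_{k,m,N} \leq m + r_N/2$ and $A_{k,N}(\tau^*_{k,m,N}) \geq (1-\eps_N)N$; thus the failure event is contained in $F^c \cup \bigcup_{t \geq 0} \mathcal{B}_t$.

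The main obstacle is the summation $\sum_{t \geq 0} P(\mathcal{B}_t \mid {\cal F}_{m,N})$, because there are on the order of $\log N / \eps_N$ relevant generations, and naively bounding every term by $2 e^{-\beta \eps_N^2 L_N}$ would leave a spurious $\log N / \eps_N$ prefactor that one cannot obviously absorb. The resolution is to retain the geometric growth inside the exponent: using $(1 + \kappa \eps_N)^t \geq 1 + \kappa \eps_N t$, one gets $\sum_{t \geq 0} 2 e^{-\beta \eps_N^2 L_N (1 + \kappa \eps_N)^t} \leq 2 e^{-\beta \eps_N^2 L_N} / (1 - e^{-\beta \kappa \eps_N^3 L_N})$. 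Here the assumption $\eps_N^3 L_N \to \infty$ from \eqref{epsNdef} drives the denominator to $1$, while $\eps_N^2 L_N = \eps_N^3 L_N / \eps_N \to \infty$ drives the numerator to $0$, so the whole sum vanishes. Combining with $P(F^c \mid {\cal F}_{m,N}) \to 0$, the probability of the failure event tends to zero uniformly on \eqref{2event}, which yields the claim for $N$ sufficiently large.
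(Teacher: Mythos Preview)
Your proof is correct and follows essentially the same approach as the paper: both arguments use Lemma~\ref{Aincrease} to drive geometric growth of $A_{k,N}$ at rate $1+\kappa\eps_N$, control mutations via $N\mu_N r_N \to 0$, reach $(1-\eps_N)N$ within $O(\log N/\eps_N) \ll r_N/2$ steps using $\eps_N \gg (\log N)/r_N$, and sum the failure probabilities via the same geometric-series trick exploiting $(1+\kappa\eps_N)^t \geq 1 + \kappa\eps_N t$ and $\eps_N^3 L_N \to \infty$. The only cosmetic difference is that the paper folds the per-generation mutation probability into each term of the sum, whereas you factor out the no-mutation event $F$ beforehand.
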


\begin{proof}
Define the constants $\kappa$ and $\beta$ as in Lemma \ref{Aincrease}.
For nonnegative integers $i$, define the events
\begin{align*}
B_i &= \{A_{k,N}(m + i) \geq L_N (1 + \kappa \eps_N)^i\} \cap \{A_{\ell,N}(m + i) + D_{\ell,N}(m + i) = 0 \mbox{ for all }\ell > k\}, \\
B_i^* &= B_i \cap \{A_{k,N}(m + i) < (1 - \eps_N) N\}.
\end{align*}
The probability of a mutation in any generation is bounded above by $2 N \mu_N$.  Also, by Lemma~\ref{Aincrease}, if there are $j$ individuals of type $k$ in the previous generation and $j \leq (1 - \eps_N)N$, the probability that the number of type $k$ individuals does not increase by at least a factor of $1 + \kappa \eps_N$ in the next generation is at most $2 e^{-\beta \eps_N^2 j}$ for sufficiently large $N$.  Therefore, $$P\big(B_{i+1}^c \big| B_0^* \cap \dots \cap B_i ^*\big) \leq 2N \mu_N + 2e^{-\beta \eps_N^2 L_N (1 + \kappa \eps_N)^i}.$$
Because $\eps_N \gg (\log N)/r_N$ by (\ref{epsNdef}), for sufficiently large $N$ we have $(1 + \kappa \eps_N)^{\lfloor r_N/2 \rfloor} \geq N$, and therefore $L_N (1 + \kappa \eps_N)^{\lfloor r_N/2 \rfloor} > (1 - \eps_N) N$.
Therefore, if the event in (\ref{2event}) occurs, then we will have $A_{k,N}(\tau^*_{k, m, N}) \geq (1 - \eps_N)N$ and $\tau^*_{k, m, N} \leq m + r_N/2$ unless $B_{i+1}^c \cap B_0^* \cap \dots \cap B_i^*$ occurs for some $i \in \{0, 1, \dots, \lfloor r_N/2 \rfloor - 1\}$.  The probability that this occurs is bounded above by
\begin{equation}\label{TminusTL}
\sum_{i=0}^{\lfloor r_N/2 \rfloor-1} \big(2N \mu_N + 2e^{-\beta \eps_N^2 L_N (1 + \kappa \eps_N)^i} \big) \leq N \mu_N r_N + 2 \sum_{i=0}^{\infty} e^{-\beta \eps_N^2 L_N (1 + \kappa \eps_N)^i}.
\end{equation}
The first term tends to zero as $N \rightarrow \infty$ by (\ref{rNdef}) and (\ref{UNmu}). To bound the second term, we consider the $i = 0$ term separately, and use the bound $(1 + \kappa \eps_N)^i \geq \kappa \eps_N i$ for $i \geq 1$ so that the sum can be bounded above by a geometric series.  Because $\eps_N^3 L_N \rightarrow \infty$ by (\ref{epsNdef}), the second term also tends to zero as $N \rightarrow \infty$.  The result of the lemma follows.
\end{proof}

We can now combine the results in this subsection into the following corollary, which describes how the active population evolves during class 1 intervals.  For a class 1 interval that occurs during the summer, the corollary says that if type $k$ dominates the population at the beginning of the interval, then either type $k+1$ dominates the population at the end of the interval, indicating that the new beneficial mutation spread in the population, or else type $k$ still dominates the population at the end of the interval, indicating that the new beneficial mutation died out.  In the latter case, the beneficial mutation dies out before spreading to $L_N$ members of the active population.

\begin{Cor}\label{class1cor}
Let $G_{1,N}^+$ be the event that for all $m \in {\cal M}_N^+ \cap {\cal T}_N^+$ such that $\Lambda_{k,m,N}$ occurs for some $k$, the following hold:
\begin{enumerate}
\item If $\Psi^*_{m+1,N}$ occurs, then $\Lambda_{k+1, m+r_N+1, N}$ occurs.

\item If $\Psi^*_{m+1,N}$ does not occur, then $\Lambda_{k, m + r_N+1}$ occurs, and for all $n \in \{m, m+1, \dots, m+r_N\}$, we have $A_{k+1,N}(n) < L_N$.
\end{enumerate}
Let $G_{1,N}^-$ be the event that for all $m \in {\cal M}_N^+ \cap {\cal T}_N^-$ such that $\Lambda_{k,m,N}$ occurs for some $k$, the following hold:
\begin{enumerate}
\item If $\Psi^*_{m+1,N}$ occurs, then $\Lambda_{k-1, m+r_N+1, N}$ occurs.

\item If $\Psi^*_{m+1,N}$ does not occur, then $\Lambda_{k, m + r_N+1}$ occurs, and for all $n \in \{m, m+1, \dots, m+r_N\}$, we have $A_{k-1,N}(n) < L_N$.
\end{enumerate}
Let $G_{1,N} = G_{1,N}^+ \cap G_{1,N}^-$.  Then $$\lim_{N \rightarrow \infty} P(G_{1,N}) = 1.$$
\end{Cor}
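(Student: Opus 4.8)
The plan is to combine the three main lemmas of this subsection with Corollary \ref{mutcor}, treating the summer event $G_{1,N}^+$ in detail; the winter event $G_{1,N}^-$ is entirely analogous, with $k-1$ replacing $k+1$ and the fitness ordering reversed. Throughout I work on the event $G_{0,N}$ of Corollary \ref{mutcor}, which has probability tending to one and ensures that within $2r_N$ generations of the mutation at $m$ there is no second mutation (Lemmas \ref{mutlem1} and \ref{mutlem2}) and no seasonal change (Lemma \ref{mutlem3}); in particular the whole window $\{m,\dots,m+2r_N\}$ lies inside one summer, and at most one type can dominate in any generation.

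First I would show that $\{m \in {\cal M}_N^+ \cap {\cal T}_N^+\} \cap \Lambda_{k,m,N}$ forces the event $\Phi_{k+1,m+1,N}$ of \eqref{PhikmNdef}, up to a set whose probability tends to zero. On $G_{0,N}^*$ the unique mutation in generation $m$ must have a type-$k$ parent, and since $R_N(m)=1$ it produces a single type-$(k+1)$ individual; hence $A_{k+1,N}(m+1)=1$ and $A_{\ell,N}(m+1)=0$ for $\ell>k+1$. Lemma \ref{kstayhigh} gives $\sum_{\ell\geq k}A_{\ell,N}(m+1)\geq(1-\eps_N)N$ with high probability, so $A_{k,N}(m+1)\geq(1-\eps_N)N-1$; and since no type above $k$ was present in the seed bank at time $m$ while the individuals entering the seed bank at $m+1$ come from the (type $\leq k$) active population of generation $m$, we get $D_{\ell,N}(m+1)=0$ for $\ell\geq k+1$.

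Next I would chain the lemmas on $\Phi_{k+1,m+1,N}$. Lemma \ref{1toL} (with $k+1$, $m+1$) shows that with high probability $\tau_{k+1,m+1,N}\leq m+1+r_N/2\leq m+r_N$ and that no type-$(k+1)$ individual enters the seed bank before $\tau_{k+1,m+1,N}$; Lemma \ref{bplemma} then identifies $\{A_{k+1,N}(\tau_{k+1,m+1,N})\geq L_N\}$ with $\Psi^*_{m+1,N}$ up to a conditionally null set. If $\Psi^*_{m+1,N}$ fails, then $A_{k+1,N}(\tau_{k+1,m+1,N})=0$, and as $G_{0,N}$ forbids any further mutation, type $k+1$ cannot reappear, so $A_{k+1,N}(n)<L_N$ for all $n\in\{m,\dots,m+r_N\}$; Lemma \ref{kstayhigh} applied to type $k$ then gives $\Lambda_{k,m+r_N+1,N}$. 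If $\Psi^*_{m+1,N}$ holds, then $A_{k+1,N}(\tau_{k+1,m+1,N})\geq L_N$ while no type above $k+1$ is present in either population, so \eqref{2event} holds at the stopping time $\tau_{k+1,m+1,N}$; since the bounds in Lemma \ref{finishsweep} are uniform in the starting generation, the strong Markov property lets me apply it there to conclude that type $k+1$ reaches $(1-\eps_N)N$ within a further $r_N/2$ generations, hence by time $m+r_N+1$, and Lemma \ref{kstayhigh} keeps it there, yielding $\Lambda_{k+1,m+r_N+1,N}$.

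Finally I would assemble the union bound. Each invoked lemma supplies a conditional failure probability tending to zero uniformly in $m$ and $k$, so for a fixed mutation generation the probability that the class 1 interval misbehaves is at most some $\eta_N\to 0$. Summing over $m\in{\cal T}_N$ and using $P(m\in{\cal M}_N^+)\leq N\mu_N$ together with $|{\cal T}_N|\lesssim\rho_N^{-1}$, the total mutation mass satisfies $\sum_{m\in{\cal T}_N}P(m\in{\cal M}_N^+)\lesssim N\mu_N/\rho_N=1/h(s_N)$, which is bounded by \eqref{rhoNdef} and \eqref{hlower}; hence the expected number of offending intervals is at most $\eta_N\cdot O(1)\to 0$, and combined with $P(G_{0,N}^c)\to 0$ this gives $P(G_{1,N})\to 1$. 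I expect the main difficulty to be exactly this bookkeeping: making the per-interval bounds genuinely uniform in $m$ and $k$, and gluing the lemmas across the random time $\tau_{k+1,m+1,N}$, so that a single union bound over the random family of beneficial-mutation generations closes the argument.
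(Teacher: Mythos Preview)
Your proposal is correct and follows essentially the same approach as the paper's proof: reduce to $G_{0,N}$, pass from $\Lambda_{k,m,N}$ to $\Phi_{k+1,m+1,N}$ via Lemma~\ref{kstayhigh} and the single-mutation property, then chain Lemmas~\ref{1toL}, \ref{bplemma}, and \ref{finishsweep} to handle the two cases, and close with a union bound over the $O(1)$ expected beneficial mutations. The paper's version is slightly less explicit about the strong Markov step at $\tau_{k+1,m+1,N}$ but otherwise matches your outline in structure and detail.
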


\begin{proof}
The expected number of elements of ${\cal M}_N^+$ is at most $N \mu_N (1 + \rho_N^{-1} t_0)$.  By (\ref{rhoNdef}) and (\ref{hlower}), there exists a positive constant $C$ such that the expected number of elements of ${\cal M}_N^+$ is bounded above by $C$ for all $N$.  Also, by Lemma \ref{mutlem3}, with probability tending to one as $N \rightarrow \infty$, no generation $m \in {\cal M}_N^+$ is within $r_N$ generations of a seasonal change, and by Lemma \ref{mutlem2}, with probability tending to one as $N \rightarrow \infty$, no further mutations will occur until after generation $m+r_N+1$.

Suppose $m \in {\cal M}_N^+ \cap {\cal T}_N^+$ and $\Lambda_{k,m,N}$ occurs.  By Lemma \ref{mutlem1}, the probability that there is more than one mutation in a generation tends to zero, and by Lemma \ref{kstayhigh}, with probability tending to one, the number of individuals of type $k$ or higher will remain above $(1 - \eps_N)N$ in generation $m+1$.  Therefore, with probability tending to one as $N \rightarrow \infty$, for all $m \in {\cal M}_N^+ \cap {\cal T}_N^+$ such that $\Lambda_{k,m,N}$ occurs, the event $\Phi_{k+1,m+1,N}$ occurs.  By the same argument, with probability tending to one as $N \rightarrow \infty$, for all $m \in {\cal M}_N^+ \cap {\cal T}_N^-$ such that $\Lambda_{k,m,N}$ occurs, the event $\Phi_{k-1,m+1,N}$ occurs.

Now suppose $R_N(n) = 1$ for all $n \in \{m,m+1, \dots, m+r_N\}$.  It follows from Lemma \ref{1toL} that on the event $\Phi_{k+1, m+1, N}$, with probability tending to one,
the number of type $k+1$ individuals in the active population will reach $0$ or $L_N$ before generation $m+1+r_N/2$, and no type $k+1$ individuals will enter the dormant population before this time.  By Lemma \ref{bplemma}, outside an event whose probability tends to zero, we have $A_{k+1,N}(\tau_{k+1,m+1,N}) \geq L_N$ on $\Psi_{m+1,N}^*$ and $A_{k+1,N}(\tau_{k+1,m+1,N}) = 0$ on $\Psi_{m+1,N}^c$.  We consider these two cases separately.

In the former case, Lemma \ref{finishsweep} implies that the number of type $k+1$ individuals in the active population will reach $(1 - \eps_N)N$ by generation $m+1+r_N$ with probability tending to one as $N \rightarrow \infty$.  Then Lemma \ref{kstayhigh} implies that the number of type $k+1$ individuals stays above $(1 - \eps_N)N$ until generation $m + 1 + r_N$ with probability tending to one because $r_N e^{-\gamma N \eps_N^2} \rightarrow 0$ as $N \rightarrow \infty$ by \eqref{rNgamzero}.  Also, Lemma~\ref{mutlem2} implies that with probability tending to one as $N \rightarrow \infty$, no additional mutation will produce an individual of type $k+2$.  These events are enough to ensure that $\Lambda_{k+1, m+r_N+1, N}$ occurs.

In the latter case, Lemma \ref{kstayhigh} implies that with probability tending to one as $N \rightarrow \infty$, the number of individuals with type $k$ and higher will stay above $(1 - \eps_N)N$ until generation $m + r_N + 1$, and Lemma \ref{mutlem2} ensures that with probability tending to one as $N \rightarrow \infty$, no more type $k+1$ individuals will appear in the population.  These two events are enough to guarantee that $\Lambda_{k,m+r_N+1,N}$ occurs, and that $A_{k+1,N}(n) < L_N$ for all $n \in \{m+1, \dots, m+r_N\}$.

These observations imply that $$\lim_{N \rightarrow \infty} P(G_{1,N}^+) = 1$$  The same result holds for $G_{1,N}^-$, which implies the result of the corollary.
\end{proof}

\subsection{Class 2 intervals: the beginning of a season}\label{2sub}

Suppose type $k$ individuals become dominant in the active population during the summer.  We now consider the process by which type $k$ individuals reestablish themselves as dominant in the active population, after the season switches from summer to winter and back to summer.  Type $k$ individuals need to rejoin the active population from the seedbank, and then spread rapidly again through the active population. 

Our goal in this section is to prove two lemmas.  Lemma \ref{ktoseed} shows that if type $k$ individuals are dominant in the active population $r_N$ time units before the end of the summer, then there will be a sufficient supply of type $k$ individuals in the dormant population at the beginning of the following summer.  Lemma \ref{kfromseed} shows that if there is a sufficient supply of type $k$ individuals in the dormant population at the beginning of the following summer, then type $k$ individuals quickly become dominant again in the active population.  Because, in Regime 2, there may be many changes of season before time $\rho_N^{-1} t_0$, we need rather tight bounds on the rare events that the type $k$ population does not behave as expected during a seasonal change.  In particular, we need to show that the probability of these rare events tends to zero faster than $N^{-a}$ for any $a > 0$.  

\begin{Lemma}\label{ktoseed}
Let $m = jU_N + V_N - r_N$ for some nonnegative integer $j$.  Let $F_{m,N}$ be the event that no individual acquires a mutation in generations $m+1, \dots, m+r_N$.
There exists a positive number $\eta > 0$ and a sequence $(\delta_N)_{N=1}^{\infty}$ with $N^a \delta_N \rightarrow 0$ for all $a > 0$ such that for all nonnegative integers $k$ and all $N$, we have
\begin{equation}\label{Dbig}
P \big( F_{m,N} \cap \{D_{k, N}((j+1)U_N) < \eta r_N c_N\} \big| {\cal F}_{m,N} \big) \leq \delta_N \quad \mbox{on } \Lambda_{k,m,N}.
\end{equation}
\end{Lemma}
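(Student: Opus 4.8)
The plan is to exploit that, on $\Lambda_{k,m,N}\cap F_{m,N}$, type $k$ dominates the active population throughout the last $r_N$ generations of the $j$th summer, so that in each of the generations $m,m+1,\dots,m+r_N-1$ the $c_N$ active individuals chosen to enter the seed bank are type $k$ with probability at least $1-\eps_N$. Over $r_N$ generations this deposits on the order of $r_N c_N$ type-$k$ individuals into the seed bank, a positive fraction of which survive (are not reselected) through the intervening winter until generation $(j+1)U_N$. First I would use Lemma \ref{kstayhigh}, iterated over the window $\{m,\dots,m+r_N\}$ together with the no-mutation hypothesis $F_{m,N}$ (which keeps type $k$ the fittest present type and forbids any new type above $k$), to show that $A_{k,N}(n)\ge(1-\eps_N)N$ for every $n$ in this window, outside an event of conditional probability at most $2r_N e^{-\gamma N\eps_N^2}$; this is super-polynomially small since $N\eps_N^2\gg\log N$ by \eqref{LNdef} and \eqref{epsNdef} while $r_N$ is only polynomial in $N$.

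Next I would recast the survival count in a form to which Lemma \ref{geomlem} applies. The content of dormant slot $i$ at generation $(j+1)U_N$ is exactly the individual deposited at its most recent refill at or before that generation, i.e.\ at the largest $n\le (j+1)U_N-1$ with $i\in\{S^d_{1,n,N},\dots,S^d_{c_N,n,N}\}$. Reversing time by setting $S'_\ell=(S^d_{1,(j+1)U_N-\ell,N},\dots,S^d_{c_N,(j+1)U_N-\ell,N})$, which are again i.i.d.\ uniform $c_N$-subsets of $\{1,\dots,K_N\}$, this most recent refill corresponds to the first reversed index $\tilde M_i=\min\{\ell:i\in S'_\ell\}$, and a short computation shows the last refill lands in $\{m+1,\dots,m+r_N\}$ precisely when $\tilde M_i\in T:=\{U_N-V_N+1,\dots,U_N-V_N+r_N\}$. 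Writing $Y_T=\sum_{i=1}^{K_N}\1_{\{\tilde M_i\in T\}}$, every slot counted by $Y_T$ holds at time $(j+1)U_N$ the window-deposited individual from its last refill, so $D_{k,N}((j+1)U_N)$ is at least $Y_T$ minus the number of those slots whose last-refill individual was not type $k$.

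I would then lower bound $E[Y_T]$ using Remark \ref{geomrem}: since each term $(1-c_N/K_N)^{n-1}$ for $n\in T$ is at least $(1-c_N/K_N)^{U_N}$, we get $E[Y_T]\ge c_N r_N(1-c_N/K_N)^{U_N}$. The survival factor $(1-c_N/K_N)^{U_N}$ is bounded below by a positive constant in both regimes (it tends to $e^{-\alpha\theta}$ under \eqref{Reg1Asm1}--\eqref{Reg1Asm2} and to $1$ under \eqref{Reg2Asm}, using $c_N/K_N\to 0$), so $E[Y_T]\ge c' c_N r_N$ for some constant $c'>0$. Lemma \ref{geomlem} with $\eps=1/2$ then gives $P(Y_T<\tfrac12 E[Y_T])\le 2e^{-E[Y_T]/12}\le 2e^{-c'c_Nr_N/12}$, which is super-polynomially small because $c_N r_N\ge r_N\gg\log N$ by \eqref{rNdef}. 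For the contamination term, the total number of non-type-$k$ individuals deposited during the window has mean at most $\eps_N c_N r_N\ll c_N r_N$, and since the samplings are independent across generations and negatively correlated within a generation, a Chernoff bound of the type in Lemma \ref{negcorlem} makes this count smaller than $\tfrac14 c' c_N r_N$ outside a super-polynomially small event. Combining, on $\Lambda_{k,m,N}\cap F_{m,N}$ we obtain $D_{k,N}((j+1)U_N)\ge \tfrac14 c' c_N r_N$ outside an event whose conditional probability is the sum of the three super-polynomially small bounds; taking $\eta=c'/4$ and $\delta_N$ equal to that sum (none of its terms depending on $k$) finishes the argument.

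Main obstacle: I expect the delicate point to be assembling the super-polynomial decay uniformly in $k$ while simultaneously controlling the slot-survival count and the type of each surviving slot. The slot indices are governed by the $S^d$-variables and the deposited types by the independent $S^a$-variables, so the reversal-to-Lemma \ref{geomlem} step must be combined with this independence to estimate the contamination, and one must verify carefully that ``last refill in the window'' is indeed the event determining the content at $(j+1)U_N$, rather than merely entering the seed bank at some point during the window. The regime-uniform lower bound on $(1-c_N/K_N)^{U_N}$ is the key quantitative input but is routine, and \eqref{NcN} guarantees $c_N\le K_N$ so that Lemma \ref{geomlem} applies.
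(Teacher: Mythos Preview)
Your approach is correct and reaches the same conclusion, but it is organized differently from the paper's proof, and the comparison is worth recording.

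The paper proceeds forward in time in two stages. First it shows that at least $h_N=\lceil (1-2\eps_N)r_Nc_N/4\rceil$ type-$k$ individuals enter the seed bank during $\{m+1,\dots,m+r_N\}$, using only a crude per-generation Markov bound together with a Binomial tail estimate over the $r_N$ generations. Second, it labels the first $h_N$ such entrants and applies Lemma~\ref{geomlem} (forward in time, with $T=\{U_N-V_N+r_N,U_N-V_N+r_N+1,\dots\}$) to show a positive fraction of them survive in the seed bank until $(j+1)U_N$. In particular, the paper never needs a contamination bound: it counts type-$k$ entrants directly.

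Your route instead reverses time, applies Lemma~\ref{geomlem} with $J=K_N$ to the \emph{slots}, and then subtracts a contamination term. This is slightly slicker on the survival side, since $Y_T$ automatically encodes ``entered during the window and was not overwritten afterwards,'' and the lower bound $E[Y_T]\ge c_Nr_N(1-c_N/K_N)^{U_N}$ is immediate. The cost is that you must control the contamination, which is where your sketch is thinnest. The indicators that each of the $c_N$ selections at step $n$ is non-type-$k$ are negatively correlated within a generation (Lemma~\ref{urnlem}), but the event $B=\{A_{k,N}(n)\ge(1-\eps_N)N\text{ for all }n\}$ is not independent of the $S^a$-variables, so one cannot simply invoke Lemma~\ref{negcorlem} unconditionally. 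The clean fix is the filtration argument: for $\lambda>0$, peel off the generations one at a time using that on $\{A_{k,N}(n)\ge(1-\eps_N)N\}\in\mathcal F_{n,N}$ the conditional MGF of $\tilde H_n$ given $\mathcal F_{n,N}$ is at most that of a Hypergeometric$(N,\lceil\eps_N N\rceil,c_N)$, hence at most $\exp(c_N\eps_N(e^\lambda-1))$. Iterating gives $E[e^{\lambda\sum_n\tilde H_n}\1_B\mid\mathcal F_{m,N}]\le\exp(r_Nc_N\eps_N(e^\lambda-1))$, and the Chernoff bound with $\lambda=1$ yields a bound of order $e^{-c''c_Nr_N}$, which is super-polynomially small. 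With this detail supplied, your proof is complete; you should also note explicitly that $\1_{\{\tilde M_i\in T\}}$ depends only on $S^d_{\cdot,n,N}$ for $n\ge m$, so that $Y_T$ is independent of $\mathcal F_{m,N}$ and Lemma~\ref{geomlem} applies conditionally.
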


\begin{proof}
For $i \in \{1, \dots, r_N\}$, let $W_i$ be the number of individuals of type $k$ that enter the seed bank in generation $m + i$, and let $B_i$ be the event that $A_{k,N}(m+i-1) \geq (1 - \eps_N)N$.  On the event $B_i$, we have $E[W_i|{\cal F}_{m+i-1,N}] \geq (1 - \eps_N)c_N$.  It follows that on $B_i$, we have $E[c_N - W_i|{\cal F}_{m+i-1,N}] \leq \eps_N c_N$, and therefore by the conditional Markov's Inequality, $P(c_N - W_i > 2 \eps_N c_N|{\cal F}_{m+i-1,N}) \leq 1/2$.  That is, we have
\begin{equation}\label{Wimain}
P(W_i \geq (1 - 2 \eps_N) c_N|{\cal F}_{m+i-1,N}) \geq 1/2 \quad \mbox{on } B_i.
\end{equation}
Let $W = \sum_{i=1}^{r_N} \1_{\{W_i \geq (1 - 2 \eps_N)c_N\}}$ be the number of generations in which at least $(1 - 2 \eps)c_N$ individuals of type $k$ enter the seed bank.  By (\ref{Wimain}), if $Z$ has a binomial$(r_N, 1/2)$ distribution, then $$P\bigg(F_{m,N} \cap \Big\{ W \leq \frac{r_N}{4} \Big\} \Big| {\cal F}_{m,N} \bigg) \leq P\bigg(Z \leq \frac{r_N}{4} \bigg) + P \bigg( F_{m,N} \cap \bigcup_{i=1}^{r_N} B_i^c  \,\Big| {\cal F}_{m,N} \bigg).$$
It follows from Lemma \ref{kstayhigh} that
$$P \bigg( F_{m,N} \cap \bigcup_{i=1}^{r_N} B_i^c \, \Big| {\cal F}_{m,N} \bigg) \leq 2 (r_N - 1) e^{-\gamma N \eps_N^2} \quad \mbox{on } \Lambda_{k,m,N}.$$
Furthermore, it follows from (\ref{chernoff1}) that $P(Z \leq r_N/4) \leq 2e^{-r_N/8}.$
Also, as long as $W > r_N/4$, we have $\sum_{i=1}^{r_N} W_i > (1 - 2 \eps_N)r_Nc_N/4$.  Therefore,
\begin{equation}\label{enterseed}
P \bigg( F_{m,N} \cap \bigg\{ \sum_{i=1}^{r_N} W_i \leq \frac{1 - 2 \eps_N}{4} r_N c_N \bigg\} \Big| {\cal F}_{m,N} \bigg) \leq 2(r_N - 1)e^{-\gamma N \eps_N^2} + 2e^{-r_N/8} \quad \mbox{on } \Lambda_{k,m,N}.
\end{equation}
Note that (\ref{enterseed}) shows that with high probability, many individuals of type $k$ or higher enter the seed bank before the end of the summer.  It remains to show that enough of these individuals stay in the seed bank through the winter.

Let $h_N = \lceil (1 - 2 \eps_N) r_N c_N/4 \rceil$.  Consider the first $h_N$ individuals of type $k$ to enter the seed bank between generations $m+1$ and $m+r_N$, breaking ties arbitrarily if necessary when several such individuals enter the seed bank in one generation.  If fewer than $h_N$ individuals of type $k$ enter the seed bank during this period, then arbitrarily label additional individuals that enter the seed bank during this period to ensure that $h_N$ individuals are labeled.  Let $H_N$ be the number of these $h_N$ individuals that stay in the seed bank until generation $(j+1)U_N$.  Note that $H_N$ will be stochastically smallest if all $h_N$ individuals enter the seed bank in generation $m+1$.  Therefore, we can define a random variable $H_N^*$ to be the number of individuals, out of $h_N$ individuals in the seed bank in generation $m+1$, that stay in the seed bank until generation $(j+1)U_N$, and then $H_N$ stochastically dominates $H_N^*$.  Note that these individuals must stay in the seedbank for $(j+1)U_N - (m+1) = U_N - V_N + r_N - 1$ generations.  Because the probability that an individual exits the seed bank in a given generation is $c_N/K_N$, the probability that an individual stays in the seed bank for $U_N - V_N + r_N - 1$ generations is $p_N = (1 - c_N/K_N)^{U_N - V_N + r_N - 1}$.  It follows that $E[H_N^*] = h_N p_N$.  Because $c_N (U_N - V_N + r_N - 1)/K_N \lesssim 1$ by (\ref{betadef}), (\ref{Reg1Asm1}), (\ref{Reg1Asm2}), (\ref{Reg2Asm}), and (\ref{rNdef}), the sequence $(p_N)_{N=1}^{\infty}$ is bounded away from zero.  

Now label $h_N$ individuals in the seed bank in generation $m+1$ by the integers $1, \dots, h_N$.
Let $\Lambda_i$ be the event that the individual labelled $i$ stays in the seed bank until generation $(j+1)U_N$.  We can write $H_N^* = \sum_{i=1}^{h_N} \1_{\Lambda_i}$, which has the same distribution as the random variable $Y_T$ in Lemma \ref{geomlem}, if we take $c = c_N$, $J = h_N$, $K = K_N$, and $T = \{U_N - V_N + r_N, U_N - V_N + r_N + 1, \dots\}$.  Therefore,
$$P \bigg( H_N \leq \frac{h_N p_N}{2} \Big| {\cal F}_{m,N} \bigg) \leq P \bigg( H_N^* \leq \frac{h_N p_N}{2} \bigg) \leq P \bigg(\bigg|\sum_{i=1}^{h_N} \1_{\Lambda_i} - E \bigg[ \sum_{i=1}^{h_N} \1_{\Lambda_i} \bigg] \bigg| \geq \frac{h_N p_N}{2} \bigg).$$
It follows from Lemma \ref{geomlem} with $\eps = 1/2$ that
$$P \bigg( H_N \leq \frac{h_N p_N}{2} \Big| {\cal F}_{m,N} \bigg) \leq 2 e^{-h_N p_N/12}.$$
Let $$\eta = \frac{1}{9} \min_N p_N > 0,$$
so that $\eta r_N c_N \leq h_N p_N/2$ for sufficiently large $N$.  Note that there will be at least $\eta r_N c_N$ individuals of type $k$ in the seed bank in generation $(j+1)U_N$ unless either fewer than $h_N$ individuals of type $k$ enter the seed bank between generations $m+1$ and $m+r_N$, or among the first $h_N$ such individuals to enter the seed bank beginning in generation $m+1$, fewer than $h_Np_N/2$ of these individuals stay in the seed bank until generation $(j+1)U_N$.  It follows that on the event $\Lambda_{k,m,N}$, for sufficiently large $N$ we have, using \eqref{enterseed},
\begin{align*}
&P \big( F_{m,N} \cap \{D_{k, N}((j+1)U_N) < \eta r_N c_N\} \big| {\cal F}_{m,N} \big) \\
&\qquad \qquad \leq P \bigg( F_{m,N} \cap \bigg\{ \sum_{i=1}^{r_N} W_i \leq \frac{1 - 2 \eps_N}{4} r_N c_N \bigg\} \Big| {\cal F}_{m,N} \bigg) + P \bigg( H_N \leq \frac{h_N p_N}{2} \Big| {\cal F}_{m,N} \bigg) \\
&\qquad \qquad \leq 2(r_N - 1)e^{-\gamma N \eps_N^2} + 2e^{-r_N/8} + 2 e^{-h_N p_N/12}.
\end{align*}
The result follows because, recalling that $r_N \gg \log N$ by (\ref{rNdef}) and using \eqref{rNgamzero}, we see that this expression tends to zero faster than any power of $N$.
\end{proof}

\begin{Lemma}\label{kfromseed}
Let $m = (j+1)U_N$ for some nonnegative integer $j$.
Let $k$ be a positive integer.  Let $F_{m,N}$ be the event that no individual acquires a mutation in generations $m+1, \dots, m+r_N$.  Suppose $\eta > 0$.  Then there exists a sequence $(\delta_N^*)_{N=1}^{\infty}$ such that $N^a \delta_N^* \rightarrow 0$ for all $a > 0$ and such that on the event
\begin{equation}\label{largeD}
\{D_{k,N}(m) \geq \eta r_N c_N\} \cap \{A_{\ell,N}(m) + D_{\ell,N}(m) = 0 \mbox{ for all }\ell > k\},
\end{equation}
we have for all $N$,
\begin{equation}\label{largeA}
P(F_{m,N} \cap \{A_{k,N}(m+r_N) < (1 - \eps_N)N\}|{\cal F}_{m,N}) \leq \delta_N^*.
\end{equation}
\end{Lemma}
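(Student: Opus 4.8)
The plan is to show that, on the event $F_{m,N}$, type $k$ becomes dominant through the combined effect of a steady supply of type-$k$ individuals emerging from the seed bank and the selective advantage that type $k$ enjoys in the active population, and — crucially — that the failure probability is \emph{super-polynomially} small. Throughout I would work on the event \eqref{largeD} and condition on $\mathcal{F}_{m,N}$. Since $F_{m,N}$ forbids mutations in generations $m+1, \dots, m+r_N$ and no type $\ell > k$ is present at time $m$, type $k$ remains the fittest type in the population during the whole window, and the only type-$k$ individuals are the ones that migrate out of the seed bank together with their descendants. The reason one must aim for a bound that decays faster than any power of $N$ is that, in Regime 2, \eqref{largeA} will eventually be union-bounded over the polynomially many seasonal changes before $\rho_N^{-1} t_0$; this is exactly what rules out the naive route of pushing a single lineage to $L_N$ and then invoking Lemma~\ref{finishsweep}, whose threshold $L_N$ only yields $e^{-\beta \eps_N^2 L_N} = o(1)$.

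First I would quantify the immigration. For at least the first $\lfloor \eta r_N/4 \rfloor$ generations, at most $\eta r_N c_N/4$ dormant individuals can have left the seed bank, so it still contains at least $\eta r_N c_N/2$ type-$k$ individuals; hence the expected number of type-$k$ individuals entering the active population during this sub-window is at least a constant times $r_N^2 c_N^2 / K_N$, which tends to infinity faster than $\log N$ by \eqref{Kcr}. Applying Lemma~\ref{geomlem} (taking $J$ to be the initial number of type-$k$ dormant individuals, $T$ the sub-window of first-exit times, and $\eps = 1/2$), and using that the mean is $\gg \log N$, I obtain that with probability $1 - N^{-\omega(1)}$ at least $n^* \gg \log N$ type-$k$ individuals enter the active population, each at its first exit from the seed bank. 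Because types do not change while dormant, these are genuine, distinct founders of type-$k$ lineages.

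Next I would turn each founder into an essentially independent attempt at a selective sweep. Fix a threshold $\ell^* = \ell^*_N$ with $\eps_N^2 \ell^* \gg \log N$ and $\ell^* \ll N$; such a sequence exists because \eqref{epsNdef} and \eqref{LNdef} force $\eps_N \gg (\log N)^{-1/3}$, whence $\log N / \eps_N^2 \ll N$. Repeating the construction of Section~\ref{bpsec}, but stopping the lower branching process at level $\ell^*$ rather than $L_N$, each founder's type-$k$ lineage is bounded below by an independent branching process whose offspring law converges to Poisson$(1+s)$; these lower bounds hold \emph{simultaneously} as long as the total number of type-$k$ individuals stays below $\ell^* \ll N$, since then the selection bounds \eqref{Thetalower} remain valid for every lineage at once. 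By the reasoning of Lemmas~\ref{GWlemma} and \ref{compareGW}, each branching process reaches $\ell^*$ within $r_N/2$ generations with probability at least $h(s)/2 \geq c_0 > 0$ for large $N$. As the processes are independent and there are $n^* \gg \log N$ of them, the probability that none reaches $\ell^*$ is at most $(1-c_0)^{n^*} \leq e^{-c_0 n^*} = N^{-\omega(1)}$; and whenever some process reaches $\ell^*$, so does $A_{k,N}$, at some generation $n_1 \leq m + \lfloor \eta r_N/4 \rfloor + r_N/2$.

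Finally I would complete the sweep from level $\ell^*$. Starting from $A_{k,N}(n_1) \geq \ell^*$, the geometric-growth estimate of Lemma~\ref{Aincrease} gives, exactly as in the proof of Lemma~\ref{finishsweep} (with the mutation term absent on $F_{m,N}$ and $L_N$ replaced by $\ell^*$), a failure probability bounded by $\sum_{i \geq 0} 2 e^{-\beta \eps_N^2 \ell^* (1 + \kappa \eps_N)^i}$; now the very first term is $2 e^{-\beta \eps_N^2 \ell^*} = N^{-\omega(1)}$ because $\eps_N^2 \ell^* \gg \log N$, so the whole sum is super-polynomially small — this is the key improvement that Lemma~\ref{finishsweep} cannot supply. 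Since $\eps_N \gg (\log N)/r_N$ by \eqref{epsNdef}, the number of growth steps needed to pass from $\ell^*$ to $(1-\eps_N)N$ is $O(\log N / \eps_N) \ll r_N$, so the sweep finishes before generation $m + r_N$. Combining the three super-polynomially small failure probabilities (too few founders, no founder reaching $\ell^*$, the growth phase stalling) produces a sequence $(\delta_N^*)$ with $N^a \delta_N^* \to 0$ for all $a > 0$, which is the assertion. The main obstacle is precisely the super-polynomial rate: it is obtained only by exploiting the abundance $n^* \gg \log N$ of seed-bank founders (via the negative-correlation concentration of Lemma~\ref{geomlem}) so that the geometric-in-$n^*$ establishment bound becomes super-polynomial, together with the raised threshold $\ell^*$ that makes the concluding growth phase reliable at the same rate.
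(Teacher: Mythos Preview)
Your approach is correct and essentially identical to the paper's: both use Lemma~\ref{geomlem} together with \eqref{Kcr} to secure $\gg \log N$ type-$k$ immigrants from the seed bank, couple the type-$k$ count below by a branching process with immigration stopped at a threshold large enough that $\eps_N^2$ times the threshold is $\gg \log N$ (the paper takes the concrete choice $\lceil(\log N)^2\rceil$ for your $\ell^*$), and then finish the sweep via the geometric growth of Lemma~\ref{Aincrease}. Two small fixes: the ``within $r_N/2$ generations'' clause in your branching step does not follow from Lemmas~\ref{GWlemma}--\ref{compareGW} as cited, since the proof of Lemma~\ref{compareGW} relies on $L_N! \ll r_N$, which fails badly for $\ell^*$ --- the paper instead invokes the Kesten--Stigum theorem to get exponential growth on survival, which does yield the time bound since $\log \ell^* \leq \log N \ll r_N$; and you should append a call to Lemma~\ref{kstayhigh} at the end so that once $A_{k,N}$ first exceeds $(1-\eps_N)N$ it remains there through generation $m+r_N$, contributing the (also super-polynomially small) term $r_N e^{-\gamma N \eps_N^2}$.
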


\begin{proof}
We work on the event in (\ref{largeD}).  Let $\ell_N = \lceil \eta r_N c_N \rceil$.  On the event in (\ref{largeD}), there are at least $\ell_N$ individuals of type $k$ in the seed bank in generation $m$.  Label $\ell_N$ such individuals $1, \dots, \ell_N$.  For $i \in \{m+1, m+2, \dots, m+ \lceil r_N/2 \rceil\}$, let $H_i^*$ be the number of these individuals that exit the seed bank in generation $i$, and let $H_i$ be the total number of type $k$ individuals that exit the seed bank in generation $i$.  Let $$H^* = H^*_{m+1} + H^*_{m+2} + \dots + H^*_{m + \lceil r_N/2 \rceil}$$ be the number of the $\ell_N$ labeled individuals that exit the seed bank before generation $m + \lceil r_N/2 \rceil$.
Let $B_i$ be the event that the $i$th such individual exits the seed bank before generation $m + \lceil r_N/2 \rceil$, so that $H^* = \sum_{i=1}^{\ell_N} \1_{B_i}$.  Note that $H^*$ has the same distribution as the random variable $Y_T$ in Lemma \ref{geomlem} if we take $c = c_N$, $K = K_N$, $J = \ell_N$, and $T = \{1, \dots, \lceil r_N/2 \rceil\}$.
Because $c_N$ out of $K_N$ individuals in the seed bank are selected for removal in each generation, we have
$$P(B_i) = 1 - \bigg( 1 - \frac{c_N}{K_N} \bigg)^{\lceil r_N/2 \rceil}$$ for all $i \in \{1, \dots, \ell_N\}$.  Therefore, using the elementary bound $(1-a)^n \geq an - \frac{1}{2}(an)^2$ for $0 < a < 1$ and positive integers $n$, we have
$$E[H^*] = \ell_N \bigg(1 - \bigg(1 - \frac{c_N}{K_N} \bigg)^{\lceil r_N/2 \rceil} \bigg) \geq \lceil \eta r_N c_N \rceil \bigg( \frac{c_N \lceil r_N/2 \rceil}{K_N} - \frac{1}{2} \bigg( \frac{c_N \lceil r_N/2 \rceil}{K_N} \bigg)^2 \bigg).$$
We have $c_N U_N/K_N \lesssim 1$ by (\ref{Reg1Asm1}), (\ref{Reg1Asm2}), and (\ref{Reg2Asm}), and therefore $c_N r_N/K_N \rightarrow 0$ by (\ref{rNdef}).  It follows that for sufficiently large $N$, we have
$$E[H^*] \geq \frac{\eta c_N^2 r_N^2}{3 K_N}.$$
It now follows from Lemma \ref{geomlem} with $\eps = 1/2$ that
\begin{equation}\label{Dbound}
P \bigg(H^* \leq \frac{\eta c_N^2 r_N^2}{6K_N} \bigg) \leq P\bigg(H^* \leq \frac{1}{2} E[H^*] \bigg) \leq e^{-E[H^*]/12} \leq e^{-(\eta/36)(c_N^2 r_N^2/K_N)}.
\end{equation}
This implies that with high probability, at least $\eta c_N^2 r_N^2/(6K_N)$ type $k$ individuals will exit the seed bank before generation $m + \lceil r_N/2 \rceil$.  It remains to show that the descendants of these individuals will spread rapidly in the active population.

To do this, we couple the number of type $k$ individuals in the active population with a branching process with immigration, which we will denote by $(W_{k,m,N}(n))_{n=m}^{m + \lceil 3r_N/4 \rceil}$.  The construction of the branching process with immigration will be similar to the construction of the process $(W'_{m,N}(n))_{n=m}^{\infty}$ in section \ref{bpsec}.  We first set $W_{k,m,N}(m) = A_{k,N}(m)$.  For $n$ such that $m \leq n < m + \lceil 3r_N/4 \rceil$, if $W_{k,m,N}(n) \geq \lceil (\log N)^2 \rceil$, then let $W_{k,m,N}(n+1) = W_{k,m,N}(n)$.  Suppose instead that $W_{k,m,N}(n) = j$, where $0 \leq j < \lceil (\log N)^2 \rceil$.  Following (\ref{Sprimedef}), for $h \in \{1, \dots, j\}$, let ${\cal S}_{h, n,N}' = \{i \in \{1, \dots, N - c_N\}: Q_{i,n,N} \geq h\}$ be the set of individuals in generation $n+1$ who did not choose an individual marked $1, \dots, h-1$ as their parent.  Let ${\cal S}_{h, n,N}$ be the first $N - c_N - \lceil (\log N)^2 \rceil$ members of the set ${\cal S}_{h, n,N}' \cup \{N - c_N + 1, \dots, N - c_N + \lceil (\log N)^2 \rceil\}$.  Now let $${\tilde q}_N = \frac{1 + s_N}{N + \lceil (\log N)^2 \rceil s_N},$$ which is similar to the definition of $q_N'$ in (\ref{qdef}) but with $\lceil (\log N)^2 \rceil$ in place of $L_N$, and let
\begin{equation}\label{Wdef}
W_{k,m,N}(n + 1) = H_{n+1} + \sum_{h=1}^j \sum_{i \in {\cal S}_{h, n,N}} \1_{\{U_{h,i, n,N} \leq {\tilde q}_N\}}.
\end{equation}
Because the set ${\cal S}_{h,n,N}$ always has cardinality $N - c_N - \lceil (\log N)^2 \rceil$ and the random variables $U_{h,i,n,N}$ are independent, the process $(W_{k,m,N}(n))_{n = m}^{\infty}$ is a branching process whose offspring distribution is binomial$(N - c_N - \lceil (\log N)^2 \rceil, {\tilde q}_N$) in which $H_n$ immigrants arrive in generation $n$.

Let $$\tau = (m +\lceil 3r_N/4 \rceil) \wedge \min\{n \geq m: A_{k,N}(n) > (\log N)^2\} \wedge \min\{n \geq m: \xi_{i,n,N} \neq 0 \mbox{ for some }i\}.$$
We now show by induction that on the event in (\ref{largeD}), we have
\begin{equation}\label{immcouple}
A_{k,N}(n) \geq W_{k,m,N}(n) \wedge \lceil (\log N)^2 \rceil \mbox{ for all }n \in \{m, m+1, \dots, \tau\}.
\end{equation}
We have defined $W_{k,m,N}(m) = A_{k,N}(m)$, so the result holds for $n = m$.  Suppose $A_{k,N}(n) \geq W_{k,m,N}(n)$ with $m \leq n < \tau$.  Because we are working on the event in (\ref{largeD}), there are no individuals of type greater than $k$ in the population at time $m$, and because no mutations occur before time $\tau$, this means the type $k$ individuals are the fittest in the population at time $n$.  We know there will be $H_{n+1}$ type $k$ individuals in generation $n+1$ emerging from the seed bank.
Now consider the $N - c_N$ individuals in generation $n+1$ whose parent comes from the active population in generation $n$.  Write $j = W_{k,m,N}(n)$.  Suppose $i \leq N - c_N$ and the indicator random variable $\1_{\{U_{h,i,n,N} \leq {\tilde q}_N\}}$ on the right-hand side of (\ref{Wdef}) equals $1$.  Because $n < \tau$, we have $A_{k,n}(n) < \lceil(\log N)^2 \rceil$, so by the calculation in \eqref{Thetalower} with $\lceil (\log N)^2 \rceil$ in place of $L_N$, we have $U_{h,i,n,N} \leq \Theta_{h,N}(n)$, which ensures that $Q_{i,n,N} = h$.  Thus, the individual labelled $i$ in generation $n+1$ has a parent with a mark in $\{1, \dots, j\}$, and therefore has type $k$.  If there is an $i \in {\cal S}_{h,n,N}$ with $i > N - c_N$, then we have $A_{k,N}(n+1) \geq \lceil (\log N)^2 \rceil$.  It follows that $A_{k,N}(n+1) \geq W_{k,m,N}(n+1) \wedge \lceil (\log N)^2 \rceil$.  The result (\ref{immcouple}) follows by induction.

It remains to deduce (\ref{largeA}) from (\ref{immcouple}).  Let $B_{1,N}$ be the event that $W_{k,m,N}(n) < (\log N)^2$ for all $n$ such that $m \leq n \leq m + \lceil 3r_N/4 \rceil$.  Let $B_{2,N}$ be the event that, for some $n$ such that $m \leq n \leq m + \lceil 3r_N/4 \rceil$, we have $A_{k,N}(n) \geq (\log N)^2$, but that $A_{k,N}(m + r_N) < (1 - \eps_N)N$.  A consequence of (\ref{immcouple}) is that if $F_{m,N}$ occurs and the event in (\ref{largeD}) occurs, then
$A_{k,N}(m + r_N) \geq (1 - \eps_N)N$ unless $B_{1,N} \cup B_{2,N}$ occurs.  Therefore, on the event in (\ref{largeD}), $$P(F_{m,N} \cap \{A_{k,N}(m + r_N) < (1 - \eps_N)N\}|{\cal F}_{m,N}) \leq P(F_{m,N} \cap B_{1,N}|{\cal F}_{m,N}) + P(F_{m,N} \cap B_{2,N}|{\cal F}_{m,N}).$$

We first bound $P(B_{1,N}|{\cal F}_{m,N})$.  Equation (\ref{Dbound}) upper bounds the probability that fewer than $\eta c_N^2 r_N^2/(6 K_N)$ immigrants appear in the branching process.  Suppose instead there are at least $\eta c_N^2 r_N^2/(6 K_N)$ immigrants.  Recall that the offspring distribution of the branching process is binomial$(N - c_N - \lceil (\log N)^2 \rceil, {\tilde q}_N$).  Arguing as in the proof of Lemma \ref{compareGW}, the offspring distribution converges as $N \rightarrow \infty$ to the Poisson distribution with mean $1 + s$, and in particular, there is a distribution $\nu$ with mean greater than one such that the offspring stochastically dominates the distribution $\nu$ for sufficiently large $N$.  By the Kesten-Stigum Theorem, for a Galton-Watson process with offspring distribution $\nu$, there exists $\delta > 0$ and $q \in (0,1)$ such that for sufficiently large $m$, conditional on the event that the family survives forever, the probability that more than $e^{\delta m}$ descendants of this immigrant are alive in generation $m$ is at least $q$.  Because $e^{\delta (r_N/4)} \gg (\log N)^2$, it follows that for sufficiently large $N$, the probability that a particular immigrant has more than $(\log N)^2$ descendants after $\lceil r_N/4 \rceil$ generations is at least $q$.  Therefore, for sufficiently large $N$, the probability that no immigrant family has had $(\log N)^2$ descendants in some generation before generation $\lceil 3r_N/4 \rceil$ is at most
$$(1-q)^{\eta c_N^2 r_N^2/(6 K_N)} = e^{(\log (1-q)) (\eta/6) (c_N^2 r_N^2/K_N)}.$$
Combining this result with (\ref{Dbound}), we get
\begin{equation}\label{B1N}
P(B_{1,N}|{\cal F}_{m,N}) \leq e^{-(\eta/36)(c_N^2 r_N^2/K_N)} + e^{(\log (1-q))(\eta/6)(c_N^2 r_N^2/K_N)}.
\end{equation}
It follows from (\ref{Kcr}) that this expression tends to zero faster than any power of $N$.

It remains to bound $P(F_{m,N} \cap B_{2,N}|{\cal F}_{m,N})$.  Let
$$\alpha = \min\{n: A_{k,N}(m + n) \geq (\log N)^2\}.$$
Let $\beta$ and $\kappa$ be the constants from Lemma \ref{Aincrease}.  Note that $(1 + \kappa \eps_N)^{\lfloor r_N/4 \rfloor} (\log N)^2 \geq (1 - \eps_N) N$ because $\eps_N \gg (\log N)/r_N$ by (\ref{epsNdef}).  The event $B_{2,N}$ can only occur if  $\alpha \leq \lceil 3r_N/4 \rceil$ and
if, for some $i \in \{0, 1, \dots \lfloor r_N/4 \rfloor\}$, we have $(1 + \kappa \eps_N)^i (\log N)^2 \leq A_{k,N}(m + \alpha + i) < (1 + \eps_N)N$ and $A_{k,N}(m + \alpha + i + 1) < (1 + \kappa \eps_N) A_{k,N}(m + \alpha + j)$, or else for some $i$ such that $\kappa + i \leq r_N$, we have $A_{k,N}(m + \alpha + i) \geq (1 - \eps_N)N$ and $A_{k,N}(m + \alpha + i + 1) < (1 - \eps_N)N$.  Lemma~\ref{Aincrease} implies that the probability of the first of these events is at most $\sum_{i=0}^{\infty} 2 e^{-\beta \eps_N^2 (\log N)^2 (1 + \kappa \eps_N)^i},$
while Lemma~\ref{kstayhigh} implies that the probability of the second of these events is at most $2 r_N e^{-\gamma N \eps_N^2}$.
It follows that $$P(F_{m,N} \cap B_{2,N}|{\cal F}_{m,N}) \leq 2 \sum_{i=0}^{\infty} e^{-\beta \eps_N^2 (\log N)^2 (1 + \kappa \eps_N)^i} + 2 r_N e^{-\gamma N \eps_N^2}.$$
Because $(1 + \kappa \eps_N)^i \geq \kappa \eps_N i$ and $\eps_N^3 (\log N) \rightarrow \infty$ by (\ref{epsNdef}), we can bound the infinite sum by a geometric series to see
that this expression, like the expression on the right-hand side of (\ref{B1N}), tends to zero faster than any power of $N$.  The conclusion of the lemma follows.
\end{proof}

We can now state the following corollary, which summarizes how the active population evolves during class 2 intervals.  The event $G_{2,N}^o$ means that $r_N$ generations after the beginning of the first summer and $r_N$ generations after the beginning of the first winter, the population is dominated by type $0$.  The event $G_{2,N}^+$ means that $r_N$ generations after the beginning of subsequent summers, the type that dominated the population at the end of the previous summer will be dominant again.  The event $G_{2,N}^-$ is the analogous event for the winters.

\begin{Cor}\label{class2cor}
We define the following three events:
\begin{enumerate}
\item Let $G_{2,N}^+$ be the event that for all $m \in {\cal U}_N \cap {\cal T}_N^+$ such that for some $k$, the event $$\Lambda_{k, m - (U_N - V_N) - r_N, N} \cap \{A_{\ell,N}(m) = D_{\ell,N}(m) = 0 \mbox{ for all }\ell > k\}$$ occurs, the event $\Lambda_{k, m+r_N, N}$ also occurs.

\item Let $G_{2,N}^-$ be the event that for all $m \in {\cal U}_N \cap {\cal T}_N^-$ such that for some $k$, the event $$\Lambda_{k, m - V_N - r_N, N} \cap \{A_{\ell,N}(m) = D_{\ell,N}(m) = 0 \mbox{ for all }\ell < k\}$$ occurs, the event $\Lambda_{k, m+r_N, N}$ also occurs.

\item Let $G_{2,N}^o$ be the event that $\Lambda_{0,r_N,N}$ occurs and that, if $A_{\ell,N}(V_N) = D_{\ell,N}(V_N) = 0$ for all $\ell < 0$, the event $\Lambda_{0, V_N + r_N, N}$ also occurs.
\end{enumerate} 
Let $G_{2,N} = G_{2,N}^+ \cap G_{2,N}^- \cap G_{2,N}^o$.  Then $$\lim_{N \rightarrow \infty} P(G_{2,N}) = 1.$$
\end{Cor}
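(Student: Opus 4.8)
The plan is to establish each of $G_{2,N}^+$, $G_{2,N}^-$, and $G_{2,N}^o$ with probability tending to one and then intersect. The engine of the proof is the chaining of Lemma~\ref{ktoseed} and Lemma~\ref{kfromseed}: the former says that a type dominant $r_N$ generations before a season ends deposits a supply of at least $\eta r_N c_N$ copies of itself in the seed bank that survives until that season returns, and the latter says such a supply suffices for the type to retake the active population within $r_N$ generations. The ``no mutation'' hypotheses $F_{m,N}$ in both lemmas are supplied globally by Lemma~\ref{mutlem3}, which guarantees that, outside an event of probability tending to zero, no mutation occurs within $2r_N$ generations of any seasonal change.

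First I would treat $G_{2,N}^+$, with $G_{2,N}^-$ being the verbatim winter analogue. Fix a summer start $m \in \mathcal{U}_N \cap \mathcal{T}_N^+$ and work on the hypothesis $\Lambda_{k,m',N} \cap \{A_{\ell,N}(m) = D_{\ell,N}(m) = 0 \text{ for all } \ell > k\}$, where $m' = m - (U_N - V_N) - r_N$ is the generation $r_N$ before the previous summer ends. On $F_{m',N}$, Lemma~\ref{ktoseed} gives $D_{k,N}(m) \geq \eta r_N c_N$ outside an event of probability at most $\delta_N$; combined with the second clause of the hypothesis this is exactly the event \eqref{largeD} needed to apply Lemma~\ref{kfromseed} at generation $m$, which on $F_{m,N}$ yields $A_{k,N}(m+r_N) \geq (1 - \eps_N)N$ outside an event of probability at most $\delta_N^*$. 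Since no mutation occurs in generations $m+1, \dots, m+r_N$, no type exceeding $k$ can be created, so $A_{\ell,N}(m+r_N) = D_{\ell,N}(m+r_N) = 0$ for all $\ell > k$; together with the density bound, this is precisely $\Lambda_{k,m+r_N,N}$. Because $\Lambda_{k,m',N}$ is $\mathcal{F}_{m',N}$-measurable and \eqref{largeD} is $\mathcal{F}_{m,N}$-measurable, passing from the conditional bounds to unconditional ones by taking expectations, and using that the events $\Lambda_{k,m',N}$ are disjoint in $k$, the failure probability at a single summer is at most $\delta_N + \delta_N^*$, uniformly in $k$.

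For $G_{2,N}^o$, the first-summer claim $\Lambda_{0,r_N,N}$ is immediate: with every individual of type $0$ at generation $0$ and no mutation in generations $1, \dots, r_N$ (Lemma~\ref{mutlem3}, as $0 \in \mathcal{U}_N$), all active and dormant individuals remain type $0$. For the first-winter claim the one genuinely new ingredient is a lower bound on the seed-bank supply, since there is no previous winter to invoke Lemma~\ref{ktoseed} for. I would note that the $K_N$ initially dormant individuals are all type $0$, and a dormant individual keeps its type until its label is first selected to leave the seed bank, so $D_{0,N}(V_N)$ dominates the number of labels never selected during generations $0, \dots, V_N-1$. By Lemma~\ref{geomlem} (with $J = K = K_N$, $c = c_N$, and $T = \{V_N, V_N+1, \dots\}$), this count concentrates around $K_N(1 - c_N/K_N)^{V_N}$, which by \eqref{Reg1Asm1}, \eqref{Reg2Asm}, and \eqref{rNdef} is $\gg r_N c_N$; hence $D_{0,N}(V_N) \geq \eta r_N c_N$ outside an event decaying faster than any power of $N$. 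On the conditioning event of $G_{2,N}^o$ (no negative type at $V_N$, the winter analogue of \eqref{largeD} with $k=0$) and $F_{V_N,N}$, Lemma~\ref{kfromseed} then gives $A_{0,N}(V_N + r_N) \geq (1 - \eps_N)N$, and the no-mutation property rules out any negative type at $V_N + r_N$, yielding $\Lambda_{0,V_N+r_N,N}$.

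It remains to union bound over all seasonal changes. The number of seasons before generation $\lfloor \rho_N^{-1} t_0 \rfloor$ is at most $1 + 2\lfloor \rho_N^{-1} t_0 \rfloor/U_N$, which by \eqref{totalgen} and \eqref{largeU} is bounded above by a fixed power of $N$. Summing $\delta_N + \delta_N^*$ (together with the first-winter deviation bound from Lemma~\ref{geomlem}) over these seasons, and adding the probability that Lemma~\ref{mutlem3} fails, bounds $P(G_{2,N}^c)$. The \emph{main obstacle} — and the reason Lemmas~\ref{ktoseed} and \ref{kfromseed} were stated with failure probabilities decaying faster than every power of $N$ — is exactly this union bound in Regime~2, where the number of seasons diverges: since $\delta_N, \delta_N^* = o(N^{-a})$ for every $a > 0$, choosing $a$ larger than the polynomial degree forces the sum to zero. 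Thus, once the heavy estimates in the two key lemmas are in hand, the remaining work in this corollary is the careful chaining and the verification of the full $\Lambda$-conditions, rather than any new probabilistic estimate.
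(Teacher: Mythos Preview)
Your proposal is correct and follows essentially the same approach as the paper: chain Lemma~\ref{ktoseed} into Lemma~\ref{kfromseed} at each seasonal change, supply the no-mutation hypotheses via Lemma~\ref{mutlem3}, and union bound over the polynomially many seasons using the superpolynomial decay of $\delta_N$ and $\delta_N^*$. The only cosmetic difference is in $G_{2,N}^o$: the paper handles the first-winter seed-bank supply by pointing back to the second half of the proof of Lemma~\ref{ktoseed} (with $K_N$ in place of $h_N$), whereas you invoke Lemma~\ref{geomlem} directly --- but that is exactly what that portion of the proof of Lemma~\ref{ktoseed} does, so the arguments coincide.
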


\begin{proof}
Lemma \ref{mutlem3} implies that with probability tending to one as $N \rightarrow \infty$, no mutation will occur within $r_N$ generations of a seasonal change.
Suppose $m \in {\cal U}_N \cap {\cal T}_N^+$ and $m > 0$.  Then for some nonnegative integer $j$, we have $m = (j+1)U_N$ and $m - (U_N - V_N) - r_N = jU_N + V_N - r_N$.  Lemma~\ref{ktoseed} implies that on the event $\Lambda_{k, jU_N + V_N - r_N, N}$, we will have $D_{k,N}(m) \geq \eta r_N c_N$ with probability at least $1 - \delta_N$ as long as there are no mutations in generations $jU_N + V_N - r_N + 1, \dots, jU_N + V_N$.  Lemma \ref{kfromseed} then implies that if this occurs and if $A_{\ell,N}(m) = D_{\ell,N}(m) = 0 \mbox{ for all }\ell > k$, then as long as there are no mutations in generations $m+1, \dots, m+r_N$, we have $A_{k,N}(m+r_N) \geq (1 - \eps_N)N$ with probability at least $1 - \delta_N^*$.  Because individuals of type higher than $k$ could only appear due to a mutation, it follows that in this case $A_{\ell,N}(m+r_N) = D_{\ell,N}(m+r_N) = 0 \mbox{ for all }\ell > k$, and therefore the event $\Lambda_{k, m+r_N,N}$ occurs.  Because the number of generations in ${\cal T}_N$ grows no faster than a power of $N$ by (\ref{totalgen}), it follows that $$\lim_{N \rightarrow \infty} P(G_{2,N}^+) = 1.$$  The same argument applied during the winter season yields $$\lim_{N \rightarrow \infty} P(G_{2,N}^-) = 1.$$

Let $F_{0,N}$ be the event that there are no mutations in generations $1, \dots, r_N$.  Note that $P(F_{0,N}) \rightarrow 1$ as $N \rightarrow \infty$ by Lemma \ref{mutlem3}.  As long as $F_{0,N}$ occurs, the event $\Lambda_{0,r_N,N}$ occurs because all individuals at time $r_N$ still have type zero.  Also, we claim that the proof of Lemma~\ref{ktoseed} implies that $P(F_{0,N} \cap \{D_{0,N}(V_N) < \eta r_N c_N\}) < \delta_N$.  To see this, note that all $K_N$ dormant individuals at time $0$ have type $0$, so one does not need to show that at least $(1 - 2 \eps_N) r_N c_N/4$ individuals of type~$0$ enter the seed bank before that time, as in the first paragraph of the proof of Lemma \ref{ktoseed}.  One can then follow the argument in the rest of the proof of Lemma \ref{ktoseed} with $K_N$ in place of $h_N$ and $V_N$ in place of $U_N - V_N$
to show that with probability tending to one, at least $K_N \rho_N/2$ of these individuals stay in the seed bank until generation $V_N$.  By \eqref{Reg1Asm1}, \eqref{Reg1Asm2}, \eqref{Reg2Asm}, and \eqref{rNdef}, we have $r_N c_N \ll U_N c_N \lesssim K_N$, so there exists $\eta > 0$ such that $$\lim_{N \rightarrow \infty} P(D_{0,N}(V_N) \geq \eta r_N c_N) = 1.$$  It then follows from Lemma \ref{kfromseed}, applied when $m = V_N$ and with the roles of summer and winter reversed, that $\Lambda_{0, V_N + r_N, N}$ occurs with probability tending to one as $N \rightarrow \infty$.  It follows that $$\lim_{N \rightarrow \infty} P(G_{2,N}^o) = 1,$$
which completes the proof of the corollary.
\end{proof}

\subsection{Proof of Theorems \ref{Poissontheo}, \ref{activethm}, and \ref{mztheorem}}\label{thmsec}

Let $t_0 > 0$, and let $$G_N = G_{0,N} \cap G_{1,N} \cap G_{2,N} \cap G_{3,N}.$$
It follows from Corollaries \ref{mutcor}, \ref{class3cor}, \ref{class1cor}, and \ref{class2cor} that
\begin{equation}\label{PGN}
\lim_{N \rightarrow \infty} P(G_N) = 1.
\end{equation}
Recall that the times in ${\cal T}_N$ are divided into intervals.  On $G_{0,N}$, we begin with a class 2 interval of length $r_N$, and then alternate between class 3 intervals having length at least $r_N$ and intervals of class 1 or 2 having length exactly $r_N$. 

The following lemma shows that on $G_N$, until some type other than type 0 dominates the population, all individuals have type 0 at the end of every class 3 interval.

\begin{Lemma}\label{allzeros}
Let $$T_N^* = \min\{n: \Lambda_{k,n,N} \mbox{ occurs for some }k \neq 0\}.$$  On $G_N$, if $m \in J_{3,N}$, $m+1 \notin J_{3,N}$, and $m < T_N^*$, then $A_{0,N}(m) = N$ and $D_{0,N}(m) = K_N$.
\end{Lemma}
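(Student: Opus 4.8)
The plan is to prove, by induction over the successive class 3 intervals $C_1, C_2, \ldots$ of $\mathcal{T}_N$ listed in order, that whenever $C_j = \{a_j,\ldots,b_j\}$ satisfies $b_j < T_N^*$ we have $A_{0,N}(b_j) = N$ and $D_{0,N}(b_j) = K_N$. Throughout I work on $G_N$, which supplies the interval decomposition recorded after \eqref{PGN}: the generations in $\mathcal{T}_N$ begin with a class 2 interval of length $r_N$ and then alternate between class 3 intervals and class 1 or class 2 intervals of length exactly $r_N$. The key reduction is Corollary \ref{class3cor}(2): if I can show that $C_j$ begins from the all-type-0 configuration $A_{0,N}(a_j) = N$, $D_{0,N}(a_j) = K_N$, then that corollary propagates it to generation $b_j$, and even to $b_j+1$ when $b_j+1 \in \mathcal{U}_N$. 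So the whole task reduces to establishing the all-type-0 configuration at the \emph{start} of each class 3 interval ending before $T_N^*$.

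For the base case, $C_1$ begins at generation $r_N$, just after the opening class 2 interval $\{0,\ldots,r_N-1\}$. Since $0 \in \mathcal{U}_N$, the event of Lemma \ref{mutlem3} forbids any mutation within $2r_N$ generations of $0$; as every individual starts with type $0$ and, by \eqref{Zadef}, types change only through the variables $\xi_{i,m,N}$, every active and dormant individual still has type $0$ at generation $a_1 = r_N$.

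For the inductive step I would take $j \geq 2$ with $b_j < T_N^*$ and let $I$ be the class 1 or class 2 interval of length $r_N$ separating $C_{j-1}$ from $C_j$. Since $b_{j-1} < a_j \leq b_j < T_N^*$, the induction hypothesis gives $A_{0,N}(b_{j-1}) = N$ and $D_{0,N}(b_{j-1}) = K_N$. If $I$ is a class 2 interval, then $b_{j-1}+1 \in \mathcal{U}_N$, so Corollary \ref{class3cor}(2) yields all type $0$ at generation $b_{j-1}+1$, and Lemma \ref{mutlem3} excludes mutations in the surrounding $2r_N$ generations, carrying the all-type-0 configuration forward to $a_j = b_{j-1}+r_N+1$. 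If instead $I$ is a class 1 interval, a beneficial mutation occurs at $m_0 = b_{j-1} \in \mathcal{M}_N^+$; say it occurs during summer, the winter case being symmetric. By Lemma \ref{mutlem1} exactly one individual mutates, so on the all-type-0 configuration at $m_0$ the event $\Lambda_{0,m_0,N}$ holds with $m_0 \in \mathcal{M}_N^+ \cap \mathcal{T}_N^+$, and Corollary \ref{class1cor} applies. If $\Psi^*_{m_0+1,N}$ occurred it would force $\Lambda_{1,a_j,N}$ with $a_j = m_0+r_N+1$, hence $T_N^* \leq a_j \leq b_j$, contradicting $b_j < T_N^*$; so $\Psi^*_{m_0+1,N}$ fails, and Corollary \ref{class1cor} gives $\Lambda_{0,a_j,N}$.

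The hard part is the final upgrade from the dominance statement $\Lambda_{0,a_j,N}$ to the exact all-type-0 configuration, since Corollary \ref{class1cor} only guarantees that type $0$ \emph{dominates}. Here I would argue that, because the population was entirely type $0$ at $m_0$ and Lemma \ref{mutlem2} forbids a second mutation inside the class 1 interval, the only non-type-0 individuals created between $m_0$ and $a_j$ are descendants of the single type $1$ individual; in the absence of further mutations their type stays equal to $1$, so only types $0$ and $1$ appear in either population throughout $\{m_0+1,\ldots,a_j\}$. Since $R_N(a_j)=1$, the event $\Lambda_{0,a_j,N}$ forces $A_{\ell,N}(a_j) = D_{\ell,N}(a_j) = 0$ for all $\ell > 0$, in particular $A_{1,N}(a_j) = D_{1,N}(a_j) = 0$, whence every active and dormant individual has type $0$ at generation $a_j$. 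This gives $A_{0,N}(a_j) = N$ and $D_{0,N}(a_j) = K_N$, i.e. the all-type-0 configuration at the start of $C_j$; by the reduction above this closes the induction and yields $A_{0,N}(b_j)=N$ and $D_{0,N}(b_j)=K_N$, which is the lemma's assertion at $m = b_j$.
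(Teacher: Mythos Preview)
Your proof is correct and follows essentially the same approach as the paper's: an induction over the successive intervals, using Corollary~\ref{class3cor}(2) to propagate the all-type-0 configuration through class~3 intervals, the absence of mutations near $\mathcal{U}_N$ to pass through class~2 intervals, and the fact that on $G_{1,N}$ only types $0$ and $1$ can be present after a class~1 interval, so that $\Lambda_{0,a_j,N}$ forces the all-type-0 configuration. The only cosmetic difference is that the paper phrases the class~1 step as ``some type dominates at $a_j$, and before $T_N^*$ it must be type~0,'' whereas you invoke the $\Psi^*_{m_0+1,N}$ dichotomy from Corollary~\ref{class1cor} and rule out the $\Psi^*$ branch by contradiction with $b_j < T_N^*$; these are equivalent.
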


\begin{proof}
On $G_{0,N}$, because there are no mutations within $r_N$ generations of a seasonal change, if all individuals have type $0$ at the beginning of a class 2 interval, then all individuals have type $0$ in the generation after the end of the class 2 interval.

On $G_{3,N}$, if all individuals have type $0$ at the beginning of a class 3 interval, then all individuals have type 0 at the end of the class 3 interval, and in the following generation if the next generation begins a class 2 interval.

On $G_{1,N}$, if all individuals have type $0$ in the generation before the start of a class 1 interval, then some type dominates the population in the generation after the end of the class 1 interval.  If this is type 0, then the more favorable type has disappeared from the population, and because no further mutations can occur during the class 1 interval, all individuals must have type 0.

Thus, by considering the intervals one at a time, we see that before time $T_N^*$, all individuals have type 0 in the first and last generation of every class 3 interval, which implies the lemma.
\end{proof}

\begin{Lemma}\label{goodevents}
On the event $G_N$, the following hold:
\begin{enumerate}
\item For every class 3 interval, one type dominates the population during the entire interval.

\item For every class 1 interval consisting of the generations $m+1, \dots, m+r_N$, some type will dominate the population in generations $m$ and $m+r_N+1$.  The same type will dominate the population in both of these generations if $\Psi^*_{m+1, N}$ does not occur.  If $\Psi^*_{m+1,N}$ occurs, then if type $k$ dominates the population in generation $m$, the type that dominates in generation $m+r_N+1$ will be type $k+1$ if $R_N(m) = 1$ and type $k-1$ if $R_N(m) = -1$.

\item Type 0 dominates the population in generations $r_N$ and $V_N + r_N$.  For every class 2 interval consisting of the generations $m, m+1, \dots, m+r_N - 1$ with $m > V_N$, the type that dominated the population in the last generation of the previous summer will dominate the population in generation $m + r_N$ if $R_N(m) = 1$, and the type that dominated the population in the last generation of the previous winter will dominate the population in generation $m + r_N$ if $R_N(m) = -1$.  
\end{enumerate}
\end{Lemma}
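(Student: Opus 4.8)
The plan is to prove the three assertions together by an induction that processes the intervals of $\mathcal{T}_N$ in chronological order, exploiting the decomposition furnished by $G_{0,N}$: the generations start with a class 2 interval of length $r_N$ and then alternate between class 3 intervals (of length at least $r_N$) and intervals of class 1 or class 2 (of length exactly $r_N$), while by Lemmas \ref{mutlem2} and \ref{mutlem3} every mutation falls strictly inside a class 3 interval and no mutation occurs within $2r_N$ of a seasonal change. The inductive invariant is that some specific type dominates the population at the first generation of each class 3 interval. The base case is generation $0$, where all individuals have type $0$, so $\Lambda_{0,0,N}$ holds; $G_{2,N}^o$ then gives $\Lambda_{0,r_N,N}$, which starts the first class 3 interval.

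Granting domination at the start of a class 3 interval, the first assertion is immediate from Corollary \ref{class3cor}, which propagates $\Lambda_{k,\cdot,N}$ across the whole interval and, in particular, furnishes domination at its last generation $m$. If a class 1 interval follows, then $m \in \mathcal{M}_N^+$ and $\Lambda_{k,m,N}$ holds, so Corollary \ref{class1cor} yields the second assertion and supplies domination at generation $m+r_N+1$ (by $k$ when $\Psi^*_{m+1,N}$ fails, and by $k+1$ or $k-1$ according to the season otherwise), restarting the induction. If instead a class 2 interval follows, beginning at $M \in \mathcal{U}_N$, I apply Corollary \ref{class2cor}; the hypothesis $\Lambda_{k, M-(U_N-V_N)-r_N, N}$ (for a summer, and its winter analogue) holds because, by Lemma \ref{mutlem3}, the generation $M-(U_N-V_N)-r_N$ lies in the same mutation-free class 3 interval as the end of the previous summer, so Corollary \ref{class3cor} identifies $k$ as exactly the type dominant at the end of that summer. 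The first-year statements at generations $r_N$ and $V_N+r_N$ follow from $G_{2,N}^o$, once we note that no negative type is present at generation $V_N$, which is a consequence of the type-bookkeeping described below.

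The crux is the remaining hypothesis of Corollary \ref{class2cor}: that at the start of each summer no type $\ell > k$ is present in either population (dually, no type $\ell < k$ at the start of each winter), where $k$ is the type dominant at the end of the previous same season. The mechanism is that, since every mutation occurs inside a class 3 interval whose dominant type is controlled, $G_{0,N}^*$ (Lemma \ref{mutlem4}) forces each mutation to act on an offspring of the dominant type and so to create only a type adjacent to it; in particular the type-$k$ individuals continually re-emerging from the seed bank during a winter cannot themselves spawn mutations. When $k \geq 1$ this shows that every type created during the intervening winter is at most $1 \leq k$, so no type exceeding $k$ ever appears, while the all-type-$0$ case $k=0$ is precisely Lemma \ref{allzeros}, where each transient deleterious mutation dies out of both populations before the next season. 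Finally, Lemma \ref{1toL} and Corollary \ref{class1cor} ensure that a beneficial mutation which fails to sweep leaves no trace in either population, so exactly the types up to $k$ remain, giving the missing hypothesis.

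I expect the main obstacle to be organizational rather than computational: aligning the index and timing conventions of the three corollaries across interval boundaries---matching ``the type dominant at the end of the previous season'' to the hypothesis generation of Corollary \ref{class2cor}, and carrying the max/min-type invariant intact through the many seasonal changes that may occur before time $\rho_N^{-1}t_0$ in Regime 2---while keeping the whole deduction on the single event $G_N$, whose probability tends to one by \eqref{PGN}.
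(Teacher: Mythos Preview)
Your proposal is correct and follows essentially the same route as the paper: a chronological strong induction over the class~1/2/3 intervals, invoking Corollaries~\ref{class3cor}, \ref{class1cor}, and \ref{class2cor} in turn, with the crux being the verification of the ``no type beyond $k$'' hypothesis for class~2 intervals via $G_{0,N}^*$ (only the dominant type can mutate) together with Lemma~\ref{allzeros} for the boundary case $k=0$. The paper spells out one point you leave implicit---namely that when type $0$ is dominant in the relevant class~3 interval \emph{and} at the end of the previous opposite season, the monotone sequences of dominant types force $m<T_N^*$, so Lemma~\ref{allzeros} actually applies---and your final sentence about failed beneficial sweeps is unnecessary (those mutations move away from $k$, not toward it), but these are minor presentational matters rather than gaps.
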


\begin{proof}
We consider the intervals in chronological order and prove this claim by strong induction.  We start with the class 2 interval that begins in generation zero.  On $G_{2,N}$, the event $\Lambda_{0, r_N, N}$ occurs, so type 0 dominates the population in generation $r_N$, which is the first generation of the following interval.  Now consider any other interval, and assume that the claims above hold for all previous intervals.

Suppose the interval has class 3.  Then the previous interval had class 1 or 2, and so the induction hypothesis implies that some type dominates the population in the first generation of the class 3 interval.  It then follows from the definition of $G_{3,N}$ that this type dominates the population during the entire interval.

Next, suppose the interval has class 1.  By the induction hypothesis, some type dominates the population in the last generation of the previous class 3 interval.  Then the claim follows directly from the definition of the event $G_{1,N}$.

Finally, suppose the interval has class 2.  Consider a class 2 interval that occurs at the beginning of a winter, beginning in generation $m = jU_N + V_N$ for some $j \geq 0$.  If $j \geq 1$, then by the induction hypothesis, some type, which we call type $k$, dominated the population throughout the class 3 interval that ended the previous winter.  On $G_{0,N} \cap G_{3,N}$, type $k$ dominated the population $r_N$ generations before the end of the previous winter.  That is, the event $\Lambda_{k, m - V_N - r_N}$ occurs. 
If instead $j = 0$, then type $0$ dominated the population in generation $m - V_N$, and we set $k = 0$.  In both cases, we need to show $A_{\ell, N}(m) = D_{\ell,N}(m) = 0$ for all $\ell < k$.  It will then follow from Corollary \ref{class2cor} that on $G_{2,N}$, type $k$ dominates the population in generation $m + r_N$, as claimed.  The same argument will then hold during the summer, completing the proof of the lemma.

For $h \geq 1$, on the event $\Lambda_{k, m - V_N - r_N}$, we have $A_{\ell, N}(m - V_N - r_N) = D_{\ell, N}(m - V_N - r_N) = 0$ for all $\ell < k$.  On $G_{0,N}$, no mutation can occur during the last $r_N$ generations of the previous winter, so we have $A_{\ell, N}(m - V_N) = D_{\ell, N}(m - V_N) = 0$ for all $\ell < k$.  This statement which also holds when $j = 0$, because all individuals in generation zero have type $0$.  Therefore, for $j \geq 0$, we need to show that no individual with type less than $k$ appears during the previous summer, between generations $m - V_N$ and $m$.
Such an individual could only appear when a type $k$ individual mutates to type $k-1$.  On $G_{0,N}$, no mutation can occur during class 1 or class~2 intervals.  Therefore, we only need to consider deleterious mutations during class 3 intervals.  Some type will dominate the population during these class 3 intervals.  On $G_{0,N}$, only individuals whose parent is the dominant type can acquire a mutation, so we only need to consider the case in which type $k$ is dominant during part of the previous summer.  The induction hypothesis implies that if we list in succession the types that dominated during the previous type 3 intervals in the summers, we get a nondecreasing sequence which starts at $0$ and can only increase by one when a new mutant becomes dominant in the population.  In particular, the dominant type during the winter can never be positive and the dominant type during the summer can never be negative, so we only need to consider $k = 0$.  However, if type $0$ is dominant during a class 3 interval in the previous summer as well as at the end of the previous winter (in the case $j \geq 1$), then this class 3 interval must end before the time $T_N^*$ defined in Lemma \ref{allzeros}.  By Lemma \ref{allzeros}, all individuals must have type 0 at the end of this class 3 interval, so no individual with negative type can survive until generation $m$.  Thus, we have $A_{\ell, N}(m) = D_{\ell,N}(m) = 0$ for all $\ell < k$, as claimed.
\end{proof}

We now define times $T_{k,N}^*$ for $k \in \Z$.  Let $T_{0,N}^* = 0$.  Then let
\begin{displaymath}
T_{k,N}^* = \left\{
\begin{array}{ll} \min\{m > T^*_{k-1,m}: m \in \mathcal{M}_N^+, R_N(m) = 1, \mbox{ and }\Psi^*_{m+1,N} \mbox{ occurs}\} & \mbox{ if }k > 0, \\
\min\{m > T^*_{k+1,m}: m \in \mathcal{M}_N^+, R_N(m) = -1, \mbox{ and }\Psi^*_{m+1,N} \mbox{ occurs}\} & \mbox{ if }k < 0. \end{array} \right.
\end{displaymath}

\begin{Lemma}\label{Tstar}
Suppose $a \in (0,1)$, and suppose $k$ is a nonzero integer.  Then for sufficiently large $N$, the following hold on the event $G_N$:
\begin{enumerate}
\item If $T_{k,N}^* > \rho_N^{-1} t_0$, then $S_{k,N}(a) > \rho_N^{-1} t_0$.

\item If $T_{k,N}^* \leq \rho_N^{-1} t_0 - r_N$, then $T_{k,N}^* \leq S_{k,N}(a) \leq T_{k,N}^* + r_N$.
\end{enumerate}
\end{Lemma}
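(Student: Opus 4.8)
The plan is to run the whole argument on the event $G_N$, on which Lemma \ref{goodevents} already provides a complete, generation-by-generation description of the dominant type of the active population. I would prove the statement for $k>0$ and deduce $k<0$ from the symmetry that interchanges summers with winters and positive mutations with negative ones. Everything then reduces to two things: matching the times $T_{k,N}^*$, which are defined purely through the establishment events $\Psi^*_{m+1,N}$, with the increments of the dominant summer type furnished by Lemma \ref{goodevents}; and bounding $X_{k,N}(m)$ at the generations that no domination statement covers. For the matching, I would read off from Lemma \ref{goodevents} that, on $G_N$, the dominant type at a summer generation equals the number of summer mutations $m\in\mathcal{M}_N^+$ with $R_N(m)=1$ and $\Psi^*_{m+1,N}$ occurring that have taken place so far: it starts at $0$ in generation $r_N$, is unchanged across class~$3$ intervals and across seasonal changes (the previous summer's dominant type re-emerges from the seed bank, by the class~$2$ bullet), and increases by exactly one at each summer class~$1$ interval on which $\Psi^*$ occurs. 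Since $T_{k,N}^*$ is by construction the $k$-th such establishing summer mutation, exactly $k-1$ of them precede it, so $\Lambda_{k-1,T_{k,N}^*,N}$ holds; because $R_N(T_{k,N}^*)=1$ and $\Psi^*_{T_{k,N}^*+1,N}$ occurs, the class~$1$ bullet of Lemma \ref{goodevents} then gives $\Lambda_{k,T_{k,N}^*+r_N+1,N}$.

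For the upper bound in part~2, assume $T_{k,N}^*\le\rho_N^{-1}t_0-r_N$, so the class~$1$ interval following $T_{k,N}^*$ lies inside $\mathcal{T}_N$. On $\Lambda_{k-1,T_{k,N}^*,N}$ the event $\Phi_{k,T_{k,N}^*+1,N}$ of \eqref{PhikmNdef} occurs, and since $\Psi^*_{T_{k,N}^*+1,N}$ holds, Lemma \ref{1toL} makes $A_{k,N}$ reach $L_N$ within $r_N/2$ generations and Lemma \ref{finishsweep} then makes $A_{k,N}$ reach $(1-\eps_N)N$ within a further $r_N/3$ generations. Hence $A_{k,N}(n)\ge(1-\eps_N)N$ for some $n\le T_{k,N}^*+1+r_N/2+r_N/3$, which is at most $T_{k,N}^*+r_N$ once $r_N$ is large. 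As $\eps_N\to0$, we have $1-\eps_N>a$ for large $N$, so $X_{k,N}(n)>a$ and thus $S_{k,N}(a)\le T_{k,N}^*+r_N$.

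For the lower bound I would show that $X_{k,N}(m)\le a$ for every $m<T_{k,N}^*$; this yields $S_{k,N}(a)\ge T_{k,N}^*$ in part~2, and, when $T_{k,N}^*>\rho_N^{-1}t_0$, yields part~1 because then every $m\in\mathcal{T}_N$ satisfies $m<T_{k,N}^*$. By the matching step, before $T_{k,N}^*$ the dominant summer type is at most $k-1$ and the dominant winter type is at most $0$, so type $k$ is never dominant. At any $m<T_{k,N}^*$ at which some type $j\neq k$ dominates, the condition $A_{j,N}(m)\ge(1-\eps_N)N$ forces $A_{k,N}(m)\le\eps_N N$; the only remaining generations lie inside class~$1$ or class~$2$ intervals. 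A class~$1$ interval before $T_{k,N}^*$ follows a \emph{failed} positive mutation, where part~2 of Corollary \ref{class1cor} keeps the newly created type below $L_N$, and type $k$ can arise only as an offspring of the dominant type $k-1$, since by Lemma \ref{mutlem2} consecutive mutations are more than $2r_N$ apart and so any failed type-$(k-1)$ cohort has died out before a further mutation could carry it to type $k$. A class~$2$ interval before $T_{k,N}^*$ contains no mutation (Lemma \ref{mutlem3}) and involves only the re-emerging and the expiring dominant types, all of index $\neq k$, so $A_{k,N}$ cannot grow there. In every case $A_{k,N}(m)<\max\{\eps_N N,\,L_N\}<aN$ for large $N$, since $\eps_N\to0<a$ and $L_N\ll N$ by \eqref{LNdef}, which is exactly $X_{k,N}(m)<a$.

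I expect the main obstacle to be precisely this last case analysis: one must ensure that no transitional effect — a failed establishment, a seasonal re-emergence from the seed bank, or a deleterious mutation — can push the fraction of type $k$ above $a$ before the establishment time $T_{k,N}^*$, which requires combining the separation of mutations (Lemma \ref{mutlem2}), their avoidance of seasonal boundaries (Lemma \ref{mutlem3}), and the $L_N$-cap on failed cohorts (Corollary \ref{class1cor}) with the dominant-type picture of Lemma \ref{goodevents}. By contrast, the identification of $T_{k,N}^*$ with the dominant-type increments and the upper bound are comparatively direct consequences of Lemma \ref{goodevents} and of Lemmas \ref{1toL} and \ref{finishsweep}.
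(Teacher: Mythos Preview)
Your approach is the same as the paper's: reduce to $k>0$, use Lemma \ref{goodevents} to identify $T_{k,N}^*$ with the $k$-th increment of the summer dominant type, deduce $S_{k,N}(a)\le T_{k,N}^*+r_N$ from $\Lambda_{k,T_{k,N}^*+r_N+1,N}$, and for the lower bound argue case by case that $A_{k,N}(m)<aN$ for $m<T_{k,N}^*$. Two points deserve correction.

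First, the upper bound. You already obtained $\Lambda_{k,T_{k,N}^*+r_N+1,N}$ from the class~1 bullet of Lemma \ref{goodevents}; there is no need to re-invoke Lemmas \ref{1toL} and \ref{finishsweep}. Those lemmas give statements that hold with high probability, not statements that hold on $G_N$; what holds on $G_N$ is precisely the content of $G_{1,N}$, which Lemma \ref{goodevents} packages. (Also, Lemma \ref{finishsweep} gives $r_N/2$, not $r_N/3$.) The paper simply reads off $X_{k,N}(T_{k,N}^*+r_N+1)\ge 1-\eps_N>a$ from $\Lambda_{k,T_{k,N}^*+r_N+1,N}$.

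Second, and more substantively, your lower-bound case analysis has a real gap at $k=1$. Your class~2 argument asserts that ``$A_{k,N}$ cannot grow there'' because only the re-emerging and expiring dominant types are involved. But a type~1 individual can be created during a \emph{winter} class~3 interval, when type~0 is dominant and an individual acquires a deleterious $+1$ mutation; this is not in $\mathcal{M}_N^+$, so it does not start a class~1 interval, and the class~1 bound from Corollary \ref{class1cor} does not apply. Your bound $A_{1,N}\le\eps_N N$ holds throughout that winter class~3 interval, but at the start of the following summer class~2 interval any surviving type~1 individuals would be the fittest type and could grow exponentially over the $r_N\gg\log N$ generations of the interval, easily exceeding $aN$. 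The paper closes this gap by invoking Lemma \ref{allzeros}: if type~0 is dominant during a winter class~3 interval before $T_{1,N}^*$, then that interval ends before $T_N^*$, and Lemma \ref{allzeros} together with part~2 of Corollary \ref{class3cor} forces $A_{0,N}=N$ and $D_{0,N}=K_N$ at the end of that interval, so no type~1 individual survives into the next summer. You should incorporate this step; without it the $k=1$ case is not covered.
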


\begin{proof}
We may consider the case in which $k$ is positive because the result for negative $k$ then follows in the same way.
By Lemma \ref{goodevents}, on $G_N$, some type dominates the population in the generation before and in the generation after every class 1 interval.  It also follows from Lemma~\ref{goodevents} that if we consider in order of appearance the class 1 intervals that occur during the summers, then the type that dominates in the generation before the $(j+1)$st interval will be the type that dominated in the generation after the $j$th interval.  Also, the dominant type will increase between the beginning and end of a class~1 interval which begins in generation $m$ if and only if $\Psi^*_{m+1,N}$ occurs.  It follows that type $k-1$ dominates in generation $T_{k,N}^*$, and type $k$ dominates in generation $T_{k,N}^* + r_N$, as long as these times are less than $\rho_N^{-1} t_0$.  Therefore, for sufficiently large $N$, we have $S_{k,N}(a) \leq T_{k,N}^* + r_N$ as long as $T_{k,N}^* \leq \rho_N^{-1} t_0 - r_N$.

It remains to show that, up to time $t_0$, the time $S_{k,N}(a)$ can not occur before time $T_{k,N}^*$.  The argument above shows that only types less than $k$ can dominate the population before time $T_{k,N}^*$.  Therefore, before time $T_{k,N}^*$, there can be no type $k$ individual in the population during the summers if some other type is dominating the population.  Furthermore, on $G_N$, the only mutations that occur are mutations to the dominant type during class 3 intervals.
On the event $G_{1,N}$, when a type $k-1$ individual mutates to type $k$ in generation $m$, the number of type $k$ individuals in the population stays below $L_N$ until type $k$ disappears from the population if $\Psi^*_{m+1,N}$ does not occur.  Because $L_N < a N$ for sufficiently large $N$, this means $S_{k,N}(a)$ will not occur during such an interval if $N$ is sufficiently large.

The only other way that type $k$ individuals could enter the population would be if $k = 1$ and a type $0$ individual acquires a deleterious mutation in the middle of a class 3 interval during the winter.  However, if before time $T_{1,N}^*$, type $0$ dominates the population throughout a class~3 interval during the winter, then this interval must end before the time $T_N^*$ defined in Lemma~\ref{allzeros}.  Therefore, by Lemma \ref{allzeros}, type 1 must disappear before the end of this class 3 interval.  Because type 0 dominates throughout the interval, for sufficiently large $N$ the fraction of type 1 individuals can not reach $a$ before type 1 disappears, which completes the proof.
\end{proof}

\begin{proof}[Proof of Theorem \ref{activethm}]
Let $a \in (0, 1)$.  Recall the definition of ${\dom}_{N,a}(m)$ from the statement of Theorem \ref{activethm}.  We claim that that on $G_N$, we have
\begin{equation}\label{Xkclaim}
\big|X_{k,N}(m) - \1_{\{{\dom}_{N,a}(m) = k\}}\big| \leq \eps_N \qquad \mbox{for all }m \in J_{3,N} \cap {\cal T}_N.
\end{equation}
Indeed, on $G_N$, some type dominates throughout every class 3 interval.  If we consider the class~3 intervals during the summer in chronological order, the dominant types form a nondecreasing sequence which, by Lemma \ref{Tstar}, increases by one when, in between two class 3 intervals, there is a class 1 interval during which $S_{k,N}(a)$ occurs for some $k$.  Therefore, the dominant type must be ${\dom}_{N,a}(m)$ whenever $m \in J_{3,N} \cap \mathcal{T}_N$.

Letting $|J|$ denote the cardinality of a set $J$, it follows that
\begin{equation}\label{XD1}
\rho_N \sum_{m=0}^{\lfloor \rho_N^{-1} t_0 \rfloor} \big|X_{k,N}(m) - \1_{\{{\dom}_{N,a}(m) = k\}}\big| \leq t_0 \eps_N + \rho_N |J_{1,N} \cap {\cal T}_N| + \rho_N |J_{2,N} \cap {\cal T}_N| \quad\mbox{on }G_N.
\end{equation}
We know that there is one class 2 interval at the start of every season, and each class 2 interval has length $r_N$.  Using that $r_N \ll U_N$ by (\ref{rNdef}) and that $\rho_N U_N \lesssim 1$ by \eqref{Reg1Asm2} and \eqref{Reg2Asm}, we have
\begin{equation}\label{XD2}
\rho_N |J_{2,N} \cap {\cal T}_N| \leq \rho_N \bigg( 2 + \frac{\rho_N^{-1} t_0}{U_N} \bigg) r_N \rightarrow 0
\end{equation}
as $N \rightarrow \infty$.
Also, because the probability of a mutation in each generation is bounded above by $2 \mu_N$, it follows from (\ref{UNmu}) and (\ref{rNdef}) that
$E[\rho_N|J_{1,N} \cap {\cal T}_N|] \leq \rho_N \cdot \rho_N^{-1} t_0 \cdot 2 N \mu_N \cdot r_N \rightarrow 0$
as $N \rightarrow \infty$, and therefore by Markov's Inequality, for all $\delta > 0$,
\begin{equation}\label{XD3}
\lim_{N \rightarrow \infty} P( \rho_N|J_{1,N} \cap {\cal T}_N| > \delta) = 0.
\end{equation}
Because $t_0 \eps_N \rightarrow 0$ as $N \rightarrow \infty$, the result follows from (\ref{PGN}), (\ref{XD1}), (\ref{XD2}), and (\ref{XD3}).
\end{proof}

In the proof of Theorem \ref{Poissontheo}, we will use the following simple coupling lemma.

\begin{Lemma}\label{BinPois}
Let $p \in (0, 1)$ and $\lambda > 0$.  Then, on some probability space, one can construct a random variable $X$ having a Bernoulli$(p)$ distribution and a random variable $Y$ having the Poisson$(\lambda)$ distribution such that $P(X \neq Y) \leq 2 |p - \lambda| + 2 \lambda^2$
\end{Lemma}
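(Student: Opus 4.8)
The plan is to use a maximal coupling of the two laws, for which the probability of disagreement equals their total variation distance, and then to bound that distance by elementary estimates on the Poisson weights at $0$ and $1$. The reason this is efficient is that $X$ is supported on $\{0,1\}$, so the total variation distance reduces to a two-term expression that can be controlled directly by the inequalities $e^{-\lambda}\ge 1-\lambda$ and $\lambda e^{-\lambda}\ge \lambda-\lambda^2$.

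First, I would recall the maximal coupling of two distributions on a countable set. Writing $q_k = \min(P(X=k), P(Y=k))$ and $c = \sum_k q_k$, one draws a common value from the normalized overlap measure $q_k/c$ with probability $c$, and otherwise draws $X$ and $Y$ independently from the normalized residual measures $(P(X=k)-q_k)/(1-c)$ and $(P(Y=k)-q_k)/(1-c)$. These residual measures have disjoint supports, so on the second event $X\ne Y$ with certainty; hence the construction has the correct marginals and satisfies $P(X\ne Y) = 1-c$. Since $X$ takes only the values $0$ and $1$, the only nonzero contributions to $c$ come from $k=0$ and $k=1$, so that
\[
c = \min(1-p,\, e^{-\lambda}) + \min(p,\, \lambda e^{-\lambda}).
\]

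It then remains to bound $c$ from below. Using $e^{-\lambda}\ge 1-\lambda$ together with the trivial inequality $a \ge b - |a-b|$, I would check in both cases ($e^{-\lambda}\ge 1-p$ and $e^{-\lambda}<1-p$) that $\min(1-p,\, e^{-\lambda}) \ge (1-p) - |p-\lambda|$. Similarly, from $\lambda e^{-\lambda} \ge \lambda(1-\lambda) = \lambda - \lambda^2 \ge p - |p-\lambda| - \lambda^2$ and the trivial bound $p \ge p - |p-\lambda| - \lambda^2$, one gets $\min(p,\, \lambda e^{-\lambda}) \ge p - |p-\lambda| - \lambda^2$. Adding these two estimates yields $c \ge 1 - 2|p-\lambda| - \lambda^2$, and therefore
\[
P(X \ne Y) = 1 - c \le 2|p-\lambda| + \lambda^2 \le 2|p-\lambda| + 2\lambda^2,
\]
which is the desired bound. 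There is no genuine obstacle: the lemma is standard, and the only point requiring a little care is the pair of elementary lower bounds on the minima. I would note that the argument in fact produces the slightly sharper constant $\lambda^2$ in place of $2\lambda^2$, but only the stated bound is needed in the sequel.
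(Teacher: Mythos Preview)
Your proof is correct and follows essentially the same approach as the paper: both construct the maximal (total-variation) coupling and then control the Poisson mass at $0$ and $1$ via the elementary inequality $e^{-\lambda}\ge 1-\lambda$. Your bookkeeping is in fact slightly tighter, yielding $2|p-\lambda|+\lambda^2$ where the paper's triangle-inequality decomposition gives $2|p-\lambda|+2\lambda(1-e^{-\lambda})\le 2|p-\lambda|+2\lambda^2$.
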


\begin{proof}
Standard coupling arguments imply that such a coupling can be achieved so that
\begin{align*}
P(X \neq Y) &= |P(X = 0) - P(Y = 0)| + |P(X = 1) - P(Y = 1)| + P(Y \geq 2) \nonumber \\
&= |(1 - p) - e^{-\lambda}| + |p - \lambda e^{-\lambda}| + (1 - e^{-\lambda} - \lambda e^{-\lambda}) \nonumber \\
&\leq 2|p - \lambda| + |(1 - \lambda) - e^{-\lambda}| + |\lambda - \lambda e^{-\lambda}| + (1 - e^{-\lambda} - \lambda e^{-\lambda}) \nonumber \\
&= 2 |p - \lambda| + 2 \lambda(1 - e^{-\lambda}) \nonumber \\
&\leq 2 |p - \lambda| + 2 \lambda^2,
\end{align*}
as claimed.
\end{proof}

\begin{proof}[Proof of Theorem \ref{Poissontheo}]
Fix $a \in (0, 1)$.  Lemma \ref{Tstar} and (\ref{PGN}) imply that for each $k \in \Z$,
\begin{equation*}
\lim_{N \rightarrow \infty} P\big(\{|S_{k,N}(a) - T_{k,N}^*| > r_N\} \cap \{T_{k,N}^* \leq \rho_N^{-1} t_0 - r_N\}\big) = 0.
\end{equation*}
Note that $\rho_N r_N \rightarrow 0$ as $N \rightarrow \infty$ because $r_N \ll U_N$ by (\ref{rNdef}) and $\rho_N U_N \lesssim 1$ by \eqref{Reg1Asm2} and \eqref{Reg2Asm}.
Therefore, for all $\eps > 0$, we have
\begin{equation}\label{STstar}
\lim_{N \rightarrow \infty} P\big(\{|\rho_N S_{k,N}(a) - \rho_N T_{k,N}^*| > \eps\} \cap \{\rho_N T_{k,N}^* \leq (1 - \eps)t_0 \}\big) = 0.
\end{equation}
We still need to show that the times $\rho_N T_{k,N}^*$ can be approximated by $T_{k,N}$, where $(I^+(T_{k,N}))_{k=1}^{\infty}$ and $(I^-(T_{-k,N}))_{k=1}^{\infty}$ are the event times of rate one Poisson processes.  We may focus on the $T_{k,N}$ in which $k$ is positive because the same argument applies for negative $k$. 

Let $p_N = P(\Psi^*_{m,N})$, which does not depend on $m$.  Let $(\zeta_{m,N})_{m=0}^{\infty}$ be a sequence of independent Bernoulli random variables with success probability $p_N$, which is independent of the population process.  For $m$ such that $R_N(m) = 1$, we define a Bernoulli random variable $\psi_{m,N}$.  If we have $\{m - r_N, \dots, m - 1\} \cap {\cal M}_N^+ = \emptyset$, then we set $\psi_{m,N} = 1$ if $m \in {\cal M}_N^+$ and $\Psi_{m,N}^*$ occurs.  Otherwise, we set $\psi_{m,N} = 1$ if $m \in {\cal M}_N^+$ and $\zeta_{m,N} = 1$.  The random variables $(\psi_{m,N})_{m=0}^{\infty}$ are i.i.d. Bernoulli random variables with success probability $(1 - (1 - \mu_N)^N) p_N$ because $\Psi_{m+1,N}^*$ is independent of the events $\Psi_{n,N}^*$ for $n \leq m - r_N$.  If $R_N(m) = 1$, then unless $m \in {\cal M}_N^+$ and $n \in {\cal M}_N^+$ for some $n \in \{m-1, \dots, m - r_N\}$, we will have $\psi_{m,N} = \1_{\{m \in {\cal M}_N^+\} \cap \Psi_{m+1,N}^*}$.  Therefore, by Lemma \ref{mutlem2},
\begin{equation}\label{psiPsi}
\lim_{N \rightarrow \infty} P\big(\psi_{m,N} = \1_{\{m \in {\cal M}_N^+\} \cap \Psi_{m+1,N}^*} \mbox{ for all }m \in {\cal T}_N^+ \big) = 1.
\end{equation}

We will couple the random variables $\psi_{m,N}$ with independent Poisson distributed random variables $\phi_{m,N}$.  In Regime 1, these random variables will have mean $\beta \theta/V_N$, and in Regime~2 they will have mean $\beta \rho_N U_N/V_N$.
By Lemma \ref{compareGW} and equations \eqref{slarge}, \eqref{rhoNdef}, and \eqref{Nmubound}, 
$$(1 - (1 - \mu_N)^N) p_N \sim N \mu_N p_N \sim N \mu_N \omega(s) \sim \rho_N.$$
In Regime 1, it follows from (\ref{betadef}) and (\ref{Reg1Asm2}) that
$$\frac{\beta \theta}{V_N} \sim \frac{1}{V_N} \cdot \frac{V_N}{U_N} \cdot U_N \rho_N = \rho_N.$$
In Regime 2, it follows from (\ref{betadef}) that
$$\frac{\rho_N U_N \beta}{V_N} \sim \rho_N.$$
Therefore, by Lemma \ref{BinPois}, in both regimes we can couple the random variables $\psi_{m,N}$ and $\phi_{m,N}$ in such a way that for $m$ such that $R_N(m) = 1$, we have $\rho_NP(\psi_{m,N} \neq \phi_{m,N}) \rightarrow 0.$  Therefore,
\begin{equation}\label{phipsi}
\lim_{N \rightarrow \infty} P \big( \psi_{m,N} = \phi_{m,N} \mbox{ for all }m \in {\cal T}_N^+) = 1.
\end{equation}

We now use the random variables $\phi_{m,N}$ to construct a Poisson point process.  In Regime~1, the generations $jU_N, \dots, jU_N + (V_N - 1)$ will be mapped to the time interval $[j \theta, j \theta + \beta \theta]$.  If $\phi_{jU_N + m,N} = i$ for some $m \in \{0, 1, \dots, V_N - 1\}$, we place $i$ points uniformly at random between $j \theta + m \beta \theta/V_N$ and $j \theta + (m+1) \beta \theta/V_N$.  In Regime 2, the generations $jU_N, \dots, jU_N + (V_N - 1)$ will be mapped to the time interval $[j \rho_N U_N, (j+1) \rho_N U_N]$.  
If $\phi_{jU_N + m,N} = i$, then we place $i$ points uniformly at random between $j \rho_N U_N + m \rho_N (U_N/V_N)$ and $j \rho_N U_N + (m + 1) \rho_N (U_N/V_N)$.  In both regimes, let $T_{1,N} < T_{2,N} < \dots$ be the resulting points, ranked in increasing order.  In Regime 1, these points form an inhomogeneous Poisson process whose rate during the intervals $[j \theta, j \theta + \beta \theta]$ is 1 and whose rate outside these intervals is zero.  In Regime 2, these points form a homogeneous Poisson process of rate $\beta$.  It follows from the definition of $I^+$ that in both regimes, the random times $(I^+(T_{k,N}))_{k=1}^{\infty}$ are distributed as the event times of a homogeneous rate one Poisson process.

It remains to compare the times $\rho_N T_{k,N}^*$ and $T_{k,N}$.  Equations \eqref{psiPsi} and \eqref{phipsi} imply that with probability tending to one as $N \rightarrow \infty$, we have
\begin{equation}\label{Bercouple}
\{m \in \mathcal{T}_N^+: m = T_{k,N}^* \mbox{ for some }k\} = \{m \in \mathcal{T}_N^+: \phi_{m,N} = 1\} = \{m \in \mathcal{T}_N^+: \phi_{m,N} \geq 1\}.
\end{equation}
Suppose $\phi_{jU_N + m,N} = 1$ for some $j \leq \lfloor \rho_N^{-1} t_0 \rfloor/U_N$ and some $m \in \{0, 1, \dots, V_N - 1\}$.  In Regime 1, there is a corresponding point $T_{k,N}$ in the interval $$\bigg[j \theta + \frac{m \beta \theta}{V_N}, j \theta + \frac{(m+1)\beta \theta}{V_N} \bigg].$$  In this case, we have
\begin{align*}
\big|\rho_N (jU_N + m) - T_{k,N} \big| &\leq j |\rho_N U_N - \theta| + m \bigg|\rho_N - \frac{\beta \theta}{V_N} \bigg| + \frac{\beta \theta}{V_N} \\
&\leq \frac{t_0}{U_N \rho_N} \big| \rho_N U_N - \theta| + V_N \bigg| \rho_N - \frac{\beta \theta}{V_N} \bigg| + \frac{\beta \theta}{V_N} \\
&= t_0 \bigg|1 - \frac{\theta}{\rho_N U_N} \bigg| + \big| \rho_N V_N - \beta \theta \big| + \frac{\beta \theta}{V_N},
\end{align*}
which tends to zero as $N \rightarrow \infty$ by (\ref{betadef}), (\ref{largeU}), and (\ref{Reg1Asm2}).  If we are in Regime 2, then the corresponding point $T_{k,N}$ is in the interval
$$\bigg[ j \rho_N U_N + \frac{m \rho_N U_N}{V_N}, j \rho_N U_N + \frac{(m+1) \rho_N U_N}{V_N} \bigg] \subseteq [j \rho_N U_N, (j+1) \rho_N U_N].$$  We then have
$|\rho_N (jU_N + m) - T_{k,N}| \leq \rho_N U_N$, which tends to zero as $N \rightarrow \infty$ by (\ref{Reg2Asm}).  It follows that in both regimes, when (\ref{Bercouple}) holds, the quantity $|\rho_N T_{k,N}^* - T_{k,N}|$ is bounded above by a quantity that tends to zero as $N \rightarrow \infty$ for all $k$ such that $T_{k,N}^* \leq \lfloor \rho_N^{-1} t_0 \rfloor$.
Thus, given $\eps > 0$, when (\ref{Bercouple}) holds, for sufficiently large $N$ we have $|\rho_N T_{k,N}^* - T_{k,N}|$ for all $k$ such that $T_{k,N} \leq  (1 - \eps) t_0$.
Therefore,
\begin{equation}\label{TTstar}
\lim_{N \rightarrow \infty} P\big(\big\{|\rho_N T_{k,N}^* - T_{k,N}| > \eps \big\} \cap \big\{ T_{k,N} \leq (1 - \eps) t_0 \big\}\big) = 0.
\end{equation}
Because $t_0$ is an arbitrary positive constant and $P(T_{k,N} \leq (1 - \eps) t_0) \rightarrow 1$ as $t_0 \rightarrow \infty$ for any fixed $k$, equations (\ref{STstar}) and (\ref{TTstar}) imply that $|\rho_N S_{k,N}(a) - T_{k,N}| \rightarrow_p 0$ as $N \rightarrow \infty$ for any fixed $k$, which implies Theorem \ref{Poissontheo}.
\end{proof}

It remains to prove Theorem \ref{mztheorem}, which involves working with the Meyer-Zheng topology.  Let $(S, d)$ be a metric space.  For c\`adl\`ag functions $f, g: [0, \infty) \rightarrow S$, define the metric $$r(f,g) = \int_0^{\infty} (1 \wedge d(f(t), g(t))) \, e^{-t} \: \dd t = E[1 \wedge d(f(T), g(T))],$$ where $T$ has the exponential distribution with mean 1.
As noted at the beginning of section 4 of \cite{k91}, we have $f_n \rightarrow f$ in the Meyer-Zheng topology if and only if $r(f_n,f) \rightarrow 0$.  It is a straightforward exercise to show that if $(X_n(t), t \geq 0)$ and $(Y_n(t), t \geq 0)$ are $S$-valued stochastic processes and if the Lebesgue measure of $\{t \in [0, t_0]: d(X_n(t), Y_n(t)) > \eps\}$ converges to zero in probability for all $t_0 > 0$ and all $\eps > 0$, then $r(X_n, Y_n)$ converges to zero in probability.  Therefore, if in addition we have $Y_n \Rightarrow X$ in the sense of weak convergence of stochastic processes with respect to the Meyer-Zheng topology, it follows that $X_n \Rightarrow X$ in the same sense.  We will use this fact in the proof of Theorem \ref{mztheorem} below.

\begin{proof}[Proof of Theorem \ref{mztheorem}]
Let $0 < a < 1$.  The process $(\dom_{N,a}(\lfloor \rho_N^{-1} t \rfloor), t \geq 0)$ makes jumps at the times $\rho_N S_{k,N}(a)$ and at the times $\rho_N j U_N$ and $\rho_N(jU_N + V_N)$ for nonnegative integers $j$.  The process $(\domi(t), t \geq 0)$ makes jumps at the times $T_k$ and at the times $j \theta$ and $j \theta + \beta \theta$ for nonnegative integers $j$.  Therefore, it follows from Theorem \ref{Poissontheo} and equations \eqref{betadef} and \eqref{Reg1Asm2} that the processes $(\dom_{N,a}(\lfloor \rho_N^{-1} t \rfloor), t \geq 0)$ converge as $N \rightarrow \infty$ to $(\domi(t), t \geq 0)$, in the sense of weak convergence of stochastic processes with respect to Skorohod's $J_1$-topology.  Therefore, if we define ${\idom}_{N,a}(m)$ to be the $\Delta$-valued random variable whose $k$th coordinate is $\1_{\{\dom_{N,a}(m) = k\}}$, then
\begin{equation}\label{DNDconv}
(\idom_{N,a}(\lfloor \rho_N^{-1} t \rfloor), t \geq 0) \Rightarrow ({\idom}(t), t \geq 0),
\end{equation}
with respect to Skorohod's $J_1$-topology.  It follows that the convergence in \eqref{DNDconv} also holds with respect to the weaker Meyer-Zheng topology.

Fix $t_0 > 0$.  We can rewrite the conclusion of Theorem \ref{activethm} in integral form to get that for all $k \in \Z$,
$$\int_0^{t_0} \big| X_{k,N}(\lfloor \rho_N^{-1}t \rfloor) - \1_{\{{\dom}_{N, a}(\lfloor \rho_N^{-1}t \rfloor) = k\}} \big| \: \dd t \rightarrow_p 0.$$
It follows from Theorem \ref{Poissontheo} and Lemma \ref{Tstar} that for all $\delta > 0$, there exists a positive integer $K$ such that $P(S_{k,N}(a) > t_0 \mbox{ whenever }|k| \geq K) > 1 - \delta$.  Using $\| \cdot \|$ to denote the $\ell_1$ norm on $\Delta$, it then follows that for all $\eps > 0$, the Lebesgue measure of the set 
$$\{ t \in [0,t_0]: \|{\bf X}_N(\lfloor \rho_N^{-1} t \rfloor) - \idom_{N,a}(\lfloor \rho_N^{-1} t \rfloor)\| > \eps\}$$ converges to zero in probability as $N \rightarrow \infty$.
This result, combined with \eqref{DNDconv}, implies the conclusion of the theorem.
\end{proof}

As a corollary of the arguments above, we will also prove Proposition \ref{activeProp} below, which is slightly stronger than Theorem \ref{activethm}.  This result will be used in the proofs of Theorems \ref{dormantthm1} and \ref{dormantthm2}.  Let $\eps \in (0, 1/2)$.  For $k \geq 1$, define
$$\Theta_{k,N}(\eps) = \{m \in {\cal T}_N: S_{k,N}(\eps) + r_N < m < S_{k+1,N}(\eps) \mbox{ and }R_N(\ell) = 1 \mbox{ for }m - r_N \leq \ell \leq m\}.$$
Likewise, for $k \leq -1$, define
$$\Theta_{k,N}(\eps) = \{m \in {\cal T}_N: S_{k,N}(\eps) + r_N < m < S_{k-1,N}(\eps) \mbox{ and }R_N(\ell) = 1 \mbox{ for }m - r_N \leq \ell \leq m\}.$$
Also, define
\begin{align*}
\Theta_{0,N}(\eps) &= \{m \in {\cal T}_N: m \leq S_{1,N}(\eps) \mbox{ and }R_N(\ell) = 1 \mbox{ for }m - r_N \leq \ell \leq m\} \\
&\qquad \cup \{m \in {\cal T}_N: m \leq S_{-1,N}(\eps) \mbox{ and }R_N(\ell) = -1 \mbox{ for }m - r_N \leq \ell \leq m\}.
\end{align*}

\begin{Prop}\label{activeProp}
For all $\eps \in (0, 1/2)$ and all $t_0 > 0$, we have $$\lim_{N \rightarrow \infty} P\big(X_{k,N}(m) > 1 - 2 \eps \mbox{ for all }k\mbox{ and }m\mbox{ such that }k \in \Z\mbox{ and }m \in \Theta_{k,N}(\eps)\big) = 1.$$
\end{Prop}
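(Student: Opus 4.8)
The plan is to work on the high-probability event $G_N$ from the proof of Theorem \ref{activethm}, on which the full structural description of the active population provided by Lemma \ref{goodevents} is available. Since $P(G_N) \to 1$ by \eqref{PGN}, it suffices to show that for all sufficiently large $N$, on $G_N$ we have $X_{k,N}(m) > 1 - 2\eps$ for every $k \in \Z$ and every $m \in \Theta_{k,N}(\eps)$; because all the structural inputs hold simultaneously on $G_N$, no further union bound is needed. The governing estimate will be the elementary inequality (stated for $k \geq 1$, the winter being symmetric)
$$X_{k,N}(m) \geq \frac{1}{N}\sum_{\ell \geq k} A_{\ell,N}(m) - X_{k+1,N}(m) - \frac{1}{N}\sum_{\ell > k+1} A_{\ell,N}(m),$$
and I would argue that the first term is at least $1 - \eps_N$, the second is strictly below $\eps$, and the third vanishes, giving $X_{k,N}(m) > 1 - \eps_N - \eps > 1 - 2\eps$ for large $N$. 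The middle term is controlled immediately: the condition $m < S_{k+1,N}(\eps)$ in the definition of $\Theta_{k,N}(\eps)$ means type $k+1$ has not yet crossed level $\eps$, so $X_{k+1,N}(m) < \eps$. The third term vanishes because on $G_{0,N}$ only the dominant type's offspring acquire mutations, so no type exceeding $k+1$ can be present before $S_{k+1,N}(\eps)$.

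First I would observe that the conditions defining $\Theta_{k,N}(\eps)$ force $m$ to avoid class 2 intervals: since $R_N$ is constant on $[m-r_N, m]$, the generation $m$ lies at least $r_N$ past the most recent season change, hence outside the length-$r_N$ class 2 interval there. Thus on $G_N$ the generation $m$ belongs either to a class 3 or a class 1 interval. In the class 3 case, Lemma \ref{goodevents} guarantees that a single type dominates throughout the interval; the window $S_{k,N}(\eps) + r_N < m < S_{k+1,N}(\eps)$, together with the fact that the $S_{j,N}(\eps)$ are increasing in $j$ and the identification of the dominant type via Lemma \ref{Tstar}, pins this dominant type down to be exactly $k$, so that $A_{k,N}(m) \geq (1-\eps_N)N$ and the first term is at least $1 - \eps_N$ directly.

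The main obstacle is the class 1 case, where a new mutant is actively emerging and no clean domination statement holds in the interior of the interval. Here the buffer $+r_N$ in the definition of $\Theta_{k,N}(\eps)$ is essential. Writing the interval as $\{m_0+1, \dots, m_0+r_N\}$ with the mutation at $m_0$, the bound $S_{k,N}(\eps) + r_N < m \leq m_0 + r_N$ yields $S_{k,N}(\eps) < m_0$, so (by the nondecreasing-dominant-type structure of Lemma \ref{goodevents}) type $k$ is the type dominating at $m_0$ and type $k+1$ is the emerging mutant; the competing alternative, that $k$ is itself the emerging mutant of a type-$(k-1)$ bulk, is excluded precisely because the buffer would then push $m$ beyond $m_0 + r_N$. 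I would then split on whether $\Psi^*_{m_0+1,N}$ occurs. If it does not, Corollary \ref{class1cor} keeps the mutant below $L_N$ and $\Lambda_{k,\cdot,N}$ persists; if it does, the mutant $k+1$ is sweeping, but $m < S_{k+1,N}(\eps)$ still forces $X_{k+1,N}(m) < \eps$. In both situations Lemma \ref{kstayhigh} propagates the bound $\sum_{\ell \geq k} A_{\ell,N}(\cdot) \geq (1-\eps_N)N$ forward to $m$, starting either from the completion of type $k$'s own sweep (if $S_{k,N}(\eps)$ lies in the current season) or from the class 2 re-establishment at the season's start, via $G_{2,N}$ (otherwise). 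This closes the class 1 case. The $k = 0$ case is handled identically, with ``type $1$ (resp. type $-1$ in winter) has not yet crossed $\eps$'' replacing the upper window, and the negative-$k$ cases are symmetric.
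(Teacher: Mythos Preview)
Your argument is essentially the paper's own proof: exclude class~2 intervals by the $r_N$ buffer in $\Theta_{k,N}(\eps)$, handle class~3 intervals via the domination statement from Lemma~\ref{goodevents}, and in a class~1 interval combine the lower bound on $\sum_{\ell \geq k} A_{\ell,N}$ (from Lemma~\ref{kstayhigh}) with the upper bound $X_{k+1,N}(m) < \eps$ (from $m < S_{k+1,N}(\eps)$) and the absence of types $>k+1$. The only point to tighten is your framing that everything is deterministic on $G_N$: the persistence of $\sum_{\ell\geq k}A_{\ell,N}(\cdot)\geq (1-\eps_N)N$ through the interior of a class~1 interval is \emph{not} part of the events defining $G_N$, so your appeal to Lemma~\ref{kstayhigh} there is an additional ``with probability tending to $1$'' step (just as in the paper's proof), and a union bound over those generations is in fact used---it succeeds because $r_N e^{-\gamma N\eps_N^2}\to 0$. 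Also, once you have established that type $k$ dominates at the generation $m_0$ just before the class~1 interval, you can start the Lemma~\ref{kstayhigh} propagation directly from $m_0$; there is no need to trace back to the completion of type~$k$'s sweep or to the class~2 re-establishment.
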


\begin{proof}
As noted in the proof of Theorem \ref{activethm}, for sufficiently large $N$, whenever $m$ is in a class 3 interval, the type $\dom_{N,\eps}(m)$ must dominate the population.  It follows that for sufficiently large $N$, we have $X_{k,N}(m) > 1 - 2 \eps$ for all $k$ and $m$ such that $m$ is in a class 3 interval and $m \in \Theta_{k,N}(\eps)$.  By the definition of $\Theta_{k,N}(\eps)$, we can never have $m \in \Theta_{k,N}(\eps)$ if $m$ is in a class 2 interval.  Therefore, it remains only to consider class 1 intervals.

Without loss of generality, consider a class 1 interval that occurs during the summer.  By Lemma \ref{goodevents}, on $G_N$, some type, which we call type $k$, will dominate in the previous class 3 interval.  Before the time $S_{k+1,N}(\eps)$, there can be at most $\eps N$ individuals of type $k+1$.  On $G_N$, no further mutation to type $k+2$ can happen during the class 1 interval, so there can be at most $\eps N$ individuals of type greater than $k$.  By Lemma \ref{kstayhigh}, once the number of individuals of type $k$ or higher reaches $(1 - \eps)N$, with probability tending to one as $N \rightarrow \infty$ it will never drop below $(1 - \eps)N$ until the end of the season.  In particular, with probability tending to one as $N \rightarrow \infty$, the number of individuals of type exactly $k$ must stay above $(1 - 2 \eps)N$ until either the season ends or the time $S_{k+1,N}(\eps)$ occurs.  This observation implies the result.
\end{proof}

\subsection{Proof of Propositions \ref{mutdistance} and \ref{gendistance}}\label{proofsdistance}

In this subsection, we prove Propositions \ref{mutdistance} and \ref{gendistance}, which pertain to the genetic distance and the genealogical distance between individuals sampled at random from the population. 

\begin{proof}[Proof of Proposition \ref{mutdistance}]
Fix $t > 0$.  Choose $t_0$ large enough that $\rho_N^{-1} t + U_N < \rho_N^{-1} t_0$ for sufficiently large $N$.  In Regime 1, it is enough to take $t_0 > t + \theta$, and in Regime 2, it is enough to take $t_0 > t$.  We now consider the process up to generation $\lfloor \rho_N t \rfloor$, as before.  On the event $G_N$, the last generation in every season is in a class 3 interval because the definition of $G_{0,N}$ stipulates that no mutation occurs within $2r_N$ generations of the end of a season (see Lemma \ref{mutlem3}).  In particular, on $G_N$, the generations $\chi_{N,t}^+$ and $\chi_{N,t}^-$ are both in class 3 intervals.  Therefore, by part 1 of Lemma~\ref{goodevents}, on $G_N$ one type dominates the population in these two generations.  It then follows from \eqref{PGN} that if two individuals are chosen at random from the population in generations $\chi_{N,t}^+$ or $\chi_{N,t}^-$, then these individuals will have the same type
with probability tending to one as $N \rightarrow \infty$.  This proves \eqref{mutsame}.

It follows from part 2 of Lemma \ref{goodevents} and the definition of $T_{k,N}^*$ (see also the first paragraph in the proof of Lemma \ref{Tstar}) that on $G_N$, if a class 3 interval begins in generation $m$ where $R_N(m) = 1$, then the dominant type throughout this class 3 interval is $\max\{k: T_{k,N}^* < m\}$.  Likewise, on $G_N$, if a class 3 interval begins in generation $m$ where $R_N(m) = -1$, then the dominant type throughout this class 3 interval is $\min\{k: T_{k,N}^* < m\}$.  It follows from \eqref{TTstar} in the proof of Theorem \ref{Poissontheo} that the times $(I^+(\rho_N T_{k,N}^*))_{k=1}^{\infty}$ converge in distribution as $N \rightarrow \infty$ to the event times of a homogeneous rate one Poisson process.  The same argument implies that the times $(I^-(\rho_N T_{-k,N}^*))_{k=1}^{\infty}$ also converge in distribution to the event times of a homogeneous rate one Poisson process.  It follows that if $\eps > 0$ and $n$ is a positive integer, then there is a $t^*$ such that for sufficiently large $N$, we have $\rho_N T_{n,N}^* < t^*$ and $\rho_N T_{-n,N} < t^*$ with probability greater than $1 - \eps/2$.  On the intersection of this event with $G_N$, the dominant types in the population in the two generations $\chi_{N,t}^+$ and $\chi_{N,t}^-$ will differ by at least $2n$, which implies \eqref{mutopp}.
\end{proof}

The following lemma will be useful in the proof of Proposition \ref{gendistance}.

\begin{Lemma}\label{genealogy}
Let $k \geq 1$.  For all $m$ such that $T_{k,N}^* \leq m \leq \lfloor \rho_N^{-1} t_0 \rfloor$, on the event $G_N$, every individual of type $k$ or higher in the active or dormant population in generation $m$ is descended from the individual that acquired a beneficial mutation in generation $T_{k,N}^*$, and every individual of type $-k$ or lower in the active or dormant population in generation $m$ is descended from the individual that acquired a beneficial mutation in generation $T_{-k,N}^*$.
\end{Lemma}

\begin{proof}
A type $k-1$ individual mutates to type $k$ at time $T_{k,N}^*$, which on $G_N$ is the first generation of a class 1 interval.  On $G_N$, there are no other mutations during this class 1 interval, and by part 2 of Lemma \ref{goodevents}, type $k$ dominates the population at the end of this class 1 interval.  This means that all type $k$ individuals in the population at the end of the interval are descended from the individual that got the beneficial mutation at the beginning of the interval.  From this point forward, type $k$ (or higher) dominates every generation during class 3 intervals in the summer.  As can be seen from Lemma \ref{mutlem3} and the definition of $G_{0,N}^*$ in Lemma \ref{mutlem4}, on $G_N$ the only mutations that occur are mutations to the dominant type during class 3 intervals.  Therefore, there can be no more events in which a type $k-1$ individual mutates to type $k$.  It follows that after time $T_{k,N}^*$, all individuals in the population of type $k$ and higher are descended from the individual that acquired the beneficial mutation at time $T_{k,N}^*$.  The same argument applies to individuals of type $-k$.
\end{proof}

\begin{proof}[Proof of Proposition \ref{gendistance}]
As observed in the proof of Proposition \ref{mutdistance}, on the event $G_N$, one type will dominate the population in generation $\chi_{N,t}^+$, and the dominant type
will be the largest value of $k$ for which $T_{k,N}^* < \chi_{N,t}^+$.  Therefore, by Lemma \ref{genealogy}, on $G_N$ all type $k$ individuals at time $\chi_{N,t}^+$ are descended from the individual that acquired the beneficial mutation at time $T_{k,N}^*$.
As noted in the proof of Proposition \ref{mutdistance}, the times $(I^+(\rho_N T_{k,N}^*))_{k=1}^{\infty}$ converge in distribution as $N \rightarrow \infty$ to the event times of a homogeneous rate one Poisson process.  Therefore, given $\eps > 0$, there exists $t' > 0$ such that with probability at least $1 - \eps/2$, one of the times $T_{k,N}^*$ is in the interval $[\rho_N^{-1} (t - t'), \rho_N^{-1} t]$.
The result \eqref{gensummer} follows easily from this observation, if we take $t^* > t' + \theta$ in Regime 1 and $t^* > t'$ in Regime 2.  The same argument implies \eqref{genwinter}.

Furthermore, suppose the individual $x_1(\chi_{N,t}^+)$ has type $k \geq 2$, and that this is the dominant type in the population in generation $\chi_{N,t}^+$.  Suppose also that the individual $x_1(\chi_{N,t}^-)$ has type $j \leq -2$, and that this is the dominant type in the population in generation $\chi_{N,t}^-$.
Then, on $G_N$, these individuals must be descended from the individuals that acquired beneficial mutations at the times $T_{2,N}^*$ and $T_{-2,N}^*$ respectively.  After time $T_{2,N}$, type 1 can never be dominant again in the population (although type 0 could be dominant during the winter).  Because, on $G_N$, only the dominant type can acquire a mutation by the definition of $G_{0,N}^*$ in Lemma \ref{mutlem4}, after time $T_{2,N}^*$, no individual can mutate from type 2 to type 1 and then to type 0.  Likewise, after time $T_{-2,N}^*$, no individual can mutate from type $-2$ to type $-1$ and then to type 0.  Therefore, the individuals that acquire beneficial mutations at times $T_{2,N}^*$ and $T_{-2,N}^*$ can not be descended from one another.  
It follows that on $G_N$, the most recent common ancestor of the individuals $x_1(\chi_{N,t}^+)$ and $x_1(\chi_{N,t}^-)$ must live before time $\max\{T_{2,N}^*, T_{-2,N}^*\}$.  The result \eqref{genopp} follows easily from this observation, and the fact that the times $(I^+(\rho_N T_{k,N}^*))_{k=1}^{\infty}$ and $(I^-(\rho_N T_{-k,N}^*))_{k=1}^{\infty}$ converge in distribution as $N \rightarrow \infty$ to the event times of a homogeneous rate one Poisson process.
\end{proof}

\section{Proofs of results for the dormant population}\label{proofsdormant}

In this section we prove Theorems \ref{dormantthm1} and \ref{dormantthm2}, which describe the evolution of the composition of the dormant population from generation $0$ until generation $\lfloor \rho_N^{-1} t_0\rfloor$ in Regimes 1 and 2, respectively. The analysis builds on our knowledge of the evolution of the active population in the same set of generations, and more precisely, on Proposition \ref{activeProp}.

To approximate the number of individuals of a given type $k$ at generation $n$ in the dormant population, we will count how many individuals joined the dormant population for the last time at a generation when $k$ was the dominant type in the active population; Lemma \ref{geomlem} will play a key role in this endeavor. There are three potential sources of error in this approximation:

\begin{enumerate}
\item At the times where $k$ is the dominant type in the active population, there may be a small fraction of individuals of a different type. It is therefore possible that we are considering individuals of a different type in our counting.
\item At the times where a type different from $k$ is the dominant type in the active population, there may be a small proportion of type $k$ individuals. Consequently, we may be missing some type $k$ individuals in our counting.
\item There are $2r_N$ generations around seasonal changes and selective sweeps, during which we lack information about the composition in the active population. We will not consider individuals entering the dormant population during these generations, which could lead to an undercount of type $k$ individuals.
\end{enumerate} 

In Section \ref{simplemodel} we introduce a simplified model with two types that we use to bound the first two types of errors. In this simplified model, one type will play the role of the dominant type, and the other type will model the rest of the population. In Section \ref{tragen} we will show that the third error vanishes as $N\to\infty$. In Section \ref{goodgen} we use Lemma \ref{geomlem} to determine the asymptotic behavior of our approximate counting process. In Section \ref{finaldorm} we put the pieces together and prove Theorems \ref{dormantthm1} and \ref{dormantthm2}.

To achieve this plan, we introduce some notation that will help us to partition the set of generations according to the role (dominant/dominated) of type $k$ in the active population. First, we partition the set of generations into seasons. For $j\in\Nb_0$ we denote by $I_N^+(j)$ and $I_N^-(j)$ the $j$-th summer and winter seasons, respectively, i.e. 
\begin{align*}
I_N^+(j)&\coloneqq [jU_N, jU_N+V_N)\cap\Nb_0\quad\textrm{and}\quad I_N^-(j)\coloneqq [jU_N+V_N, (j+1)U_N)\cap\Nb.
\end{align*}
We also consider the $j$-th \emph{well-established} summer and winter
\begin{align*}
I_N^{+,\circ}(j)&\coloneqq [jU_N+r_N, jU_N+V_N-r_N)\cap\Nb_0\subseteq I_N^+(j),\\  I_N^{-,\circ}(j)&\coloneqq [jU_N+V_N+r_N, (j+1)U_N-r_N)\cap\Nb\subseteq I_N^-(j),
\end{align*}
as well as the corresponding $r_N$\emph{-transition intervals}
\begin{align*}
\partial I_N^{+}(j)&\coloneqq [jU_N-r_N, jU_N+r_N)\cap\Nb_0,\\  \partial I_N^{-}(j)&\coloneqq [jU_N+V_N-r_N, jU_N+V_N+r_N)\cap\Nb.
\end{align*}
Note that the sets $I_N^{+,\circ}(j)$, $I_N^{-,\circ}(j)$, $\partial I_N^{+}(j)$, $\partial I_N^{-}(j)$, $j\in\Nb_0$, form a partition of $\Nb_0$. Next we consider the intervals between selective sweeps. For reasons that will became clear later, it is convenient to work with the times $(T_{\ell,N}^*)_{\ell\in\Zb}$ instead of $(S_{\ell,N}(a))_{\ell\in\Zb}$; their use is justified by Lemma \ref{Tstar}. Define for integers $\ell\neq 0$ the intervals
\begin{align*}
J_{\ell,N}^{\circ}&=[T^*_{\ell,N}+2r_N, T^*_{\ell+\sgn(\ell),N}-r_N]\cap \Nb_0,\\
\partial J_{\ell,N}&=(T_{\ell,N}^*-r_N, T_{\ell,N}^*+2r_N)\cap \Nb_0.
\end{align*}
According to Lemma \ref{Tstar}, on the event $G_N$, for $N$ sufficiently large, if $T_{\ell,N}^*\leq \rho_N^{-1}t_0-r_N$, then
\begin{equation}\label{JsubS}
J_{\ell,N}^{\circ}\subseteq [S_{\ell,N}(\varepsilon)+r_N, S_{\ell+\sgn(\ell),N}(\varepsilon)-r_N].
\end{equation}
Finally, we consider the set of types that have dominated the active population before generation $\lfloor\rho_N^{-1}t_0\rfloor-r_N$.  We define
$$H_{N}(t_0)\coloneqq\{p\in \Zb: \, T_{p,N}^*\leq \rho_N^{-1}t_0-r_N\}.$$

\subsection{A two-types model}\label{simplemodel}

In this section, we propose a simplified version of our model, comprising only two types of individuals: \emph{red} and \emph{blue}.  The dormant population will be updated in each generation, but the composition of the active population remains unchanged.

As in the original model, the active population consists of $N$ individuals, while the dormant population consists of $K_N$ individuals. At each generation, $c_N$ individuals are randomly selected from the active population in the previous generation to undergo duplication. The resulting duplicates replace $c_N$ individuals randomly selected from the dormant population. Consequently, the active and dormant populations maintain a constant size, and the proportion of red and blue individuals in the active population remains unchanged.  Let $\delta \in (0,1)$.  We assume that $\lfloor N \delta \rfloor$ individuals in the active population are red and the rest are blue, so that the fraction of red individuals is approximately $\delta$.
It is noteworthy that, according to Proposition~\ref{activeProp}, in the original model the active population is most of the time dominated by a single type of individuals. Consequently, if a given type is dominant (resp. dominated) in the active population, we can approximate the behavior of the dormant population by our simplified model with a value of $\delta$ close to $1$ (resp. close to $0$). Therefore, we are interested in describing the evolution, in the simplified model, of the process $\dZ$ that keeps track of the number of red individuals in the dormant population.

Note that, by the same reasoning used in Remark \ref{geomrem}, the probability that an individual in the dormant population in generation $n$ is a copy of a member of the active population in generation $n - \ell$ is
\begin{equation}\label{newswitching}
\frac{c_N}{K_N} \bigg(1 - \frac{c_N}{K_N} \bigg)^{\ell - 1}.
\end{equation}
In this case, we say that the individual switched to the dormant population in generation $n - \ell$ (even though the individual first appears in the dormant population in generation $n - \ell + 1$).

The main result in this subsection is Lemma \ref{dormantred} below, which bounds the proportion of red individuals in the dormant population when $\delta$ is small.  We first need the following simple lemma about choosing balls from urns.

\begin{Lemma}\label{urnlem}
Consider $m$ urns, each containing $a$ red balls and $b$ blue balls.  For $i \in \{1, \dots, m\}$, choose $d_i$ balls from urn $i$, and for $i \in \{1, \dots, m\}$ and $j \in \{1, \dots, d_i\}$, let $A_{i,j}$ be the event that the $j$th ball chosen from urn $i$ is red.  Then the events $A_{i,j}$ are negatively correlated.
\end{Lemma}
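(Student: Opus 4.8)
The plan is to exploit two structural features: the independence of the draws across distinct urns, and the exchangeability of the sequence of draws within a single urn. Recall from \eqref{negcordef} that negative correlation of the family $\{A_{i,j}\}$ means that for every set $S$ of index pairs we must verify $P(\bigcap_{(i,j)\in S} A_{i,j}) \le \prod_{(i,j)\in S} P(A_{i,j})$.

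First I would reduce the problem to a single urn. Since the balls are drawn independently from the $m$ urns, for any set $S$ of index pairs, writing $S_i = \{j : (i,j) \in S\}$, the joint probability factorizes as
$$P\Big(\bigcap_{(i,j)\in S} A_{i,j}\Big) = \prod_{i=1}^m P\Big(\bigcap_{j\in S_i} A_{i,j}\Big),$$
and the right-hand product in the claim splits accordingly as $\prod_{i=1}^m \prod_{j\in S_i} P(A_{i,j})$. Hence it suffices to prove, for each fixed urn separately, that the events $A_{i,1}, \dots, A_{i,d_i}$ are negatively correlated.

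Next, fixing an urn and suppressing the index $i$, I would use that drawing $d$ balls without replacement produces an exchangeable sequence of colors. By symmetry $P(A_j) = a/(a+b)$ for every $j$, and for any $J \subseteq \{1,\dots,d\}$ with $|J| = k$ the probability $P(\bigcap_{j\in J} A_j)$ depends only on $k$ and equals the chance that the first $k$ draws are all red, namely
$$\prod_{\ell=0}^{k-1} \frac{a-\ell}{a+b-\ell}.$$
The key observation is that each factor satisfies $\frac{a-\ell}{a+b-\ell} \le \frac{a}{a+b}$, which upon cross-multiplication reduces to the trivial inequality $-\ell b \le 0$. Multiplying these bounds yields $P(\bigcap_{j\in J} A_j) \le (a/(a+b))^k = \prod_{j\in J} P(A_j)$, which is precisely negative correlation within the urn (the degenerate case $k > a$ is covered automatically, since then one factor vanishes and the left side is zero).

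There is no genuine obstacle in this argument; it is elementary, and the only points requiring care are the bookkeeping in the factorization across urns and the monotonicity of the falling-factorial factors. Combining the single-urn bound with the cross-urn factorization gives the lemma.
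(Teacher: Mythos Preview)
Your proof is correct and follows essentially the same approach as the paper's own proof: both factor across urns by independence, reduce the within-urn claim to the probability that the first $k$ draws are all red (the paper phrases this as ``any subcollection still corresponds to choosing $d_i'$ balls from urn $i$'', you phrase it via exchangeability), and then bound the falling-factorial product $\prod_{\ell=0}^{k-1}\frac{a-\ell}{a+b-\ell}$ termwise by $\bigl(\tfrac{a}{a+b}\bigr)^k$.
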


\begin{proof}
Considering any subcollection of the events $A_{i,j}$ still corresponds to choosing $d_i'$ balls from urn $i$ for some $d_1', \dots, d_m'$.  Therefore, it suffices to verify \eqref{negcordef} when $I$ includes all events in the collection.  Because the colors of balls chosen from different urns are independent,$$P \bigg( \bigcap_{i=1}^m \bigcap_{j=1}^{d_1} A_{i,j} \bigg) = \prod_{i=1}^m P \bigg( \bigcap_{j=1}^{d_i} A_{i,j} \bigg).$$
Furthermore, $$P \bigg( \bigcap_{j=1}^{d_i} A_{i,j} \bigg) = \frac{a}{b} \cdot \frac{a-1}{b-1} \cdot \ldots \cdot \frac{a - (d_i - 1)}{b - (d_i - 1)} \leq \bigg( \frac{a}{b} \bigg)^{d_i} = \prod_{i=1}^{d_i} P(A_{i,j}).$$
It follows that the events $A_{i,j}$ are negatively correlated.
\end{proof}

\begin{Lemma}\label{dormantred}
Suppose $\dZ(0) = 0$.  For any $\delta,\eta\in(0,1)$, we have in both regimes 
\begin{equation}\label{reddelta}
\lim_{N\to\infty} P\left(\sup_{m \leq \rho_N^{-1}t_0}\left\lvert \frac{Z_N^{(\delta)}(m)}{K_N}\right\rvert \geq (1+\eta)\,\delta\right)=0.
\end{equation}
\end{Lemma}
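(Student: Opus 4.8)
The plan is to use the last-switching representation \eqref{newswitching} together with the negative-correlation machinery of Lemmas \ref{urnlem} and \ref{negcorlem}. Fix a generation $n \le \rho_N^{-1} t_0$, and let $D_n$ be the set of dormant slots that have switched at least once by generation $n$. Since $Z_N^{(\delta)}(0) = 0$, every individual that is red in the dormant population at generation $n$ must be the copy, made at its most recent switch into the seed bank, of a red active individual; the slots that have never switched are all blue and contribute nothing. Writing $C_{j,m}$ for the indicator that the $j$th of the $c_N$ active individuals copied in generation $m$ is red, and letting $(j^*(k), m^*(k))$ record the position and generation of the last switch of slot $k \in D_n$, I would express
$$Z_N^{(\delta)}(n) = \sum_{k \in D_n} C_{j^*(k), m^*(k)}.$$
Because within one generation the $c_N$ copies fill $c_N$ distinct slots, the map $k \mapsto (j^*(k), m^*(k))$ is injective, so this is a sum over distinct index pairs. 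As a preliminary sanity check, each copy is red with probability $\lfloor N\delta\rfloor/N \le \delta$, so by the computation behind Remark \ref{geomrem} the marginal probability that a given slot is red at generation $n$ is at most $\delta$, giving $E[Z_N^{(\delta)}(n)] \le K_N\delta$.

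The key step is to extract negative correlation so that Lemma \ref{negcorlem} applies. I would condition on the dormant-side randomness $\mathcal{S}$, namely the refill tuples $S^d_{\cdot,m,N}$ for $m = 0, \dots, n-1$, which determines $D_n$ and the injective map above. The colors $C_{j,m}$ depend only on the active-copy tuples $S^a_{\cdot,m,N}$, which are independent of $\mathcal{S}$, so conditioning does not change their joint law. Viewing each generation as an urn holding $\lfloor N\delta\rfloor$ red and $N - \lfloor N\delta\rfloor$ blue balls from which $c_N$ are drawn without replacement, Lemma \ref{urnlem} shows the events $\{C_{j,m} = 1\}$ are negatively correlated; restricting to the distinct pairs appearing in the sum preserves this. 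Thus, conditionally on $\mathcal{S}$, $Z_N^{(\delta)}(n)$ is a sum of negatively correlated Bernoulli$(\lfloor N\delta\rfloor/N)$ variables with conditional mean $\mu_n = |D_n| \cdot \lfloor N\delta\rfloor/N \le K_N\delta$.

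Next I would run the concentration and union bounds. If $|D_n| < (1+\eta)K_N\delta$, then $Z_N^{(\delta)}(n) \le |D_n| < (1+\eta)K_N\delta$ deterministically, so the event is impossible; otherwise $\mu_n \ge (1+\eta)K_N\delta \cdot \lfloor N\delta\rfloor/N \ge (1+\eta)K_N\delta^2/2$ for large $N$, and since $\mu_n \le K_N\delta$ we have $\{Z_N^{(\delta)}(n) \ge (1+\eta)K_N\delta\} \subseteq \{Z_N^{(\delta)}(n) \ge (1+\eta)\mu_n\}$. Hence Lemma \ref{negcorlem} (part 1 if $\eta \le 1$, part 3 if $\eta > 1$) yields
$$P\big(Z_N^{(\delta)}(n) \ge (1+\eta)K_N\delta \,\big|\, \mathcal{S}\big) \le e^{-(\eta^2 \wedge \eta)\mu_n/3} \le e^{-c(\eta,\delta) K_N}$$
for some $c(\eta,\delta) > 0$. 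Taking expectations over $\mathcal{S}$ removes the conditioning, and a union bound over the $\lfloor \rho_N^{-1} t_0\rfloor + 1 \ll N^{b-1}$ generations (by \eqref{totalgen}) gives a bound of order $\rho_N^{-1} e^{-c(\eta,\delta) K_N}$. This tends to $0$ because $K_N \gg \log N$ in both regimes: in Regime 1, $K_N \gtrsim \rho_N^{-1} \gg \log N$ by \eqref{Reg1Asm1} and \eqref{rhoto0}; in Regime 2, $K_N \gg c_N U_N \ge U_N \gg \log N$ by \eqref{Reg2Asm} and \eqref{largeU}.

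The main obstacle is the negative-correlation step: one must cleanly separate the two independent sources of randomness — the dormant-side switching structure and the active-side coloring — and verify that, after conditioning on the former, the surviving last-copies map injectively onto distinct entries of the color family controlled by Lemma \ref{urnlem}, so that the relevant indicators genuinely form a negatively correlated collection to which Lemma \ref{negcorlem} applies. Everything else is a routine mean computation and a union bound, which succeeds precisely because the per-generation failure probability decays faster than any power of $N$ while the number of generations is only polynomial in $N$.
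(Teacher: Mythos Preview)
Your proposal is correct and follows essentially the same approach as the paper: condition on the dormant-side switching structure, invoke Lemma~\ref{urnlem} to get negative correlation of the colors, apply Lemma~\ref{negcorlem} for a stretched-exponential tail bound, and finish with a union bound over the polynomially many generations using $K_N\gg\log N$. The only cosmetic difference is how the random conditional mean $\mu_n=|D_n|\lfloor N\delta\rfloor/N$ is handled: the paper pads $Z_N^{(\delta)}(m)$ by an independent binomial for the never-switched slots so that the augmented mean equals $K_N\lfloor N\delta\rfloor/N$ exactly, whereas you split off the case $|D_n|<(1+\eta)K_N\delta$ deterministically and bound $\mu_n$ from below in the complementary case; both devices achieve the same end.
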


\begin{proof}
Label the dormant individuals $1, \dots, K_N$.  Let $S_m$ be the $c_N$-element subset of $\{1, \dots, K_N\}$ corresponding to the individuals in the dormant population that are replaced in generation $m$.  For $1 \leq k \leq K_N$, let $M_k = \max\{j \leq m: k \in S_j\}$, and let $M_k = 0$ if $k \notin S_1 \cup \dots \cup S_m$.  Then the $k$th dormant individual had its color selected at random from the active population in generation $M_k - 1$, provided that $M_k \neq 0$.  For $i \in \{1, \dots, m\}$, let $d_i$ be the cardinality of $\{k: M_k = i\}$.  Then, conditional on $S_1, \dots, S_m$, the distribution of $\dZ(m)$ is the same as the distribution of the number of red balls chosen if we have $m$ urns with $\lfloor N \delta \rfloor$ red balls and $N - \lfloor N \delta \rfloor$ blue balls and choose $d_i$ balls from the $i$th urn.  Let $Z_N$ be obtained by adding to $\dZ(m)$ an independent binomial random variable with parameters $K_N - \sum_{i=1}^m d_i$ and $\lfloor N \delta \rfloor/N$, which we can think of as arising by picking one ball from each of $K_N - \sum_{i=1}^m d_i$ additional urns, so that we select $K_N$ balls in total.  By part 1 of Lemma \ref{negcorlem} and Lemma \ref{urnlem},
$$P \big( \dZ(m) > (1 + \eta) \delta K_N \big) \leq P(Z_N > (1 + \eta) \delta K_N) \leq P(Z_N - E[Z_N] > \eta E[Z_N]) \leq e^{-\eta^2 E[Z_N]/3}.$$
Using that $E[Z_N] = K_N \lfloor N \delta \rfloor /N$, we get
$$P\left(\sup_{m \leq \rho_N^{-1}t_0}\left\lvert \frac{Z_N^{(\delta)}(m)}{K_N}\right\rvert \geq (1+\eta)\,\delta\right) \leq \rho_N^{-1} t_0 \exp \bigg(- \frac{\eta^2 \lfloor N \delta \rfloor}{3N} K_N \bigg).$$
Because $\rho_N^{-1} t_0 \ll N^{b-1}$ by \eqref{totalgen} and $K_N \gtrsim U_N c_N \gg \log N$ by \eqref{largeU}, \eqref{Reg1Asm1}, \eqref{Reg1Asm2}, and \eqref{Reg2Asm}, the result \eqref{reddelta} follows.
\end{proof}

\subsection{The contribution of the seasonal and selective sweep transitions}\label{tragen}

In this section, we will show that the number of individuals joining the dormant population at generations in the vicinity of a season change or a selective sweep vanishes as $N\to\infty$. To make this precise we define the sets
\begin{align*}
\partial I_N^+&\coloneqq\cup_{j\in\Nb_0}\partial I_N^+(j),\quad \partial I_N^-\coloneqq\cup_{j\in\Nb_0}\partial I_N^-(j),\\
\quad\partial J_N&\coloneqq\cup_{\ell\in H_N(t_0)}\partial J_{\ell,N}, \quad \partial_N\coloneqq (\rho_N^{-1}t_0-r_N,\rho_N^{-1}t_0]\cap\Nb_0, 
\end{align*}
and set
$$\partial \Gamma_{N}\coloneqq \partial I_N^+\cup \partial I_N^- \cup \partial J_{N}\cup \partial_N.$$
Following the notation in Lemma \ref{geomlem}, if $\Gamma\subset \Nb_0$, we denote by $Y_{\Gamma}(n)$ the number of individuals at generation $n$ in the dormant population that switched to the dormant population for the last time before $n$ at a generation in $\Gamma.$ 

\begin{Lemma}\label{lemtrans}
In both regimes, 
$$\sup_{n\leq \rho_N^{-1}t_0}\frac{Y_{\partial \Gamma_{N}}(n)}{K_N}\to_p 0.$$
\end{Lemma}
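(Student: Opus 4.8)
The plan is to recognize $Y_{\partial\Gamma_N}(n)$ as an instance of the quantity $Y_T$ in Lemma \ref{geomlem} via a time reversal, and then to split $\partial\Gamma_N$ into its deterministic seasonal part and its two remaining parts, handling them by different arguments. Fix $n\le\rho_N^{-1}t_0$ and let $\tilde S_j=\{S^d_{1,n-j,N},\dots,S^d_{c_N,n-j,N}\}$ be the set of dormant labels replaced in generation $n-j$; these are i.i.d.\ uniform $c_N$-subsets of $\{1,\dots,K_N\}$. With $M_k=\min\{j:k\in\tilde S_j\}$, the last generation before $n$ at which label $k$ switched into the seed bank is $n-M_k$, so $Y_\Gamma(n)=\sum_{k=1}^{K_N}\1_{\{M_k\in T_n(\Gamma)\}}$ with $T_n(\Gamma)=\{n-g:g\in\Gamma,\ 0\le g<n\}$. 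By Remark \ref{geomrem}, writing $p:=c_N/K_N$ and letting $G$ be geometric with parameter $p$, we have $E[Y_\Gamma(n)]=K_N\,P(G\in T_n(\Gamma))=c_N\sum_{g\in\Gamma,\,g<n}(1-p)^{n-g-1}$. Since $\partial\Gamma_N$ is a union, I would bound $Y_{\partial\Gamma_N}(n)\le Y_{\partial I_N^+\cup\partial I_N^-}(n)+Y_{\partial J_N}(n)+Y_{\partial_N}(n)$ and treat each term.

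The parts $\partial J_N$ and $\partial_N$ I would dispose of by a pathwise counting bound that sidesteps the dependence between the sweep locations and the switching variables. Exactly $c_N$ labels are replaced per generation, so at most $c_NL$ distinct labels can have their last switch inside an interval of length $L$. As each $\partial J_{\ell,N}$ has length $3r_N$ and $\partial_N$ has length $r_N$, this gives, on every path and uniformly in $n$, $Y_{\partial J_N}(n)\le 3r_Nc_N|H_N(t_0)|$ and $Y_{\partial_N}(n)\le r_Nc_N$. Now $r_Nc_N/K_N\to0$: in Regime 1, $r_Nc_N/K_N=(r_N/U_N)(c_NU_N/K_N)\to0$ by (\ref{rNdef}) and (\ref{Reg1Asm1})--(\ref{Reg1Asm2}), while in Regime 2, $c_NU_N/K_N\ll1$ by (\ref{Reg2Asm}), so $r_Nc_N/K_N\ll r_N/U_N\to0$. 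Moreover $|H_N(t_0)|$ is tight, since by Theorem \ref{Poissontheo} and Lemma \ref{Tstar} it converges in distribution to the almost surely finite number of Poisson points in $[0,t_0]$. Hence $\sup_n\big(Y_{\partial J_N}(n)+Y_{\partial_N}(n)\big)/K_N\to_p0$.

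The seasonal term is the crux and forces the use of the geometric decay rather than total length, since there are of order $\rho_N^{-1}/U_N\to\infty$ season changes whose transition intervals have total length $\asymp(\rho_N^{-1}/U_N)r_N$, too large for a naive bound, especially in Regime 2. Writing $\Gamma^{\mathrm s}=\partial I_N^+\cup\partial I_N^-$, whose transition generations form two clusters of width $2r_N$ in each window of $U_N$ consecutive generations, I would group the sum $E[Y_{\Gamma^{\mathrm s}}(n)]/K_N=P(G\in T_n(\Gamma^{\mathrm s}))$ by years; bounding the weight of each of the $\le 4r_N$ transition generations of a year by its largest value and summing the resulting geometric series over years gives
$$\frac{E[Y_{\Gamma^{\mathrm s}}(n)]}{K_N}\le\frac{4r_Np}{1-(1-p)^{U_N}}\le\frac{8r_Np}{\min(pU_N,1)},$$
using $1-(1-p)^{U_N}\ge1-e^{-pU_N}\ge\tfrac12\min(pU_N,1)$. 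If $pU_N\le1$ this is $8r_N/U_N\to0$, and if $pU_N>1$ (possible only in Regime 1) it is $\le8r_Np=8(r_N/U_N)(c_NU_N/K_N)\to0$. Thus $\psi_N:=\sup_nE[Y_{\Gamma^{\mathrm s}}(n)]/K_N\to0$, and because $\Gamma^{\mathrm s}$ is deterministic and independent of the switching sets, Lemma \ref{geomlem} applies directly: fixing $\delta>0$, for large $N$ we have $\psi_N<\delta/2$, so for each $n$ with $E[Y_{\Gamma^{\mathrm s}}(n)]>0$ the choice $\eps=\delta K_N/E[Y_{\Gamma^{\mathrm s}}(n)]-1>1$ yields $P(Y_{\Gamma^{\mathrm s}}(n)\ge\delta K_N)\le2e^{-(\delta K_N-E[Y_{\Gamma^{\mathrm s}}(n)])/3}\le2e^{-\delta K_N/6}$. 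A union bound over the at most $1+\rho_N^{-1}t_0$ values of $n$ then gives $P(\sup_nY_{\Gamma^{\mathrm s}}(n)\ge\delta K_N)\le(1+\rho_N^{-1}t_0)\,2e^{-\delta K_N/6}$, which tends to $0$ because $\rho_N^{-1}t_0\ll N^{b-1}$ by (\ref{totalgen}) while $K_N\gtrsim U_Nc_N\gg\log N$ by (\ref{largeU}) and (\ref{Reg1Asm1})--(\ref{Reg2Asm}), so the stretched-exponential beats the power of $N$.

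Combining the three estimates proves the lemma. The main obstacle is the seasonal term in Regime 2: since the environment changes many times on the scale $\rho_N^{-1}$, any bound through the total length of the transition intervals diverges, and one must instead exploit that individuals who entered the seed bank during \emph{old} transitions have, with overwhelming probability, already cycled back out by generation $n$; this is exactly what the year-by-year resummation of the geometric weights quantifies. A secondary subtlety is that $\partial J_N$ is not independent of the switching mechanism, which is why I use the deterministic pathwise bound there together with the tightness of $|H_N(t_0)|$, rather than applying Lemma \ref{geomlem} to that part.
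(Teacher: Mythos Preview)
Your argument is correct and follows the same overall skeleton as the paper's proof (split into seasonal, sweep, and terminal transition pieces; for the seasonal piece, resum the geometric weights year by year and conclude with Lemma~\ref{geomlem} and a union bound), but you handle the non-seasonal pieces differently. The paper treats $\partial_N$ and $\partial J_N$ exactly like $\partial I_N^\pm$: it computes $E_*[Y_\Gamma(n)]/K_N$, shows it vanishes, and then applies Lemma~\ref{geomlem} pointwise (for $\partial J_N$ this requires conditioning on the $T_{k,N}^*$, which the paper mentions but does not spell out). You instead use the deterministic pathwise bound $Y_\Gamma(n)\le c_N|\Gamma\cap[0,n)|$, giving $Y_{\partial_N}(n)\le r_Nc_N$ and $Y_{\partial J_N}(n)\le 3r_Nc_N|H_N(t_0)|$, and then invoke $r_Nc_N/K_N\to0$ together with tightness of $|H_N(t_0)|$. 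This is more elementary and, as you note, neatly sidesteps the dependence between the sweep locations $T_{k,N}^*$ and the switching variables $S^d_{\cdot}$, a subtlety the paper has to address by conditioning. Your seasonal estimate is essentially the paper's, packaged with the clean inequality $1-(1-p)^{U_N}\ge\tfrac12\min(pU_N,1)$ that covers both regimes at once. Both approaches are of comparable length; yours trades a second application of the concentration lemma for a tightness argument on $|H_N(t_0)|$.
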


\begin{proof}
Note that
$$Y_{\partial \Gamma_N}(n)\leq Y_{\partial_N}(n)+ Y_{\partial I_N^+}(n)+ Y_{\partial I_N^-}(n)+ Y_{\partial J_N}(n).$$
We start by analyzing the first term. By its definition, $Y_{\partial_N}(n)$ equals $0$ for $n\leq\rho_N^{-1} t_0-r_N+1$. For
$n= \lfloor \rho_N^{-1}t_0 \rfloor-r_N+\ell$ with $2\leq\ell \leq r_N$, we have

$$\frac{E[Y_{\partial_N}(n)]}{K_N}=\sum_{j=0}^{\ell-2}\frac{c_N}{K_N}\bigg(1-\frac{c_N}{K_N}\bigg)^{j}\leq 1-\bigg(1-\frac{c_N}{K_N}\bigg)^{r_N}.$$
Since in both regimes $r_N\ll K_N/c_N$, we conclude that, as $N\to\infty$, 
\begin{equation}\label{EpN}
\sup_{n\leq \rho_N^{-1}t_0}\frac{E[Y_{\partial_{N}}(n)]}{K_N}\to 0.
\end{equation}
Moreover, using Lemma \ref{geomlem}, we have for $\varepsilon>0$ and $N$ sufficiently large that
$$P(\lvert Y_{\partial_N}(n)-E[Y_{\partial_N}(n)]\rvert > \varepsilon K_N)\leq 2e^{-\varepsilon K_N/3}.$$
Since in both regimes $r_N\ll K_N$, we conclude that
\begin{align}
P\bigg(\sup_{n\leq \rho_N^{-1}t_0} \left\lvert \frac{Y_{\partial_N}(n)}{K_N}-\frac{E[Y_{\partial_N}(n)]}{K_N}\right\rvert>\varepsilon\bigg)&\leq \sum_{n= \lfloor \rho_N^{-1}t_0 \rfloor-r_N+2}^{\lfloor \rho_N^{-1}t_0 \rfloor}P(\lvert Y_{\partial_N}(n)-E[Y_{\partial_N}(n)]\rvert > \varepsilon K_N)\nonumber\\
&\leq 2r_N e^{-\varepsilon K_N/3}\to 0,\label{pN}
\end{align}
as $N\to\infty$. Combining \eqref{EpN} and \eqref{pN}, we infer that $\sup_{n\leq \rho_N^{-1}t_0}{Y_{\partial_N}(n)}/{K_N}$ converges to $0$ in probability as $N\to\infty.$

Let us now consider the second term. Observe that the map $n\mapsto  Y_{\partial I_N^+}(n)$ is non-increasing on sets of the form $[iU_N+r_N+1, (i+1)U_N-r_N+1)$. Therefore, it suffices to consider generations of the form $n=iU_N-r_N+\ell$ for some $i\leq \rho_N^{-1}t_0/U_N$ and $1 \leq \ell \leq 2r_N$. For such an $n$, we count  the individuals entering the seedbank in $\partial I_N^+(i)$ and those that joined the seedbank in $\partial I_N^+(i-j)$ for some $1\leq j\leq i$, and obtain  
\begin{align*}
\frac{E\big[Y_{\partial I_N^+}(iU_N-r_N+\ell)\big]}{K_N}&\leq 1-\bigg(1-\frac{c_N}{K_N}\bigg)^{\ell}+\sum_{j=1}^i \bigg(1-\frac{c_N}{K_N}\bigg)^{\ell+jU_N-2r_N}\Bigg(1-\bigg(1-\frac{c_N}{K_N}\bigg)^{2r_N}\Bigg)\\
&\leq \frac{2 c_N r_N}{K_N} \cdot \frac{\big(1-\frac{c_N}{K_N}\big)^{-2r_N}}{1-\big(1-\frac{c_N}{K_N}\big)^{U_N}}.
\end{align*}
In Regime 1, the right-hand side is $O(c_Nr_N/K_N)$, and in Regime 2, the right-hand side is $O(r_N/U_N)$. In both cases, we obtain 
\begin{equation}\label{EpIN+}
\sup_{n\leq \rho_N^{-1}t_0}\frac{E[Y_{\partial I_N^+}(n)]}{K_N}\to 0.
\end{equation}
Another application of Lemma \ref{geomlem} yields for $\varepsilon>0$ and $N$ sufficiently large
$$P(\lvert Y_{\partial I_N^+}(n)-E[Y_{\partial I_N^+}(n)]\rvert > \varepsilon K_N)\leq 2e^{-\varepsilon K_N/3}.$$
Hence, 
\begin{align}
P\bigg(\sup_{n\leq \rho_N^{-1}t_0} \left\lvert \frac{Y_{\partial I_N^+}(n)}{K_N}-\frac{E[Y_{\partial I_N^+}(n)]}{K_N}\right\rvert>\varepsilon\bigg)&\leq \sum_{n=0}^{\rho_N^{-1}t_0}P(\lvert Y_{\partial I_N^+}(n)-E[Y_{\partial I_N^+}(n)]\rvert > \varepsilon K_N)\nonumber\\
&\leq t_0 \rho_N^{-1} e^{-\varepsilon K_N/3}\to 0,\label{p+N}
\end{align}
where in the last step, we used that $\rho_N^{-1}$ is $O(K_N)$ in Regime 1, and that in Regime 2, we have $K_N\gg U_N\gg\log N$ and $\rho_N^{-1}\ll N^{b-1}$ for some $b>1$.  
The last two terms can be treated analogously, but for the last term, we need to work conditionally on the times $T_{k,N}^*$.
\end{proof}

\subsection{The contribution of the good generations}\label{goodgen}

In this section, we are interested in the number of individuals of a given type $k$ that switched to the dormant population for the last time during a well-established season where $k$ was the dominant type in the active population. To make this precise, define for $k\in\Zb$ and $j\in\Nb_0$
\[
  \Ib_{k,N}(j) =
  \begin{cases}
    I_N^{+,\circ}(j)\cap J_{k,N}^\circ & \text{if $k>0$}, \\
    I_N^{-,\circ}(j)\cap J_{k,N}^\circ & \text{if $k<0$},\\ 
    [0,T_{1,N}^*-r_N]\cap I_N^{+,\circ}(j)\cup [0,T_{-1,N}^*-r_N]\cap I_N^{-,\circ}(j)& \text{if $k=0$},\\
  \end{cases}
\]
and let 
$$\Gamma_{k,N}\coloneqq \cup_{j \in \Nb_0} \Ib_{k,N}(j).$$
be the set of generations belonging to a well-established season, where type $k$ was dominating the active population.

We consider first Regime 1. As a preliminary step towards establishing the convergence result for $Y_{\Gamma_{k,N}}$, we introduce suitable analogues to the functions $d_{k,N}$ and $y_{k,N}$. Specifically, for $k\in\Zb$ and $s\ge 0$, we define $d_{k,N}^*$ via
\begin{displaymath}
d_{k,N}^*(s) = \left\{
\begin{array}{ll}  \1_{\{T_{k,N}^* \leq s < T_{k+1,N}^*\}}\1_{\{R_N(\lfloor s \rfloor)=1\}} & \mbox{ if }k \geq 1  \\
\1_{\{T_{k,N}^* \leq s < T_{k-1,N}^*\}} \1_{\{R_N(\lfloor s \rfloor)=-1\}} & \mbox{ if }k \leq -1 \\
\1_{\{t < T_{1,N}^*\}} \1_{\{R_N(\lfloor s \rfloor)=1\}} + \1_{\{t <T_{-1,N}^*\}}\1_{\{R_N(\lfloor s \rfloor)=-1\}}  & \mbox{ if }k = 0,
\end{array} \right.
\end{displaymath}
and $y_{k,N}^*$ as the solution of the differential equation
$$\frac{\dd}{\dd s}y_{k,N}^*(s)=\frac{c_N}{K_N}(d_{k,N}^*(s)-y_{k,N}^*(s)),\qquad y_{k,N}^*(0)=\1_{\{k=0\}}.$$
We start with an intermediate result dealing with  
$$\gamma_{k,N}(m)\coloneqq \frac{E_*[Y_{\Gamma_{k,N}}(m)]}{K_N},\quad m\in\Nb_0,$$
where $E_*$ denotes the conditional expectation with respect to $T_{k,N}^*$ and $T_{k+\sgn(k),N}^*$ if $k\neq 0$ and with respect to $T_{1,N}^*$ and $T_{-1,N}^*$ for $k=0$.

\begin{Lemma}\label{expectode}
For any $k\in\Zb$, almost surely
$$\sup_{t\in[0,t_0]}|y_{k,N}^*(\lfloor \rho_N^{-1}t\rfloor)-\gamma_{k,N}(\lfloor \rho_N^{-1}t\rfloor)|\xrightarrow[N\to\infty]{}0.$$
\end{Lemma}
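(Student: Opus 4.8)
The plan is to make both sides explicit and then treat the difference as a Riemann-sum-versus-integral comparison, absorbing the boundary mismatch into the transition estimates of Lemma \ref{lemtrans}. Write $\lambda_N = c_N/K_N$. Conditioning on the sweep times and using the geometric switching law \eqref{newswitching} (as in Remark \ref{geomrem}), each of the $K_N$ dormant individuals last switched at generation $n<m$ with probability $\lambda_N(1-\lambda_N)^{m-n-1}$, independently of its type, so that
$$\gamma_{k,N}(m) = \1_{\{k=0\}}(1-\lambda_N)^m + \sum_{\substack{n\in\Gamma_{k,N}\\ n<m}}\lambda_N(1-\lambda_N)^{m-n-1},$$
where for $k=0$ the first term records the dormant individuals present at generation $0$ (all of type $0$) that have not yet switched by generation $m$. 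Solving the defining equation with the integrating factor $e^{\lambda_N s}$ gives
$$y_{k,N}^*(s)=\1_{\{k=0\}}e^{-\lambda_N s}+\lambda_N\int_0^s e^{-\lambda_N(s-u)}d_{k,N}^*(u)\,\dd u.$$
In Regime 1 we have $\lambda_N\to 0$ while $\lambda_N\rho_N^{-1}\to\alpha$, so for $m\le\rho_N^{-1}t_0$ the exponent $\lambda_N m$ remains bounded; this is precisely the scaling under which the discrete sum and the integral are comparable.

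The comparison then rests on three estimates, each uniform over $m\le\rho_N^{-1}t_0$. First, since $\log(1-\lambda_N)=-\lambda_N+O(\lambda_N^2)$ and $\lambda_N j=O(1)$ for $j\le m$, one may replace $(1-\lambda_N)^{m-n-1}$ by $e^{-\lambda_N(m-n)}$ and $(1-\lambda_N)^m$ by $e^{-\lambda_N m}$ up to a uniform factor $1+o(1)$. Second, the support $D_{k,N}=\{n:d_{k,N}^*(n)=1\}$ differs from $\Gamma_{k,N}$ only by generations within $2r_N$ of a season boundary or a sweep time, i.e. $D_{k,N}\setminus\Gamma_{k,N}\subseteq\partial I_N^+\cup\partial I_N^-\cup\partial J_N$; hence replacing $\Gamma_{k,N}$ by $D_{k,N}$ changes the sum by at most $E_*[Y_{\partial I_N^+}(m)+Y_{\partial I_N^-}(m)+Y_{\partial J_N}(m)]/K_N$, which the computation in the proof of Lemma \ref{lemtrans} bounds by $O(c_Nr_N/K_N)$, and $c_Nr_N/K_N=\tfrac{c_N}{K_N\rho_N}\,\rho_Nr_N\to\alpha\cdot 0=0$ since $r_N\ll U_N$ and $\rho_NU_N\to\theta$. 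Third, $\sum_{n<m}\lambda_N e^{-\lambda_N(m-n)}d_{k,N}^*(n)$ is the left Riemann sum of $\lambda_N\int_0^m e^{-\lambda_N(m-u)}d_{k,N}^*(u)\,\dd u$; because $e^{-\lambda_N(m-u)}$ varies by a factor $1+O(\lambda_N)$ on each unit interval while $\lambda_N\int_0^m e^{-\lambda_N(m-u)}\,\dd u\le 1$, and $d_{k,N}^*$ is constant on unit intervals except on the at most $|H_N(t_0)|$ intervals containing a (non-integer) sweep time, the Riemann error is $O(\lambda_N(1+|H_N(t_0)|))$. Combining the three estimates, together with the matching homogeneous terms in the case $k=0$, gives the uniform bound and hence the claim.

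The main obstacle is the second estimate: quantifying how much is lost by trimming the $r_N$-collars out of $\Gamma_{k,N}$ relative to the full support of $d_{k,N}^*$. This is exactly the geometric-decay bookkeeping of Lemma \ref{lemtrans}, whose decisive inputs are $c_Nr_N/K_N\to0$ and the fact that, for almost every realization of the sweep times, the number $|H_N(t_0)|$ of sweeps before $\rho_N^{-1}t_0-r_N$ stays bounded as $N\to\infty$ (via Theorem \ref{Poissontheo} and the coupling of Section \ref{earlycouple}). Together these ensure that the collar contributions, summed over all seasons and sweeps up to time $t_0$ with their geometric weights, vanish almost surely and uniformly in $m$, which upgrades the convergence to the almost sure statement required.
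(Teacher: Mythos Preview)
Your approach is correct and takes a somewhat different route from the paper's. The paper solves both $y_{k,N}^*$ and $\gamma_{k,N}$ explicitly season by season, obtaining coupled recursions of the form $y_n^\downarrow=\widehat a_N\,y_{n-1}^\uparrow$, $\gamma_n^\downarrow=a_N\,\gamma_{n-1}^\uparrow$ (and similarly at the summer--winter transitions), and then compares these recursions term by term, iterating over the bounded (in Regime~1) number of years. You instead write the global variation-of-constants formulas for both quantities and compare them directly via three estimates: geometric versus exponential weights, the support mismatch on the $r_N$-collars, and a Riemann-sum error. The key ingredients are the same in both arguments---the bound $c_Nr_N/K_N\to0$ controlling the collar contribution and the uniform estimate $|(1-\lambda_N)^j-e^{-\lambda_N j}|\to0$ for $j\le\rho_N^{-1}t_0$---but your packaging is more global and bypasses the explicit season-by-season bookkeeping. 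Two minor remarks: first, the sweep times $T_{k,N}^*$ are integers by definition, so $d_{k,N}^*$ is constant on every unit interval and the caveat in Step~3 about ``non-integer sweep times'' is unnecessary (the Riemann error is simply $O(\lambda_N)$); second, for fixed $k$ only the two transition intervals $\partial J_{k,N}$ and $\partial J_{k\pm1,N}$ enter the support mismatch in Step~2, so the error bound there is already deterministic given the sweep times---you do not need $|H_N(t_0)|$ to stay bounded, and invoking Theorem~\ref{Poissontheo} (which gives only convergence in distribution) is both unnecessary and insufficient for an almost-sure statement.
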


\begin{proof}
Assume that $k>0$ (the other cases are analogous). In the proof we work on the basis of a given realization of $T_{k,N}^*,\,T_{k+1,N}^*$. For simplicity, we assume that $[T_{k,N}^*,T_{k+1,N}^*]$ do not occur in the same year.
For $n\in\Nb$, let $y_n^\downarrow$  and $y_n^\uparrow$ be the values of $y_{k,N}^*$ at the beginning of the $n$-th summer and winter, respectively, i.e.
\begin{align*} 
y_n^\downarrow&\coloneqq y_{k,N}^*(nU_N),\quad y_n^\uparrow\coloneqq y_{k,n}^*(nU_N+V_N),\quad \gamma_n^{\downarrow}\coloneqq \gamma_{k,N}(nU_N), \,\gamma_{n}^{\uparrow}\coloneqq \gamma_{k,N}(nU_N+V_N).
\end{align*}
Set also $s_0\coloneqq \lfloor T_{k,N}^*/U_N\rfloor U_N$, $w_0\coloneqq s_0+V_N$, and $s_1\coloneqq s_0+U_N$. Similarly, for $s_1\leq m\leq T_{k+1,N}^*$, we set $s_0(m)\coloneqq  \lfloor m/U_N\rfloor U_N$, $w_0(m)\coloneqq s_0(m)+V_N$, and $s_1(m)\coloneqq s_0(m)+U_N$. In particular, $s_0(m)\leq m<s_1(m)$ and $s_1(m)$ is the beginning of the first summer after $m$. 

Consider now a generation $m$ with $s_1\leq m\leq T_{k+1,N}^*$. Clearly, we have $s_1\leq s_0(m)$. Solving the differential equation defining $y_{k,N}^*$ between $s_0(m)$ and $m$ we obtain
\begin{equation}\label{ykn}
y_{k,N}^*(m) = \left\{
\begin{array}{ll}  1-e^{-\frac{c_N}{K_N}(m-s_0(m))}(1-y_{\lfloor m/U_N\rfloor}^\downarrow)& \mbox{ if } m\leq w_0(m)\\
e^{-\frac{c_N}{K_N}(m-w_0(m))}y_{\lfloor m/U_N\rfloor}^\uparrow & \mbox{ if } m\geq w_0(m).
\end{array} \right.
\end{equation}
Note that if $m$ belongs to a winter, i.e. $m\geq w_0(m)$, the $i$-th individual in the dormant population is included in the count for $Y_{\Gamma_{k,N}}(m)$ if and only if it did not switch to the dormant population at a generation in $[w_0(m),m)$ and the $i$-th individual counts for $Y_{\Gamma_{k,N}}(w_0(m))$.
Similarly, if $m$ belongs to a summer, i.e. $m< w_0(m)$, the $i$-th individual in the dormant population is included in the count for $Y_{\Gamma_{k,N}}(m)$ if and only if either it switched to the dormant population at a generation in $[s_0(m)+r_N, m\wedge (w_0(m)-r_N)]$ or it did not switch to the dormant state in $[s_0(m),m)$ and it is included in the count for $Y_{\Gamma_{k,N}}(s_0(m))$. Using this,  \eqref{newswitching}, and rearranging terms, we obtain
\begin{equation}\label{gkn}
\gamma_{k,N}(m) = \left\{
\begin{array}{ll}  1-\left(1-\frac{c_N}{K_N}\right)^{m-s_0(m)}(1-\gamma_{\lfloor m/U_N\rfloor}^\downarrow)+\epsilon_N(m)& \mbox{ if } m\leq w_0(m)\\
\left(1 - \frac{c_N}{K_N} \right)^{m-w_0(m)} \gamma_{\lfloor m/U_N\rfloor}^\uparrow & \mbox{ if } m\geq w_0(m),
\end{array} \right.
\end{equation}
with $|\epsilon_N(m)|\leq Cr_N\frac{c_N}{K_N}$ for some constant $C>0$. Proceeding in a similar way, we obtain 
\begin{align}
y_n^\downarrow&=\widehat{a}_N y_{n-1}^\uparrow,\qquad\qquad\qquad \gamma_n^{\downarrow}=a_N \gamma_{n-1}^{\uparrow}\label{1s1}\\
y_{n-1}^\uparrow&=1-\widehat{b}_N(1- y_{n-1}^\downarrow),\quad\quad\gamma_{n-1}^{\uparrow}=1-b_N(1- \gamma_{n-1}^{\downarrow})-\epsilon_N.\label{1s2}
\end{align}
where $\epsilon_N\coloneqq 1-\big(1-\frac{c_N}{K_N}\big)^{r_N} \leq c_N\,r_N/K_N$ and
\begin{align*}
 \widehat{a}_N&\coloneqq e^{-\frac{c_N(U_N-V_N)}{K_N}}, \quad \widehat{b}_N\coloneqq e^{-\frac{c_NV_N}{K_N}},\quad a_N\coloneqq\left(1-\frac{c_N}{K_N}\right)^{U_N-V_N},\, b_N\coloneqq\left(1-\frac{c_N}{K_N}\right)^{V_N},
\end{align*}
and \eqref{1s2} requires that $T_{k,N}^*\leq (n-1) U_N<(n-1) U_N+V_N\leq T_{k+1,N}^*$. For $m\geq T_{k+1,N}^*$, a similar argument shows that
\begin{equation}\label{atkp}
y_{k,N}^*(m)=e^{-\frac{c_N}{K_N}(m-T_{k+1,N}^*)}y_{k,N}^*(T_{k+1,N}^*), \quad \gamma_{k,N}(m)=\left(1-\frac{c_N}{K_N}\right)^{m-T_{k+1,N}^*}\gamma_{k,N}(T_{k+1,N}^*).
\end{equation}
Note also that
$$\sup_{m\leq \rho_{N}^{-1}t_0} \left\lvert \left(1-\frac{c_N}{K_N}\right)^{m}-e^{-\frac{c_N}{K_N}m}\right\rvert\leq \frac{c_N}{K_N}\rho_{N}^{-1}t_0 \left\lvert  1+\frac{\log(1-\frac{c_N}{K_N})}{\frac{c_N}{K_N}}\right\rvert\eqqcolon\tilde{\epsilon}_N\xrightarrow[N\to\infty]{} 0.$$
Since in addition $N\mapsto \rho_N^{-1}t_0/U_N$ is bounded and $y_{k,N}^*(m)=0=\gamma_{k,N}(m)$ for $m\leq T_{k,N}^*$, we can iterate \eqref {1s1} and \eqref{1s2}, and combine with \eqref{ykn}, \eqref{gkn} and \eqref{atkp} to show that for some positive constants $C_1$ and $C_2$,
\begin{align*}
\sup_{t\in[0,t_0]}|y_{k,N}^*(\lfloor \rho_N^{-1}t\rfloor)-\gamma_{k,N}(\lfloor \rho_N^{-1}t\rfloor)|&\leq \sup_{T_{k,N}^*\leq m\leq s_1\wedge \rho_N^{-1}t_0}|y_{k,N}^*(m)-\gamma_{k,N}(m)|
\\ &\qquad \qquad \qquad\qquad \qquad +C_1 \frac{c_N}{K_N}r_N+C_2\tilde{\epsilon}_N.
\end{align*}
Finally, for $T_{k,N}^*\leq m\leq s_1$, note that $T_{k,N}^* < w_0$ because $T_{k,N}^*$ must occur during the summer, and we have
\begin{displaymath}
y_{k,N}^*(m) = \left\{
\begin{array}{ll} 1-e^{-\frac{c_N}{K_N}(m-T_{k,N}^*)} & \mbox{if } m \leq w_0 \\
e^{-\frac{c_N}{K_N}(m-w_0)} y_{k,N}^*(w_0)& \mbox{otherwise},
\end{array} \right.
\end{displaymath}
and 
\begin{displaymath}
\gamma_{k,N}(m) = \left\{
\begin{array}{ll} 1-\left(1-\frac{c_N}{K_N}\right)^{(m-T_{k,N}^*-2r_N)_+} & \mbox{if } m \leq w_0 - r_N \\
\left(1 - \frac{c_N}{K_N} \right)^{m-(w_0 - r_N)} \left(1 -  \left(1-\frac{c_N}{K_N}\right)^{(w_0 - r_N -T_{k,N}^*-2r_N)_+} \right) &\mbox{if }w_0 - r_N < m \leq w_0 \\
\left(1-\frac{c_N}{K_N}\right)^{m-w_0} \gamma_{k,N}(w_0) & \mbox{otherwise}.
\end{array} \right.
\end{displaymath}
Therefore, there are positive constants $\tilde{C}_1$ and $\tilde{C}_2$ such that
\begin{align*}
\sup_{t\in[0,t_0]}|y_{k,N}^*(\lfloor \rho_N^{-1}t\rfloor)-\gamma_{k,N}(\lfloor \rho_N^{-1}t\rfloor)|&\leq \tilde{C}_1 \frac{c_N}{K_N}r_N+\tilde{C_2}\tilde{\epsilon}_N.
\end{align*}
The result follows letting $N\to\infty$ and using that $c_Nr_N/K_N$ and $\tilde{\epsilon}_N$ converge to $0$ as $N\to\infty$.
\end{proof} 

Now we are ready to state the convergence result in Regime 1.

\begin{Lemma}\label{dorstarreg1}
In Regime 1 we have for any $k\in\Zb$
$$\sup_{t\in[0,t_0]}\left|\frac{Y_{\Gamma_{k,N}^{}}(\lfloor \rho_N^{-1} t\rfloor)}{K_N}-y_{k,N}^*(\lfloor \rho_{N}^{-1}t \rfloor)\right|\to_p 0.$$
\end{Lemma}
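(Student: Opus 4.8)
The plan is to reduce the statement to a conditional concentration estimate for $Y_{\Gamma_{k,N}}$ around its conditional mean and then apply Lemma \ref{geomlem}. By the triangle inequality together with Lemma \ref{expectode}, which gives $\sup_{t\in[0,t_0]}|\gamma_{k,N}(\lfloor\rho_N^{-1}t\rfloor)-y_{k,N}^*(\lfloor\rho_N^{-1}t\rfloor)|\to 0$ almost surely, it suffices to prove
$$\sup_{n\le \rho_N^{-1}t_0}\left|\frac{Y_{\Gamma_{k,N}}(n)}{K_N}-\gamma_{k,N}(n)\right|\to_p 0,$$
that is, that $Y_{\Gamma_{k,N}}(n)$ concentrates around its conditional mean $E_*[Y_{\Gamma_{k,N}}(n)]=K_N\gamma_{k,N}(n)$.

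First I would fix a realization of $T_{k,N}^*$ and $T_{k+\sgn(k),N}^*$ (resp.\ of $T_{1,N}^*,T_{-1,N}^*$ when $k=0$), so that $\Gamma_{k,N}$ becomes a \emph{deterministic} set of generations. Because the tuples $S^d_{\cdot,m,N}$ that govern which dormant slots are refreshed are independent of the random variables driving the active dynamics, and hence of the times $T^*$, this conditioning does not alter the law of the switching process. I would then identify $Y_{\Gamma_{k,N}}(n)$ with the statistic $Y_T$ of Lemma \ref{geomlem} by reversing time: letting $\tilde S_j$ denote the set of dormant slots refreshed in generation $n-j+1$, the $\tilde S_j$ are i.i.d.\ uniform $c_N$-subsets of $\{1,\dots,K_N\}$, and a slot's last refreshment before $n$ lies in $\Gamma_{k,N}$ if and only if its first refreshment in the reversed chain lies in the reflected set $T:=\{n-g+1:\ g\in\Gamma_{k,N},\ g\le n\}$. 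Thus, conditionally on the $T^*$ times, $Y_{\Gamma_{k,N}}(n)=Y_T$ with $c=c_N$ and $J=K=K_N$, its conditional mean agrees with the expression coming from \eqref{newswitching} used to define $\gamma_{k,N}$, and Lemma \ref{geomlem} applies directly.

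The remaining step is to pass from the relative deviation bound of Lemma \ref{geomlem} to an absolute one and then sum over generations. Since $Y_{\Gamma_{k,N}}(n)\le K_N$, we have $E_*[Y_{\Gamma_{k,N}}(n)]\le K_N$; taking $\eps'=\eps K_N/E_*[Y_{\Gamma_{k,N}}(n)]$ and using $(\eps'^2\wedge\eps')E_*[Y_{\Gamma_{k,N}}(n)]\ge(\eps^2\wedge\eps)K_N$ produces a bound independent of the conditioning,
$$P\big(|Y_{\Gamma_{k,N}}(n)-K_N\gamma_{k,N}(n)|>\eps K_N\big)\le 2e^{-(\eps^2\wedge\eps)K_N/3},$$
which therefore also holds unconditionally. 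A union bound over $n\in\{0,1,\dots,\lfloor\rho_N^{-1}t_0\rfloor\}$ then gives
$$P\Big(\sup_{n\le \rho_N^{-1}t_0}\big|Y_{\Gamma_{k,N}}(n)/K_N-\gamma_{k,N}(n)\big|>\eps\Big)\le (1+\rho_N^{-1}t_0)\,2e^{-(\eps^2\wedge\eps)K_N/3}.$$
In Regime 1, $\rho_N^{-1}=O(K_N)$ by \eqref{Reg1Asm1} (since $c_N\ge1$ forces $K_N\rho_N\gtrsim1$), so the prefactor grows at most linearly in $K_N$ while the exponent decays in $K_N$, and the right-hand side tends to $0$. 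This completes the reduction and hence the proof.

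The main obstacle is the identification of $Y_{\Gamma_{k,N}}(n)$ with the statistic in Lemma \ref{geomlem}. This rests on two points that must be argued carefully: the time-reversal that converts ``last refreshment before $n$'' into ``first refreshment,'' restoring exactly the structure of Lemma \ref{geomlem}; and the independence of the slot-switching mechanism from the active-population dynamics defining $\Gamma_{k,N}$, so that conditioning on the $T^*$ times freezes $\Gamma_{k,N}$ into a deterministic set while leaving the switching law—and thus the applicability of Lemma \ref{geomlem}—unchanged. Once this coupling is in place, the conversion of the relative bound to an absolute bound and the summation over generations are routine.
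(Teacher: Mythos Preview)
Your proposal is correct and follows essentially the same route as the paper: condition on the $T^*$ times to make $\Gamma_{k,N}$ deterministic, invoke Lemma~\ref{geomlem} to get an exponential bound in $K_N$ on the deviation $|Y_{\Gamma_{k,N}}(n)-E_*[Y_{\Gamma_{k,N}}(n)]|$, take a union bound over generations using $\rho_N^{-1}=O(K_N)$ in Regime~1, and finish with Lemma~\ref{expectode}. You are more explicit than the paper about the time-reversal identification with the statistic of Lemma~\ref{geomlem}, about the independence of the switching variables $S^d_{\cdot,m,N}$ from the active dynamics (hence from the $T^*$), and about converting the relative deviation bound to an absolute one via $\eps'=\eps K_N/E_*[Y_{\Gamma_{k,N}}(n)]$; the paper simply records the resulting exponent as a constant $g_\varepsilon$ depending only on $\varepsilon$.
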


\begin{proof}
Using Lemma \ref{geomlem} conditionally on $T_{k,N}^*$ and $T_{k+1,N}^*$, and then taking the expectation yields for $\varepsilon>0$ and $N$ sufficiently large
$$P(\lvert Y_{\Gamma_{k,N}}(n)-E_*[Y_{\Gamma_{k,N}}(n)]\rvert > \varepsilon K_N)\leq 2e^{-g_\varepsilon K_N/3},$$
where $g_\varepsilon>0$ is a constant only depending on $\varepsilon.$
Hence, 
\begin{align*}
P\bigg(\sup_{n\leq \rho_N^{-1}t_0} \left\lvert \frac{Y_{\Gamma_{k,N}}(n)}{K_N}-\frac{E_*[Y_{\Gamma_{k,N}}(n)]}{K_N}\right\rvert>\varepsilon\bigg)&\leq \sum_{n=0}^{\rho_N^{-1}t_0}P(\lvert Y_{\Gamma_{k,N}}(n)-E_*[Y_{\Gamma_{k,N}}(n)]\rvert > \varepsilon K_N)\nonumber\\
&\leq t_0 \rho_N^{-1} e^{-g_\varepsilon K_N/3}\to 0,
\end{align*}
where in the last step, we used that in Regime 1, $K_N\geq C\rho_N^{-1}$ for some $C>0$. The result follows combining this with Lemma \ref{expectode}.
\end{proof}

The next result compares the function $y_{k,N}^*$ to the function $y_{k,N}$ defined in section \ref{earlycouple}.

\begin{Lemma}\label{ww-star}
In Regime 1 we have for any $k\in\Zb$
$$\sup_{t\in[0,t_0]}\left|y_{k,N}(t)-y_{k,N}^*(\lfloor \rho_N^{-1} t \rfloor)\right|\to_p 0.$$
\end{Lemma}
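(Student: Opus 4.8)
The plan is to compare the two functions through the linear relaxation ODEs they satisfy, reducing everything to the convergence of the relaxation rates and the $L^1$-closeness of the two driving terms. Throughout I work in Regime 1. First I would replace $y_{k,N}^*(\lfloor\rho_N^{-1}t\rfloor)$ by the continuous evaluation $y_{k,N}^*(\rho_N^{-1}t)$: since $y_{k,N}^*$ solves $\frac{\dd}{\dd s}y_{k,N}^*=\frac{c_N}{K_N}(d_{k,N}^*-y_{k,N}^*)$ with $d_{k,N}^*,y_{k,N}^*\in[0,1]$, its derivative is bounded by $2c_N/K_N$, and because \eqref{Reg1Asm1} and \eqref{rhoto0} give $c_N/K_N=\beta_N\rho_N\to\alpha\cdot 0=0$ (with $\beta_N:=c_N/(K_N\rho_N)$), we get $\sup_{t\in[0,t_0]}|y_{k,N}^*(\lfloor\rho_N^{-1}t\rfloor)-y_{k,N}^*(\rho_N^{-1}t)|\le 2c_N/K_N\to0$. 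Writing $h_N(t):=y_{k,N}^*(\rho_N^{-1}t)$, the chain rule turns its ODE into $h_N'(t)=\beta_N(d_{k,N}^*(\rho_N^{-1}t)-h_N(t))$, while $y_{k,N}$ solves $y_{k,N}'(t)=\alpha(d_{k,N}(t)-y_{k,N}(t))$, and both functions start at $\1_{\{k=0\}}$.

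Next I would estimate $e_N:=y_{k,N}-h_N$ directly. Subtracting the two ODEs and using $e_N(0)=0$ gives
\[ e_N(t)=\int_0^t e^{-\alpha(t-u)}\Big[(\beta_N-\alpha)h_N(u)+\alpha\big(d_{k,N}(u)-d_{k,N}^*(\rho_N^{-1}u)\big)+(\alpha-\beta_N)d_{k,N}^*(\rho_N^{-1}u)\Big]\dd u. \]
Since $|h_N|,|d_{k,N}^*|\le 1$, $\alpha\int_0^t e^{-\alpha(t-u)}\dd u\le1$, and $e^{-\alpha(t-u)}\le 1$, this yields
\[ \sup_{t\in[0,t_0]}|e_N(t)|\le \frac{2|\beta_N-\alpha|}{\alpha}+\alpha\int_0^{t_0}\big|d_{k,N}(u)-d_{k,N}^*(\rho_N^{-1}u)\big|\,\dd u. \]
The first term vanishes by \eqref{Reg1Asm1}, so the whole problem reduces to showing that the two driving terms agree in $L^1([0,t_0])$ with high probability.

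For the driving term I would write $d_{k,N}=\1_I\1_S$ and $d_{k,N}^*(\rho_N^{-1}\cdot)=\1_{I^*}\1_{S^*}$, where for $k\ge1$ one has $I=[T_{k,N},T_{k+1,N})$, $I^*=[\rho_NT_{k,N}^*,\rho_NT_{k+1,N}^*)$, $S=\{R=1\}$, $S^*=\{R_N(\lfloor\rho_N^{-1}\cdot\rfloor)=1\}$ (with the obvious analogues for $k\le-1$ and $k=0$), so that $|d_{k,N}-d_{k,N}^*(\rho_N^{-1}\cdot)|\le\1_{I\triangle I^*}+\1_{S\triangle S^*}$. The environment mismatch is deterministic and its measure on $[0,t_0]$ tends to $0$: by \eqref{betadef} and \eqref{Reg1Asm2} the rescaled season boundaries $j\rho_NU_N$ and $j\rho_NU_N+\rho_NV_N$ converge to $j\theta$ and $j\theta+\beta\theta$, so $R_N(\lfloor\rho_N^{-1}t\rfloor)$ and $R(t)$ disagree only on shrinking neighborhoods of the finitely many switch times in $[0,t_0]$. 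For the interval part I would use the elementary bound $\lambda((I\triangle I^*)\cap[0,t_0])\le\sum_{j\in\{k,k+1\}}|T_{j,N}\wedge t_0-\rho_NT_{j,N}^*\wedge t_0|$, and argue that each summand tends to $0$ in probability via the coupling $\rho_NT_{j,N}^*\approx T_{j,N}$ from the proof of Theorem \ref{Poissontheo}.

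The main obstacle is precisely the behavior of these endpoints near the horizon $t_0$, because \eqref{TTstar} only controls $|\rho_N T_{j,N}^*-T_{j,N}|$ conditionally on $T_{j,N}\le(1-\eps)t_0$. To close this I would exploit that the estimate underlying \eqref{TTstar} in fact bounds $|\rho_N T_{j,N}^*-T_{j,N}|$ by a null quantity for \emph{all} $j$ with $\rho_N T_{j,N}^*\le t_0$, on the event \eqref{Bercouple} whose probability tends to one. Having both conditioning directions available, one checks case by case that on this event $|T_{j,N}\wedge t_0-\rho_NT_{j,N}^*\wedge t_0|\le\max(\eps,\eps t_0)$ for every $\eps>0$ and large $N$: when $T_{j,N}\le(1-\eps)t_0$ the stated coupling applies; when $T_{j,N}>(1-\eps)t_0$ but $\rho_N T_{j,N}^*\le t_0$ the internal estimate applies; and when both exceed their thresholds, both truncated times lie in $((1-\eps)t_0,t_0]$, so their difference is at most $\eps t_0$. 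Letting $N\to\infty$ and then $\eps\to0$ gives $|T_{j,N}\wedge t_0-\rho_NT_{j,N}^*\wedge t_0|\to_p0$; feeding this bound and the environment estimate into the displayed bound on $\sup_{t\in[0,t_0]}|e_N(t)|$, and adding back the $O(c_N/K_N)$ floor error, completes the proof.
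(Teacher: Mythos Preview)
Your proof is correct and follows essentially the same approach as the paper: both reduce to the variation-of-constants formula for $y_{k,N}-y_{k,N}^*(\rho_N^{-1}\cdot)$, bound the rate mismatch by $|c_N/(K_N\rho_N)-\alpha|$, and control the $L^1$ distance between the driving indicators via the environment mismatch plus the differences $|\rho_N T_{j,N}^*-T_{j,N}|$. The paper simply invokes the unconditional convergence $|\rho_N T_{j,N}^*-T_{j,N}|\to_p 0$ (which is established at the end of the proof of Theorem~\ref{Poissontheo} by letting the horizon in \eqref{TTstar} be arbitrary), whereas your truncation-and-case-analysis at $t_0$ handles the same point more explicitly and without appealing to a larger horizon; both routes are valid.
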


\begin{proof}
Using the triangle inequality, we obtain  
$$\left\lvert y_{k,N}^*(\lfloor \rho_N^{-1}t\rfloor)-y_{k,N}(t)\right\rvert\leq \left\lvert y_{k,N}^*(\lfloor \rho_N^{-1}t\rfloor)-y_{k,N}^*(\rho_N^{-1}t)\right\rvert+\left\lvert y_{k,N}^*(\rho_N^{-1}t)-y_{k,N}(t)\right\rvert.$$
Using the differential equation defining $y_{k,N}^*$, one can easily see that
$$\left\lvert y_{k,N}^*(\lfloor \rho_N^{-1}t\rfloor)-y_{k,N}^*(\rho_N^{-1}t)\right\rvert\leq \frac{c_N}{K_N}.$$
Therefore, it only remains to show the uniform convergence to $0$ of $\lvert y_{k,N}^*(\rho_N^{-1}t)-y_{k,N}(t)\rvert$. Define $z_{k,N}^{}(t)\coloneqq \yk(t)-y_{k,N}^*(\rho_N^{-1}t)$. From the definitions of $\yk$ and $y_{k,N}^*$, we deduce that $z_{k,N}$ solves the non-homogeneous linear differential equation
\[\frac{\dd}{\dd t}z_{k,N}^{}(t)=-\alpha z_{k,N}^{}(t)+\alpha(d_{k,N}(t)-d_{k,N}^*(\rho_N^{-1}t))-\Big(\frac{c_N}{K_N\rho_N}-\alpha\Big)(d_{k,N}^*(\rho_N^{-1}t)-y_{k,N}^*(\rho_{N}^{-1}t)),\]
with initial condition $z_{k,N}^{}(0)=0$. Therefore, we can express $z_{k,N}^{}$ as
\begin{equation}\label{eq:zkn}
z_{k,N}(t)=e^{-\alpha t}\int_0^t e^{\alpha s}\bigg(\alpha(d_{k,N}(s)-d^*_{k,N}(\rho_N^{-1}s))-\Big(\frac{c_N}{K_N\rho_N}-\alpha\Big)(d_{k,N}^*(\rho_N^{-1}s)-y_{k,N}^*(\rho_N^{-1}s))\bigg) \dd s.
\end{equation}
Note first that for $t\leq t_0$ 
\begin{align}\label{eq:l1a}
\int_0^t \lvert d_{k,N}(s)-d_{k,N}^*(\rho_N^{-1}s)\rvert \dd s &\leq \lvert\rho_N T_{k,N}^*-T_{k,N}^{}\rvert + \lvert\rho_N T_{k+1,N}^*-T_{k+1,N}^{}\rvert\nonumber\\
&\qquad \qquad+ \int_{0}^{t_0}\lvert \ind_{\{R_N(\lfloor \rho_N^{-1} s \rfloor)=1\}}-\ind_{\{R(s)=1\}}\rvert \, \dd s.
\end{align}
Since $U_N\rho_N\to\theta$ and $V_N/U_N\to\beta$ as $N\to\infty$, the number of jumps of $R_N(\lfloor \cdot/\rho_N \rfloor)$ and $R$ in $[0,t_0]$ differs by at most $1$ and can be bounded by $2t_0/\theta+1$ for $N$ sufficiently large. Hence, we have
\[\int_{0}^{t_0}\lvert \ind_{\{R_N(s/\rho_N)=1\}}-\ind_{\{R(s)=1\}}\rvert\dd s\leq  \left(\frac{2t_0}{\theta}+1\right)^2(|U_N\rho_N-\theta|+|V_N\rho_N-\beta\theta|).\]
Plugging this bound in \eqref{eq:l1a} and combining the resulting inequality with the expression for $z_{k,N}^{}$ \eqref{eq:zkn}, we conclude that, for any $t\in[0,t_0]$,
\begin{align*}
\lvert z_{k,N}^{}(t)\rvert\leq &\, \alpha \left( \lvert\rho_N T_{k,N}^*-T_{k,N}^{}\rvert + \lvert\rho_N T_{k+1,N}^*-T_{k+1,N}^{}\rvert \right)\\
&+\alpha \left(\frac{2t_0}{\theta}+1\right)^2(|U_N\rho_N-\theta|+|V_N\rho_N-\beta\theta|)+\left\lvert\frac{c_N}{K_N\rho_N}-\alpha \right\rvert t_0,
\end{align*} 
and the result follows from the assumptions on the parameters made in Regime 1 and the convergence in probability of $\rho_N T_{\ell,N}^*-T_{\ell,N}$ to $0$ for $\ell\in\{k,k+1\}$, which is shown in \eqref{TTstar}.
\end{proof}

Let us now consider Regime 2. We start by defining for any integer $k$ and $n\in\Nb$,
\begin{displaymath}
y_{k,N}^*(n) = \left\{
\begin{array}{ll} \beta \1_{\{T_{k,N}^* \leq n < T_{k+1,N}^*\}} & \mbox{ if }k \geq 1  \\
(1 - \beta) \1_{\{T_{k,N}^* \leq n < T_{k-1,N}^*\}} & \mbox{ if }k \leq -1 \\
\beta \1_{\{n < T_{1,N}^*\}} + (1 - \beta) \1_{\{n<T_{-1,N}^*\}} & \mbox{ if }k = 0.
\end{array} \right.
\end{displaymath}
The next result establishes the convergence for $Y_{\Gamma_{k,N}^{}}$ in Regime 2. 

\begin{Lemma}\label{dorstar}Let $\delta\in(0,1)$, and define $$I_{k,N}^*(\delta,t_0)\coloneqq \{m\leq \rho_{N}^{-1}t_0: |m-T_{k,N}^*|> \delta\rho_N^{-1},\quad  |m-T_{k+1,N}^*|> \delta\rho_N^{-1}\}.$$
In Regime 2 we have
$$\sup_{n\in I_{k,N}^*(\delta,t_0)}\left|\frac{Y_{\Gamma_{k,N}^{}}(n)}{K_N}-y_{k,N}^*(n)\right|\to_p 0.$$
\end{Lemma}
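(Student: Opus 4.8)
The plan is to mirror the two-step structure used in Regime 1 (Lemmas \ref{dorstarreg1} and \ref{expectode}): first show that $Y_{\Gamma_{k,N}}(n)/K_N$ concentrates around its conditional mean $\gamma_{k,N}(n) = E_*[Y_{\Gamma_{k,N}}(n)]/K_N$, and then show that $\gamma_{k,N}(n)$ is uniformly close to $y_{k,N}^*(n)$ on $I_{k,N}^*(\delta,t_0)$. It suffices to treat $k \geq 1$; the cases $k \leq 0$ follow by the same argument with the roles of summer and winter (and, for $k=0$, the initial condition) adjusted.

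The concentration step is essentially identical to the proof of Lemma \ref{dorstarreg1}. Conditioning on the sweep times and applying Lemma \ref{geomlem} gives $P(|Y_{\Gamma_{k,N}}(n) - E_*[Y_{\Gamma_{k,N}}(n)]| > \varepsilon K_N) \leq 2e^{-g_\varepsilon K_N/3}$ for a constant $g_\varepsilon > 0$ and $N$ large. A union bound over $n \leq \rho_N^{-1} t_0$ yields the upper bound $2 t_0 \rho_N^{-1}\, e^{-g_\varepsilon K_N/3}$, which tends to zero: in Regime 2 we have $K_N \gg c_N U_N \geq U_N \gg \log N$ by \eqref{largeU} and \eqref{Reg2Asm}, while $\rho_N^{-1} \lesssim N^{b-1}$ by \eqref{totalgen}, so $e^{-g_\varepsilon K_N/3}$ decays faster than any power of $N$.

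The substance is the second step. Using \eqref{newswitching} and the fact that the initial dormant individuals are of type $0$ (hence never contribute to $\Gamma_{k,N}$ for $k\ge 1$), I would write, with $q = c_N/K_N$,
\[
\gamma_{k,N}(n) = \sum_{\ell \geq 1} q(1-q)^{\ell-1}\, \1_{\{n - \ell \in \Gamma_{k,N}\}},
\]
an exponential moving average with effective memory $1/q = K_N/c_N$. The decisive facts are the Regime 2 separation of scales $U_N \ll K_N/c_N \ll \rho_N^{-1}$ together with $r_N \ll U_N$ from \eqref{rNdef}. First suppose $T_{k,N}^* + \delta\rho_N^{-1} \le n \le T_{k+1,N}^* - \delta\rho_N^{-1}$. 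Choosing a cutoff $L$ with $K_N/c_N \ll L \ll \delta\rho_N^{-1}$ (possible since $K_N/c_N \ll \rho_N^{-1}$), the tail $\sum_{\ell > L} q(1-q)^{\ell-1} = (1-q)^L \approx e^{-qL}$ is negligible, while the indices $\ell \le L$ all satisfy $n - \ell \in [T_{k,N}^*, T_{k+1,N}^*)$. On this range $\Gamma_{k,N}$ consists of the well-established summers, and since $qU_N = c_N U_N/K_N \to 0$ the weights $q(1-q)^{\ell-1}$ are nearly constant across each year, so the weighted fraction of indices landing in $\Gamma_{k,N}$ equals the long-run fraction of well-established summer generations, $(V_N - 2r_N)/U_N \to \beta$ by \eqref{betadef} and $r_N \ll U_N$. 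Hence $\gamma_{k,N}(n) \to \beta = y_{k,N}^*(n)$. Second, if $n \le T_{k,N}^* - \delta\rho_N^{-1}$, then $n - \ell \le n-1 < T_{k,N}^* + 2r_N$ for all $\ell$, so no index lies in $\Gamma_{k,N} \subseteq J_{k,N}^\circ$ and $\gamma_{k,N}(n) = 0$; while if $n \ge T_{k+1,N}^* + \delta\rho_N^{-1}$, then every index in $\Gamma_{k,N}$ requires $\ell > n - T_{k+1,N}^* \ge \delta\rho_N^{-1}$, carrying total weight at most $(1-q)^{\delta\rho_N^{-1}} \approx e^{-\delta c_N\rho_N^{-1}/K_N} \to 0$ because $c_N\rho_N^{-1}/K_N \to \infty$. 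In both cases $\gamma_{k,N}(n) \to 0 = y_{k,N}^*(n)$. All of these estimates are uniform over $n \in I_{k,N}^*(\delta,t_0)$, and combining with the concentration step proves the lemma.

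The main obstacle is the bulk estimate $\gamma_{k,N}(n) \to \beta$, which requires juggling three length scales at once: the memory window $K_N/c_N$ must be long enough to average over many complete years so that the seasonal fraction resolves to $\beta$, which is the condition $U_N \ll K_N/c_N$; yet short enough that the window stays strictly inside the interval between consecutive sweeps, which is $K_N/c_N \ll \rho_N^{-1}$; and the $O(r_N)$ generations trimmed from each season (the well-established restriction) together with the exponential tails must be shown to contribute negligibly, which uses $r_N \ll U_N$. Making the ``weights nearly constant over a year, hence local average equals $\beta$'' argument precise --- via a block decomposition of the geometric sum with error terms of order $qU_N$ and $(1-q)^L$ --- is the one genuinely new computation, the remaining pieces being direct adaptations of the Regime 1 proof.
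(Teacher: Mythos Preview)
Your proposal is correct and follows essentially the same approach as the paper. The concentration step via Lemma \ref{geomlem} and the union bound is identical, and your three-case analysis of $\gamma_{k,N}(n)$ (before, between, and after the sweep times) matches the paper's; the paper carries out your ``block decomposition of the geometric sum'' explicitly, summing the per-year contributions $\iota_N(1-c_N/K_N)^{U_N(j-1)}f_N(n)$ as a geometric series and showing the ratio $\iota_N/\big(1-(1-c_N/K_N)^{U_N}\big)$ converges to $\beta$, which is exactly the computation you sketch.
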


\begin{proof}
We will prove the result for $k>0$; the other cases are analogous. In the rest of the proof we will work conditionally on the times $T_{k,N}^*$ and $T_{k+1,N}^*$; we will use $E_*[\cdot]$ to denote the conditional expectation with respect to these variables. By definition, 
$$Y_{\Gamma_{k,N}}(n)=0\quad \textrm{for $n< T_{k,N}^*-\delta\rho_{N}^{-1}$}.$$ 
Now consider $n=T_{k,N}^*+m$ with $\delta \rho_N^{-1}< m< T_{k+1,N}^*-T_{k,N}^*-\delta\rho_{N}^{-1}$. We take $N$ sufficiently large so that $\delta\rho_N^{-1}>2r_N$.
To compute $E_*[Y_{\Gamma_{k,N}}(n)]$, consider first separately the contribution of each year in
$[T_{k,N}^*+2r_N, n]$. For this, we refer as year $j$ into the past the generations in $[(\lfloor n/U_N\rfloor-j)U_N, (\lfloor n/U_N\rfloor-j+1)U_N)$. In particular, year $0$ is the year containing generation $n$. Set
$$j_{k,N}(n)=\left\lfloor\frac{n}{U_N}\right\rfloor-\left\lfloor\frac{n-m}{U_N}\right\rfloor.$$
Thus, we have to consider the incomplete year $0$, the earlier complete years $1,\ldots,j_{k,N}(n)-1$, and the incomplete year $j_{k,N}(n)$.

The contribution of years $j=0$ and $j=j_{k,N}(n)$ can be easily upper-bounded by
$$ \sum_{\ell=1}^{V_N-2r_N} \frac{c_N}{K_N}\bigg(1-\frac{c_N}{K_N}\bigg)^{\ell-1}=1-\bigg(1-\frac{c_N}{K_N}\bigg)^{V_N-2r_N}\eqqcolon \iota_N\sim 0.$$
The contribution of year $j\in\{1,\ldots,j_{k,N}(n)-1\}$ is
\begin{align*}
&\sum_{\ell=1}^{V_N-2r_N} \frac{c_N}{K_N}\bigg(1-\frac{c_N}{K_N}\bigg)^{\ell-1}\bigg(1-\frac{c_N}{K_N}\bigg)^{U_N-V_N+U_N(j-1)+n-\lfloor n/U_N\rfloor U_N+ r_N} \\
&\hspace{3in}\eqqcolon \iota_N \bigg(1-\frac{c_N}{K_N}\bigg)^{U_N(j-1)}f_N(n),
\end{align*}
with $(1-c_N/K_N)^{2U_N}\leq f_N(n)\leq 1.$
Therefore, the total contribution of those years is
$$\iota_Nf_N(n)\frac{1-\big(1-\frac{c_N}{K_N}\big)^{U_N(j_{k,N}(n)-1)}}{ 1-\big(1-\frac{c_N}{K_N}\big)^{U_N}}.$$
Note that $U_N(j_k(n)-1)\geq\delta\rho_N^{-1} - 2U_N$. Altogether, we obtain
$$\frac{\iota_N\upsilon_N}{ 1-\big(1-\frac{c_N}{K_N}\big)^{U_N}}\leq \frac{E_*[Y_{\Gamma_{k,N}^{}}(n)]}{K_N}\leq 2\iota_N+ \frac{\iota_N}{ 1-\big(1-\frac{c_N}{K_N}\big)^{U_N}},$$
with $\upsilon_N\sim 1$. Note that the left and right bounds in the previous inequality are independent of $n\in[T_{k,N}^*+\delta\rho_N^{-1},T_{k+1,N}^*-\delta\rho_N^{-1}]$ and converge as $N\to\infty$ to $\beta$.

Moreover, for $n\geq T_{k+1,N}^{*}+\delta\rho_N^{-1}$
$$0\leq \frac{E_*[Y_{\Gamma_{k,N}^{}}(n)]}{K_N}\leq \bigg(1-\frac{c_N}{K_N}\bigg)^{\delta\rho_N^{-1}}.$$
We conclude that 
\begin{equation}\label{expdor2}
\sup_{n\in I_{k,N}^*(\delta,t_0) }\left|\frac{E_*[Y_{\Gamma_{k,N}^{}}(n)]}{K_N}-\beta 1_{\{T_{k,N}^*\leq n<T_{k+1,N}^*\}}\right|\leq \epsilon_N,
\end{equation}
where $\epsilon_N$ is a deterministic sequence of positive numbers converging to $0$.

In addition, using Lemma \ref{geomlem} conditionally on $T_{k,N}^*$ and $T_{k+1,N}^*$ and then taking the expectation yields for $\varepsilon>0$ and $N$ sufficiently large,
$$P(\lvert Y_{\Gamma_{k,N}}(n)-E_*[Y_{\Gamma_{k,N}}(n)]\rvert > \varepsilon K_N)\leq 2e^{-g_\varepsilon K_N/3},$$
where $g_\varepsilon>0$ is a constant only depending on $\varepsilon.$
Hence, 
\begin{align}
P\bigg(\sup_{n\in I_{k,N}^*(\delta,t_0)} \left\lvert \frac{Y_{\Gamma_{k,N}}(n)}{K_N}-\frac{E_*[Y_{\Gamma_{k,N}}(n)]}{K_N}\right\rvert>\varepsilon\bigg)&\leq \sum_{n=0}^{\rho_N^{-1}t_0}P(\lvert Y_{\Gamma_{k,N}}(n)-E_*[Y_{\Gamma_{k,N}}(n)]\rvert > \varepsilon K_N)\nonumber\\
&\leq t_0 \rho_N^{-1} e^{-g_\varepsilon K_N/3}\to 0,\label{cher2}
\end{align}
where in the last step, we used that in Regime 2, $K_N\gg U_N\gg\log N$ and $\rho_N^{-1}\ll N^{b-1}$ for some $b>1$.  
Combining \eqref{expdor2} and \eqref{cher2} yields the result.
\end{proof}

\subsection{Proofs of Theorems \ref{dormantthm1} and \ref{dormantthm2}}\label{finaldorm}

\begin{proof}[Proof of Theorem \ref{dormantthm1}]
We prove the result for $k\in\Nb$; the other cases are analogous. Using the triangle inequality yields 
\begin{align*}
|Y_{k,N}(\lfloor \rho_N^{-1}t\rfloor)-y_{k,N}(t)|&\leq \left\lvert Y_{k,N}(\lfloor \rho_N^{-1}t\rfloor)-\frac{Y_{\Gamma_{k,N}}(\lfloor \rho_N^{-1}t\rfloor)}{K_N}\right\rvert+\left\lvert \frac{Y_{\Gamma_{k,N}}(\lfloor \rho_N^{-1}t\rfloor)}{K_N}-y_{k,N}^*(\lfloor \rho_N^{-1}t\rfloor)\right\rvert\\
&\quad+\left\lvert y_{k,N}^*(\lfloor \rho_N^{-1}t\rfloor)-y_{k,N}(t)\right\rvert.
\end{align*}
It is therefore sufficient to show that the supremum on $[0,t_0]$ of each of the three terms on the right-hand side of the previous inequality converges to $0$ in probability. The uniform convergence of the last two terms to $0$ is a direct consequence of Lemma \ref{dorstarreg1} and Lemma \ref{ww-star}. 

Let us now consider the first term. Define for $\varepsilon>0$ the event
\begin{equation*}
G_N(\varepsilon)\coloneqq\{X_{\ell,N}(m)>1-2\varepsilon\, \textrm{for all $\ell\in\Zb$ and $m\in\Nb$ such that $m\in \Theta_{\ell,N}(\varepsilon)$} \}.
\end{equation*}
Proposition \ref{activeProp} asserts that $P(G_N(\varepsilon))\to 1$ as $N\to \infty$. Fix now $\varepsilon>0$ and let us work on the event $G_N(\varepsilon/8)$. Note first that if $m\leq n$ is such that $m\notin\partial \Gamma_N$, then $m\in\Gamma_{\ell,N}$ for some $\ell\in\Zb$ with $T_{\ell,N}^*\leq \rho_N^{-1} t_0-r_N$. Thus, thanks to \eqref{JsubS}, we infer that $X_{\ell,N}(m)\geq 1-\varepsilon/4$. We conclude that on $G_N(\varepsilon/8)$, for any $m\leq n$ with $m\notin \partial \Gamma_N$ and any $k \in \Z$, we have $X_{k,N}(m)\geq 1- \varepsilon/4$ or $X_{k,N}(m)\leq \varepsilon/4$.

Following the discussion at the beginning of Section \ref{proofsdormant}, we infer that
\begin{equation}\label{uplowbounds}
\frac{Y_{\Gamma_{k,N}}(n)}{K_N} - \frac{Z^\circ_{k,N}(n)}{K_N}\leq Y_{k,N}(n)\leq \frac{Y_{\Gamma_{k,N}}(n)}{K_N}+\frac{Y_{\partial \Gamma_N}(n)}{K_N}+ \frac{Z^*_{k,N}(n)}{K_N},
\end{equation}
where $Z_{k,N}^*(n)$ denotes the number of type $k$ individuals in the dormant population at time $n$ that entered the dormant population for the last time at a generation outside $\partial \Gamma_N \cup \Gamma_{k,N}$, and $Z_{k,N}^\circ(n)$ denotes the number of dormant individuals at time $n$ with a type different from $k$ and that entered the dormant population for the last time at a generation in $\Gamma_{k,N}$.

Artificially coloring red or blue the individuals in the population in a way that the number of red active individuals in each generation is $\lfloor N \eps/4 \rfloor$, except that
at generations outside $\partial \Gamma_N \cup \Gamma_{k,N}$ every type $k$ active individual is red, allows one to couple on $G_N(\varepsilon/8)$ the process $Z_{k,N}^*$ to a process $\tilde{Z}_{N}^{(\varepsilon/4)}$ distributed as  ${Z}_{N}^{(\varepsilon/4)}$ (from Section \ref{simplemodel}), such that 
$$Z^*_{k,N}(n)\leq \tilde{Z}_{N}^{(\varepsilon/4)}(n),\quad n\leq \rho_N^{-1}t_0.$$
An additional red/blue coloring with a constant proportion $\varepsilon/4$ of red active individuals, and such that at generations in $\Gamma_{k,N}$ every individual with a type different from $k$ in the active population is red, allows one to couple on $G_N(\varepsilon/8)$ the process $Z_{k,N}^\circ$ to a process $\hat{Z}_{N}^{(\varepsilon/4)}$ distributed as the process ${Z}_{N}^{(\varepsilon/4)}$ such that 
$$Z_{k,N}^\circ(n) \leq \hat{Z}_{N}^{(\varepsilon/4)}(n),\quad n\leq \rho_N^{-1}t_0.$$  
Altogether, we have
\begin{align*}
P\bigg(\sup_{t\in [0,t_0]}\bigg|Y_{k,N}(\lfloor\rho_N^{-1} t\rfloor)-\frac{Y_{\Gamma_{k,N}}(\lfloor\rho_N^{-1} t\rfloor))}{K_N}\bigg| >\varepsilon\bigg)
&\leq P\bigg(\sup_{n\leq\rho_N^{-1}t_0}\bigg|\frac{Y_{\partial \Gamma_{N}}(n)}{K_N}\bigg| >\frac{\varepsilon}{3}\bigg)\\
&+ 2P\bigg(\sup_{n\leq\rho_N^{-1}t_0}\bigg|\frac{Z_{N}^{(\varepsilon/4)}(n)}{K_N}\bigg| >\frac{\varepsilon}{3}\bigg)\nonumber\\
&+P(G_N(\varepsilon/8)^c).
\end{align*}
Letting $N\to\infty$ and using Lemma \ref{dormantred} and Lemma \ref{lemtrans} yields the result.
\end{proof}

\begin{proof}[Proof of Theorem \ref{dormantthm2}]

We prove the result for $k\in\Nb$; the other cases are analogous. Using the triangle inequality yields 
\begin{align*}
|Y_{k,N}(\lfloor \rho_N^{-1}t\rfloor)-y_{k,N}(t)|&\leq \left\lvert Y_{k,N}(\lfloor \rho_N^{-1}t\rfloor)-\frac{Y_{\Gamma_{k,N}}(\lfloor \rho_N^{-1}t\rfloor)}{K_N}\right\rvert+\left\lvert \frac{Y_{\Gamma_{k,N}}(\lfloor \rho_N^{-1}t\rfloor)}{K_N}-y_{k,N}^*(\lfloor \rho_N^{-1}t\rfloor)\right\rvert\\
&+\left\lvert y_{k,N}^*(\lfloor \rho_N^{-1}t\rfloor)-y_{k,N}(t)\right\rvert.
\end{align*}
Let $I_{k,N}(\delta, t_0) = \{t \in [0, t_0]: |t - T_{k,N}| > \delta \mbox{ and } |t - T_{k+1,N}| > \delta\}$.
It suffices to show that the supremum on $I_{k,N}(\delta,t_0)$ of each of the three terms on the right-hand side of the previous inequality converges to $0$ in probability. 

Let us consider first the second term. For this note that on the event $$B_{k,N}\coloneqq \{|\rho_N T_{\ell,N}^*-T_{\ell,N}|\leq \delta/3,\, \ell\in\{k,k+1\}\},$$ if $t\in I_{k,N}(\delta,t_0)$, then for $N$ sufficiently large (independent of $t$), we have $\lfloor\rho_N^{-1} t\rfloor \in I_{k,N}^*(\delta/2, t_0)$. Hence,
\begin{align}\label{term2to0}
P\bigg(\sup_{t\in I_{k,N}(\delta,t_0)}\bigg|\frac{Y_{\Gamma_{k,N}^{}}(\lfloor\rho_N^{-1} t\rfloor)}{K_N}-y_{k,N}^*(\lfloor\rho_N^{-1} t\rfloor)\bigg| >\varepsilon\bigg) &\leq P\bigg(\sup_{n\in I_{k,N}^*(\frac{\delta}{2},t_0)}\bigg|\frac{Y_{\Gamma_{k,N}^{}}(n)}{K_N}-y_{k,N}^*(n)\bigg| >\varepsilon\bigg)\nonumber\\
&\qquad+P\big(B_{k,N}^c\big).
\end{align}
Recall from \eqref{TTstar} that for every $\ell\in\Zb$, $\rho_N T_{\ell,N}^*-T_{\ell,N}$ converges in probability to $0$. This implies that $P(B_{k,N}^c)\to 0$ as $N\to\infty$. Therefore, the desired result for the second term follows by letting $N\to\infty$ in \eqref{term2to0} and using Lemma \ref{dorstar}.

Let us now show, by considering three cases, that the third term is small on the event $B_{k,N}$. Recall that if $t\in I_{k,N}(\delta,t_0)$, then for $N$ sufficiently large $\lfloor\rho_N^{-1} t\rfloor \in I_{k,N}^*(\delta/2, t_0)$. Now, if $t\leq T_{k,N}-\delta$, then $\lfloor \rho_N^{-1} t\rfloor\leq \rho_N^{-1}(T_{k,N}-\delta)\leq T_{k,N}^*-\delta\rho_{N}^{-1}/2$. If $t\geq T_{k+1,N}+\delta$, then for $N$ sufficiently large, $\lfloor \rho_N^{-1} t\rfloor\geq \rho_N^{-1}(T_{k+1,N}+\delta)-1\geq T_{k+1,N}^*+\delta\rho_{N}^{-1}/2$. Similarly, if $T_{k,N}+\delta\leq t\leq T_{k+1,N}-\delta$, then for $N$ sufficiently large, we have $T_{k,N}^*+\delta\rho_N^{-1}/2\leq \lfloor \rho_N^{-1} t\rfloor\leq T_{k+1,N}^*-\delta\rho_N^{-1}/2$. Altogether, for $N$ sufficiently large and for any $t\in I_{k,N}(\delta,t_0)$, we have $y_{k,N}(t)=y_{k,N}^*(\lfloor \rho_N^{-1} t\rfloor)$. Thus, the uniform convergence in probability to $0$ of the third term follows using that $P(B_{k,N}^c)\to 0$ as $N\to\infty$.

For the first term, we proceed as in the previous cases to obtain
\begin{align*}
P\bigg(\sup_{t\in I_{k,N}(\delta,t_0)}\bigg|{Y_{k,N}}(\lfloor\rho_N^{-1} t\rfloor)-\frac{Y_{\Gamma_{k,N}}(\lfloor\rho_N^{-1} t\rfloor)}{K_N}\bigg| >\varepsilon\bigg) &\leq P\bigg(\sup_{n\in I_{k,N}^*(\frac{\delta}{2},t_0)}\bigg|Y_{k,N}(n)-\frac{Y_{\Gamma_{k,N}}(n)}{K_N}\bigg| >\varepsilon\bigg)\nonumber\\
&\qquad+P(B_{N,k}^c).
\end{align*}
Recall that $P(B_{N,k}^c)\to 0$.
%From here onward the proof continues analogously to
The rest of the proof proceeds just like
the proof of Theorem \ref{dormantthm1}.
\end{proof}

\section*{List of Symbols}
{\small  % slightly smaller font

\begin{tabularx}{\textwidth}{>{\raggedright\arraybackslash}p{0.19\textwidth} X}
\rowcolor{LightGray}

\hline
\textbf{Symbol} & \textbf{Description} \\
\hline
$N, K_N$ & sizes of active and dormant populations \\
$U_N, V_N$ & year's length and summer's length \\
$R_N(m)$ & environment at generation $m$ \\
$c_N$ & number of individuals that switch between active and dormant populations \\
$s_N, \mu_N$ & selection and mutation parameters \\
$\omega(u)$ & survival probability of Galton-Watson process with Pois$(u)$ offspring distribution \\
$\rho_N$ & probability that a mutation causes a selective sweep \\
$\alpha, \beta, \theta$ & limiting parameters \\
$A_{k,N}(m), X_{k,N}(m)$ & number and fraction of type-$k$ active individuals at generation $m$ \\
$D_{k,N}(m), Y_{k,N}(m)$ & number and fraction of type-$k$ dormant individuals at generation $m$ \\
$S_{k,N}(a)$ & first time $X_{k,N}$ surpasses level $a$ \\
$R$ & limiting environment (Regime 1) \\
$I^+(t), I^-(t)$ & time spent up to time $t$ in positive/negative environment \\
$\mathcal{N}^+, \mathcal{N}^-$ & independent inhomogeneous Poisson processes \\
$T_{\pm k}$ & first time $\mathcal{N}^\pm$ reaches $\pm k$ \\
$T_{k,N}$ & copy of $T_k$ approximating $\rho_N S_{k,N}(a)$ \\
$\dom_{N,a}(m)$ & dominant type at generation $m$ \\
$\Lambda_{k,m,N}$ & event that type $k$ dominates population in generation $m$ \\
$\idom, \mathbf{Y}$ & limits of active and dormant populations \\
$d_{k,N}$ & indicator that $k$ mutations have spread in current season \\
$y_{k,N}$ & copy of $k$-th coordinate of limit dormant population \\
$U_{h,i,m,N}$ & r.v.s used to identify parents of active individuals in generation $m+1$ \\
$\xi_{i,m,N}$ & r.v. used to determine if individual $i$ mutates in generation $m+1$ \\
$\mathcal{S}_{i,m,N}^a, \mathcal{S}_{i,m,N}^d$ & r.v.s used to determine individuals switching between active and dormant populations \\
$Z_{i,m,N}^a, Z_{i,m,N}^d$ & type of individual with label $i$ in active/dormant population \\
$F_{h,N}(m)$ & fitness of individual with mark $h$ in generation $m$ \\
$\Theta_{h,N}(m)$ & relative fitness of individual with mark $h$ in generation $m$ \\
$Q_{i,m,N}$ & mark of parent of individual with label $i$ in generation $m+1$ \\
\end{tabularx}

\begin{tabularx}{\textwidth}{>{\raggedright\arraybackslash}p{0.15\textwidth} X}
\rowcolor{LightGray}
\hline
\textbf{Symbol} & \textbf{Description} \\
\hline
$\mathcal{G}_{m,N}$ & $\sigma$-field induced by previous r.v.s \\
$\mathcal{F}_{m,N}$ & $\sigma$-field generated by $\mathcal{G}_{0,N},\dots,\mathcal{G}_{m-1,N}$ \\
$\mathcal{T}_N$ & set of generations between $0$ and $\lfloor \rho_N^{-1} t_0\rfloor$ \\
$\mathcal{T}_N^+, \mathcal{T}_N^-$ & summer and winter generations in $\mathcal{T}_N$ \\
$r_N$ & auxiliary sequence of positive integers \\
$\mathcal{M}_N^+$ & generations when an individual acquires a beneficial mutation \\
$\mathcal{U}_N$ & generations when the season changes \\
$J_{1,N}$ & small intervals after mutations that increase fitness \\
$J_{2,N}$ & small intervals after season changes \\
$J_{3,N}$ & remaining generations \\
$L_N$ & sequence of positive numbers tending very slowly to $\infty$ \\
$\varepsilon_N$ & sequence of positive numbers tending very slowly to $0$ \\
$\mathcal{M}_N$ & generations when some active individual gets a mutation \\
$\mathcal{M}_N^*$ & generations when more than one individual gets a mutation \\
$G_{0,N}^*, G_{0,N}$ & events related to the times of mutations \\
$\Phi_{m,N}$ & event that type 0 recovers quickly after deleterious mutation \\
$G_{3,N}$ & event that active population behaves as expected in class 3 intervals \\
$\Phi_{k,m,N}$ & event related to population after beneficial mutation \\
$\tau_{k,m,N}$ & first time after $m$ that $A_{k,N}$ dies out or goes above $L_N$ \\
$G_{1,N}$ & event that active population behaves as expected in class 1 intervals \\
$F_{m,N}$ & no mutations in generations $m+1,\dots,m+r_N$ \\
$G_{2,N}$ & event that active population behaves as expected in class 2 intervals \\
$G_N$ & intersection of $G_{0,N}, \dots, G_{3,N}$ \\
$W_{m,N}^*, W_{m,N}'$ & branching processes coupled to population process \\
$\Psi_{m,N}^*$ & $W_{m,N}^*$ reaches $L_N$ within $r_N/2$ generations \\
$\Psi_{m,N}'$ & $W_{m,N}'$ reaches $L_N$ within $r_N/2$ generations \\
$T_N^*$ & first time that some type other than type 0 is dominant \\
$T_{k,N}^*$ & non-scaled version of $T_{k,N}$ \\
$\Theta_{k,N}(\varepsilon)$ & good generations for type $k$ \\
$Z_N^{(\delta)}$ & type composition in dormant population in two-types model \\
$I_N^+(j), I_N^-(j)$ & $j$-th summer and winter intervals \\
$I_N^{+,\circ}(j), I_N^{-,\circ}(j)$ & $j$-th well-established summer and winter intervals \\
$\partial I_N^+(j), \partial I_N^-(j)$ & generations around beginning of $j$-th summer and winter \\
$J_{\ell,N}^\circ$ & times between establishment of types $\ell$ and $\ell+\sgn(\ell)$ \\
$\partial J_{\ell,N}$ & transition generations around selective sweep of type $\ell$ \\
$H_N(t_0)$ & set of relevant types \\
$\partial I_N^+, \partial I_N^-$ & transition generations around summers and winters \\
$\partial J_N$ & transition generations around selective sweeps \\
$\partial_N$ & last $r_N$ generations \\
$\partial \Gamma_N$ & transition generations \\
$\mathbb{I}_{k,N}(j)$ & $j$-th year generations in $J_{k,N}^\circ$ favorable to type $k$ \\
$\Gamma_{k,N}$ & generations where type $k$ is dominant in active population \\
$Y_{\Gamma}(n)$ & number of individuals in generation $n$ switching to dormant population during $\Gamma$ \\
$d_{k,N}^*, y_{k,N}^*$ & non-scaled versions of $d_{k,N}$ and $y_{k,N}$ \\
$\gamma_{k,N}(m)$ & conditional expectation of $Y_{\Gamma_{k,N}}(m)/K_N$ \\
\end{tabularx}

}

\bigskip
\noindent {\bf {\Large Acknowledgments}}

\bigskip
\noindent This collaboration began when all three authors were visiting the Hausdorff Research Institute for Mathematics in Bonn, which was funded by the Deutsche Forschungsgemeinschaft under Germany's Excellence Strategy -- EXC-2047/1 -- 390685813.  The authors thank the Hausdorff Research Institute, and the organizers of the Junior Trimester Program on Stochastic Modelling in the Life Science, for their support. FC was funded by the Deutsche Forschungsgemeinschaft (DFG, German Research Foundation) — Project-ID 317210226 — SFB1283.


\begin{thebibliography}{99}
\bibitem{athney} Athreya, K. B., and Ney, P. E. (1972). \textit{Branching Processes}. Springer-Verlag.

\bibitem{bgkw16} Blath, J., Gonz\'alez Casanova, A., Kurt, N., and Wilke-Berenguer, M. (2016).  A new coalescent for seed bank models.  \textit{Ann. Appl. Probab.}, 26(2), 857-891.

\bibitem{bhs21} Blath, J., Hermann, F., and Slowik, M. (2021).  A branching process model for dormancy and seed banks in randomly fluctuating environments.  \textit{J. Math. Biol.}, 83: 17.

\bibitem{Blath2022} Blath, J., Paul, T., Tobi\'as, A., and Wilke-Berenguer, M. (2022). The impact of dormancy on evolutionary branching.  \textit{Theor. Pop. Biol.}, 156, 66-76. %[Online]. Available: \texttt{https://urldefense.com/v3/__https://arxiv.org/abs/2209.01792__;!!Mih3wA!H0Dk8YH1tVNzbyTdI6bZgViKxwHtdNRW7Uv6YcP34uz44aJ-aXGjQo1I79bHibQ3LSTfRMRQ6xZVY5ZxIICZ-rzdXNA3tKYDQtqi$ }

\bibitem{bgpw21} Boenkost, F., Gonz\'alez Casanova, A., Pokalyuk, C., and Wakolbinger, A. (2021). Haldane's formula in Cannings models: the case of moderately strong selection. \textit{J. Math. Biol.}, 83: 70.

\bibitem{ch52} Chernoff, H. (1952). A measure of asymptotic efficiency for tests of a hypothesis based on the sum of observations. \textit{Ann. Math. Statist.}, 23, 493-507.

\bibitem{df07} Desai, M. M. and Fisher, D. S. (2007).  Beneficial mutation-selection balance and the effect of linkage on positive selection.  \textit{Genetics}, 176, 1759-1798.

%\bibitem{Desai2013} Desai, M. M., Walczak, A. M., and Fisher, D. S. (2013). Genetic diversity and the structure of genealogies in rapidly adapting populations. \textit{Genetics}, 193, 565-585.

\bibitem{epw67} Esary, J. D., Proschan, F., and Walkup, D. W. (1967).  Association of random variables, with applications. \textit{Ann. Math. Statist.}, 38, 1466-1474.

\bibitem{gonzalez2022symmetric}
Gonz{\'a}lez Casanova, A., Mir{\'o} Pina, V., and Siri-J{\'e}gousse, A. (2022). The symmetric coalescent and Wright-Fisher models with bottlenecks. \textit{Ann. Appl. Probab.}, 32(1), 235-268. %doi: 10.1214/21-AAP1676.

\bibitem{gulisija2016phenotypic}
Gulisija, D., Kim, Y., and Plotkin, J. B. (2016). Phenotypic plasticity promotes balanced polymorphism in periodic environments by a genomic storage effect. \textit{Genetics}, 202, 1437-1448. %doi: 10.1534/genetics.115.185702. PMID: [PubMed ID].

\bibitem{gulisija2017phenotypic}
Gulisija, D., and Plotkin, J. B. (2017). Phenotypic plasticity promotes recombination and gene clustering in periodic environments. \textit{Nature Communications}, 8: 2041. %doi: 10.1038/s41467-017-01952-z. PMID: 29229921; PMCID: PMC5725583.

\bibitem{jlr00} Janson, S., \L{u}czak, T., and Rucinski, A. (2000). \textit{Random Graphs}. Wiley, New York.

\bibitem{k91} Kurtz, T. G. (1991).  Random time changes and convergence in distribution under the Meyer-Zheng conditions.  {\it Ann. Probab.}, 19(3), 1010-1034.

\bibitem{Lennon2011} Lennon, J. T., and Jones, S. E. (2011). Microbial seed banks: the ecological and evolutionary implications of dormancy. \textit{Nature Reviews Microbiology}, 9, 119-130.

\bibitem{Lennon2021} Lennon, J. T., den Hollander, F., Wilke-Berenguer, M., and Blath, J. (2021). Principles of seed banks and the emergence of complexity from dormancy. \textit{Nature Communications}, 12:4807. %[Online]. Available: \texttt{https://urldefense.com/v3/__https://doi.org/10.1038/s41467-021-24733-1__;!!Mih3wA!H0Dk8YH1tVNzbyTdI6bZgViKxwHtdNRW7Uv6YcP34uz44aJ-aXGjQo1I79bHibQ3LSTfRMRQ6xZVY5ZxIICZ-rzdXNA3tMYd24OO$ }

\bibitem{mz84} Meyer, P.A., and Zheng, W.A. (1984).  Tightness criteria for laws of semimartingales.  {\it Ann. Inst. H. Poincar\'e Probab. Statist.}, 20(4), 353-372.

\bibitem{molloy} Molloy, M. (2019). The list chromatic number of graphs with small clique number. \textit{J. Combin. Theory Ser. B}, 134, 264-284.

\bibitem{Neher2013} Neher, R. A., and Hallatschek, O. (2013). Genealogies in rapidly adapting populations. \textit{Proc. Natl. Acad. Sci. USA}, 110(2), 437-442.

\bibitem{ok58} Okamoto, M. (1958). Some inequalities relating to the partial sum of binomial probabilities. \textit{Ann. Inst. Statist. Math.}, 10, 29-35.

\bibitem{ps97} Panconesi, A., and Srinivasan, A. (1997). Randomized distributed edge coloring via an extension of the Chernoff-Hoeffding bounds. \textit{SIAM J. Comput.}, 26, 350-368.

\bibitem{Schweinsberg2017} Schweinsberg, J. (2017). Rigorous results for a population model with selection I: evolution of the fitness distribution. \textit{Electron. J. Probab.}, 22, 1-94. %[Online]. Available: \texttt{https://urldefense.com/v3/__https://doi.org/10.1214/17-EJP57__;!!Mih3wA!H0Dk8YH1tVNzbyTdI6bZgViKxwHtdNRW7Uv6YcP34uz44aJ-aXGjQo1I79bHibQ3LSTfRMRQ6xZVY5ZxIICZ-rzdXNA3tC5afHOx$ }

\bibitem{Tellier2011} Tellier, A., Laurent, S. J. Y., Lainer, H., Pavlidis, P., and Stephan, W. (2011). Inference of seed bank parameters in two wild tomato species using ecological and genetic data. \textit{Proc. Natl. Acad. Sci. USA}, 108(41), 17052-17057.

\bibitem{wkk13} Wood, T. K., Knabel, S. J., and Kwan, B.W. (2013).  Bacterial persister cell formation and dormancy.  \textit{Applied and Environmental Microbiology}, 79(23), 7116-7121.
\end{thebibliography}
\end{document}